\newcommand*{\code}[1]{{\mathbf{#1}}}
\newcommand*{\Downarrows}{\Downarrow}
\newcommand*{\Downarrowt}{\Downarrow_\mathit{tgt}}
\newcommand*{\Downarrowi}{\Downarrow_\mathit{int}}
\newcommand*{\FAntiChain}{\mathit{FAChain}}
\newcommand*{\RDB}{\mathit{RDB}}
\newcommand*{\Tensor}{\mathit{Tensor}}
\newcommand*{\WTensor}{\mathit{Tensor}^{\dagger}}
\newcommand*{\Index}{\mathit{Index}}
\newcommand*{\State}{\mathit{State}}
\newcommand*{\Str}{\mathbb{S}}
\newcommand*{\Var}{\mathit{Var}}
\newcommand*{\Comm}{\mathit{Com}}
\newcommand*{\extend}{\mathit{extend}}
\newcommand*{\tzero}{\mathit{z}}
\DeclareMathOperator*{\concat}{{+\!\!+}}
\newcommand*{\R}{\mathbb{R}}
\newcommand*{\Z}{\mathbb{Z}}
\newcommand*{\cN}{\mathcal{N}}
\newcommand*{\cP}{\mathcal{P}}
\newcommand*{\db}[1]{\ensuremath{\llbracket #1 \rrbracket}}
\newcommand*{\defeq}{\stackrel{\mathrm{def}}{=}}
\newcommand*{\dom}{\mathsf{dom}}
\newcommand*{\image}{\mathsf{image}}
\newcommand*{\op}{\mathsf{op}}
\newcommand*{\intt}{\mathsf{int}}
\newcommand*{\real}{\mathsf{real}}
\newcommand*{\lintt}{{\mathsf{int}\dagger}}
\newcommand*{\lreal}{{\mathsf{real}\dagger}}
\definecolor{darkorange}{rgb}{0.9, 0.4, 0.0}
\definecolor{darkgreen}{rgb}{0.0, 0.6, 0.0}
\newcommand{\rwl}[1]{{#1}}
\newcommand{\rhl}[1]{{#1}}
\newcommand{\rshl}[1]{{#1}}
\theoremstyle{plain}
\newtheorem{corollary}{Corollary}
\newtheorem{definition}{Definition}
\newtheorem{lemma}{Lemma}
\newtheorem{proposition}{Proposition}
\newtheorem{theorem}{Theorem}
\theoremstyle{definition}
\newtheorem{example}{Example}
\crefname{enumi}{}{}
\newcommand{\commentout}[1]{}
\newcommand{\bbR}{\mathbb{R}}
\newcommand{\rvar}[1]{\mathbf{#1}}
\newcommand{\rvarstr}[1]{\text{``}\rvar{#1}\text{''}}
\newcommand{\kwscore}{\mathbf{score}}
\newcommand{\mcN}{\mathcal{N}}
\crefname{assumption}{Assumption}{Assumptions}
\crefname{figure}{Fig{.}}{Figs{.}}%{Figures}
\crefname{table}{Table}{Tables}
\crefname{definition}{Definition}{Definitions}
\crefname{theorem}{Theorem}{Theorems}
\crefname{lemma}{Lemma}{Lemmas}
\crefname{proposition}{Proposition}{Propositions}
\crefname{corollary}{Corollary}{Corollaries}
\crefname{problem}{Problem}{Problems}
\crefname{example}{Example}{Examples}
\crefname{fact}{Fact}{Facts}
\crefname{conjecture}{Conjecture}{Conjectures}
\crefname{remark}{Remark}{Remarks}
\crefname{condition}{Condition}{Conditions}
\crefname{requirement}{Requirement}{Requirements}
\crefname{enumi}{}{}
\crefname{equation}{Eq{.}}{Eqs{.}}
\begin{document}

\title{Optimising Density Computations in Probabilistic Programs via
  Automatic Loop Vectorisation}

\thanks{This work was supported by the National Research Foundation of Korea(NRF) grant funded by the Korean Government(MSIT) (No. RS-2023-00279680).}

\author{Sangho Lim}
\authornote{Both authors contributed equally to this paper.}
\email{lim.sang@kaist.ac.kr}
\orcid{0009-0006-6172-0644}
\affiliation{%
  \institution{KAIST}
  %\city{Daejeon}
  \country{Korea}
}

\author{Hyoungjin Lim}
\authornotemark[1]
\email{lmkmkr@kaist.ac.kr}
\orcid{0009-0002-9728-3125}
\affiliation{%
  \institution{KAIST}
  %\city{Daejeon}
  \country{Korea}
}

\author{Wonyeol Lee}
\email{wonyeol.lee@postech.ac.kr}
% \email{wonyeol.lee.cs@gmail.com}
\orcid{0000-0003-0301-0872}
\affiliation{%
  \institution{POSTECH}
  %\city{Pohang}
  \country{Korea}}

\author{Xavier Rival}
\email{rival@di.ens.fr}
% \email{xavier.rival@ens.fr}
\orcid{0000-0002-2875-6171}
\affiliation{%
  \institution{INRIA Paris, CNRS, ENS, and PSL University}
  %\city{Paris}
  \country{France}
}

\author{Hongseok Yang}
\email{hongseok.yang@kaist.ac.kr}
% \email{hongseok00@gmail.com}
\orcid{0000-0003-1502-2942}
\affiliation{%
 \institution{KAIST}
 %\city{Daejeon}
 \country{Korea}}

%% The abstract is a short summary of the work to be presented in the
%% article.
\begin{abstract}
  % - PPL widespread use and cost of inference
  Probabilistic programming languages (PPLs) are a popular tool for high-level modelling across many fields.
  They provide a range of algorithms for probabilistic inference,
  which analyse models by learning their parameters from a dataset or estimating their posterior distributions.
  However, probabilistic inference is known to be very costly.
  % - bottleneck with iteration, difficult to address in a general manner
  One of the bottlenecks of probabilistic inference stems from the iteration
  over entries of a large dataset or a long series of random samples.
  Vectorisation can mitigate this cost, but manual vectorisation is error-prone,
  and existing automatic techniques are often ad-hoc and limited,
  unable to handle general repetition structures, such as nested loops and loops with data-dependent control flow,
  without significant user intervention.
  % - main motivation of our approach
  To address this bottleneck, we propose a sound and effective method
  for automatically vectorising loops in probabilistic programs.
  Our method achieves high throughput using speculative parallel
  execution of loop iterations, while preserving the semantics
  of the original loop through a fixed-point check.
  We formalise our method as a translation from an imperative PPL into a lower-level target language
  with primitives geared towards vectorisation.
  % - implementation and evaluation
  We implemented our method for the Pyro PPL and evaluated it on a range of probabilistic models.
  Our experiments show significant performance gains against an existing vectorisation baseline,
  achieving \rshl{$1.1$}--$6\times$ speedups and reducing GPU memory usage in many cases.
  Unlike the baseline, which is limited to a subset of models, our method effectively handled all the tested models.
\end{abstract}

%% The code below is generated by the tool at http://dl.acm.org/ccs.cfm.
%% Please copy and paste the code instead of the example below.
\begin{CCSXML}
<ccs2012>
    <concept>
        <concept_id>10002950.10003648.10003662</concept_id>
        <concept_desc>Mathematics of computing~Probabilistic inference problems</concept_desc>
        <concept_significance>500</concept_significance>
        </concept>
    <concept>
        <concept_id>10010147.10010169.10010170</concept_id>
        <concept_desc>Computing methodologies~Parallel algorithms</concept_desc>
        <concept_significance>500</concept_significance>
        </concept>
    <concept>
        <concept_id>10010147.10010169.10010170.10010173</concept_id>
        <concept_desc>Computing methodologies~Vector / streaming algorithms</concept_desc>
        <concept_significance>300</concept_significance>
        </concept>
</ccs2012>
\end{CCSXML}

\ccsdesc[500]{Mathematics of computing~Probabilistic inference problems}
\ccsdesc[500]{Computing methodologies~Parallel algorithms}
\ccsdesc[500]{Computing methodologies~Vector / streaming algorithms}

%% Keywords. The author(s) should pick words that accurately describe
%% the work being presented. Separate the keywords with commas.
\keywords{probabilistic programming, loop vectorisation, loop parallelism}

\received{10 July 2025}
\received[revised]{23 October 2025}

%% This command processes the author and affiliation and title
%% information and builds the first part of the formatted document.
\maketitle

\setitemize{topsep=0.5em}
\setenumerate{topsep=0.5em}
\section{Introduction}
% <Introduction of PPLs>
Over the last two decades, probabilistic programming languages (PPLs) have emerged as important
tools for expressing and reasoning about models in machine learning, statistics, and 
computational science~\cite{Pyro,NumPyro,PyMC,Anglican,Edward2,Turing,Stan}.
They are used to analyse datasets from
epidemiology~\cite{epidemiology:21,epidemiology:22} and ecology~\cite{ecology:22,ecology:24}
to phylogenetics~\cite{phylogenetics:23} and robotics~\cite{robotics:23,robotics:24}.
PPLs enable users to model probabilistic phenomena naturally using primitives for probabilistic sampling 
and conditioning, as well as features from general-purpose programming languages,
such as loops, branches, and more recently, trainable parameters (e.g., neural networks). 
These models are then analysed using generic inference engines of
PPLs, which approximate posterior distributions or learn model parameters 
by performing algorithms from Bayesian statistics and deep learning, such as Hamiltonian Monte Carlo (HMC)~\cite{HMCphy,HMCcs}, 
Stochastic Variational Inference (SVI)~\cite{SVI}, and Stochastic Gradient Descent (SGD)~\cite{SGD:51,SGD:52}. 

% <Short description of what this paper is about and why it is important>
This paper aims to reduce the high computational cost of PPL inference engines,
a challenge that persists despite advances in statistical algorithms and compilation. 
\rshl{
Our work focuses on scenarios requiring repeated evaluation of probabilistic programs during parameter learning or posterior inference,
particularly those involving intensive computations of model densities and their gradients arising from repetition structures,
often implemented as for-loops.}
We develop a sound automatic method for optimising such loops through tensor-based vectorisation,
thereby enabling the underlying inference engines to analyse these models more efficiently.

% <Models with repetition structures are common, but it is challenging to analyse them efficiently>  
Models with repetition structures are common in practice. Graphical models from statistics
even employs a special notation, called \emph{plate}, to compactly describe such structures. 
The simplest repetition structure is a loop over a dataset
where each iteration is independent of the others
in the sense that it does not have data dependency from the other iterations.
Our optimisation allows the inference engines to process such loops in parallel using tensors,
yielding significant benefits for medium-to-large datasets. 
More challenging examples come from sequence models and hierarchical models,
such as Bayesian hierarchical models~\cite{hierarchical:03,hierarchical:19} and hidden Markov models~\cite{hmm:09},
where repetition structures are implemented as nested loops,
or they lack independence and are coded as loops whose iterations are data-dependent on the previous ones.  
Even in these complex scenarios, our optimisation enables the inference engines to identify 
the available parallelism and exploit it through tensor-based vectorisation,
achieving significant speedups in posterior inference and parameter learning.

% <Gap and unique selling points of our optimisation>
Speeding up PPL inference engines using tensor-based vectorisation is a known idea
in the PPL community, supported by a range of existing primitives, 
such as \texttt{vmap}~\cite{jax}, \texttt{vectorized\_map}~\cite{vectorized-map}, 
\texttt{vectorized\_markov}~\cite{funsor:2019}, and the \texttt{MarkovChain} distribution~\cite{tensorflow-probability}.
However, these primitives fall short of fully automatic vectorisation of 
the general repetition structures that we target.
Concretely, they require users
to write repetition structures explicitly in vectorised form, or to provide
vectorisation-related information, e.g., dependency information in the repetition structures.
This task is both cumbersome and error-prone;
for example, manually vectorising the loop in \Cref{fig:programs-source-language} requires care due to
inter-iteration data dependencies, as detailed in \Cref{sec:source-lang}.
Moreover, these primitives are limited, and cannot handle
nested repetition structures or dynamically varying dependencies arising from internal random choices;
they often support only repetition without data dependencies.
In contrast, our vectorisation
is fully automatic, requires no user input,
and correctly handles general repetition structures,
including nested ones and those with complex, randomly-varying dependencies. 

% <Novel ideas and insights of our optimisation>
When viewed as a code transformer, our optimisation can be understood as a \emph{non-standard} loop vectoriser. Given a model 
with a repetition structure expressed as a for-loop over $\{0,1,\ldots,n-1\}$, our 
optimisation transforms the loop to parallelism-friendly tensor-based vectorised code.
The vectorised code here is subtle, which goes beyond the simple standard idea of executing  
all the $n$ iterations of the original loop in parallel. To handle the case that 
the iterations of the loop are data-dependent, the code implements two
novel ideas: (i) \emph{speculative execution} of the loop body, and (ii) the use of a \emph{fixed-point check}
for testing the validity of speculation.
\vspace{-1pt}
\begin{itemize}[leftmargin=25pt]
\item[(i)] The code runs all the $n$ iterations of the
original loop in parallel, but some iterations are \emph{speculative} in the sense that their
starting states may differ from those of the corresponding % starting states are not guaranteed to be the same as ...
iterations in the original loop. This speculation lets the code avoid waiting until 
the starting states of all the iterations of the original loop are known.
But the outcomes of these
speculative computations may diverge from those of the corresponding iterations of the original loop.
\item[(ii)] To address this divergence issue, the code repeats the parallel (and partly speculative) execution 
of all the $n$ iterations of the loop multiple times, and terminates this repetition 
when it finds out that all the speculative executions in the current repetition
are correct, i.e., they correspond to the $n$ iterations of the original loop. 
The termination mechanism here employs a \emph{fixed-point check} to detect
the correctness of speculation, and it is guaranteed to stop the repetition
within at most $n$ iterations
(due to a particular type of speculation we use).
\end{itemize}
\vspace{-1pt}
For instance, as we explain in \Cref{sec:target-lang}, our optimisation transforms the loop of \Cref{fig:programs-source-language} into vectorised code with a fixed-point check, which ensures that the parallel execution of the original loop is repeated
just twice. If the reader is familiar with abstract interpretation, a good intuition is 
that the vectorised code implements an executable version of the
collecting semantics from the abstract interpretation literature; an execution in the collecting semantics
corresponds to a set of executions in the standard semantics, and loops in the collecting semantics are interpreted
in terms of fixed-points.

\rshl{
When a model with a small degree of data dependency (e.g., a low-order Markov model) is expressed in a PPL that supports a powerful tensor framework, tensor broadcasting, and efficient parallel tensor operations, our optimisation can vectorise the computation of the model's density. The vectorised model runs more efficiently than the original one, thanks to parallelism and early termination enabled by our fixed-point check. Moreover, our vectorisation also accelerates gradient computations over the model’s density through the automatic differentiation provided by the PPL’s backend. In this case, automatic differentiation operates on the vectorised computation graph of the density, thereby computing gradients in a vectorised manner as well.
Such models and PPLs are common, and the resulting speedup directly improves the performance of PPL inference engines that repeatedly calculate or differentiate the model's density, such as Stochastic Variational Inference (SVI) and Maximum a Posteriori (MAP) estimation.
}

We prove the correctness of our optimisation using the formalism of expressing and analysing tensor-based
vectorised computations, which we also develop in the paper. To evaluate the effectiveness of our optimisation,
we implemented it for the Pyro PPL, and applied it to posterior inference and parameter learning 
by SVI and MAP estimation
\rshl{as well as to Markov Chain Monte Carlo (MCMC) sampling}.%
\footnote{\rshl{Our implementation is available at \url{https://github.com/Lim-Sangho/auto-vectorise-ppl.public}.}} % Source code
Our experiments on eight Pyro models show that our optimisation significantly enhances inference performance,
achieving \rshl{$1.1$}--$6\times$ speedups over an existing vectorisation baseline,
and reducing GPU memory usage to $14$--$80\%$ 
for seven out of eight models, with one outlier exhibiting 
the increase to \rshl{$109$--$112\%$}.
The vectorisation baseline fails on some of these eight models,
while our optimisation was successfully applied to all the models.

We summarise the contributions of this paper below:
\vspace{-1pt}
\begin{enumerate}[leftmargin=25pt]
\item We propose a language with specialised primitives 
  that enables tensor-based vectorised execution of probabilistic programs.
  It is designed to facilitate vectorisation of loops in PPLs
  and support a broader class of models with control flow and nested structures
  (\Cref{sec:target-lang,sec:target-semantics-translation-soundness}).
\item We present a transformation that converts standard probabilistic
  programs into our proposed language for vectorisation.
  We prove the correctness of this transformation
  (\Cref{sec:target-lang,sec:target-semantics-translation-soundness}).
\item We implemented our transformation on top of the Pyro PPL, and evaluated it on several families of models,
  demonstrating that the transformation preserves program semantics
  while significantly improving execution time and in some cases dramatically reducing memory usage
  especially when the models have short-range data dependencies.
  (\Cref{sec:impl,sec:eval}).
\end{enumerate}

We clarify that our work focuses on optimising density and gradient computations in probabilistic programs,
but does not address the sample-generation cost in sampling-based inference algorithms.
\section{A Simple Programming Language for Unnormalised Probability Densities}
\label{sec:source-lang}

\begin{figure}[t]
  \small
    \centering
    \vspace{-2pt}
    \[
        \mathit{HMM} =
        \left[
        \begin{array}{@{\,}l@{\,}}
          x := 0.0; y := 0.0; \\
          \code{for} \; t \; \code{in} \; \code{range}(10) \; \code{do}\; \{\\
          \ \;\; x := \code{fetch}([(\rvarstr{z},t)]); \\
          \ \;\; \code{score}(\log p_\mcN(x; f(y), 1.0)); \\
          \ \;\; \code{score}(\log p_\mcN(o(t); g(x), 1.0)); \\
          \ \;\; y := x\; \}
        \end{array}
        \right],
        \;\;\,
        \mathit{HMM}' =
        \left[
        \begin{array}{@{\,}l@{\,}}
          x := 0.0; y := 0.0;\\
          \code{for} \; t \; \code{in} \; \code{range}(10) \; \code{do}\; \{\\
          \ \;\; x := \code{sample}([(\rvarstr{z},t)], \mcN(f(y), 1.0)); \\
          \ \;\; \code{observe}(\mcN(g(x), 1.0), o(t)); \\
          \ \;\; y := x\;\} 
        \end{array}
        \right]\!
    \]
    \vspace{-1.3em}
    \caption{Example programs describing a hidden Markov model.
      Here, $\mathit{HMM}$ is written in the language presented in \Cref{sec:source-lang},
      and $\mathit{HMM}'$ is written in a typical probabilistic programming language.
      }
    \label{fig:programs-source-language}
    \vspace{-0.7em}
\end{figure}

This paper is concerned with a class of PPL inference engines that view probabilistic 
programs as \emph{evaluators} of unnormalised posterior densities, instead of
\emph{generators} of weighted samples, and repeatedly run these evaluators during inference or parameter learning.
This class includes Stochastic Variational Inference (SVI), Hamiltonian Monte Carlo (HMC),
Maximum Likelihood Estimation (MLE), and Maximum a Posteriori (MAP) estimation,
forming the core of popular PPLs such as Pyro and Stan.
Our goal is to optimise the density evaluators expressed by probabilistic programs.

In this section, we formalise this density-evaluator 
view of probabilistic programs.
We start with an informal overview (\Cref{ssec:src-overview}), 
and then formally define a simple
programming language, where a program assumes a fixed collection of random 
variables, takes the values of these random variables as inputs, 
and then returns a real number, called total score, as an output (\Cref{ssec:src-syntax,ssec:src-semantics}).

\subsection{Overview of a Score-Computing Language}
\label{ssec:src-overview}

In the imperative language to be introduced in \Cref{ssec:src-syntax}, programs are intended to compute
the unnormalised (posterior) probability densities of some probabilistic models, 
where being unnormalised means the integrals of the densities at all inputs are not $1$ (which 
comes mostly from conditioning in those models). Their inputs are 
the values of the random variables in the models, and their outputs are 
the logarithms of the unnormalised densities at the given input values.

For instance,
the program $\mathit{HMM}$ in \Cref{fig:programs-source-language}
describes the unnormalised posterior density of a hidden Markov model (HMM),
and corresponds to $\mathit{HMM}'$ in the same figure written with the more traditional $\code{sample}$ 
and $\code{observe}$ constructs. The program defines a probabilistic model over random 
variables $\rvar{z}_t$ and $\rvar{o}_t$ for $t \in \{0,\ldots,9\}$, where $\rvar{z}_t$ 
(which is referred to by the notation $[(\rvarstr{z},t)]$ in $\mathit{HMM}$ with the 
string $\rvarstr{z}$) 
denotes the hidden state of a stochastic machine at step $t$, and $\rvar{o}_t$ (which is used
only implicitly by the second $\code{score}$ command in $\mathit{HMM}$) means the 
observation on the state $\rvar{z}_t$ made at step $t$. The probability density of the model is 
\bgroup
\[
p_\mathit{HMM}(\rvar{z}_0,\ldots,\rvar{z}_9,\rvar{o}_0,\ldots,\rvar{o}_9) 
\textstyle
= \prod_{t=0}^{9} \Bigl(p_\mcN(\rvar{z}_t; f(\rvar{z}_{t-1}), 1.0) \cdot p_\mcN(\rvar{o}_t; g(\rvar{z}_t), 1.0)\Bigr),
\]
\egroup
where $f,g$ are the functions in $\mathit{HMM}$, $p_\mcN$ the density of the normal distribution,
and $\rvar{z}_{-1}$ the constant $0.0$. The program conditions this density with observations
$\rvar{o}_t = o(t)$ for all $t$, where the $o(t)$ denote fixed observed values. Then, it computes 
the following quantity using the $\code{score}$ command:
\bgroup
\begin{align}
\label{eqn:hmm-logpdf}
\log p_\mathit{HMM}(\rvar{z}_0,\ldots,\rvar{z}_9,o(0),\ldots,o(9))
\textstyle
= \sum_{t=0}^{9} \Bigl(\log p_\mcN(\rvar{z}_t; f(\rvar{z}_{t-1}), 1.0) + \log p_\mcN(o(t); g(\rvar{z}_t), 1.0)\Bigr),
\!
\end{align}
\egroup
where the values of $\rvar{z}_t$ are given to the program as inputs. In each
iteration of the for-loop, the program reads the input value of $\rvar{z}_t$ using  
the $\code{fetch}$ construct and computes the two terms in the summand of \cref{eqn:hmm-logpdf} using the $\code{score}$ construct.
The outcomes of $\code{score}$ get accumulated, and the final sum is returned as the
output: the logarithm of the unnormalised density in \Cref{eqn:hmm-logpdf}.

In the rest of this section, we formally describe the syntax and semantics of this language of score-computing programs, which will 
be used to define our optimisation and prove its correctness.

\subsection{Syntax}
\label{ssec:src-syntax}

\begin{figure}[t]
  \begin{flushleft}
    \textbf{Syntax:}
  \end{flushleft}
  \vspace{-0.5em}
  \begin{align*}
    \mbox{\it Integer Expressions}\quad Z 
    & \;::=\; n \,\mid\, \smash{x^\intt} \,\mid\, \op_i(Z_1,\ldots, Z_k,E_1, \ldots, E_m) 
    \\
    \mbox{\it Real Expressions}\quad E
    & \;::=\; r \,\mid\, \smash{x^\real} \,\mid\, \op_r(Z_1,\ldots, Z_k,E_1, \ldots, E_m) 
    \\
    \mbox{\it Index Expressions}\quad I 
    & \;::=\; [(\alpha_1, Z_1);\, \ldots;\, (\alpha_k, Z_k)] 
    \quad \text{(for distinct $\alpha_i$'s)}
    \\
    \mbox{\it Commands}\quad C %  or Programs
    & \;::=\; \code{score}(E) \,\mid\, \smash{x^\real} := \code{fetch}(I) \,\mid\, 
    \code{skip} \,\mid\, \smash{x^\intt} := Z \,\mid\, \smash{x^\real} := E  
    \\
    & \phantom{\;::=\;}
    \,\mid\,
    C_1;C_2 
    \,\mid\,
    \code{ifz} \ Z \ C_1 \ C_2
    \,\mid\,   
    \code{for} \ \smash{x^\intt} \ \code{in} \ \code{range}(n)\ \code{do}\ C \quad \text{(for $n \geq 1$)}  
  \end{align*}
  \\
  \begin{flushleft}
    \textbf{Semantics:}
  \end{flushleft}
  \vspace{-1em}
  \begin{align*}
    \db{n}\sigma & = n
    \qquad\qquad\qquad
    \db{\smash{x^\intt}}\sigma = \sigma(\smash{x^\intt})
    \\
    \db{\op_i(Z_1,\ldots,Z_k,E_1,\ldots,E_m)}\sigma & = \db{\op_i}_a(\db{Z_1}\sigma,\ldots,\db{Z_k}\sigma,\db{E_1}\sigma,\ldots,\db{E_m}\sigma)
    \\
    \db{r}\sigma & = r
    \qquad\qquad\qquad
    \db{\smash{x^\real}}\sigma =  \sigma(\smash{x^\real})
    \\
    \db{\op_r(Z_1,\ldots,Z_k,E_1,\ldots,E_m)}\sigma & = \db{\op_r}_a(\db{Z_1}\sigma,\ldots,\db{Z_k}\sigma,\db{E_1}\sigma,\ldots,\db{E_m}\sigma)
    \\
    \db{[(\alpha_1,Z_1);\ldots;(\alpha_k, Z_k)]}\sigma & = [(\alpha_1,\db{Z_1}\sigma);\,\ldots;\,(\alpha_k,\db{Z_k}\sigma)]
  \end{align*}
  \vspace{0.2ex}
  \(
  \begin{array}{c}
    \infer{
      (x^\real := \code{fetch}(I)), D, \sigma \Downarrows \sigma[x^\real \mapsto D(\db{I}\sigma)],0.0
    }{}
    \qquad
    \raisebox{0.26ex}{
    \infer{
      \code{score}(E), D, \sigma \Downarrows \sigma, \db{E}\sigma
    }{}
    }
    \\[1.25ex]
    \infer{
      (x^\intt := Z), D, \sigma \Downarrows \sigma[x^\intt \mapsto \db{Z}\sigma], 0.0
    }{}
    \qquad
    \infer{
      (x^\real := E), D, \sigma \Downarrows \sigma[x^\real \mapsto \db{E}\sigma], 0.0
    }{}
    \\[1.25ex]
    \infer{
      (C_1;C_2), D, \sigma \Downarrows \sigma'', (r_1 + r_2)
    }{
      C_1, D, \sigma \Downarrows \sigma', r_1 & C_2, D, \sigma' \Downarrows \sigma'', r_2
    }
    \qquad
    \infer{
      (\code{ifz}\ Z\ C_1\ C_2), D, \sigma \Downarrows \sigma',r
    }{
      \db{Z}\sigma = 0 & C_1, D, \sigma \Downarrows \sigma',r
    }
    \qquad
    \infer{
      (\code{ifz}\ Z\ C_1\ C_2), D, \sigma \Downarrows \sigma',r
    }{
      \db{Z}\sigma \neq 0 & C_2, D, \sigma \Downarrows \sigma',r
    }
    \\[1.25ex]
    \infer{
      \code{skip}, D, \sigma \Downarrows \sigma, 0.0
    }{}
    \qquad
    \infer{
      (\code{for}\ x\ \code{in}\ \code{range}(n)\ \code{do}\ C), D, \sigma \Downarrows \sigma_n, \sum_{k = 1}^{n} r_k
    }{
      \sigma_0 = \sigma & C, D, \sigma_k[x \mapsto k] \Downarrows \sigma_{k+1},r_{k+1} \text{ for all $k \in \{0,\ldots,n-1\}$}
    }
  \end{array}
  \)
  \vspace{-0.5em}
  \caption{Syntax and semantics of the score-computing language.}
  \label{f:synsem}
  \vspace{-1em}
\end{figure}

Let $\Str$, $\Z$, $\R$, $\Var$ be the sets of strings, integers, real numbers, and variables, respectively.
We use the following symbols % with subscripts or superscripts
to range over elements in these sets: $\alpha,\beta \in \Str$, $n,m \in \Z$, $r \in \R$, and $x,y,z \in \Var$.
Each variable $x \in \Var$ has a type $\tau \in \{\intt,\, \real\}$. We often write $x^\tau$ to denote that $x$ has type $\tau$.

The syntax of the score-computing language is given in \Cref{f:synsem}, based on the above notations.
The language has three types of expressions, which denote
state-dependent integers, reals, and \emph{indices}.
By indices, we mean finite sequences of string-integer pairs
that do not use the same string more than once,
that is, 
elements $i$ in the following set:
\[
i \in \Index = \big\{[(\alpha_1,m_1);\ldots;(\alpha_k,m_k)] \in (\Str \times \Z)^* \,:\, 
\text{$\alpha_j$'s are distinct for $j \in \{1, \ldots, k\}$, $k \ge 0$}
\big\}.
\]
For instance, when evaluated at a state, an index expression
$[(\alpha_1,Z_1);\ldots;(\alpha_k, Z_k)]$ returns
the index $[(\alpha_1,m_1);\ldots;(\alpha_k,m_k)]$, i.e., 
the length-$k$ sequence of string-integer pairs $(\alpha_j,m_j)$,
where $m_j$
is the integer value of $Z_j$ at the state. These indices serve as identifiers
of random variables in the language. When a command in the language needs
the value of a random variable, it uses the index for the variable as a key and
looks up an entry at this key in a table, called \emph{random database},
which stores the values of all the random variables.
The grammars for the expressions are standard,
except for the use of $\op_i$ and $\op_r$, the former being an integer-returning 
operation with appropriate input types and the latter being a real-returning operation.

Commands in the language denote variants of state transformers, which take 
a random database and a starting state as inputs, and return a final state
and a real number called \emph{total score} as outputs.
A {random database} $D$ here refers to a map from all indices
(i.e., finite sequences of string-integer pairs with no repetition of a string) to real numbers, i.e., 
\(
        D \in \RDB = [\Index \to \R].
\)
It specifies the values of random variables (identified by the indices). Thus, besides updating the state, a command $C$ computes the total score of the given random
database. When $C$ represents a probabilistic model, the computed score 
is the logarithm of an unnormalised density of the model.

The command 
$\code{score}(E)$ evaluates $E$ in the current state, and returns the result as the total score,
while keeping the state unchanged. 
The next command $x^\real := \code{fetch}(I)$
reads the value at the index $I$ in the random database, stores the read value 
in  $x^\real$, and returns the resulting state together with zero
score. The other commands are the standard constructs of the imperative languages:
$\code{skip}$, assignments, sequencing, conditionals, and for-loops. Regarding
the state transformation, they have the standard meanings, except that
the conditionals use an implicit zero check on their $Z$ argument, 
so that 
their true branch $C_1$ gets executed if $Z$ evaluates to zero, and the false branch $C_2$ is
followed otherwise. Regarding the score computation, these commands implement 
the following two principles: (i) when a command $C$ is run, it may execute the $\code{score}$ 
command multiple times, and the results of these multiple invocations of $\code{score}$ get 
added to become the final total score of $C$; (ii) none of the other atomic commands, 
such as assignments and $\code{skip}$, contributes to the total score. 
Note that the language is restricted in the sense that it does not have general while-loops. 
This restriction ensures that all commands in the language always terminate,
and this termination property is used when we prove the correctness of our optimisation.
We point out that despite this restriction, the language is expressive enough to allow
a wide range of typical probabilistic models, as indicated by the eight models used in our 
experiments. Lifting this restriction is, however, an interesting and important future work. 
We write $\Comm$ for the set of all commands in the language.

\subsection{Semantics}
\label{ssec:src-semantics}

Let $\State$ be the set of states defined by the type-preserving maps from
variables to integers or reals:
\[ 
\State = \{\sigma \in [\Var \to \Z \cup \R] \,:\, 
\sigma(x^\intt) \in \Z\ \text{and}\ \sigma(y^\real) \in \R\ 
\text{for all variables $x^\intt$ and $y^\real$}\}.
\]

We interpret expressions as maps from states to the values of appropriate types:
$\db{Z} : \State \to \Z$, 
$\db{E} : \State \to \R$, and 
$\db{I} : \State \to \Index$.
We assume that the semantics of integer-returning or real-returning atomic operations 
in the language are given, i.e., we have 
$\db{\op_i}_a : \Z^k \times \R^m \to \Z$ and 
$\db{\op_r}_a : \Z^k \times \R^m \to \R$ for
all $\op_i$ and $\op_r$ taking $k$ integer and $m$ real arguments. Using 
this assumed semantics of atomic operations, we interpret expressions in 
the standard way (\Cref{f:synsem}).

We define the meanings of commands using a 
big-step operational semantics, which specifies 
the relation ${\Downarrows} \subseteq (\Comm \times \RDB \times \State) \times (\State \times \R)$.
Intuitively, $(C,D,\sigma \Downarrows \sigma', r)$ means that running $C$ on 
the state $\sigma$ under
the random database $D$ terminates successfully, producing the state $\sigma'$ and the total score $r$. 
The score $r$ is the logarithm of the unnormalised density 
at $D$ according to a model that $C$ implements. Since the language does not
have general while-loops, and no commands get stuck or generate an error, 
every tuple $(C,D,\sigma)$ of command, random database, and input state leads
to a unique pair $(\sigma',r)$ made of an output state and a total score.
We define $\Downarrows$ in the standard manner using the inference-rule
notation (\Cref{f:synsem}).
\section{Loop Vectorisation}
\label{sec:target-lang}

Our optimisation operates on commands in the language of \Cref{sec:source-lang} 
by vectorising for-loops in those commands. In this section, we describe
this loop vectorisation as a translation from the language to another language that
supports vectorised operations
while continuing to have the primitives for describing unnormalised densities.
We informally explain and formally define this \emph{target language}
of the translation, where all the atomic commands are implicitly vectorised
and new language constructs control various aspects of vectorisation
(\Cref{subsec:overview-target-language,subsec:target-syntax}).
The target language is closely related to the popular tensor-based programming
languages, such as Python with PyTorch tensors, which our implementation is based on (see \Cref{sec:impl}).
After defining the target language, we describe the translation
from the language of \Cref{sec:source-lang} to the target language, and explain 
the use of speculative computations and fixed-point check in the translated commands, 
which we mentioned in the introduction (\Cref{subsec:target-translation}).
In the rest of the paper, we refer to the language of \Cref{sec:source-lang} as the \emph{source language}.

\subsection{Overview of a Target Language for Loop Vectorisation}
\label{subsec:overview-target-language}

The target language is an imperative language with a restricted form of recursion, 
and includes all the constructs from the source language.
Its two main features are as follows.

First, variables and expressions in the target language have \emph{lifted types}
rather than the original types in the source language,
where lifting means changing a type~$\tau$ to the type of maps  % certain maps
from $\Index$ to $\tau$. 
Concretely, variables in the target language store (finite representations of) maps 
from indices to integers or reals, instead of just integers or reals. Similarly, 
expressions in the target language evaluate to maps from indices to integers, reals, or indices.

Second, commands in the target language cannot use indices explicitly 
in order to access entries in these maps. Instead, they 
maintain a \emph{set of indices} % \emph{global indices}
implicitly, and these indices are then used when expressions and variables are 
accessed during execution. Note that an index
set, instead of a single index, is maintained here. This means that 
a single execution of a command in the target language
corresponds to multiple standard executions of the same command, one per each index 
in the index set, where these executions access different entries of the 
lifted variables using their indices. This 
execution model of the target language is similar to the  SIMD 
(single instruction multiple data) 
model, and can also be understood as an executable special case 
of the  collecting semantics in the abstract interpretation literature. 

To help the reader to gain a high-level intuition about how this target language works, 
we consider two commands in the language: $C_0$ and $C_1$ 
in \Cref{fig:programs-target-language}.

\begin{figure}[t]
  \small
  \centering
  \[
  \begin{array}{c}
      C_0 =
      \left[
      \begin{array}{l}        
        x^{\lreal} := \code{fetch}([(\rvarstr{z},t^{\lintt})]); \\
        \code{score}(\log p_\cN(x^{\lreal};f(y^{\lreal}),1.0)); \\
        \code{score}(\log p_\mcN(o(t^\lintt); g(x^\lreal), 1.0)); \\
        y^{\lreal} := x^{\lreal} 
      \end{array}
      \right],
      \qquad
      C_1 =
      \left[
      \begin{array}{l}
        \code{extend\_index}(\rvarstr{vec},10)\; \{ \\
        \quad t^{\lintt} := \code{lookup\_index}(\rvarstr{vec}); \\
        \quad C_0\; \}
      \end{array}
      \right]
    \end{array}
    \] 
    \vspace{-1.3em}
  \caption{Example commands in the target language.}
  \label{fig:programs-target-language}
    \vspace{-0.8em}
\end{figure}

\begin{example}[$C_0$ in \Cref{fig:programs-target-language}]
\label{example:targetlang-C0}
The command $C_0$ corresponds to a variant of the loop body of $\mathit{HMM}$ in \Cref{fig:programs-source-language},
where all occurrences of variables are annotated with their types.

\paragraph{Lifted types and index sets.}
The annotations in $C_0$ use the \emph{lifted types}, $\lreal$ and $\lintt$, instead of $\real$ and $\intt$;
these lifted types indicate that the variables store not real numbers or integers,
but {maps} to them.
For instance,  the 
variable $t$ can store the following map from indices to integers:
\begin{gather}
  \label{eqn:examples-target-language:map-x}
  m_t = 
  \lambda i \in \Index.\ 
    \text{if }
    \big(i = [(\rvarstr{vec},\ell)] \concat i' 
         \text{ for some } \ell \in \{0,1,\ldots,9\}
         \text{ and } i'\big) 
    \text{ then } \ell 
    \text{ else } 0.
\end{gather}
Here $j \concat j'$ means the concatenation of indices $j$ and $j'$ 
which are viewed as finite sequences of string-integer pairs.
For this $m_t$, we have $m_t([(\rvarstr{vec}, \ell)]) = \ell$ for all $\ell \in \{0, \ldots, 9\}$.

Some entries of the maps stored in the variables $t,x,y$ are read and updated during the execution of $C_0$. 
But which entries are read or updated at each line of $C_0$ is not specified explicitly. Instead, it 
is determined implicitly by an \emph{index set} that is given
in the beginning of the execution.

Concretely, assume that $C_0$ is run 
under the index set $A_\mathit{vec} = \{i_0,i_1,\ldots,i_9\}$ for $i_\ell = [(\rvarstr{vec},\ell)]$,
as well as a random database $D$ and a starting state $\sigma$ of the target language such that $D([(\rvarstr{z}, \ell)]) = z_\ell$,
$\sigma(t) = m_t$,
$\sigma(x) = m_x$, and 
$\sigma(y) =  m_y$.
Here $z_\ell$ is the value of the random variable $\rvar{z}_\ell$ of $\mathit{HMM}$ 
that we used in \Cref{sec:source-lang},
$m_t$ is the map in \Cref{eqn:examples-target-language:map-x}, 
and $m_x, m_y$ are the maps defined by $m_x(i) = m_y(i) = 0.0$ for all $i$. % \in \Index$.
In this setup, $C_0$ is run as follows.

\paragraph{First command.}
The first command $x^{\lreal} := \code{fetch}([(\rvarstr{z},t^{\lintt})])$ reads the ten entries 
of the map $m_t$ at $i_0,\ldots,i_9 \in A_\mathit{vec}$, and forms the ten keys
$[(\rvarstr{z},m_t(i_0))] = [(\rvarstr{z},0)]$, $\ldots$ , $[(\rvarstr{z},m_t(i_9))] = [(\rvarstr{z},9)]$ 
for the random database $D$.
Then, it accesses $D$ using these keys, 
and updates the map $m_x$ with the results of these accesses: 
for each index $i_\ell \in A_\mathit{vec}$, the value
$D([(\rvarstr{z},\ell)]) = z_\ell$
is copied to all the entries of $m_x$ whose indices have   
the form $i_\ell \concat i'$ for some $i'$. Note that the update here is \emph{non-standard} in that
the copy of the read value $z_\ell$
is not restricted to the $i_\ell$-th entry of $m_x$. 
More precisely, the resulting state is
\begin{equation}
  \label{eqn:examples-target-language:state-after-first-command}
  \sigma[x \mapsto m'_x]\;\;\; \text{where}\ 
  m'_x(i) = 
  \text{if }
  \big(i = i_\ell \concat i' 
    \text{ for some } \ell \in \{0,\ldots,9\} 
    \text{ and } i'\big)
  \text{ then } z_\ell
  \text{ else } m_x(i).
\end{equation}
As we will explain in \Cref{sec:target-semantics-translation-soundness}, this 
non-standard variable update models \emph{tensor broadcasting} in PyTorch and other tensor libraries, 
and is one of the factors that let us have an effective tensor-based implementation of maps stored in variables.
Except for the use of this update, 
$\sigma[x\mapsto m'_x]$ is 
precisely the state that we would obtain by 
vectorising the corresponding $\code{fetch}$ command in the source language over $A_\mathit{vec}$ in the usual sense, i.e.,
running the command component-wise over a vector of size ten whose components are accessed
by indices $i_0,\ldots,i_9$.

The first command $x^{\lreal} := \code{fetch}([(\rvarstr{z},t^{\lintt})])$ also returns scores 
for the ten indices in $A_\mathit{vec}$, which happen to be all zero in this case. 
The returned 
scores are packaged into the constant zero map from $A_\mathit{vec}$ to reals,
written as $[i_0 \mapsto 0.0, \ldots, i_9 \mapsto 0.0]$.

\paragraph{Remaining commands.}
The remaining three commands 
are executed in a similarly $A_\mathit{vec}$-vectorised way,
and give rise to the following final state $\sigma'$ and score map $T' : A_\mathit{vec} \to \bbR$:
\begin{align}
  \label{eqn:examples-target-language:final-state}
  \sigma' & {} = \sigma[x \mapsto m'_x, y \mapsto m'_x], &
  T'(i_\ell) 
  & {} = \log p_\cN(z_\ell;f(0.0),1.0) + \log p_\mcN(o(\ell); g(z_\ell), 1.0),
\end{align}
where $m_x'$ is the map in \Cref{eqn:examples-target-language:state-after-first-command}.
These outputs correspond to the combined outcomes of all the ten iterations
of the loop in $\mathit{HMM}$ in the source language,
where for each $\ell \in \{0,\ldots,9\}$,
the $\ell$-th iteration is run after $t$ is set to $\ell$ and both $x$ and $y$ are set to $0.0$.

Note that the sum of the scores in $T'$ differs from $r_\mathit{HMM}$,
the total score computed by $\mathit{HMM}$ in the source language
with random variables $\rvar{z}_\ell$ fixed to the values $z_\ell$ (see \Cref{eqn:hmm-logpdf}):
\begin{align}
\nonumber
\textstyle
\sum_{\ell = 0}^9 T'(i_\ell)
& {} =
\textstyle
\sum_{\ell = 0}^9 \Bigl(\log p_\cN(z_\ell;f(\boxed{0.0}),1.0) + \log p_\mcN(o(\ell); g(z_\ell), 1.0)\Bigr)
\\
\label{eqn:rhmm}
& {} \neq
\textstyle
\sum_{\ell = 0}^9 \Bigl(\log p_\cN(z_\ell;f(\boxed{z_{\ell - 1}}),1.0) + \log p_\mcN(o(\ell); g(z_\ell), 1.0)\Bigr)
= r_\mathit{HMM},
\end{align}
where $z_{-1} = 0.0$.
The different parts are highlighted with boxes. They originate from the fact
that the execution of $C_0$ under $A_\mathit{vec}$ is \emph{speculative} about the starting value of the variable $y$;
at each index $i_\ell \in A_\mathit{vec}$, the $i_\ell$ part of the execution uses $0.0$ as the starting value of $y$,
instead of $z_{\ell -1}$ used by $\mathit{HMM}$. This type of speculation over starting values of variables
at each index is exploited in our loop vectorisation, as we will explain in \Cref{subsec:target-translation}.
\qed
\end{example}

\begin{example}[$C_1$ in \Cref{fig:programs-target-language}]
\label{example:targetlang-C1}
The command $C_1$ extends $C_0$ by adding the $\code{lookup\_index}$ command 
and then wrapping the whole with the $\code{extend\_index}$ construct.
The $\code{extend\_index}$ construct \emph{locally changes} the index set given
in the beginning of the execution, and in so doing, it alters the level of vectorisation of commands inside
its scope temporarily.

Concretely, assume that $C_1$ is executed under the index set
$A_{[]} = \{ [] \}$ containing only the empty index (i.e., empty sequence) and  
the starting state $\sigma_1$ where $\sigma_1(t) = (\lambda i.\,0)$ and 
$\sigma_1(x) = \sigma_1(y) = (\lambda i.\,0.0)$ (i.e., $t,x,y$ store constant-zero maps modulo return types).
Also, assume that the execution is given the same random database $D$ as in \Cref{example:targetlang-C0}.
In this setup, $C_1$ is run as follows.

\paragraph{Entry phase and body of extend\_index.}
The $\code{extend\_index}$ construct in $C_1$ first \emph{extends} the index $[]$ in $A_{[]}$ 
by adding the string-integer pair $(\rvarstr{vec},\ell)$ at the end for every $\ell \in \{0,\ldots,9\}$.
The updated index set is $A_\mathit{vec} = \{i_0,\ldots,i_9\}$ from \Cref{example:targetlang-C0}.

The $\code{extend\_index}$ construct then executes the commands in its scope under $A_\mathit{vec}$, $D$, and $\sigma_1$.
The first command in the scope is 
$t^{\lintt} := \code{lookup\_index}(\rvarstr{vec})$. For every index $i_\ell \in A_\mathit{vec}$,
it finds the value paired with $\rvarstr{vec}$ in $i_\ell$, and copies the value to 
all the entries of the map $\sigma_1(t)$ whose indices are of the form $i_\ell \concat i'$ for some $i'$.
That is, the resulting state is $\sigma_1[t \mapsto m_t]$ where $m_t$ is the map in \Cref{eqn:examples-target-language:map-x},
so that the state is the same as the starting state $\sigma$ of $C_0$ from \Cref{example:targetlang-C0}.
Regarding the score map, 
the command does not perform anything special, and simply returns the zero 
map $[i_0 \mapsto 0.0, \ldots, i_9 \mapsto 0.0]$.
The remaining command in the scope is $C_0$. It is run under $A_\mathit{vec}$, $D$, and $\sigma$,
and results in the final state $\sigma'$ and the score map $T'$ from \Cref{example:targetlang-C0}
(see \cref{eqn:examples-target-language:final-state}).

\paragraph{Exit phase of extend\_index.}
The $\code{extend\_index}$ construct finally \emph{restores} the original index set $A_{[]}$
so that the subsequent computation is run under $A_{[]}$.
Moreover, the construct \emph{post-processes} $\sigma'$ and $T'$ to $\sigma''$ and $T''$, which are then returned 
as the final state and score map.
Intuitively, $\sigma''$ is obtained by making the result at the last index $i_9 \in A_\mathit{vec}$ 
the final outcome of the entire computation, and $T''$ is obtained by summing over all the entries in $A_\mathit{vec}$. 
Concretely, for every variable $w$, the postprocessing builds the new map $m''_w = \sigma''(w)$ from 
the old map $m_w' = \sigma'(w)$, by reading the $i_9$-th entry of $m_w'$ and 
storing the read value at every index:
$m''_w = (\lambda i.\, m'_w(i_9))$.
The score map $T''$ is 
the singleton function mapping the unique index $[] \in A_{[]}$ to $\sum_{\ell = 0}^9 T'(i_\ell)$.
This post-processing comes from our loop vectorisation; when the vectorised parallel execution of
all the loop iterations is completed,
we need to pick only the final state computed by the last iteration
but we have to accumulate the scores computed by all the iterations. 
The details will appear in \Cref{subsec:target-translation}.
\qed
\end{example}

\subsection{Syntax of the Target Language}
\label{subsec:target-syntax}

\begin{figure}[t]
  \begin{align*}
    \mbox{\it Lifted Integer Expr.}\ \ Z 
    & \;::=\; n \,\mid\, x^\lintt \,\mid\, \op_i(Z_1,\, \ldots,\, Z_k,\,E_1,\ldots,\, E_m)
    \\
    \mbox{\it Lifted Real Expr.}\ \ E
    & \;::=\; r \,\mid\, \smash{x^\lreal} \,\mid\, \op_r(Z_1,\, \ldots,\, Z_k,\,E_1,\, \ldots,\, E_m)
    \\
    \mbox{\it Lifted Index Expr.}\ \ I 
    & \;::=\; [(\alpha_1, Z_1);\, \ldots;\, (\alpha_k, Z_k)] 
    \quad \text{(for distinct $\alpha_i$'s)}
    \\
    \mbox{\it Commands}\ \ C %  or Programs
    & \;::=\; \code{score}(E) \,\mid\, \smash{x^\lreal} := \code{fetch}(I) \,\mid\, 
    \code{skip} \,\mid\, \smash{x^\lintt} := Z \,\mid\, \smash{x^\lreal} := E \,\mid\,
    C_1;C_2  
    \\
    & \phantom{\;::=\;}
    \,\mid\,   
    \code{for} \ \smash{x^\lintt} \ \code{in} \ \code{range}(n_+)\ \code{do}\ C
    \,\mid\,
    \code{ifz} \ Z \ C_1 \ C_2
    \,\mid\, 
    \code{loop\_fixpt\_noacc}(n_+)\ C 
    \\
    & \phantom{\;::=\;}
    \,\mid\,    
    \code{extend\_index}(\alpha,n_+)\ C
    \,\mid\,
    \smash{x^{\lintt}} := \code{lookup\_index}(\alpha)
    \,\mid\, 
    \code{shift}(\alpha)
  \end{align*}
\vspace{-1.8em}
\caption{Syntax of the target language. $n,m$ are non-negative integers, $n_+$ a positive integer,
  and $\alpha$ a string.
}
\label{fig:syntax-target-language}
\vspace{-1em}
\end{figure}

\Cref{fig:syntax-target-language} shows the syntax of the target language, 
which is obtained from that of the source language by annotating variables 
with lifted types $\lreal$ and $\lintt$, instead of $\real$ and $\intt$,
and including four new constructs, one for a special type of looping and three for controlling vectorisation.  
Commands in the target language operate on states with the lifted variables,
i.e., variables that store maps from indices to reals or integers. As input, they
receive a random database $D$, a state $\sigma$ of lifted variables, and 
a finite set $A$ of indices.
Then, they output a final state $\sigma'$ and a score map $T'$ that assigns
a score to each index $i \in A$. We write 
$\Comm_\mathit{tgt}$ for the set of commands in the target language.

Let us go through the cases of a command $C$ in the grammar and explain its behaviour
assuming that $C$ is run under $(D,\sigma,A)$.

\paragraph{Existing commands.}
If the command $C$ is one of the first five cases in the grammar, i.e., an atomic command that also appears in the source language (modulo type annotations of variables), then
its execution in the target language corresponds to the \emph{vectorised parallel execution} of the same command in 
the source language over indices in $A$. For each index $i \in A$, $C$ is run according
to the semantics of the source language, except that all variable
reads are now on the $i$ entries of the maps stored in those variables,
and similarly variable updates are on the $i$-related entries; precisely, all the 
entries at the indices $i\concat i''$ for some $i''$ are updated.
The final score map of $C$ is the function $T'$
that maps each $i \in A$ to the total score computed by the execution under $i$.

If the command $C$ is the sequential
composition $(C_1;C_2)$, then $C_1$ is run first under given $(D, \sigma, A)$ and $C_2$ is 
run afterwards  
under $(D,\sigma'_1,A)$ where $\sigma'_1$ is the final state of $C_1$. The final 
state $\sigma'$ of the entire $C$ is the one
from the execution of $C_2$,
and the final score map $T'$ of $C$ is the \emph{pointwise sum} of the maps $T'_1$ and $T'_2$ 
resulting from the executions of $C_1$ and $C_2$. 
If the command $C$ is $(\code{for}\ x^\lintt\ \code{in}\ \code{range}(n_+)\ \code{do}\ C)$,
then $C$ behaves the same as its loop-unrolled version. 

If the command $C$ is the conditional $(\code{ifz}\ Z\ C_1\ C_2)$,
then its execution starts by 
\emph{partitioning} the index set $A$ into $A_1$ and $A_2$ such that $A_1$ contains 
all the indices $i \in A$ under which $Z$ evaluates to zero, and $A_2$ contains 
the remaining indices. Then, $C_1$ is run under $(D,\sigma,A_1)$,
and $C_2$ is run under $(D,\sigma'_1,A_2)$ afterwards where $\sigma'_1$ is the final 
state of $C_1$.
The final state $\sigma'$ of the conditional command $C$ is the one from 
the execution of $C_2$, and the final 
score map $T'$ of $C$ is 
the following combination of $T'_1$ and $T'_2$ resulting from the runs 
of $C_1$ and $C_2$:
$T'(i) = T'_1(i)$ for all $i \in A_1$ and $T'(i) = T'_2(i)$ for all $i \in A_2$. 
If the reader is familiar with abstract interpretation, she or he may have noticed that
our conditional command behaves according to (a finite executable version of) 
the standard collecting semantics in abstract interpretation.

\paragraph{New commands.}
The remaining four cases are new.
The first case enables a fine-grained control over \emph{looping and score management},
and the other cases enable that over \emph{vectorised parallel execution}.

If the command $C$ is $(\code{loop\_fixpt\_noacc}(n_+)\ C_1)$, it behaves
the same as the sequential composition of $n_+$-many copies of $C_1$'s, except for two points. First, the execution of $C$ skips some copies of $C_1$ and stops early 
if its \emph{fixed-point check} gives a positive answer. Concretely, after the run
of each copy of $C_1$, the execution checks whether the state has been unchanged
during the run of this copy, that is, whether the state is a fixed point of
the run of the copy. If the check says yes, the execution stops, skipping
the remaining copies of $C_1$.
This early stopping based on fixed-point check
leads to the performance improvement in our optimisation. 
Second, the execution of $C$ does not 
follow the \emph{score accumulation scheme} used by the sequential composition, 
but it instead keeps the score of the last executed copy of $C_1$ only, 
ignoring the scores computed by all the other executed copies of $C_1$. This unusual treatment 
of scores is closely tied to the use of speculation in our optimisation; ignoring amounts to discarding the
results of certain unconfirmed speculative computations. 

If the command $C$ is $(\code{extend\_index}(\alpha,n_+)\ C_1)$,
it \emph{extends} every index $i$ in $A$ by adding $(\alpha,k)$ at the end for all $k \in \{0,\ldots,n_+{-}1\}$;
recall that $C$ is assumed to be run under $(D,\sigma,A)$.
This extension may fail if $i \concat [(\alpha,k)]$
for some $i \in A$ and $k \in \{0,\ldots,n_+{-}1\}$ violates the definition of $\Index$
because the string $\alpha$ is already present in $i$.
If this happens, the execution of $C$ gets stuck. 
Otherwise, we have a new index set $A'$ of size $|A| \times n$ consisting of
the extended indices, and $C_1$ is then run under $(D,\sigma,A')$. 
When the execution of $C_1$ is completed, 
the original index set $A$ is \emph{restored}, and the final state $\sigma_1'$ and score map $T_1'$ of $C_1$
are \emph{post-processed} to yield the final results, $\sigma'$ and $T'$, for $C$. 
Here $T'$ is obtained from $T'_1$ via summation: $T'(i) = \sum_{k = 0}^{n_{+}{-}1} 
T'_1(i \concat [(\alpha,k)])$ 
for every $i \in A$. The state $\sigma'$ is obtained by going
through every variable $x$ and updating the map $\sigma'_1(x)$ stored in 
$x$ as follows: 
$\sigma'(x)(i') = \sigma_1'(x)(i \concat [(\alpha,n_+{-}1)])$ if $i' = i \concat i''$
for some $i \in A$ and $i''$, and $\sigma'(x)(i') = \sigma_1'(x)(i')$ otherwise. Intuitively,
this post-processing step ensures that for all $i \in A$, $\sigma'$ keeps only the result of the
last $(i \concat [(\alpha,n_+{-}1)])$ execution of $C_1$,
while $T'$ keeps the scores computed by all  
the $(i \concat [(\alpha,\ell)])$ executions
of $C_1$ for $\ell \in \{0,\ldots,n_+{-}1\}$.

If the command $C$ is
$(x^{\lintt} := \code{lookup\_index}(\alpha))$, it first  
checks whether every index $i$ in the set $A$ includes a pair $(\alpha,\ell_i)$ 
for some $\ell_i \in \Z$. If the answer is no, the execution of $C$ gets stuck.
Otherwise, for every $i \in A$, $C$ stores $\ell_i$ at the $(i\concat i'')$ entry of the map of $x$, 
for all $i''$. This construct allows 
the $i$ execution to depend on information \emph{specific} to $i$, 
thereby enabling different components of vectorised execution to 
perform different computations.

If the command $C$ is $\code{shift}(\alpha)$, it 
updates all the maps stored in the variables
by \emph{moving} values inside each map in a manner analogous to the bitwise-shift operation.
More precisely, assume that $C$ is run under $(D,\sigma,A)$ and $A = \{i \concat [(\alpha,k)]
: i \in A_0,\, k \in \{0,\ldots,n_{+}{-}1\}\}$ for some $n_+ \geq 1$ and $A_0$
such that $\alpha$ is not already present in every index $i \in A_0$.
Then, $\code{shift}(\alpha)$ results
in the state $\sigma'$ by 
going through every variable $x$ and updating the map $\sigma(x)$ stored in 
$x$ as follows: 
\[
\sigma'(x)(i') 
=
\begin{cases}
\sigma(x)(i \concat [(\alpha,k{-}1)])
& \text{if $i' = i \concat [(\alpha,k)] \concat i''$ for some $k \in \{1,\ldots,n_{+}{-}1\}$ and $i''$},
\\
\sigma(x)(i) 
& \text{if $i' = i \concat [(\alpha,0)] \concat i''$ for some $i''$},
\\
\sigma(x)(i')
& \text{otherwise}.
\end{cases}
\]
A good way to understand this definition is to ignore the $i''$ part, and to focus on where the
entries of $\sigma'$
come from:
the $(i \concat [(\alpha,k)])$ entry is obtained from the $(i \concat [(\alpha,k-1)])$ entry 
for all $k \in \{1,\ldots,n_{+}{-}1\}$, and 
the $(i\concat [(\alpha,0)])$ entry is obtained from the $i$ entry. 
This $\code{shift}$ command lets different components of a vectorised parallel
execution communicate with each other by moving values between them,
and enables us to vectorise loops that have dependencies between iterations.

\subsection{Translation}
\label{subsec:target-translation}

\begin{figure}
\begin{align*}
  & \begin{aligned}[t]
  \overline{x^\real := \code{fetch}(I)} & {}\;\equiv\; x^{\lreal} := \code{fetch}(\overline{I}), \\[-1pt]
  \overline{\code{score}(E)} & {}\;\equiv\; \code{score}(\overline{E}), \\[-1pt]
  \overline{x^\intt := Z} & {}\;\equiv\; x^{\lintt} := \overline{Z}, \\[-1pt]
  \overline{x^\real := E} & {}\;\equiv\; x^{\lreal} := \overline{E}, \\[-1pt]
  \overline{\code{skip}} & {}\;\equiv\; \code{skip}, \\[-1pt]
  \overline{C_1; C_2} & {}\;\equiv\; \overline{C_1}; \overline{C_2}, \\[-1pt]
  \overline{\code{ifz}\ Z\ C_1\ C_2} & {}\;\equiv\; \code{ifz}\ \overline{Z}\ \overline{C_1}\ \overline{C_2},
  \end{aligned} &
  & \begin{aligned}[t]
    & \overline{\code{for}\ x^\intt \ \code{in} \ \code{range}(n)\ \code{do}\ C_1} \\
      & \;\; \begin{aligned}[t]
      & {}{\;\equiv\;\;} \code{extend\_index}(\alpha,n)\ \{ \\
      & {}\phantom{\;\equiv\;\;} \quad \code{loop\_fixpt\_noacc}(n)\ \{ \\
      & {}\phantom{\;\equiv\;\;} \quad \quad \code{shift}(\alpha); \\
      & {}\phantom{\;\equiv\;\;} \quad \quad x^{\intt\dagger} := \code{lookup\_index}(\alpha); \\
      & {}\phantom{\;\equiv\;\;} \quad \quad \overline{C_1}\;\; \} \}
    \end{aligned} \\
    & \text{where $\alpha$ is a string not appearing in $\overline{C_1}$}.
  \end{aligned}
\end{align*}
\vspace{-1.1em}
\caption{Translation of commands in the target language.}
\label{fig:translation-commands}
\vspace{-0.9em}
\end{figure}

Our loop vectorisation is implemented as a translation from the source language to the target language.
The translation assumes that these languages use the 
same set of symbols for variables, and for every variable $x$ in that set, 
they give the corresponding types, i.e., $x$ has a type $\tau$ in the source 
language if and only if it has the type $\tau\dagger$ in the target language.
Using this assumption, the translation converts expressions of all three 
types in the source language to themselves except that variables $x^\tau$
in those expressions are replaced by $x^{\tau\dagger}$. We put a line on
top of expressions in the source language to denote their translations. Thus,
$\overline{E}$,
$\overline{Z}$, and $\overline{I}$
mean the translated $E$, $Z$, and $I$.

The translation of commands is defined inductively by the rules in \Cref{fig:translation-commands}. 
The most important rule is the one for translating a for-loop 
and thereby implementing the main part of our loop vectorisation;
the other rules simply apply the translation inductively.

Consider the translation of 
$C \equiv (\code{for}\ x^\intt \ \code{in} \ \code{range}(n)\ \code{do}\ C_1)$ in the figure.
The translated $\overline{C}$ is a $\code{loop\_fixpt\_noacc}$ loop run under
the scope of the $\code{extend\_index}(\alpha,n)$ construct. A single iteration
of this loop corresponds to the % this $\code{loop\_fixpt\_noacc}$ loop
\emph{vectorised parallel execution} of the body $C_1$ of the original for-loop over all the indices in $\code{range}(n)$.
The translation implements this parallel execution using the $\code{extend\_index}(\alpha,n)$ construct, 
which changes the input index set $A$ to $A' = \{ i \concat [(\alpha,k)] : i \in A,\, k \in \{0,\ldots,n{-}1\}\}$
and increases the level of parallelism by the factor of $n$ since $|A|\times n = |A'|$; 
the increased parallelism is then used to run all the $n$ iterations of the loop body $C_1$ simultaneously.

The parallel execution above is \emph{speculative} in that the execution of the 
translated loop body $\overline{C_1}$ under an index $(i \concat [(\alpha,k{+}1)]) \in A'$
does not wait until the result of the previous iteration under the index $(i \concat [(\alpha,k)])$ is available---%
it just proceeds with a predicted result of the previous iteration.

To preserve the semantics of 
the original for-loop despite the speculative execution, the translation of $C$ \emph{repeats} the parallel execution up to $n$ 
times using the $\code{loop\_fixpt\_noacc}(n)$ construct, 
and forces \emph{communication} between different threads of the parallel execution (which
represent computations under different indices in $A'$) using the $\code{shift}(\alpha)$ command.
Each round of the repetition starts by running $\code{shift}(\alpha)$, which moves values between different 
threads % of the parallel execution
such that the $(i\concat [(\alpha,k{+}1)])$-indexed thread for the $(k{+}1)$-th 
iteration of the loop body $\overline{C_1}$ in this round receives, at the start of execution, 
the result of the $(i\concat [(\alpha,k)])$-indexed thread for the $k$-th iteration 
from the previous round. Moving values this way gradually eliminates the speculative nature of the parallel execution:
at the end of the $j$-th round, the threads for the $0,\ldots,j$-th iterations of % the loop body
$\overline{C_1}$ are guaranteed to start with correct results for the previous iterations so that they are no longer speculative.

Another important aspect of the translation of $C$ is that the $\code{loop\_fixpt\_noacc}$ loop
is stopped early if the state is not changed by the previous round of the loop, 
which is detected by the \emph{fixed-point check} of the $\code{loop\_fixpt\_noacc}$ construct.
This early stopping occurs often in practice, and it is crucial for the performance 
improvement of our loop vectorisation.

Finally, we point out that \emph{only the last round} of 
the $\code{loop\_fixpt\_noacc}$ loop contributes to the final score map $T'$ computed by the loop;
the score maps computed by all the other rounds of the loop are simply discarded. Using 
the score map of the last round only comes from the fact that only the parallel execution 
of the last round is guaranteed to be the non-speculative parallel execution of all
the iterations of the original for-loop in the source language. 

\begin{figure}[t]
  \small
  \centering
  \[
  \begin{array}{c}
      C_2 =
      \left[
      \begin{array}{l}
      \code{extend\_index}(\rvarstr{vec},10)\; \{ \\
      \quad \code{loop\_fixpt\_noacc}(10)\; \{ \\
      \quad\quad \code{shift}(\rvarstr{vec});
      \; t^{\lintt} := \code{lookup\_index}(\rvarstr{vec});
      \; C_0\; \}\}.
      \end{array}
      \right]
    \end{array}
    \] 
    \vspace{-1.2em}
    \caption{Example translation of $\mathit{HMM}$ in \Cref{fig:programs-source-language}
      to the target language.
      \rwl{The command $C_0$ was defined in \cref{fig:programs-target-language}}.}
  \label{fig:programs-target-language-2}
  \vspace{-1em}
\end{figure}

\begin{example}[$C_2$ in \Cref{fig:programs-target-language-2}]
Recall the $\mathit{HMM}$ example 
from \Cref{fig:programs-source-language}. Translating $\mathit{HMM}$ to the target language
gives the command $C_2$ in \Cref{fig:programs-target-language-2}. In the 
rest of this subsection, we will 
go through the key steps of the execution of $C_2$, and explain how
$C_2$ uses \emph{speculative execution} and \emph{fixed-point check}
to vectorise the loop in $\mathit{HMM}$ even though
the loop iterations have data dependencies between them.

Let $D$ be a random database as defined in \Cref{example:targetlang-C0}.
Also, let $A_{[]} = \{[]\}$ be the singleton set of the empty index, 
and $\sigma_0$ be the state where every variable stores the constant-zero function. 
Suppose that $C_2$ is run with $(D,\sigma_0,A_{[]})$. Then, 
the $\code{loop\_fixpt\_noacc}$ loop inside $C_2$ performs two iterations, 
and $C_2$ produces the state $\sigma'$ and the score map $T': A_{[]} \to \R$ given by
\begin{align*}
& \sigma'(t) = \lambda i.\, 9,\qquad
\sigma'(x) = \sigma'(y) = \lambda i.\, z_9,\qquad
\sigma'(v) = \sigma_0(v) \text{ for all other variables } v,\\[-3pt]
& \textstyle
T'([]) = \sum_{\ell=0}^{9} \Bigl(\log p_\cN(z_\ell; f(z_{\ell-1}), 1.0) + \log p_\mcN(o(\ell); g(z_\ell), 1.0)\Bigr),
\end{align*}
where $z_{-1} = 0.0$. At the index $[]$, this outcome matches that 
of $\mathit{HMM}$ in the source language. Specifically, 
$T'([])$ equals to $r_\mathit{HMM}$ from \Cref{eqn:rhmm}, and for every variable $u$, 
the value $\sigma'(u)([])$ equals to the value of $u$ 
in the final state of $\mathit{HMM}$.
We now examine the execution of $C_2$ in detail.

\paragraph{First round of the loop.}
The execution of $C_2$ begins by 
extending indices in $A_{[]}$ with $(\rvarstr{vec},\ell)$ for
all $\ell \in \{0,\ldots,9\}$,
and updating the current index set to that of the extended indices,
i.e., $A_\mathit{vec} = \{i_0,\ldots,i_9\}$ for $i_\ell = [(\rvarstr{vec},\ell)]$. 
Then, it runs the body
of the loop of $C_2$ with $(D,\sigma_0,A_\mathit{vec})$. 
The first command in the loop body is $\code{shift}(\rvarstr{vec})$,
and it acts as no-op (or $\code{skip}$) this time, because all the variables 
in $\sigma_0$ store constant-zero maps and so moving values between entries
\rhl{does} not change those maps.  
The next command looks up the value bound to $\rvarstr{vec}$ in 
each index $i$ in $A_\mathit{vec}$, stores the looked-up value 
at the $i$-related entries of the map of $t$ (i.e., entries at
indices $i \concat i'$ for all $i'$), and returns  
the following state $\hat{\sigma}_0$:
\[
\hat{\sigma}_0 = \sigma_0[t \mapsto m_t]
\ \ \text{where}\
m_t(i) = 
\text{if } \big(i = i_\ell \concat i' \text{ for some } \ell \in \{0,\ldots,9\} \text{ and } i'\big)
\text{ then } \ell
\text{ else } \sigma_0(t)(i).
\] 
The rest of the loop body executes equivalently to $C_0$ in \Cref{example:targetlang-C0}, producing the same outcome:
a state $\sigma'_0$ and a score map $T'_0 : A_\mathit{vec} \to \R$ given by
\begin{align*}
\sigma'_0 & {} = \sigma_0[t \mapsto m_t, x \mapsto m'_0, y \mapsto m'_0], 
\quad
m'_0(i) {} = 
\begin{cases}
  z_\ell & \text{if $i = i_\ell \concat i'$ for some $\ell \in \{0,\ldots,9\}$ and $i'$},
  \\
  0.0 & \text{otherwise},
\end{cases} 
\\
T'_0(i_\ell) 
& {} = \log p_\cN(z_\ell;f(0.0),1.0) + \log p_\mcN(o(\ell); g(z_\ell), 1.0)
\quad \text{for all $\ell \in \{0,\ldots,9\}$}.
\end{align*}

Note that for all $i_\ell \in A_\mathit{vec}$, the $i_\ell$ part of the execution of the loop body 
corresponds to the $\ell$-th iteration of the original for-loop of $\mathit{HMM}$, 
but this correspondence is not perfect due to the \emph{incorrect speculation} used by the former: the $i_\ell$ 
part of the execution speculates that in the original for-loop of $\mathit{HMM}$, the variable $y$ stores 
$\sigma_0(y)(i_{\ell}) = 0.0$ after the $(\ell{-}1)$-th iteration, but if $\ell \geq 1$ and $z_{\ell-1} \neq 0$, 
the speculation is wrong because the correct value of $y$ is $z_{\ell - 1}$. As a result of incorrect speculation
and imperfect correspondence, $T'(i_\ell)$ is not equal to the score of the $\ell$-th iteration of $\mathit{HMM}$ in general; the former uses $f(0.0)$ instead of $f(z_{\ell-1})$.

The incorrect speculation is 
detected by the \emph{fixed-point check} of the $\code{loop\_fixpt\_noacc}$ construct of $C_2$, and it is corrected
also by the \emph{re-execution} of the body of the $\code{loop\_fixpt\_noacc}$ loop with improved speculation. 
Concretely, the fixed-point check of $\code{loop\_fixpt\_noacc}$ tests whether 
the state $\sigma'_0$ is the same as the initial state $\sigma_0$, and gets a negative answer,
which indicates that the speculated value of $y$ at some iteration of the for-loop of $\mathit{HMM}$ is wrong. 
Because of the negative answer, the body of $\code{loop\_fixpt\_noacc}$ is executed again but this time speculation is 
made based on the current state $\sigma'_0$, instead of the old state $\sigma_0$. Using the more recent state
for speculation always leads to improvement in terms of correctness.

\paragraph{Subsequent rounds.}
In our example, 
the speculation made at the $j$-th round of $\code{loop\_fixpt\_noacc}$ is guaranteed to be 
correct for the $k$-th iteration of the original for-loop of $\mathit{HMM}$ for all $k \leq j$
where correctness means that the speculated value of $y$ at the beginning of the $k$-th iteration equals the actual
value of $y$ at that iteration of the for-loop of $\mathit{HMM}$. In the example, 
we have more than this guaranteed improvement, because all the speculated values of $y$ at the second round of 
the $\code{loop\_fixpt\_noacc}$ loop are correct. The use of the \emph{correct speculation} implies that % perfect speculation
the state $\sigma'_1$ after the second round  equals
$\sigma'_0$, and so the fixed-point check of $\code{loop\_fixpt\_noacc}$ succeeds this time 
and the $\code{loop\_fixpt\_noacc}$ loop terminates after this second round.

The final state $\sigma'$ 
and the score map $T'$ of $C_2$ are then constructed from $\sigma'_1$ and the score map $T'_1$ computed 
by the second round, according to the semantics of $\code{extend\_index}(\rvarstr{vec},10)$:
\begin{align*}
\sigma' & {} = \sigma'_0[t \mapsto m'_t,\, x \mapsto m',\, y \mapsto m'],
\qquad
m'_t = \lambda i.\,9,
\qquad
m' = \lambda i.\, z_9,
\\
T' &
\textstyle
{} = \left[[] \mapsto \sum_{\ell=0}^{9}\Bigl(\log p_\cN(z_\ell; f(z_{\ell-1}), 1.0) + \log p_\mcN(o(\ell); g(z_\ell), 1.0)\Bigr)\right].
\end{align*} 
Note that the final $\sigma'$ and $T'$ of $C_2$ at the index $[]$ are the same as the results of $\mathit{HMM}$ 
in \Cref{fig:programs-source-language}.
\qed
\end{example}

\section{Semantics of the Target Language and Soundness of Translation}
\label{sec:target-semantics-translation-soundness}

In this section, we present an operational semantics of the target language, formalising
the intuitive description of the language in \Cref{subsec:overview-target-language,subsec:target-syntax}
(\Cref{ssec:semantics-target-language}). % in the previous section
We then prove that the translation from the source language to the target language,
described in \Cref{subsec:target-translation}, preserves the semantics (\Cref{ssec:target-translation-soundness}).

\subsection{Semantics of the Target Language}
\label{ssec:semantics-target-language}

\paragraph{\rwl{Definition of states}}
The semantics relies on the standard prefix order $\sqsubseteq$ 
on indices: for all $i,j \in \Index$,
we have $i \sqsubseteq j$ if and only if  
$|i| \leq |j|$ and $i.k = j.k$ for all $k \in \{1,\ldots,|i|\}$. Here
 $|i|$ means the length of $i$, and $i.k$ means the $k$-th element of $i$.
We denote the upward and downward closures of an index $i$ and an index set $L$ using the following notations:
$i{\uparrow} = \{j \in \Index \,:\, i \sqsubseteq j\}$,
$i{\downarrow} = \{j \in \Index \,:\, j \sqsubseteq i\}$,
$L{\uparrow} = \bigcup_{j \in L} j{\uparrow}$, and
$L{\downarrow} = \bigcup_{j \in L} j{\downarrow}$.
Also, we let
\[ 
\max(L) =
\text{if }
\big(i \text{ is the $\sqsubseteq$-largest element of $L$}\big)
\text{ then } i
\text{ else undefined}. 
\] 
Thus, $i = \max(L)$ if and only if $i \in L$ and for all $i' \in L$, $i' \sqsubseteq i$. 
Note that 
if $L \cap j{\downarrow} \neq \emptyset$ for some index $j$,   
then $\max(L \cap j{\downarrow})$ always exists, because $L \cap j{\downarrow}$ is totally ordered by
$\sqsubseteq$.

For each type $\tau \in \{\intt, \real\}$, let $\db{\intt} = \Z$ and $\db{\real} = \R$.
A {\bf state} $\sigma$ in the target language is a function sending variables to 
finite partial maps from indices to integers or reals such that
for every variable $x^{\tau\dagger}$, 
(i) the range of the partial map $m_x = \sigma(x)$ 
is $\db{\tau}$, and (ii) the domain of $m_x$ contains the empty index $[]$.
A recommended reading is to view the partial map $m_x$ here as
a data structure representing the following total function $\extend(m_x)$:
\begin{equation}
\label{eqn:definition-extend}
\extend(m_x) : \Index \to \db{\tau},
\quad
\extend(m_x)(i) = m_x\big(\max(\dom(m_x) \cap i{\downarrow})\big)
\ \text{for all $i \in \Index$}.
\end{equation}
Note that for every $i \in \Index$, the set $(\dom(m_x) \cap i{\downarrow})$ is nonempty because it 
contains the empty index $[]$ at least, and the set is also totally ordered
because $i{\downarrow}$ is totally ordered. As a result, for all $i$, $\max(\dom(m_x) \cap i{\downarrow})$ is 
a well-defined element in $\dom(m_x)$, which in turn implies that $\extend(m_x)$ is a well-defined total
map. Although the semantics of the target language to be presented shortly is defined in terms of 
finite partial maps stored in variables in states, it depends only on the total functions represented
by those partial maps: for all states $\sigma,\sigma'$, if $\extend(\sigma(x)) = \extend(\sigma'(x))$ for 
all variables $x$, the semantics is unable to distinguish between $\sigma$ and $\sigma'$. 
When we explain the semantics in the rest of the paper, we will often refer to 
the total function $\extend(\sigma(x))$ as the function stored in a variable $x$, although $\extend(\sigma(x))$ is
not stored in $x$.
We write $\State_\mathit{tgt}$ for the set of all the states in 
the target language.  

\rhl{Our $\extend$ operation models {\em tensor broadcasting}~\citep{Walt:11} as implemented in popular tensor libraries such as PyTorch.
To see this, first note that the operation can be applied to any finite partial map $m : \Index \rightharpoonup V$ for a set $V$, even when the empty index $[]$ is not in $\dom(m)$. The result $\extend(m)$ is still well-defined in this general case, although it might not be a total map. Next, as an example, consider two integer-type tensors $t_1$ and $t_2$ with shapes $(3)$ and $(2,3)$, respectively. In our setup, they are represented as partial maps $m_1, m_2 : \Index \rightharpoonup \mathbb{Z}$ with
$\dom(m_1) = \{[(\mathbf{snd},\ell_2)] : \ell_2 \in \{0,1,2\}\}$ and
$\dom(m_2) = \{[(\mathbf{snd},\ell_2);(\mathbf{fst},\ell_1)] : \ell_2 \in \{0,1,2\}, \ell_1 \in \{0,1\}\}$.
Although $t_1$ and $t_2$ have different shapes, they can be added using tensor broadcasting: the addition first broadcasts $t_1$ into a tensor of shape $(2,3)$, and then performs pointwise addition with $t_2$.
In our setup, this broadcasting is modelled by applying $\extend$ to $m_1$, which yields a partial map defined at all indices in $\dom(m_2)$, with each value corresponding to the broadcasted entries of $t_1$. The pointwise addition is then modelled by computing the elementwise sum of $\extend(m_1)$ and $m_2$ at all indices in $\dom(m_2)$. Lastly, note that $\extend(m_1)$ models not only this specific instance of broadcasting $t_1$ with respect to $t_2$, but also all possible instances of broadcasting $t_1$ \mbox{with respect to any tensor $t_2'$ of a larger ``compatible'' shape.}}

\paragraph{\rwl{Semantics of target language}}
The semantics uses the following five semantic domains and their structures:
$\Index$, 
$\State_\mathit{tgt}$,
$\FAntiChain$, 
$\Tensor_\Z$, and  
$\Tensor_\R$.
We have already defined the first two  in this list. The next is
$\FAntiChain$, the domain for finite antichains of indices. A set of indices $L$ 
is an {\bf antichain} if for all $i,j \in L$
with $i \neq j$, we have neither $i \sqsubseteq j$ nor $j \sqsubseteq i$.
The domain $\FAntiChain$ consists of all the {\bf finite antichains} of indices.
When denoting an element of $\FAntiChain$, 
we often use the symbol $A$, instead of the usual symbol $L$ for an index set, 
in order to emphasise that it is not just 
any index set, but an antichain with a finite number of elements. The last two domains are
instances of the set of all {\bf $V$-tensors} for
a set $V$, which are finite partial maps
from $\Index$ to $V$. This $V$-tensor set is written as $\Tensor_V$. We use the symbol $T$
to denote an element of $\Tensor_V$.

\begin{figure}[t]
\begin{align*}
  \db{n}(\sigma, i) & = n, 
  \qquad\qquad
  \db{x^{\lintt}}(\sigma,i) = \extend(\sigma(x^{\lintt}))(i),
  \\
  \db{\op_i(Z_1, \ldots, Z_k,E_1,\ldots,E_m)}(\sigma,i) & =
  \db{\op_i}_a(
      \begin{aligned}[t]
          & \db{Z_1}(\sigma,i),\ldots,\db{Z_k}(\sigma,i),
            \db{E_1}(\sigma,i),\ldots,\db{E_m}(\sigma,i)),
      \end{aligned}
  \\
  %% \\
  \db{r}(\sigma, i) & = r, \qquad\qquad
  \db{x^{\lreal}}(\sigma,i) = \extend(\sigma(x^{\lreal}))(i),
  \\
  \db{\op_r(Z_1, \ldots, Z_k,E_1,\ldots,E_m)}(\sigma,i) & =
  \db{\op_r}_a(
      \begin{aligned}[t]
          & \db{Z_1}(\sigma,i),\ldots,\db{Z_k}(\sigma,i),
            \db{E_1}(\sigma,i),\ldots,\db{E_m}(\sigma,i)),
      \end{aligned}
  \\
  \db{[(\alpha_1,Z_1); \ldots; (\alpha_k, Z_k)]}(\sigma,i) 
  & =
      [(\alpha_1,\db{Z_1}(\sigma,i));\ldots;(\alpha_k,\db{Z_k}(\sigma,i))]. 
\end{align*}
\vspace{-1.8em}
\caption{Semantics of expressions in the target language.}
\label{fig:semantics-expressions-target-language}
\vspace{-0.7em}
\end{figure}

The interpretations of expressions in the target language have the following forms: $\db{Z} : \State_\mathit{tgt} \times \Index \to \Z$,
$\db{E} : \State_\mathit{tgt} \times \Index \to \R$,
and $\db{I} : \State_\mathit{tgt} \times \Index \to \Index$.
The defining clauses for these interpretations appear in \Cref{fig:semantics-expressions-target-language}.
They assume that the interpretations of the integer-returning atomic operations $\op_i$  
and the real-returning atomic operations $\op_r$ are given by $\db{\op_i}_a$ and $\db{\op_r}_a$, 
respectively. The clauses say that an expression is evaluated under a state $\sigma$ and an index $i$
in a standard way using the assumed $\db{\op_i}_a$ and $\db{\op_r}_a$ 
except that all the variable reads in the expression access the $i$ entities of
the represented functions for those variables.

The semantics of commands in the target language is 
specified in terms of a big-step computation relation 
${\Downarrowt}
    \subseteq 
(\Comm_\mathit{tgt} \times \RDB \times \State_\mathit{tgt} \times \FAntiChain) \times (\State_\mathit{tgt} \times \Tensor_\R)$.
The relation $\Downarrowt$ works on states in $\State_\mathit{tgt}$ and takes, as its inputs, 
a command $C$ to run, a random database $D$, a state $\sigma$, and a finite antichain $A$ of indices. 
Intuitively, the indices in $A$ correspond to the ids of the threads that run the same
$C$ in parallel. The id of a thread determines, for each variable, which entries of
the map stored in the variable are read and updated during the 
execution of $C$ by the thread. 

\paragraph{\rwl{Operations on states}}
\rwl{The rules for the relation $\Downarrowt$} use the below notations on
tensors and states. For a set of indices $L$, we write $T^z_L$ for 
the tensor in $\Tensor_\R$ 
that maps every index in $L$ to $0.0$ and is undefined for all other indices.
Also, we write $\oplus$ for the following pointwise addition on $\Tensor_\R$:
\begin{align*}
\oplus & : \Tensor_\R \times \Tensor_\R \to \Tensor_\R,
&
(T \oplus T')(i) & 
=
\begin{cases}
T(i) + T'(i) & \text{if $i \in \dom(T) \cap \dom(T')$},
\\[-1pt]
T(i) & \text{if $i \in \dom(T) \setminus \dom(T')$},
\\[-1pt]
T'(i) & \text{if $i \in \dom(T') \setminus \dom(T)$},
\\[-1pt]
\text{undefined} & \text{otherwise}.
\end{cases}
\end{align*}
Regarding states, the clauses in the figure use the update operation $\sigma[x^{\tau\dagger}: T]$
for a variable $x^{\tau\dagger}$ and a tensor $T \in \Tensor_{\db{\tau}}$, and 
the copy operation $\sigma\langle\rho\rangle$ for a finite partial injective map $\rho$
on indices. These operations result in states $\smash{\sigma[x^{\tau\dagger} : T]}$
and $\sigma\langle\rho\rangle$ in $\State_\mathit{tgt}$ defined as follows:
\begin{align*}
\sigma[x^{\tau\dagger} : T](y^{\tau'\dagger})(i) & =
\begin{cases}
    T(i) & \text{if $y^{\tau'\dagger} \equiv x^{\tau\dagger}$ and $i \in \dom(T)$},
    \\[-1pt]
    \text{undefined} & \text{if $y^{\tau'\dagger} \equiv x^{\tau\dagger}$ and $i \in \dom(T){\uparrow} \setminus \dom(T)$},
    \\[-1pt]
    \sigma(y^{\tau'\dagger})(i) & \text{otherwise},
\end{cases}
\\
\sigma\langle\rho\rangle(y^{\tau'\dagger})(i) & =
\begin{cases}
    \sigma(y^{\tau'\dagger})(\rho^{-1}(i)) & \text{if $i \in \image(\rho)$},
    \\[-1pt]
    \text{undefined} & \text{if $i \in \image(\rho){\uparrow} \setminus \image(\rho)$},
    \\[-1pt]
    \sigma(y^{\tau'\dagger})(i) & \text{otherwise}.
\end{cases}
\end{align*}
If a partial map on the right-hand side is applied to an index that is not in its domain,
the result is undefined, and we do not write this explicitly in the equations.
\rhl{\Cref{fig:tensor-update-copy} shows how update and copy operations work for simple partial maps.}

\usetikzlibrary{arrows, arrows.meta,decorations.pathreplacing, shapes.arrows}
\newcommand{\minitensor}[9]{
    \resizebox{\textwidth}{!}{
        \centering
\scalebox{0.95}{
        \begin{tikzpicture}[
        dot/.style={draw, circle, dotted, fill=gray!30, minimum size=17, font=\small, line width=0.8pt},
        sol/.style={draw, circle, solid, fill=gray!30, minimum size=17, font=\small},
        ]
            \node[font=\small] at (0,1.15) {#9};
            \node[#1] at (0,0.5) {#2};
            \draw [decorate,decoration = {brace,amplitude=6pt,raise=-2pt}] (-1.3,0) --  (1.3,0);
            \node[#3] at (-1,-0.4) {#4};
            \node[#5] at (0,-0.4) {#6};
            \node[#7] at (1,-0.4) {#8};
        \end{tikzpicture}
}
    }
}
\newcommand{\shifttensor}[1]{
    \resizebox{\textwidth}{!}{
        \centering
\scalebox{0.95}{
        \begin{tikzpicture}[
        null/.style={draw, circle, dotted, fill=gray!0, minimum size=17, font=\small, line width=0.8pt},
        ]
            \node[font=\small] at (0,1.15) {#1};
            \node[null] at (0,0.5) (0) {};
            \draw [decorate,decoration = {brace,amplitude=6pt,raise=-2pt}] (-1.3,0) --  (1.3,0);
            \node[null] at (-1,-0.4) (1) {};
            \node[null] at (0,-0.4) (2) {};
            \node[null] at (1,-0.4) (3) {};

            \draw[-{Latex[length=5]}] (0) -- (1);
            \draw[-{Latex[length=5]}] (1) -- (2);
            \draw[-{Latex[length=5]}] (2) -- (3); 
        \end{tikzpicture}
}
    }
}

\newcommand{\shrinktensor}[1]{
    \resizebox{\textwidth}{!}{
        \centering
\scalebox{0.95}{
        \begin{tikzpicture}[
        null/.style={draw, circle, dotted, fill=gray!0, minimum size=17, font=\small, line width=0.8pt},
        ]
            \node[font=\small] at (0,1.15) {#1};
            \node[null] at (0,0.5) (0) {};
            \draw [decorate,decoration = {brace,amplitude=6pt,raise=-2pt}] (-1.3,0) --  (1.3,0);
            \node[null] at (-1,-0.4) (1) {};
            \node[null] at (0,-0.4) (2) {};
            \node[null] at (1,-0.4) (3) {};

            \draw[-{Latex[length=5]}] (3) -- (0);
        \end{tikzpicture}
}
    }
}

\newcommand{\tikzbigdownarrow}{
  \resizebox{0.5\textwidth}{!}{
    \centering
 \scalebox{0.95}{
       \begin{tikzpicture}
          \node[single arrow, draw, shape border rotate=270, fill = black, single arrow tip angle=150, single arrow head extend=3, minimum width = 80, minimum height =15] at (0,0) {};
        \end{tikzpicture}
}
  }
}

\begin{figure}[t]
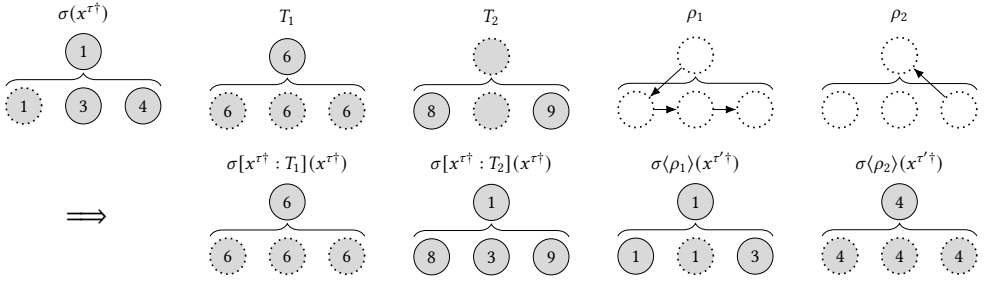

  \rhl{
  \centering
  \begin{minipage}{0.17\textwidth}
    \centering
  \begin{subfigure}{\textwidth}
    \centering
    \minitensor{sol}{1}{dot}{1}{sol}{3}{sol}{4}{$\sigma(x^{\tau\dagger})$}

    \vspace{13pt}
    {\large ~$\implies$}

    \vspace{32pt}
    ~
  \end{subfigure}
  \end{minipage}
  \hspace{0.5em}
  \begin{minipage}{0.17\textwidth}
    \centering
    \begin{subfigure}{\textwidth}
      \centering
      \minitensor{sol}{6}{dot}{6}{dot}{6}{dot}{6}{$T_1$}

      \vspace{-7pt}
      \minitensor{sol}{6}{dot}{6}{dot}{6}{dot}{6}{$\sigma[x^{\tau\dagger}:T_1](x^{\tau\dagger})$}
      \vspace{1em}
    \end{subfigure}
  \end{minipage}
  \hspace{0.5em}
  \begin{minipage}{0.17\textwidth}
    \centering
    \begin{subfigure}{\textwidth}
      \centering
      \minitensor{dot}{ }{sol}{8}{dot}{ }{sol}{9}{$T_2$}

      \vspace{-7pt}
      \minitensor{sol}{1}{sol}{8}{sol}{3}{sol}{9}{$\sigma[x^{\tau\dagger}:T_2](x^{\tau\dagger})$}
      \vspace{1em}
    \end{subfigure}
  \end{minipage}
  \hspace{0.5em}
  \begin{minipage}{0.17\textwidth}
    \centering
    \begin{subfigure}{\textwidth}
      \centering
      \shifttensor{$\rho_1$}

      \vspace{-7pt}
      \minitensor{sol}{1}{sol}{1}{dot}{1}{sol}{3}{$\sigma\langle\rho_1\rangle(x^{\tau'\dagger})$}
      \vspace{1em}
    \end{subfigure}
  \end{minipage}
  \hspace{0.5em}
  \begin{minipage}{0.17\textwidth}
    \centering
    \begin{subfigure}{\textwidth}
      \centering
      \shrinktensor{$\rho_2$}

      \vspace{-7pt}
      \minitensor{sol}{4}{dot}{4}{dot}{4}{dot}{4}{$\sigma\langle\rho_2\rangle(x^{\tau'\dagger})$}
      \vspace{1em}
    \end{subfigure}
  \end{minipage}
  }
  \vspace{-3.0em}
  \caption{\rhl{Examples for update and copy.
  Nodes over braces represent the empty indices, and nodes under braces represent indices $i_\ell = [(\rvarstr{rv},\ell)]$ for $\ell \in \{0,1,2\}$.
  Values in solid nodes are stored in partial maps, while values in dotted nodes are induced by $\extend$.
  Thus, $\sigma(x^{\tau\dagger}) = [[] \mapsto 1, i_1 \mapsto 3, i_2 \mapsto 4]$, $T_1 = [[] \mapsto 6]$, and $T_2 = [i_0 \mapsto 8, i_2 \mapsto 9]$.
  Also, $\extend(\sigma(x^{\tau\dagger}))(i_0) = 1$, and $\extend(T_1)(i_\ell) = 6$ for $\ell \in \{0,1,2\}$.
  The functions $\rho_1$ and $\rho_2$ denote the partial injections on $\Index$
  used in the semantics for $\code{shift}$ and $\code{extend\_index}$ (\cref{fig:semantics-commands-target-language}),
  respectively, and are depicted graphically in the top right;
  e.g., $\rho_2$ is defined only on $i_2$ and maps it to $[]$.}}
  \label{fig:tensor-update-copy}
  \vspace{-0.7em}
\end{figure}

A good way to understand these slightly unusual update and copy operations on states is to 
recall that finite partial maps stored in 
variables represent total functions obtained by the $\extend$ operation,
and examine what the operations do on these represented functions. 
For a function $f : \Index \to V$ and a partial function $g : \Index \rightharpoonup V$,
let $f[g]$ be the outcome of updating $f$ with $g$, that is, the function mapping 
each index $i \in \dom(g)$ to $g(i)$ and every other index $i$ to its original value $f(i)$.
The next lemma characterises the effects of our update and copy operations by means of standard
update and copy operations on the represented functions.
Its proof is in \Cref{subsec:couplings}.
\begin{lemma}
  \label{lem:update-renaming-extend}
  Let $\sigma$ be a state, $x^{\tau\dagger}$, $y^{\tau'\dagger}$ be variables, 
  $T$ be a tensor in $\Tensor_{\db{\tau}}$, and $\rho$ be a finite partial injection on $\Index$. Then,
  we have the following equations:
  \begin{align*}
  \extend(\sigma[x^{\tau\dagger} : T](y^{\tau'\dagger}))
  & = 
  \text{\rm if } \big(y^{\tau'\dagger} \equiv x^{\tau\dagger}\big)
  \text{\rm\ then } \extend(\sigma(x^{\tau\dagger}))[\extend(T)]
  \text{\rm\ else } \extend(\sigma(y^{\tau'\dagger})),
  \\[-3pt]
  \extend(\sigma\langle\rho\rangle(y^{\tau'\dagger}))
  & = \extend(\sigma(y^{\tau'\dagger}))[\extend(T_{y,\rho})], 
\end{align*}
where
  $T_{y,\rho}(i) = \sigma(y^{\tau'\dagger})(\rho^{-1}(i))$ if $i \in \rho(\dom(\sigma(y^{\tau'\dagger})))$,
  and it is undefined otherwise.
\end{lemma}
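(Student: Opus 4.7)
The plan is to verify each of the two equalities pointwise at an arbitrary index $i \in \Index$, in both cases reducing the claim to a comparison of the $\sqsubseteq$-maxima of two downward chains of indices. For the first equation, if $y^{\tau'\dagger} \not\equiv x^{\tau\dagger}$, the update does not touch $\sigma(y^{\tau'\dagger})$, and both sides immediately collapse to $\extend(\sigma(y^{\tau'\dagger}))$; in the remaining case, writing $m = \sigma(x^{\tau\dagger})$ and $m' = \sigma[x^{\tau\dagger}:T](x^{\tau\dagger})$, I read off $\dom(m') = (\dom(m) \setminus \dom(T){\uparrow}) \cup \dom(T)$ from the semantic definition of the update, and split on whether $\dom(T) \cap i{\downarrow}$ is empty. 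When it is nonempty with maximum $j$, the key step is to show that $j$ is also the $\sqsubseteq$-maximum of $\dom(m') \cap i{\downarrow}$: any strictly larger candidate $j' \sqsupset j$ in $\dom(m') \cap i{\downarrow}$ either lies in $\dom(T)$---contradicting the maximality of $j$---or in $\dom(m) \setminus \dom(T){\uparrow}$, which is impossible because $j \in \dom(T)$ and $j \sqsubseteq j'$ force $j' \in \dom(T){\uparrow}$. When the intersection is empty, the removal of $\dom(T){\uparrow}$ cannot touch any index on the chain $\dom(m) \cap i{\downarrow}$ (otherwise some element of $\dom(T)$ would descend below $i$), so $\dom(m') \cap i{\downarrow} = \dom(m) \cap i{\downarrow}$ and $\extend$ produces the same value.

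For the second equation, I follow the same template with $m = \sigma(y^{\tau'\dagger})$ and $m'' = \sigma\langle\rho\rangle(y^{\tau'\dagger})$. Unfolding the copy operation, $\dom(m'')$ decomposes as $\rho(\dom(\rho) \cap \dom(m)) \cup (\dom(m) \setminus \image(\rho){\uparrow})$, and the first summand is exactly $\dom(T_{y,\rho})$. The case split is on whether $\dom(T_{y,\rho}) \cap i{\downarrow}$ is empty. In the nonempty case, the $\sqsubseteq$-maximum $j$ of that set is again the maximum of $\dom(m'') \cap i{\downarrow}$---now the obstruction is that a strictly larger candidate lying in $\dom(m) \setminus \image(\rho){\uparrow}$ would extend $j \in \image(\rho)$ and hence fall in $\image(\rho){\uparrow}$---and the evaluation yields $m(\rho^{-1}(j)) = T_{y,\rho}(j) = \extend(T_{y,\rho})(i)$ by the definition of $T_{y,\rho}$. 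In the empty case, the argument echoes the update equation: no index on $\dom(m) \cap i{\downarrow}$ can be masked by $\image(\rho){\uparrow}$ in a way that would populate $\dom(T_{y,\rho}) \cap i{\downarrow}$, so the two downward chains coincide and the common value is $\extend(m)(i)$.

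The main obstacle, and where the proof must be most careful, is the empty-intersection case of the copy equation. Unlike $\dom(T){\uparrow}$ in the update, the mask $\image(\rho){\uparrow}$ is determined by $\rho$ alone and may in principle hide indices that $\extend(m)$ would have picked as the max of a downward chain. The proof has to check that, under the hypothesis $\dom(T_{y,\rho}) \cap i{\downarrow} = \emptyset$, such shadowing cannot occur on the chain below $i$: descending from a shadowed index of $\dom(m) \cap i{\downarrow}$ to an index in $\image(\rho)$ would, combined with $\dom(\rho) \cap \dom(m) \subseteq \dom(m)$, place an element of $\rho(\dom(\rho) \cap \dom(m))$ below $i$ and contradict the hypothesis. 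Everything else---the ``otherwise'' branches of the two definitions, the well-definedness of $m(\rho^{-1}(\cdot))$ at the chosen max, and the agreement of the two sides on indices where $\extend(T_{y,\rho})$ is undefined---follows mechanically from this analysis and the injectivity of $\rho$.
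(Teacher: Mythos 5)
Your proof of the first equation is correct and is essentially the paper's own argument: both hinge on showing that $\max(\dom(\sigma[x^{\tau\dagger}:T](x^{\tau\dagger})) \cap i{\downarrow}) = \max(\dom(T)\cap i{\downarrow})$ when the latter set is nonempty, and that the deletion of $\dom(T){\uparrow}$ cannot touch the chain below $i$ otherwise. For the second equation you diverge from the paper: the paper simply asserts $\sigma\langle\rho\rangle(y^{\tau'\dagger}) = \sigma[y^{\tau'\dagger}:T_{y,\rho}](y^{\tau'\dagger})$ and invokes the first equation, whereas you redo the max-of-chains analysis directly with $\image(\rho){\uparrow}$ playing the role of $\dom(T){\uparrow}$.

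The gap is exactly at the step you flag as the main obstacle, and your resolution of it is a non sequitur. From a shadowed index $j' \in \dom(m)\cap i{\downarrow}\cap\image(\rho){\uparrow}$ you get some $k \in \image(\rho)$ with $k \sqsubseteq j' \sqsubseteq i$, but to place $k$ in $\dom(T_{y,\rho}) = \rho(\dom(\rho)\cap\dom(m))$ you need $\rho^{-1}(k) \in \dom(m)$, and the inclusion $\dom(\rho)\cap\dom(m) \subseteq \dom(m)$ gives you nothing of the sort. The failure is not repairable for an arbitrary finite partial injection $\rho$: take $\dom(m) = \{[],\,[(\beta,0)]\}$ with $m([]) = 0$ and $m([(\beta,0)]) = 1$ (writing $m = \sigma(y^{\tau'\dagger})$), and $\rho = [[(\gamma,0)] \mapsto [(\beta,0)]]$. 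Then $T_{y,\rho}$ is the empty tensor, so the right-hand side at $[(\beta,0)]$ is $\extend(m)([(\beta,0)]) = 1$; but the copy erases the entry at $[(\beta,0)]$ (since $m([(\gamma,0)])$ is undefined), so the left-hand side evaluates to $m([]) = 0$. Hence the equation as stated is false for general $\rho$, and your empty-intersection case cannot be closed without an extra hypothesis tying $\rho$ to $\sigma$ --- e.g.\ $\dom(\rho) \subseteq \dom(\sigma(y^{\tau'\dagger}))$, under which $\dom(T_{y,\rho}) = \image(\rho)$ and your shadowing argument does go through, or more generally $\dom(m)\cap\image(\rho){\uparrow} \subseteq \dom(T_{y,\rho}){\uparrow}$. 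In fairness, the paper's one-line reduction tacitly relies on the same property (its claimed identity of partial maps fails on the same example, and only the $\rho$'s actually produced by $\code{shift}$ and $\code{extend\_index}$ are ever used), but as written your justification does not establish the step.
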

\noindent
Here we apply the $\extend$ operation in \Cref{eqn:definition-extend} to the tensor $T$, although
$T$ may be undefined at the empty index unlike in the case of the partial maps stored in states. As a result,
this application may lead to a partial function of type $\Index \rightharpoonup \db{\tau}$, 
which we recall below:
\[
\extend(T)(i) = 
\text{if } \big(i \in \dom(T){\uparrow}\big)
\text{ then } T(\max(\dom(T) \cap i{\downarrow}))  
\text{ else undefined}.
\]

Note that the equations in the lemma use ${(\_)}[\extend(T)]$ and 
${(\_)}[\extend(T_{y,\rho})]$, instead of ${(\_)}[T]$ and ${(\_)}[T_{y,\rho}]$. This means that 
both operations change more entries than one typically expects. For instance, in the first equation,  
each $(i,T(i))$ entry of the partial map $T$ results in 
the changes of all the entries in $\extend(\sigma(x^{\tau\dagger}))$
whose indices are in $i{\uparrow}$.
We use this aggressive version of change 
in our operations because it enables us to implement the operations and our target language easily 
and efficiently using the tensor-broadcasting mechanism of PyTorch.

This aggressive change also helps maintain 
the following important invariant of our target language.
For an index set $L$, we say 
a state $\sigma$ is an {\bf $L$-state} if 
$\dom(\sigma(x^{\tau\dagger})) \subseteq L{\downarrow}$
for all variables $x^{\tau\dagger}$.
Intuitively, it
means that for every variable $x^{\tau\dagger}$, 
if the partial map $\sigma(x^{\tau\dagger})$ stored in $x^{\tau\dagger}$ represents a function $f = \extend(\sigma(x^{\tau\dagger}))$,
then the function is determined by what it does on ${L{\downarrow}}$, i.e., $f = \extend(f|_{L{\downarrow}})$.
The invariant is that if a command $C$ is executed under a finite antichain $A$
and an input state is an $A$-state, then the output state is also an $A$-state.
This invariant supports the view that the command $C$ is run in parallel by the threads with their ids in $A$ and the state consists of the disjoint parts for these threads. 
Formal lemmas and proofs are detailed in \Cref{subsec:parallelising-for-loops}.

\begin{figure}[t]
\newcommand{\vspacesem}{1.9ex}
\vspace{-2pt}
\[
\begin{array}{c}
\infer{
    (x^\lreal := \code{fetch}(I)), D, \sigma, A \Downarrowt \sigma[x^\lreal : T], T^{\tzero}_A
}{
    T = [i \mapsto D(\db{I}(\sigma,i)) : i \in A]
}
\qquad
\infer{
    \code{score}(E), D, \sigma, A \Downarrowt \sigma, T
}{
    T = [i \mapsto \db{E}(\sigma,i) : i \in A]
}
\\[\vspacesem]
\infer{
    (x^\lintt := Z), D, \sigma, A \Downarrowt \sigma[x^\lintt : T], T^z_A
}{
    T = [i \mapsto \db{Z}(\sigma,i) : i \in A]
}
\qquad
\infer{
    (x^\lreal := E), D, \sigma, A \Downarrowt \sigma[x^\lreal : T], T^z_A
}{
    T = [i \mapsto \db{E}(\sigma,i) : i \in A]
}
\\[\vspacesem]
\infer{
    \code{skip}, D, \sigma, A \Downarrowt \sigma, T^z_A
}{}
\qquad
\infer{
    (C_1;C_2), D, \sigma, A \Downarrowt \sigma'', (T' \oplus T'')
}{
    C_1, D, \sigma, A \Downarrowt \sigma', T'
    &
    C_2, D, \sigma', A \Downarrowt \sigma'', T''
}
\\[\vspacesem]
\infer{
    (\code{ifz}\ Z\ C_1\ C_2), D, \sigma, A \Downarrowt \sigma'', (T' \oplus T'')
}{
    \begin{array}{ll}
    A_1 = \{i \in A : \db{Z}(\sigma,i) = 0\}
    & 
    C_1, D, \sigma, A_1 \Downarrowt \sigma', T'
    \\[0.0ex]
    A_2 = \{i \in A : \db{Z}(\sigma,i) \neq 0\}
    &
    C_2, D, \sigma', A_2 \Downarrowt \sigma'', T''
    \end{array}
}
\\[\vspacesem]
\infer{
    (\code{for}\ x^\lintt \ \code{in} \ \code{range}(n_+)\ \code{do}\ C), D, \sigma, A \Downarrowt \sigma_{n_+},
    \bigoplus_{k = 1}^{n_+} T_{k}
}{
    \sigma_0 = \sigma
    \qquad
    C, D, \sigma_k[x^\lintt : [i \mapsto k \, :\, i \in A]], A \Downarrowt \sigma_{k+1}, T_{k+1}
    \text{ for all $k \in \{0,{...}\,,n_+{-}1\}$}
}
\\[\vspacesem]
\infer{
    (x^{\lintt} := \code{lookup\_index}(\alpha)), D, \sigma, A \Downarrowt \sigma[x^{\lintt} : T], T^z_A
}{
    \alpha \in \dom(i) \text{ for all $i \in A$} 
    \qquad
    T = [i \mapsto i(\alpha) \,:\, i \in A]  
}
\\[\vspacesem]
\infer{
    (\code{extend\_index}(\alpha,n_+)\ C), D, \sigma, A \Downarrowt \sigma'\langle\rho\rangle, T''
}{
    \begin{array}{@{}l@{}}
    \alpha \not\in \dom(i) \text{ for all $i \,{\in}\, A$}
    \quad
    A' = \{ i \,{\concat}\, [(\alpha,k)] : i \,{\in}\, A,\, k \,{\in}\, \{0,{...}\,,n_+{-}1\} \}
    \quad
    C, D, \sigma, A' \Downarrowt \sigma', T'
    \\[0.0ex]
    \rho = [i \concat [(\alpha,n_+{-}1)] \mapsto i : i \in A ]
    \qquad
    T'' = [i \mapsto \sum_{0 \leq k < n_+} T'(i \concat [(\alpha,k)]) : i \in A]     
    \end{array}
}
\\[\vspacesem]
\infer{
  (\code{loop\_fixpt\_noacc}(n_+)\ C), D, \sigma, A 
  \Downarrowt
  \sigma_m, T_{m{+}1}
}{
  \begin{array}{l}
  0 \leq m < n_+ {-} 1
  \qquad
  \sigma_0 = \sigma
  \hfill {}
  \\[0.0ex]
  C, D, \sigma_k, A \Downarrowt \sigma_{k{+}1}, T_{k{+}1}
  \text{ and }
  \sigma_k \neq \sigma_{k{+}1}
  \text{ for all $k \in \{0,{...}\,,m{-}1\}$}
  \qquad
  C, D, \sigma_m, A \Downarrowt \sigma_m, T_{m{+}1}
  \end{array}
}
\\[\vspacesem]
\infer{
  (\code{loop\_fixpt\_noacc}(n_+)\ C), D, \sigma, A 
  \Downarrowt
  \sigma_{n_+}, T_{n_+}
}{
  \sigma_0 = \sigma
  &
  C, D, \sigma_k, A \Downarrowt \sigma_{k{+}1}, T_{k{+}1}
  \text{ for all $k \in \{0,{...}\,,{n_+}{-}1\}$}
  &
  \sigma_k \neq \sigma_{k{+}1}
  \text{ for all $k \in \{0,{...}\,,{n_+}{-}2\}$}
}
\\[\vspacesem]
\infer{
  \code{shift}(\alpha), D, \sigma, A \Downarrowt \sigma\langle\rho\rangle, T^{z}_A
}{
    \rho(i) =
    \begin{cases}
      i \concat [(\alpha,0)] & 
      \text{if } \alpha \not\in \dom(i) 
      \land i \concat [(\alpha,0)] \in A, 
      \\[-2pt]  
      j \concat [(\alpha,m{+}1)] &
      \text{if } i = j \concat [(\alpha,m)]  \in A{\downarrow}
      \land
      j \concat [(\alpha,m{+}1)] \in A
      \text{ for some $j$ and $m \geq 0$},
      \\[-2pt]
      \text{undefined} & \text{otherwise}
    \end{cases}
}
\end{array}
\]
\vspace{-1.2em}
\caption{Semantics of commands in the target language.}
\label{fig:semantics-commands-target-language}
\vspace{-0.6em}
\end{figure}

\paragraph{\rwl{Semantics of commands}}
The rules for the $\Downarrowt$ relation in \Cref{fig:semantics-commands-target-language} mostly follow 
the general principle of running each command over multiple input data specified
by the input antichain $A$,
except for two unusual parts.
First, updating $x^{\tau\dagger}$ with operation $\sigma[x^{\tau\dagger} : T]$ modifies the partial map $\sigma(x^{\tau\dagger})$ beyond typical SIMD behaviour:
if an entry in $\sigma(x^{\tau\dagger})$ has the index $i \in \dom(T){\uparrow}$, 
the entry gets updated (if $i \in \dom(T)$) or removed (if $i \in \dom(T){\uparrow} \setminus \dom(T)$). 
Second, the rule for the conditional command in the semantics is unusual.
It splits the input index set $A$ into $A_1$ and $A_2$, 
where $A_1$ consists of indices leading to the true branch and $A_2$ those leading to the false branch. 
The command then runs the true branch under $A_1$,
and then the false branch under $A_2$. The disjointedness of $A_1$ and $A_2$, and 
the fact that $A_1 \cup A_2$ is an antichain
prevent interference between the two branches: they do not write
to the same entry of a variable, and neither of these branches reads from an entry of a variable written 
by the other branch.
This non-interference means the execution order of the true and false branches is interchangeable without changing the semantics of the command.

We now go through the last five rules of the $\Downarrowt$ relation in \Cref{fig:semantics-commands-target-language}, which describe the behaviour of the new constructs in the target language.
The rule for the $\code{lookup\_index}$ command says that the command regards each index
$i = [(\alpha_1,m_1);\ldots;(\alpha_k,m_k)]\in A$ as a partial function mapping string $\alpha_i$ to
integer $m_i$, and it runs only when $\alpha$ is in the domain of this partial function
(i.e., $\alpha \in \dom(i)$). In that case, the command looks up the $\alpha$ entry of the partial function $i$ for all $i \in A$, and stores the results of 
these lookups in the variable $x^{\lintt}$. The next rule is concerned with the $\code{extend\_index}$ command,
and says that the command works in two steps.
First, it runs the command $C$ in its body under the set $A'$ of extended indices, where
$A'$ consists of the indices $i\concat [(\alpha,k)]$ for all $i \in A$ and $k \in \{0,\ldots,n{-}1\}$. Second, it 
goes through every variable $x^{\tau\dagger}$, accesses
the partial map $m_x$ stored in $x^{\tau\dagger}$, and copies, for every $i \in A$, the value
$m_x(i \concat [(\alpha,n{-}1)])$ 
to the $i$ entry of $m_x$ while removing the
entries of $m_x$ whose indices are in $i{\uparrow} \setminus \{i\}$.
So, the copy here clears the entries at indices $i \concat [(\alpha,k)]$ for $k \in \{0,\ldots,n{-}1\}$ if such entries exist in $m_x$.
Note that the copy is implemented with the help of the $(\_)\langle \rho \rangle$ operation \rwl{(\cref{fig:tensor-update-copy})}.
Also, note that the tensor $T''$ in the result of the $\code{extend\_index}$ command is constructed from the values of the $i\concat [(\alpha,k)]$ entries in $T'$
for all $i \in A$ and $k \in \{0,\ldots,n{-}1\}$.
If some of these entries are not
defined, the execution does not proceed. But this failure case does not occur,
as indicated by \Cref{lem:updated-indices-cmd} in \Cref{sec:proof-soundness},
because the domain of the tensor in 
the result of execution is always the same as the given input antichain for the execution,
and so $\dom(T') = A'$.

The following two rules describe the behaviour of the $(\code{loop\_fixpt\_noacc}(n_{+})\ C)$
command. They say that the command repeatedly runs its body $C$ up to $n_{+}$ times, but unlike the usual loop, this repetition may stop early at some iteration $m{+}1$, 
and skip the remaining $n_{+}{-}(m{+}1)$ iterations; this early stopping happens 
when a fixed point is reached at the iteration $m$ (for the first time) and so
the fixed-point check at the following iteration succeeds. Another important point
is that the $(\code{loop\_fixpt\_noacc}(n_{+})\ C)$ command does not accumulate 
the score maps computed during iterations, and keeps only the score map at the 
last iteration (which is $T_{m+1}$ in the case of early stopping and $T_{n+}$ otherwise). 
Discarding the score maps from all the non-last iterations comes from our loop vectorisation;
it is exploited by our translation for vectorisation.
The last rule describes the behaviour of the $\code{shift}$ command:
for every variable $x^{\tau\dagger}$ in the current state,
it looks up the partial map $m_x$ stored in 
the variable $x^{\tau\dagger}$, and shifts values
in $m_x$ according to $\rho$ in the rule,
using the $(\_)\langle \rho \rangle$ operation \rwl{(\cref{fig:tensor-update-copy})}. When $\rho(i) = i'$, the definition 
of $\rho$ ensures that $i \in A{\downarrow}$ and $i' \in A$ for the input antichain $A$
so that the source index of the shift is always in $A{\downarrow}$ and
the target index is in $A$.

\subsection{Soundness of Translation}
\label{ssec:target-translation-soundness}

The next theorem states that our parallelising translation preserves the semantics.
\begin{theorem}
  \label{thm:main-theorem-soundness}
  Let $C$ be a command in the source language, and $\overline{C}$ be the parallelising translation of $C$ in the target language.
  Let $D \in \RDB$, $\sigma_0 \in \State_{\mathit{src}}$, and $\sigma_1 \in \State_{\mathit{tgt}}$. If  
  $\sigma_1(x^{\tau\dagger}) = [[] \mapsto \sigma_0(x^\tau)]$
  for all variables $x^\tau$ in the source language,
  there exist $\sigma_0' \in \State_{\mathit{src}}$, $\sigma_1' \in \State_{\mathit{tgt}}$,
  and $r \in \R$ such that 
  \begin{align*}
    &
    C, D, \sigma_0 \Downarrows \sigma_0', r,
    \quad
    \overline{C}, D, \sigma_1, \{[]\} \Downarrowt \sigma_1', [[] \mapsto r],
    \quad\text{and}
    \quad
    \sigma'_1(x^{\tau\dagger}) = 
    [[] \mapsto \sigma'_0(x^\tau)]
    \ \text{ for all $x^\tau$.}
  \end{align*} 
\end{theorem}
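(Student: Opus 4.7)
The plan is to proceed by structural induction on the source command $C$, maintaining the stronger invariant stated in the theorem: the source state $\sigma$ and target state $\sigma^\dagger$ are related (written $\sigma \sim \sigma^\dagger$) if $\sigma^\dagger(x^{\tau\dagger}) = [[] \mapsto \sigma(x^\tau)]$ for every $x^\tau$. A preliminary observation is that when $\sim$ holds, every source expression evaluates as $\db{Z}\sigma = \db{\overline{Z}}(\sigma^\dagger, [])$, and analogously for $E$ and $I$; this is a straightforward induction on expression syntax using that $\extend(\sigma^\dagger(x^{\tau\dagger}))([]) = \sigma(x^\tau)$. I will also use the antichain invariant from \Cref{subsec:parallelising-for-loops}, namely that target commands run under a finite antichain $A$ always produce $A$-states, so when $A = \{[]\}$ the output state is again a map whose only defined entry is at $[]$, preserving $\sim$ at the top level.

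The atomic and control-flow cases are routine. For $\code{score}(E)$, $x^\tau := E$, and $x^\real := \code{fetch}(I)$ the target rule with $A = \{[]\}$ produces the singleton tensor $[[] \mapsto v]$ where $v$ is exactly the value the source rule computes, and the update $\sigma^\dagger[x^{\tau\dagger} : [[] \mapsto v]]$ matches $\sigma[x^\tau \mapsto v]$ under $\sim$ by \Cref{lem:update-renaming-extend}. For sequencing, apply the induction hypothesis in order and note that singleton tensors at $[]$ are combined by $\oplus$ as ordinary addition. For $\code{ifz}\ Z\ C_1\ C_2$, partition $\{[]\}$ into $A_1, A_2$: exactly one is $\{[]\}$ and the other empty, matching the branch the source semantics selects; running a command on the empty antichain is a no-op with empty score tensor (a quick lemma by structural induction on the target command, checking each rule), so only the taken branch contributes, and the induction hypothesis finishes the case.

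The interesting case is $C \equiv (\code{for}\ x^\intt\ \code{in}\ \code{range}(n)\ \code{do}\ C_1)$, for which I would prove an auxiliary lemma characterising the rounds of the inner $\code{loop\_fixpt\_noacc}(n)$ executed on antichain $A' = \{[(\alpha,k)] : 0 \le k < n\}$. Let $\sigma_0, \sigma_1, \ldots, \sigma_n$ denote the source states of the loop iterations, with scores $r_1, \ldots, r_n$. The invariant to prove by induction on the round $j \in \{0, \ldots, n\}$ is: after $j$ rounds, for every $k < n$,
\[
\extend(\sigma^\dagger_j(y^{\tau\dagger}))([(\alpha,k)]) = \sigma_{\min(j,k+1)}(y^\tau)
\]
for every source variable $y^\tau$, and the score tensor of the $j$-th round at index $[(\alpha,k)]$ equals $r_{k+1}$ whenever $k < j$. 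The inductive step unfolds the body $\code{shift}(\alpha);\ x^\lintt := \code{lookup\_index}(\alpha);\ \overline{C_1}$: the $\code{shift}$ moves the state at $[(\alpha,k-1)]$ into $[(\alpha,k)]$ (and the $[]$-state into $[(\alpha,0)]$), so at the start of one iteration of $\overline{C_1}$ at index $[(\alpha,k)]$ the state coincides with $\sigma_{\min(j,k)}$; $\code{lookup\_index}$ sets $x^\lintt$ to $k$; then the outer induction hypothesis applied to $C_1$ at each index in $A'$ (using non-interference between different indices of the antichain) yields the required state and score at $[(\alpha,k)]$. The invariant guarantees that by round $n$ all indices hold their final source values, and if a fixed point is detected earlier at round $m+1 < n$ then $\sigma^\dagger_{m} = \sigma^\dagger_{m+1}$, which by the invariant forces $\sigma_{\min(m,k+1)} = \sigma_{\min(m+1,k+1)}$ for all $k$, i.e.\ $\sigma_{k+1} = \sigma_{\min(m,k+1)}$, so the state at the last round already equals $\sigma_n$ at every index and its score tensor already records $r_1, \ldots, r_n$. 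Finally the outer $\code{extend\_index}(\alpha,n)$ sums the score tensor to $\sum_k r_{k+1} = r$ and copies the $[(\alpha,n-1)]$-entries back to $[]$, re-establishing $\sim$.

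The main obstacle will be managing the bookkeeping between partial maps, the $\extend$ closure, and the $A$-state invariant, so that \Cref{lem:update-renaming-extend} applies cleanly at each step of the loop case; in particular one must verify that after each $\code{shift}$ the domain of every $\sigma^\dagger(y^{\tau\dagger})$ remains within $A'{\downarrow}$, that the outer induction hypothesis applied at antichain $A'$ (not just $\{[]\}$) really does hold by a suitable generalisation of the theorem, and that the partitioning inside any nested $\code{ifz}$ within $C_1$ still respects antichain structure. I expect this generalisation of the theorem to arbitrary antichains, stated index-wise, to be the right strengthening that makes the induction carry through.
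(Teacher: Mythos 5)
Your overall architecture (structural induction with the $[[] \mapsto \cdot]$ embedding, a round-by-round ``catch-up'' invariant for the loop case) is in the spirit of the paper's proof, but the key loop invariant you state is false, and the piece you defer as ``the main obstacle'' is in fact the heart of the argument. Concretely: after $j$ rounds, the entry at $[(\alpha,k)]$ for $k \geq j$ is \emph{not} $\sigma_{\min(j,k+1)} = \sigma_j$. Trace round $1$ at index $[(\alpha,1)]$: the shift delivers $\sigma_0$, and the body then runs $C_1$ with $x = 1$ on $\sigma_0$, yielding $C_1[x{:=}1](\sigma_0)$, which differs from $\sigma_1 = C_1[x{:=}0](\sigma_0)$ whenever the body depends on the loop counter or on earlier iterations (e.g.\ the HMM example, where index $i_\ell$ holds $z_\ell$ after round $1$, not $z_0$). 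The invariant must be restricted to the non-speculative region $k < j$ --- which is all the induction actually uses --- and no claim should be made about the speculative entries. This error propagates into your early-termination argument, which deduces $\sigma_{k+1} = \sigma_{\min(m,k+1)}$ for \emph{all} $k$ from $\sigma^\dagger_m = \sigma^\dagger_{m+1}$ via the false part of the invariant. The paper avoids this entirely by first proving (\Cref{thm:target-intermediate-equivalence}) that the fixed-point semantics agrees with the ``always run $n_+$ rounds'' semantics $\Downarrowi$, purely by determinism of the body; correctness of early stopping is thereby decoupled from correctness of speculation, and you should adopt that factoring.

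The deferred generalisation ``to arbitrary antichains, stated index-wise'' is also not the right strengthening, and supplying the right one is where most of the work lies. Your inductive step applies the hypothesis to $C_1$ ``at each index in $A'$'', but the target semantics runs $\overline{C_1}$ \emph{once} on the whole antichain, including the speculative indices carrying garbage; decomposing that single run into independent per-index runs requires the non-interference and $L$-state machinery (\Cref{lem:updated-indices-cmd}, \Cref{lem:preservation-L-states}), and an index-wise equality does not compose through nested \code{extend\_index}, \code{shift}, or the \code{ifz} partitioning inside $C_1$. The paper's solution is the \emph{coupling} relation: a family $R^{(\ell)}$ of partial bijections on indices such that a single target state simultaneously simulates many source states ($\sigma'_{(0,\ell)} \, [\State_\mathit{tgt}(R^{(\ell)})] \, \sigma''_{(1,k)}$ for all $\ell \leq k$), with the induction hypothesis of \Cref{thm:soundness-translation} stated between arbitrary coupled antichains $A_0 \, [\cP(R)] \, A_1$. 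Without formulating and proving invariance of some such relation, the loop case of your induction does not go through for bodies containing further loops or conditionals.
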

Intuitively, \Cref{thm:main-theorem-soundness} says that if a command $C$ in the source language 
and its translation $\smash{\overline{C}}$ in the target language are run in the same states,
then $C$ and $\smash{\overline{C}}$ terminate successfully, and their final states 
and final scores are essentially the same.
The proof of \Cref{thm:main-theorem-soundness} is provided in \Cref{sec:proof-soundness},
where the theorem is restated as \Cref{cor:main-theorem-soundness}.

The key point of the proof
stems from the fact that $\smash{\overline{C}}$ may store in a single target
language state information about multiple source language states. Specifically, 
when $C$ contains a for-loop, $\smash{\overline{C}}$ executes all the iterations of this loop in parallel. Thus,
each of the states that arise during the execution of the translated loop in $\smash{\overline{C}}$ corresponds to 
multiple states that arise during the execution of the original loop in $C$.
The difficulty is to capture this many-to-one correspondence between states
in the source and target languages, and to prove its invariance so as to
establish the theorem.
We address the challenge by first embedding states and commands in the source
language to those in the target language, then defining a form of simulation relation 
between states of the target language, which we call a {\bf coupling} relation, and finally 
proving that modulo embedding, the relation is preserved by the execution of a command in the source
language and that of its translation in the target language.
For the details of the proof, see \Cref{sec:proof-soundness}.
\section{Implementation}
\label{sec:impl}

We implemented our vectorisation technique for Pyro~\citep{Pyro}, a probabilistic programming language built on PyTorch~\citep{pytorch:19}.
Our choice of Pyro is motivated by its powerful support for parallel execution, which leverages PyTorch's efficient tensor operations, \rshl{tensor broadcasting, and automatic differentiation},
and by its rich set of inference engines, including Stochastic Variational Inference (SVI) and variable elimination.
\rshl{We point out that our implementation is not specific to Pyro and can be adapted to other PPLs that support similar features.}
Beyond standard Pyro/PyTorch primitives,
our design incorporates a non-standard runtime to handle variable reading and writing.
This runtime reflects the SIMD-style semantics of expressions and assignment commands based on the antichains in our target language.
We also define two novel primitives specifically for
the target language's control flow ($\code{ifz}$) and loop ($\code{for}$) constructs.
These functionalities are integrated into a library that overrides standard Python variable access and provides new constructors.
The key aspects of our implementation are summarised below.

\paragraph{Representing finite partial maps with PyTorch tensors.}
Our runtime relies on PyTorch tensors to represent finite partial maps from $\Index$ in our target language, such as $V$-tensors $T$ and partial maps $m$ stored in variables.
This approach, however, needs to address several representation problems.
One of these is that PyTorch uses integer sequences as indices, but our target language uses string-integer sequences as indices.
To bridge this gap, our library maintains a global partial map from strings to integers,
allocating a unique dimension for each string, as shown in \Cref{fig:tensor-indices}.
The other problem is that a finite partial map in our target language accepts indices of varying length, while a PyTorch tensor assumes indices of a fixed length.
To address this problem, we track the maximum integer associated with each string present in the partial map's domain.
We then construct a PyTorch tensor where the maximum index of each tensor dimension is one larger than the tracked maximum integer of the corresponding string. 
For instance, in \Cref{fig:tensor-indices},
a partial map $T$ has strings $\alpha$ and $\beta$ with the maximum integers $9$ and $8$, respectively.
Therefore, it is represented by a PyTorch tensor of shape $(11, 10)$,
where the first dimension corresponds to $\alpha$ and the second to $\beta$ by the global mapping.
Here, the one additional component in each dimension of the PyTorch tensor, accessed by $-1$
in the figure, is used to represent the case that the corresponding string is absent in an index of the partial map. 
Each cell of the PyTorch tensor is then assigned the value evaluated by the $\extend(T)$ function at the corresponding index.
For example, in \Cref{fig:tensor-indices}, for all $0 \le i \le 9$ and $0 \le j \le 8$,
the cell of the PyTorch tensor at index $[i,j]$ is assigned the value of $\extend(T)([(\alpha,i);(\beta,j)]) = c_{i,j}$,
the cell at index $[i,-1]$ is set to the value of $\extend(T)([(\alpha,i)]) = b_i$,
the cell at index $[-1,j]$ is given the value of $\extend(T)([(\beta,j)]) = a$,
and the cell at index $[-1,-1]$ is assigned to the value of $\extend(T)([]) = a$.
This strategy allows us to effectively represent any finite partial map in the target language using a single PyTorch tensor.

\begin{figure}[t]
  \centering
  \hspace{-20pt}
  \begin{minipage}{0.40\textwidth}
    \usetikzlibrary{arrows,decorations.pathreplacing}
    \centering
    \begin{tikzpicture}[
      blankstyle/.style={rectangle,draw,fill=gray!20},
      gray/.style={rectangle,draw,fill=gray!60},
      darkgray/.style={rectangle,draw,fill=gray!100},
      lightgray/.style={rectangle,draw,fill=white},
      ]
      \begin{scope}[local bounding box=matrix]
        \node at (-1.65cm, -0.7cm) [align=right, text width = 1cm] {$\begin{matrix} 0 \\[-0.1em] \vdots \\[0.1em] 9 \\[0.5em] -1 \\[-1.1em] ~ \end{matrix}$};
        \node at (-2.3cm, -0.62cm) [align=right, text width = 1cm] {dim 0};
        \draw[decorate,decoration={brace,mirror}] (-1.6cm, 0.4cm) -- (-1.6cm, -1.7cm);

        \node at (0.4cm, 0.8cm) [align=center, text width = 2.3cm] {$0 \hspace{2.5mm} \cdots \hspace{3mm} 8 \hspace{4mm} {-1}$};
        \node at (0.4cm, 1.35cm) [align=center, text width = 1cm] {dim 1};
        \draw[decorate,decoration={brace}] (-0.8cm, 1.0cm) -- (1.6cm, 1.0cm);

        \node at (1.3cm, -1.5cm) [darkgray, minimum width = 0.6cm, minimum height = 0.6cm] {$a$};
        \node at (1.3cm, -0.35cm) [blankstyle, minimum width = 0.6cm, minimum height = 1.8cm] {$\begin{matrix} b_0 \\ \vdots \\ b_9 \end{matrix}$};
        \node at (0cm, -1.5cm) [gray, minimum width = 2cm, minimum height = 0.6cm] {$a \ \: \cdots \ \: a$};
        \node at (0cm, -0.35cm) [lightgray, minimum width = 2cm, minimum height = 1.8cm] {$\begin{matrix}
                                                                                           c_{0,0} \!\!\!\! & \cdots \!\!\!\! & c_{0,8} \\
                                                                                           \vdots  \!\!\!\! & \ddots \!\!\!\! & \vdots  \\
                                                                                           c_{9,0} \!\!\!\! & \cdots \!\!\!\! & c_{9,8}
                                                                                           \end{matrix}$};
        \draw (-1cm,0.55cm)[line width=1pt] rectangle (1.6cm, -1.8cm);
      \end{scope}
    \end{tikzpicture}
  \end{minipage}
  \begin{minipage}{0.45\textwidth}
    \vspace{10pt}
    \begin{align*}
      \iff \hspace{4mm}
      \begin{aligned}
      & \text{strings\_to\_dims} = [\alpha \mapsto 0, \beta \mapsto 1] \\[0.5em]
      & \begin{aligned}
        T = \smash{\big[} & [] \mapsto a, \\
        & [(\alpha, i)] \mapsto b_i, \\
        & [(\alpha,i); (\beta,j)] \mapsto c_{i,j} \\
        & \, {:} \; 0 \le i \le 9 \text{ and } 0\le j \le 8 \smash{\big]}
      \end{aligned}
    \end{aligned}
    \end{align*}
  \end{minipage}
  \vspace{-0.5em}
  \caption{Correspondence between a PyTorch tensor (left) and a finite partial map $T$ in our target language (right).
  The \text{strings\_to\_dims} mapping ensures this correspondence by allocating strings to unique dimensions.}
  \label{fig:tensor-indices}
    \vspace{-1em}
\end{figure}
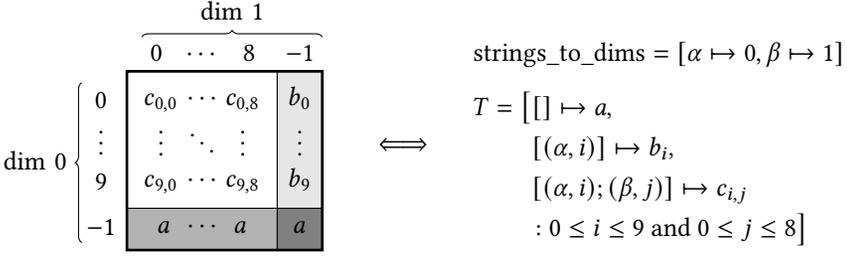

\paragraph{Antichains and PyTorch tensor operations.}
To preserve the correspondence between finite partial maps and PyTorch tensors,
our library overrides the standard variable read and write of Python in a way that respects the antichain.
An antichain, being a finite set of indices, is itself represented by a PyTorch tensor,
where each tensor cell holds a boolean value indicating whether the corresponding index is in the antichain.
Based on this representation,
our library reads a variable by constructing a PyTorch tensor
consisting solely of elements at indices for which the current antichain evaluates to true.
Subsequent operations (e.g., arithmetic) on PyTorch tensors are then performed element-wise,
facilitated by PyTorch's \emph{broadcasting} semantics.
The outcomes of these operations are then used to update the PyTorch tensor assigned to the variable, but in a more careful way:
it updates not only the element at the current antichain's true indices,
but also the elements at the upward closure of those indices,
which can be also efficiently handled by broadcasting.
\rshl{
  These steps are implemented efficiently in PyTorch thanks to its
  support for broadcasting and to the preservation of the encoding of
  finite partial maps into PyTorch tensors across all the base
  operations of the target language.
}

% Concept of variable elimination
\paragraph{Variable elimination.}
Discrete random variables pose difficulties for probabilistic inference algorithms
that rely on gradient computation, such as SVI and HMC.
A common solution, known as \emph{variable elimination},
enumerates all the possible combinations of the values of these discrete random variables,
computes a total score for each combination, and sums them up.
While variable elimination is costly in general, Pyro efficiently performs it
by allocating a dedicated tensor dimension for the enumeration of each discrete random variable,
and leveraging PyTorch's broadcasting semantics and tensor-based parallelism to compute total scores.
To support these optimised algorithms,
our library internally manages metadata (e.g., dimension-variable mappings)
for the correct handling of these additional dimensions for enumeration.

\section{Experimental Evaluation}
\label{sec:eval}

\newcommand{\mours}{\texttt{ours}\xspace}
\newcommand{\mplate}{\texttt{plate}\xspace}
\newcommand{\mvmarkov}{\texttt{vmarkov}\xspace}
\newcommand{\mdischmm}{\texttt{discHMM}\xspace}
\newcommand{\mdiscHMM}{\mdischmm}
\newcommand{\mseq}{\texttt{seq}\xspace}
\newcommand{\mhand}{\texttt{hand}\xspace}

In this section, we report on the evaluation of probabilistic programs translated using our method,
focusing on the following three research questions:
\begin{enumerate}[label={\bfseries (RQ\arabic*)}, ref=RQ\arabic*]
\item \label{rq:sound}
  {\bf Consistency}:
  Does our approach produce scores and their gradients that are
  consistent with those produced by existing methods (modulo floating-point errors)?
\item \label{rq:time-memory}
  {\bf Time and memory}: %
  How does our approach compare to existing techniques in terms of
  execution time and peak memory usage?
\item \label{rq:scalability}
  {\bf Scalability}:
  Does our approach scale with dataset size, assuming a fixed model?
\end{enumerate}

\paragraph{Benchmarks.}
We used five classes of probabilistic models, with complementary features.

\begin{asparaitem}
\item \emph{Hidden Markov models} (\texttt{hmm-\{ord1,ord2,neural\}})
  \cite{pyro-github:19,tensor-var-elim:19}.
  These models describe piano scores~\cite{polyphonic:12}
  using variants of a hidden Markov model (HMM).
  The models \texttt{hmm-\{ord1,ord2\}} are first- and
  second-order HMMs involving 3 nested loops but no neural networks,
  whereas \texttt{hmm-neural} is a first-order HMM with 2 nested loops
  and neural networks.
  In these models, the outermost loop iterates over 229 (\texttt{hmm-\{ord1,neural\}}) or 50 (\texttt{hmm-ord2}) sequences of music notes,
  the first nested loop stands for the HMM iteration over time steps of length 129,
  and the innermost loop of \texttt{hmm-\{ord1,ord2\}} iterates over piano keys of length 51.
  The first and third loops are easily parallelisable since each iteration is conditionally independent.
\item \emph{Deep Markov model} (\texttt{dmm})
  \cite{dmm-original:17,pyro-github:19}.
  This model is another HMM variant describing the same music dataset.
  It is a first-order Markov model, consisting of two nested loops over 229 sequences of music notes and 129 time steps, respectively.
  It differs from \texttt{hmm-*} in two ways:
  first, \texttt{dmm} uses continuous hidden states, while \texttt{hmm-*} uses discrete states;
  second, \texttt{dmm} uses neural networks for both prior and likelihood, while \texttt{hmm-neural} does so only for likelihood.
\item \emph{Nested hidden Markov models} (\texttt{nhmm-\{train,stock\}}).
  These first-order nested HMMs model
  the number of arrivals at a train station over 44,640 time steps (hours) \cite{bart-train-ridership},
  and the stock price of 10 companies over 2,560 time steps (trading days) \cite{yfinance}, respectively.
  Both models consist of 4 nested loops,
  either over 5 years, 12 months, 31 days, 24 hours,
  or over 10 companies, 10 years, 4 quarters, 64 days.
  Only the outermost loops can be easily parallelised.
\item \emph{Autoregressive model} (\texttt{arm}).
  This variant of autoregressive model (AR)~\cite{autoreg-speech:17} describes speech data taken from \cite{Librispeech}.
  While standard AR models have a constant order, the order of this model is a latent random variable.
  This model contains one loop iterating over 2,000 time steps.
\item \emph{Temperature controller model} (\texttt{tcm})
  \cite{multilevel-mc:17,reparam-grad:18}. % tcl-synthesis:14,
  This state-space model describes the evolution of the temperature inside a room, equipped with a stochastic temperature controller.
  It includes if-then-else statements, required to convey different types of temperature dynamics based on the controller state.
  It consists of two nested loops over 10 temperature sequences and 200 time steps, respectively,
  where the first loop can be easily parallelised.
\end{asparaitem}

\paragraph{Baseline methods.} % Vectorisation in Pyro and baselines.
Pyro features several loop vectorisation methods.
\begin{asparaitem}
\item \label{enum:baseline-plate}
\emph{Plate constructor~\citep{pyro-plate}.}
It is a tensor-based loop constructor that vectorises loops without data dependencies.
However, it may lead to incorrect inference results
if the variables in a loop have data dependencies across iterations.
Plates can be nested but cannot contain if-then-else statements.
\item \label{enum:baseline-vmarkov}
\emph{Vectorised Markov constructor~\citep{pyro-vmarkov}.}
It is a ``funsor''-based loop constructor
that vectorises certain types of loops with data dependencies.
It supports scalar or vector indices determined by the user-defined dependence order,
and includes an efficient automatic variable elimination algorithm for enumeration.
However, it cannot vectorise if-then-else statements and nested loops.
\item \label{enum:baseline-discHMM}
\emph{Discrete HMM distribution~\citep{pyro-dischmm}.}
It is a distribution class in Pyro that expresses a first-order discrete HMM.
Given transition and emission matrices, it efficiently computes log-likelihood and posterior marginals in parallel
using algorithms such as the forward-backward algorithm.
However, it is limited to first-order discrete HMMs without if-then-else statements or nested loops.
\rshl{
\item \label{enum:baseline-manual}
\emph{Hand-coded vectorisation.}
This baseline refers to the manual vectorisation of loops using tensor operations.
As a result, applying it requires expertise in both the Pyro PPL and tensor-based programming,
and often entails non-standard manipulation of the PPL's internal structures.
}
\end{asparaitem}
\vspace{0.25em}

We compare our method (\mours) with \rshl{five} reference methods that combine some of these features:

\begin{asparaitem}
\item \mseq: It is a sequential implementation of the model, where all loops are executed sequentially.
\item \mplate: It uses only plate constructors to vectorise loops without data dependencies.
\item \mvmarkov: It uses plate constructors and vectorised Markov constructors to vectorise loops.
\item \mdiscHMM: It uses plate constructors and discrete HMM constructors to vectorise loops.
\rshl{\item \mhand: It refers to manual vectorisation of loops using tensor operations.}
\end{asparaitem}
Note that all methods except \mseq have limitations and cannot be applied to all models.
In particular, some benchmark models include if-then-else statements, nested loops, or higher-order dependencies that these methods cannot handle.
We consider \mplate, \mvmarkov, or \mdiscHMM inapplicable to a model
if no loop in the model can be vectorised by a plate constructor, a vectorised Markov constructor, or a discrete HMM constructor, respectively.
\rshl{Also, \mhand is considered inapplicable to a model when a loop in the model contains an if-then-else statement.
This is because vectorisation using tensor operations in such cases is highly non-trivial.}
See \Cref{sec:benchmarks-methods} for further details on the applicability.
We also note that the use of \mplate or \mvmarkov requires care since
they are only sound for loops without data dependencies or when the dependence order is correctly specified by the user.
\rshl{See \Cref{sec:code} for the code snippets of the model \texttt{hmm-neural} using our method and the baselines.}

\paragraph{Experimental setup.}
\rshl{We run three types of inference engines, Maximum a Posteriori (MAP) estimation, Stochastic Variational Inference (SVI),
and Markov Chain Monte Carlo (MCMC) sampling.
MAP estimation is performed for the models \texttt{hmm-*} and \texttt{nhmm-*}.
SVI is run using RNN-based amortised variational distributions for \texttt{dmm}
or independent normal variational distributions for \texttt{arm} and \texttt{tcm}.
The inference engines for MAP estimation and SVI require solving an optimisation problem,
where an objective function is the posterior probability density or the evidence lower bound.
We use the Adam optimiser~\cite{adam} to solve this problem, i.e., to train learnable parameters.
MCMC is used for models without neural networks, i.e., \texttt{hmm-\{ord1,ord2\}}, \texttt{nhmm-*}, \texttt{arm}, and \texttt{tcm}.
Specifically, we employ the Hamiltonian Monte Carlo (HMC) sampler~\cite{DiGS:24,HMC-github:24} with fixed number of leapfrog steps and step size
to sample from the posterior distribution of continuous variables,
while a discrete variable in \texttt{arm} is handled via Gibbs sampling~\cite{Geman:84} based on
Metropolis-Hastings~\cite{Metropolis:53,Hastings:70} with a random walk proposal.
All timings are measured using an NVIDIA 2080Ti GPU.
}

\begin{table}[t]
\fontsize{8.5pt}{10pt}\selectfont
\vspace{1pt}
\begin{tabular}{l >{\columncolor{gray!20}}r r >{\columncolor{gray!20}}r r >{\columncolor{gray!20}}r r >{\columncolor{gray!20}}r r}
  \toprule
  & \multicolumn{2}{c}{\mplate} & \multicolumn{2}{c}{\mvmarkov} & \multicolumn{2}{c}{\mdiscHMM} & \multicolumn{2}{c}{\mours} \\
  & \multicolumn{1}{c}{obj} & \multicolumn{1}{c}{grad} & \multicolumn{1}{c}{obj} & \multicolumn{1}{c}{grad} & \multicolumn{1}{c}{obj} & \multicolumn{1}{c}{grad} & \multicolumn{1}{c}{obj} & \multicolumn{1}{c}{grad} \\
\midrule
\texttt{hmm-ord1}   & 0      & 3.5e-7 & 0      & 3.5e-7 & 8.7e-8 & 2.0e-6 & 0      & 3.4e-7 \\
\texttt{hmm-neural} & 2.0e-9 & 4.5e-6 & 5.0e-9 & 4.0e-6 & 4.0e-9 & 4.1e-6 & 5.0e-9 & 4.0e-6 \\
\texttt{hmm-ord2}   & 0      & 2.7e-7 & 0      & 2.7e-7 & -      & -      & 0      & 2.6e-7 \\
\texttt{dmm}        & 7.7e-9 & 3.1e-7 & 2.5e-8 & 3.1e-7 & -      & -      & 2.4e-8 & 3.1e-7 \\
\texttt{nhmm-train} & 0      & 8.1e-6 & -      & -      & -      & -      & 0      & 8.1e-6 \\
\texttt{nhmm-stock} & 0      & 5.4e-6 & -      & -      & -      & -      & 0      & 5.5e-6 \\
\texttt{arm}        & -      & -      & 0      & 3.0e-9 & -      & -      & 0      & 3.0e-9 \\
\texttt{tcm}        & -      & -      & -      & -      & -      & -      & 0      & 0      \\
\bottomrule
\end{tabular}
\caption{
  Average relative errors of objective function and gradient values computed by
  \mplate, \mvmarkov, \mdiscHMM, and \mours with respect to \mseq \rshl{ in SVI or MAP inference}.
  ``-'' indicates the method is not applicable.
  \rshl{Note that \mplate is not applicable to \texttt{arm} as it contains only a single loop with data dependencies.}}
\label{table:soundness}
\vspace{-2em}
\end{table}

\vspace{2mm}
\noindent{\bf [\cref{rq:sound}] Consistency.}\hspace{2mm}
To validate that our method yields correct outputs,
we compared each of the vectorisation methods \rshl{except \mhand} against {\mseq} \rshl{using SVI or MAP inference}, in terms of
(i) the objective function value, computed as the sum of score values at each $\kwscore$ command, and
(ii) the gradient of the objective function \rshl{with respect to learnable parameters computed by automatic differentiation}.
We then compared these results across the different methods to assess their consistency with \mseq.

More precisely, we measured the relative errors between these values in $L_2$-norm.
We computed the objective function or gradient values $v \in \smash{\mathbb{R}^d}$ using \mseq,
and $\hat{v} \in \smash{\mathbb{R}^d}$ using one of the other vectorisation methods,
over the first training step with the same random seed;
and then computed the relative error $\| v - \hat{v} \|_2 / \| v \|_2$.
To alleviate the high computational cost of the \mseq method, we used only 10 of the outermost sequences 
of the datasets used in the \texttt{hmm-*} and \texttt{dmm} benchmarks,
while the entire datasets were used for the other benchmarks.

\Cref{table:soundness} presents these relative errors averaged over 100 random seeds for each model and each vectorisation method.
It demonstrates that the outputs from \mours are the same as, or very close to
the outputs from \mseq for all models considered.
We believe that the non-zero errors come from
the non-associativity of floating-point addition/multiplication.
For objective function values, we manually verified that they are caused by different orderings of floating-point operations:
\texttt{hmm-neural} and \texttt{dmm} involve matrix multiplications,
and PyTorch computes the addition/multiplication operations there
in different orders in \mours and \mseq, due to different tensor dimensions in the computation.
For gradient values, we conjecture that a similar reordering of operations is introduced by PyTorch
during backpropagation, due to different tensor dimensions or non-contiguous memory layouts.
We also note that similar discrepancies caused by these factors appeared across all other vectorisation methods.
Overall, these observations suggest that our method yields consistent outputs, and the remaining errors are
well within the bounds of expected floating-point behaviour.

\begin{figure}[t]
\vspace{-3pt}
\includegraphics[width=\textwidth]{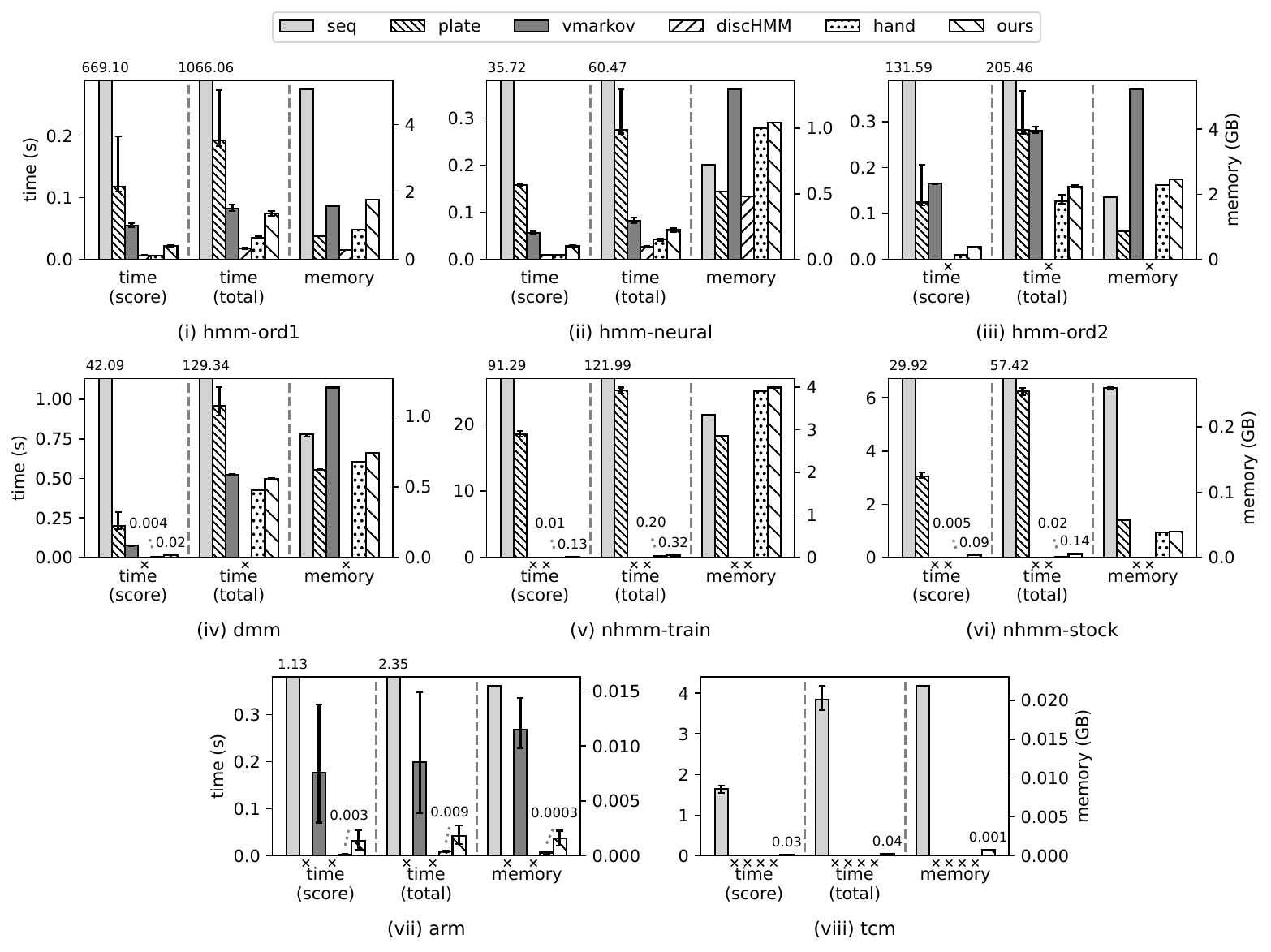}
\vspace{-2.2em}
\caption{\rshl{Score time (s), total time (s), and GPU memory usage (GB) in each training step of SVI/MAP with 95\% confidence intervals.
``$\times$'' below the $x$-axis indicates inapplicability.
Large values are truncated.}
}
\label{fig:performance}
\vspace{-1em}
\end{figure}

\vspace{2mm}
\noindent{\bf [\cref{rq:time-memory}] Time and Memory.}\hspace{2mm}
We compared the wall-clock time and GPU memory usage of our method with other baselines \rshl{using SVI, MAP inference, and MCMC sampling}.
\rshl{For SVI and MAP,} we measured, for each training step, the time to compute score (score time) and the time to run the step entirely (total time).
The total time includes components other than score computation:
e.g., sampling from variational distributions, performing variable elimination, and computing gradients.
\rshl{For MCMC, we measured
% the number of sampling steps within an hour.
the time to execute each sampling step.
We also measured the maximum GPU memory usage over a training step for SVI and MAP,
% and over an hour for MCMC.
and over a sampling step for MCMC.}
\rshl{For SVI and MAP,} every reported number is obtained by averaging over 5 runs, where each run consists of 10 training steps (\mseq) or 1,000 training steps (the others),
and the first 1 out of 10 steps (\mseq) or 10 out of 1,000 steps (the others) are excluded from the averages, as considered burn-in steps.
\rshl{Concretely, we averaged over 45 data points for \mseq and 4,950 data points for the other methods.
For MCMC, each reported number is the average over 8 runs,
each with a different number of sampling steps performed within one hour
(i.e., 8 $\times$ the average steps per hour in total).}
In this evaluation, the full datasets are used for all benchmarks.
The main results are shown in \Cref{fig:performance,table:mcmc-results},
the full results are in \Cref{sec:eval-appendix}.

\paragraph{\rshl{Results on SVI/MAP time}}
Compared to \mseq, \mours shows significant speedups
in both score time ($36.2$--$31399.3\times$) and total time ($54.8$--$14302.9\times$).
These speedups are mainly enabled by two factors.
First, in a score computation,
\mours drastically reduces the iteration counts required for score computation (2 to 10.1) 
compared to \mseq's extensive iterations (2,000 to 1,506,591),
achieved by the loop vectorisation with early termination.
Second,
\mours consistently employs tensors, while \mseq uses a list of scalars. 
Many operations in Pyro, including variable elimination and sampling from variational distributions,
are more efficient over tensors than separate scalars.

Compared to \mplate, \mours again shows significant speedups
in score time ($4.5$--$145.8\times$) and total time ($1.8$--$78.4\times$).
These gains are also due to \mplate's high iteration count (129 to 8,928)
and \mours's efficient tensor operations.
Furthermore, unlike \mours, \mplate is not applicable to the models \texttt{arm} and \texttt{tcm},
as all the loops in these models have data dependencies or if-then-else statements.

Compared to \mvmarkov, \mours is faster in both score time
($2.0$--$6.0\times$) and total time ($1.1$--$4.6\times$).
These speedups stem from two main sources.
First, \mours exhibits a fewer number of iterations:
for a loop with dependence order $K$, \mours performs $K+1$ iterations,
while \mvmarkov requires $2K+1$ iterations.
Second, \mvmarkov introduces additional overheads
due to the use of ``funsors'' (functional tensors).
A conversion from PyTorch tensors to Pyro funsors, followed by operations over funsors,
is much slower than direct operations over tensors.
In addition, \mvmarkov is not applicable to the models \texttt{nhmm-*} and \texttt{tcm},
as they contain nested loops with dependencies or if-then-else statements.

Compared to \mdiscHMM, \mours is slower in score time ($0.3$--$0.4\times$) and total time ($0.2$--$0.4\times$)
for the models \texttt{hmm-\{ord1,neural\}}.
However, \mdiscHMM only supports first-order HMMs without nesting and if-then-else statements,
and thus cannot be applied to the other six models.

\rshl{Compared to \mhand, ours is slower in score time ($0.1$--$0.3\times$) and total time ($0.2$--$0.9\times$) for all models except \texttt{tcm}
where \mhand could not be applied due to the presence of if-then-else statements.
}

\paragraph{\rshl{Results on SVI/MAP memory usage}}
Compared to \mseq, \mours shows a decrease of memory usage for some models and an increase for others.
For models \texttt{nhmm-stock}, \texttt{arm}, and \texttt{tcm},
\mours significantly reduces peak GPU memory usage, requiring only 3--15\% of the memory used by \mseq.
For models \texttt{hmm-\{neural, ord2\}} and \texttt{nhmm-train}, \mours requires 119--144\% of the memory used by \mseq.
Two effects play against each other here.
Our approach uses antichains representing indices considered in a given state.
In our current implementation, these antichains are represented as boolean
tensors, which may be sparse, hence causing memory inefficiencies.
However, even then, \mours may benefit from more efficient memory allocation
schemes implemented by GPUs, e.g., with allocation in large fixed-sized chunks.
In contrast, \mseq allocates larger numbers of smaller chunks, which may
increase memory overhead, especially when the maximum vector dimension used
in the innermost loop is small.
These factors explain that \mours can sometimes achieve lower memory usage under
specific conditions, although it seems to have a higher memory requirement in theory.

Compared to \mplate, \mours reduces memory usage to 70\% on \texttt{nhmm-stock} benchmark,
due to the trade-off between usage of antichains and memory allocation schemes discussed above.
On the other models, \mours uses more memory, ranging from 119\% to 284\% of the memory used by \mplate.

Compared to \mvmarkov, \mours uses less memory in all models except \texttt{hmm-ord1}.
As we remarked above, \mours stores antichains which require additional memory.
However, \mvmarkov also induces an important memory overhead,
due to additional execution of iterations with scalar indices and the implementation of funsors.
The results demonstrate that the overall overhead due to funsor in
\mvmarkov is higher than that generated by antichains in \mours.

Compared to \mdiscHMM, \mours uses more memory for the two models that the former can handle.
However, as mentioned earlier, \mdiscHMM is specialised to a narrow class of models,
so it could not be applied to the other six models.

\rshl{Compared to \mhand, \mours uses more memory for all models except \texttt{tcm} where \mhand could not be applied.
Specifically, \mours uses 518\% and 200\% of the memory used by \mhand on \texttt{arm} and \texttt{hmm-ord1}.
For the others, \mours uses 102--109\% of the usage of \mhand, showing comparable memory usage.
}

\begin{table}[t]
\vspace{1pt}
\rshl{
\small
\begin{tabular}{lccccc}
\toprule
& \mseq & \mplate & \mvmarkov & \mdiscHMM & \mhand \\
\midrule
MCMC speedup (other $\to$ \mours) & $67$--$1210\times$ & $1.83$--$65.63\times$ & $1.07$--$2.56\times$ & $0.28\times$ & $0.26$--$0.80\times$ \\
MCMC memory usage (\mours/other) & 3--120\%         & 69--253\%           & 24--109\%          & 562\%        & 102--1290\% \\
\bottomrule
\end{tabular}
}
\caption{
\rshl{Speedup and and relative GPU memory usage per MCMC sampling step,
compared between \mours and the others.
Each entry shows the range over applicable benchmarks.
See \Cref{sec:eval-appendix} for full results.}
}
\label{table:mcmc-results}
\vspace{-2em}
\end{table}

\paragraph{\rshl{Results on MCMC}} % steps and memory usage
\rshl{%
As shown in \Cref{table:mcmc-results}, the results are consistent with the observations
from the SVI and MAP inference experiments across all benchmarks,
except for \texttt{hmm-neural} and \texttt{dmm} benchmarks that we do not consider in MCMC experiments.
Overall, \mours exhibits superior sampling speed
(i.e., lower time per sampling step)
and broader applicability compared to the vectorisation baselines,
often while maintaining comparable or lower memory consumption.
For instance, compared to \mvmarkov, \mours performs $1.07$--$2.56\times$ more sampling steps,
while using 24--109\% of the GPU memory consumed by \mvmarkov,
across all benchmarks except \texttt{nhmm-*} and \texttt{tcm}, for which \mvmarkov is inapplicable.
The applicability of each method to the benchmarks is consistent with that observed in the SVI and MAP inference experiments,
except for the case of \mseq on \texttt{hmm-ord1}, where it fails to complete a single step within one hour.
}

\begin{figure}[t]
\includegraphics[width=\textwidth]{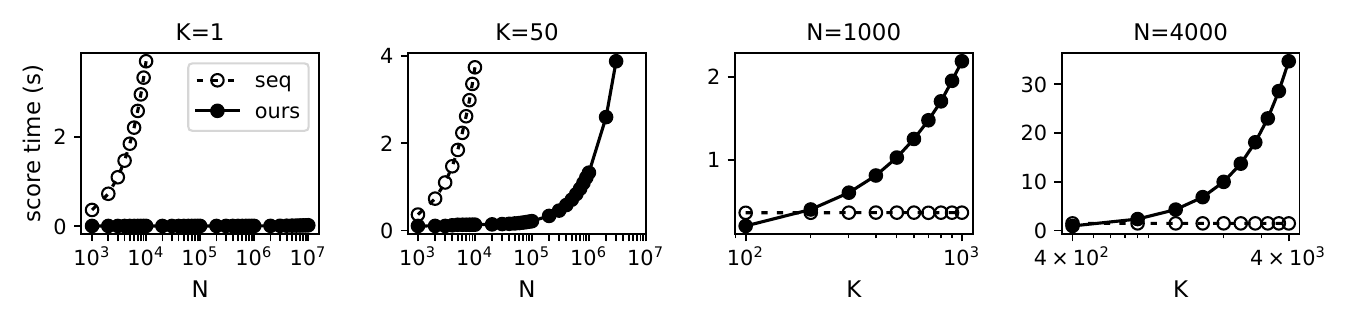}
\vspace{-2.8em}
\caption{Score time (s) of \texttt{ours} and \texttt{seq}
  for the model \texttt{arm'} with different $N$ and $K$.
  Each point denotes the mean over 99 runs.
  95\% confidence intervals are omitted as their relative width to the mean is below 2\%.}
\label{fig:scalability}
\vspace{-0.4em}
\end{figure}

\vspace{2mm}
\noindent{\bf [\cref{rq:scalability}] Scalability.}\hspace{2mm}
To assess the scalability of our method, % as expressed in \cref{rq:scalability}. To do that
we considered the autoregressive model \texttt{arm'}, where $N$ denotes
the number of samples and $K$ the order of the model:
$y_i \sim \cN(y_{i-1} + \cdots + y_{i-K}, 1)$ for $i = 0,\ldots, N-1$.
Here $y_i$ is the $i$-th sample for $i = 0,\, \ldots,\, N-1$,
and the initial values are $y_{-K} = \cdots = y_{-1} = 0$.
We express the model in Pyro 
using a loop of length $N$ to generate the $y_i$'s,
which takes $K+1$ iterations to reach a fixed point when using our method.

We measured the time to compute score (score time) under two setups:
use a fixed $K \in \{ 1, 50 \}$ and vary $N$;
and use a fixed $N \in \{ 1000, 4000 \}$ and vary $K$.
All samples were generated from the standard normal distribution,
and the time for generating the samples was excluded from the measurement.
For each $(N, K)$ pair, we averaged the time over 99 runs out of 100,
excluding the first burn-in step.
We compared the time taken by the vectorised loop produced by our method
(\mours) against the sequential loop (\mseq).
The results are shown in \Cref{fig:scalability}.

When $K$ is fixed to a value significantly smaller than $N$ (i.e., 1 and 50),
\mours exhibits a notably slower increase rate with respect to $N$
compared to \mseq.
The improvement is more significant when $K$ is small.
This demonstrates that, when the dependence order of the loop is low,
our vectorisation method can improve performance more substantially
by leveraging the GPU parallelism.

In contrast, when $N$ is fixed (i.e., 1000 and 4000) and $K$ is close to $N$,
the runtime of \mseq remains almost constant, while that of \mours is quadratic in $K$.
Indeed, when $K$ and $N$ are close to each other, both the number of
iterations and the size of the vectors processed within each iteration
grow proportionally to $K$.
These findings suggest that our method may not be optimal for higher-order
loops.
Nevertheless, in many practical applications of autoregressive models,
such as speech processing~\cite{autoreg-speech:10,autoreg-speech:17}, % autoreg-speech:16,
heart rate variability analysis~\cite{autoreg-heart:02}, and image dynamic texture modelling~\cite{autoreg-texture:07},
the value of $K$ is typically much smaller than $N$.
To mitigate the performance degradation when $K$ approaches $N$,
an adaptive strategy could be employed to switch between sequential and
vectorised execution based on the loop dependence order (e.g., $K$), which
our method can automatically detect.

\section{Related Work}
\label{sec:related}
% \shl{Examples on practical fields where loops of low dependence order are used (e.g., times series, HMMs, autoregressive models, etc.)?}

%Our work is grounded in the recent advances in tensor libraries and their
%integration with PPLs.
%Tensor libraries, such as TensorFlow~\cite{tensorflow},
%PyTorch~\cite{pytorch:19}, and JAX~\cite{jax},
%support a rich set of tensor operations and parallel execution over multiple devices, including GPUs or TPUs~\cite{TPU}.
%These optimisations are leveraged by modern PPLs, such as Pyro~\cite{Pyro}, NumPyro~\cite{NumPyro}, PyMC~\cite{PyMC}, and Edward2~\cite{Edward2},
%which rely on these tensor computation libraries to perform efficient inference over probabilistic models and also to support deep neural networks and automatic differentiation.

%% \vspace{1mm} \noindent \rshl{\textbf{PPL Model Translation.} \hspace{1mm}}
\paragraph{\rhl{Model translations in PPLs}}
Several prior approaches have translated models from non-tensor-based PPLs into tensor-based languages,
such as compiling Stan~\cite{Stan} models to Pyro or NumPyro models for improved inference performance with provable correctness~\cite{compile-stan:21},
or converting high-level discrete probabilistic programs into NumPy~\cite{numpy} programs with optimised tensor operations (e.g., \texttt{opt\_einsum}~\cite{opt-einsum})
for exact inference over discrete random variables~\cite{compile-to-einsum:23}.
\rhl{
These approaches do not alter the structures of the original models as aggressively as our method does for loops; instead, they mostly rely on generic optimisation of the target language for performance improvements.}
%\rhl{The speedups shown in these approaches rely on inherent optimisation of certain tensor-based languages, while code strucutures are preserved.
%However, our method improves the performance by vectorising intractable structures (for-loops).}
%While these approaches depend on the target language's optimisation directly,
%our method does not and it implements the loop-vectorising optimisation itself.
%\wl{This sentence is unclear to me.}

%% \vspace{1mm} \noindent \rshl{\textbf{Vectorisation Primitives in PPLs and Extensions.} {\hspace{1mm}}}
\vspace{-1pt}
\paragraph{\rshl{Vectorisation primitives in PPLs}}
Most tensor-based PPLs have introduced convenient features to vectorise
independent operations in probabilistic models.
A key example is the \texttt{plate} primitive in Pyro and NumPyro,
which originates from graphical models~\cite{plate} and is used in PPLs to vectorise loops,
particularly when the random variables inside the loop are conditionally independent or when the loop has no data dependencies.
Similarly, JAX provides \texttt{vmap}~\citep{jax} and TensorFlow offers \texttt{vectorized\_map}~\citep{vectorized-map},
both of which aid in vectorising function evaluation over independent tensor dimensions.
These primitives also help vectorise MCMC sampling, improving inference
efficiency across multiple chains~\cite{tfp-mcmc}.
While these primitives are effective for simple loops or sequences of independent operations,
our work extends this capability to loops with more complex dependencies.

Other works in PPL aim at extending vectorisation primitives to handle more
complex dependencies, similarly to our approach.
For instance, TensorFlow Probability (TFP)~\cite{tensorflow-probability}
provides \texttt{MarkovChain} distributions in order to vectorise loops
with short-range dependencies in models such as random walks or
autoregressive models.
Similarly, Pyro provides a family of hidden Markov model (HMM)
distributions (e.g., \texttt{DiscreteHMM}, \texttt{GaussianHMM})
to vectorise key computations in HMMs, such as smoothing, filtering
algorithms, and score computation~\cite{prefix-sum-smoother:20}.
However, these approaches usually require the restructuring of a model so
as to make explicit use of specific vectorisation primitives, or have
limitations related to the kind of dependencies they support (e.g.,
supporting only predefined transition distributions).
More closely related to our work, Funsor~\cite{funsor:2019} introduces \texttt{vectorized\_markov} to vectorise loops with dependencies,
but requires users to manually specify dependency lengths,
and is not generally applicable to all dependency structures (e.g., nested structures or conditional branches).
Our work extends these ideas by introducing a more general, automatic
vectorisation method that ensures correctness through fixed-point check and eliminates the need for any additional user input or
extensive model change.
Moreover, it can handle complex dependencies, which may be nested or have dependencies determined at runtime.

%% \noindent \rhl{\textbf{Optimisation of Inference Algorithms.} \hspace{1mm}
\vspace{-1pt}
\paragraph{\rhl{Optimisation of inference algorithms}}
\rhl{Our work is related to approaches that optimise parameter learning or posterior inference algorithms by exploiting properties of PPL models.
A static analysis for the smoothness properties of PPL models has been developed
to propose an SVI algorithm that exploits these properties to reduce variance in gradient estimation~\cite{smoothness}.
% \citet{smoothness} develops a static analysis for the smoothness properties of PPL models,
% and proposes an SVI algorithm that exploits these properties to reduce variance in gradient estimation.
Other work focuses on conditional independence properties in PPL models and proposes a type system for capturing these properties,
along with an optimised variable elimination algorithm that leverages them~\cite{cond-indep}.
% \citet{cond-indep} focuses on conditional independence properties in PPL models and proposes a type system for capturing these properties,
% along with an optimised variable elimination algorithm that leverages them.
Meanwhile, methods have been introduced to automatically derive correct and efficient gradient estimators for a given SVI objective when it is expressed as a program,
by transforming it into gradient estimation code~\cite{becker}.
% Meanwhile, \citet{becker} focuses on automatically deriving a correct and efficient gradient estimator for a given SVI objective when the objective is expressed as a program,
% and develops a method for transforming the objective program into gradient estimation code.
Our method does not directly optimise learning or inference algorithms;
instead, it optimises the execution of probabilistic programs by vectorising loops with dependencies,
which in turn can accelerate various inference algorithms relying on efficient program execution.}

%% \vspace{1mm} \noindent \rshl{\textbf{Automatic Vectorisation in General-Purpose Languages.} {\hspace{1mm}}}
\vspace{-1pt}
\paragraph{\rshl{Automatic vectorisation in general-purpose languages}}
Our work is also related to studies in general-purpose programming languages
that aim at automating vectorisation.
For example, dynamic data dependency analysis techniques have been proposed
to infer SIMD parallelisation potential by identifying all possible
dependency-preserving reorderings, together with further analysis of memory
access patterns~\cite{dynamic-trace:12}.
Additionally, methods have been developed to automatically batch operations in dynamic neural network libraries~\cite{on-the-fly-batching:17}, such as DyNet~\cite{dynet},
and to handle data-dependent control flow and recursion in programs using a pointer per batch element to manage variables and program counters~\cite{auto-batching:20}.
PyTorch's framework for automatic batching~\cite{auto-batching-pytorch:18} leverages mask propagation and control flow rewriting to simplify minibatching in dynamic models.
Recently, Autovesk~\cite{autovesk:23} automates vectorisation by transforming scalar instructions into vectorised instructions over a graph of operations,
using heuristics to minimise the overall instruction numbers.
Although these methods rely on static or dynamic analysis of data dependencies to exploit vectorisation opportunities,
our work avoids the need for complex analysis by an automated fixed-point check that ensures correctness of vectorisation.

\rshl{There are also approaches that use thread-level speculation (TLS) to parallelise general sequential programs.}
R-LRPD~\cite{r-lrpd}, for instance, speculatively executes loop iterations in parallel and records read/write bit arrays to re-execute dependent iterations.
GPU-TLS~\cite{gpu-tls} and STMlite~\cite{stm-lite} both focus on re-execution of mis-speculated iterations, and dynamic analysis to ensure correctness.
Other methods improve performance by integrating TLS with transactional memory (TM)~\cite{tlstm} or adopting in-place speculation~\cite{splip}.
However, these methods typically discard partial results on mis-speculation during re-execution,
which can cause a large number of re-executions and slow convergence,
particularly in loops with many inter-iteration dependencies, as observed in most of the models in our experiments.
In contrast, our method can extract useful partial information even from such mis-speculated iterations through the $\code{shift}$ construct, achieving the faster convergence in those cases, as demonstrated by our experimental results.

\section{Conclusion \rshl{and Limitations}}
\label{sec:conclusion}

We have presented an automatic sound method for the
vectorisation of loops over a fixed range, as often found in PPLs.
It is based on a speculative execution of the body of for-loops, using
a non-standard semantics, and on a fixed-point check to ensure correctness
of the final result.
Its implementation upon Pyro proceeds by translation into a lower level PPL
with primitives for vectorisation.
We proved this correctness of the method and demonstrated its effectiveness
in accelerating inference for several families of complex probabilistic
models including sequence models and hierarchical models, and for
various probabilistic inference engines.

\rshl{%
While our work leads to performance improvements in our experiments,
it also has limitations.
First, its fixed-point check introduces a performance overhead that might
not always be compensated for when the iteration count is not substantially reduced by our method---for instance, due to long data dependencies, as discussed in \Cref{sec:eval}.
Although such cases are not expected to be dominant in practice,
a potential avenue for improvement lies in exploring a hybrid approach that dynamically decides
whether to apply vectorising optimisation or fall back to a standard runtime based on program characteristics.
Second, our work does not consider the vectorisation of sample generation,
which is central to many inference algorithms
(e.g., importance sampling, Metropolis-Hastings, and Sequential Monte Carlo~\cite{smc}).
Our method instead focuses on the vectorisation of density computations,
which is orthogonal to sample generation and is central to other inference algorithms (e.g., SVI and HMC).
Addressing this limitation would yield a more comprehensive vectorisation framework
and further accelerate a broader class of inference algorithms.
}

%% Bibliography
\bibliography{paper}

\AtEndDocument{%
\appendix
\clearpage
\section{Proof of Soundness of Parallelising Translation}
\label{sec:proof-soundness}
In this section, we prove \Cref{thm:main-theorem-soundness}, the soundness theorem for our parallelising translation in \Cref{subsec:target-translation}.
The re-stated corollary and its complete proof can be found in \Cref{cor:main-theorem-soundness}.
The proof of the corollary proceeds in the following steps:
\begin{enumerate}
\item We define a variant of the evaluation rules in the target language, called the \emph{intermediate evaluation relation}, denoted $\Downarrowi$,
and prove that it preserves the meaning of the original evaluation relation $\Downarrowt$ in the target language
(\Cref{ssec:proof-semantics}, intermediate-to-target).
\item Changing only the types of variables of commands in the source language
so that the commands belong to the target language preserves the semantics under evaluation using $\Downarrow$ and $\Downarrowi$
(\Cref{subsec:simple-embedding-commands}, source-to-intermediate).
\item An optimisation of target language commands from the second step,
which replaces for-loops with indexless loops under the parallelising mechanism,
preserves the semantics under evaluation using $\Downarrowi$
(\Cref{subsec:parallelising-for-loops}, intermediate-to-intermediate).
\item From the observations in preceding steps,
our parallelising translation from the source language to the target language preserves the semantics
under evaluation using $\Downarrow$ and $\Downarrowt$
(\Cref{subsec:parallelising-translation}, source-to-target).
\end{enumerate}
The last three steps rely on a class of binary relations called \emph{couplings},
whose definition and properties are described in \Cref{subsec:couplings}.

\subsection{New Evaluation Rules for Proof of Soundness}
\label{ssec:proof-semantics}
In this section, we introduce a variant of the evaluation rules in the target language defined in \Cref{sec:target-lang}.
This variant will play a crucial role for formally proving the soundness of our parallelising translation in the following sections.
First, we define an \emph{intermediate evaluation relation}, denoted as $\Downarrowi$ and defined on the target language commands,
with which the evaluation of $\code{loop\_fixpt\_noacc}$ command does not terminate by fixed-point check,
but repeats loops for a fixed number of iterations.
Then, we prove that the new evaluation rules defined by $\Downarrowi$
do not change the meaning of commands specified by the original evaluation relation $\Downarrowt$.

The semantics of $\code{loop\_fixpt\_noacc(n_+)\ C}$ in the target language enables the optimisation of the execution of the loop using a fixed point check.
The result score of $(\code{loop\_fixpt\_noacc}(n_+)\ C)$ entirely depends on the last execution of the body $C$.
Once the loop body $C$ no longer changes the state, repeating its execution $n_+$ times has no further effect:
the state remains unchanged until the loop terminates, and the result score remains the same as in the last execution.
We formalise this by defining an intermediate evaluation relation $\Downarrowi$, a variant of $\Downarrowt$.
It is defined on the same semantic domains and structures as $\Downarrowt$:
\begin{align*}
{\Downarrowi} \subseteq (\Comm_\mathit{tgt} \times \RDB \times \State_\mathit{tgt} \times \FAntiChain) \times (\State_\mathit{tgt} \times \Tensor_\R),
\end{align*}
but only replaces the original evaluation rule for the $\code{loop\_fixpt\_noacc}$ command with the following:
\begin{align*}
\infer{
  (\code{loop\_fixpt\_noacc}(n_+)\ C), D, \sigma, A \Downarrowi \sigma_{n_+},
  T_{n_+}
}{
  \sigma_0 = \sigma
  \qquad
  C, D, \sigma_k, A \Downarrowi \sigma_{k+1}, T_{k+1}
  \text{ for all $k \in \{0,\ldots,n_+{-}1\}$}
},
\end{align*}
while preserving the other rules of $\Downarrowt$.
The next theorem shows $\Downarrowi$ does not change the meaning of commands specified by the original $\Downarrowt$ relation. 

\begin{theorem}
\label{thm:target-intermediate-equivalence}
For all commands $C$ in the target language, RDBs $D$, states
$\sigma,\sigma' \in \State_\mathit{tgt}$, antichains $A$, and
real tensors $T \in \Tensor_\R$, we have
\[
C, D, \sigma, A \Downarrowt \sigma', T
\quad\text{if and only if}\quad
C, D, \sigma, A \Downarrowi \sigma', T.
\]
\end{theorem}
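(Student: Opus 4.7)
The plan is to prove both directions by induction on derivations, exploiting the fact that $\Downarrowt$ and $\Downarrowi$ share all rules except the one for $\code{loop\_fixpt\_noacc}$, so every inductive case other than $\code{loop\_fixpt\_noacc}$ is immediate from the induction hypothesis (once the IH is applied to each sub-derivation, the same rule of the target relation can be reused to conclude).

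First I would establish two auxiliary facts that do the real work. \emph{Determinism}: both $\Downarrowi$ and $\Downarrowt$ are functional in their inputs, i.e.\ $C,D,\sigma,A \Downarrowi \sigma_1',T_1$ and $C,D,\sigma,A \Downarrowi \sigma_2',T_2$ imply $\sigma_1'=\sigma_2'$ and $T_1=T_2$, and similarly for $\Downarrowt$. This is a routine induction on derivations, since every rule is uniquely determined by the syntactic form of the command together with easily computable side conditions (index-set partitioning for $\code{ifz}$, equality of states for the two $\code{loop\_fixpt\_noacc}$ rules, etc.). \emph{Stability at a fixed point}: if $C,D,\sigma,A \Downarrowi \sigma,T$, then for every $k\geq 1$, iterating the premises of the $\Downarrowi$-rule for $\code{loop\_fixpt\_noacc}$ starting from $\sigma$ yields the sequence $\sigma_0=\sigma_1=\cdots=\sigma_k=\sigma$ with $T_1=\cdots=T_k=T$; this follows immediately from determinism.

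Next, for the direction $\Downarrowt \Rightarrow \Downarrowi$, I would induct on the $\Downarrowt$-derivation. All cases except $\code{loop\_fixpt\_noacc}$ are discharged by applying the IH to each premise and reusing the identical $\Downarrowi$-rule. For the early-termination rule, we have a sequence $\sigma_0,\ldots,\sigma_m$ with $C,D,\sigma_k,A \Downarrowt \sigma_{k+1},T_{k+1}$ for $k<m$ and the fixed-point step $C,D,\sigma_m,A \Downarrowt \sigma_m,T_{m+1}$. By IH all these also hold for $\Downarrowi$; then by stability, setting $\sigma_{m+1}=\cdots=\sigma_{n_+}=\sigma_m$ and $T_{m+2}=\cdots=T_{n_+}=T_{m+1}$ gives a valid premise sequence for the $\Downarrowi$-rule, with final state $\sigma_m$ and final score $T_{n_+}=T_{m+1}$. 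The full-execution case is direct. For the other direction $\Downarrowi \Rightarrow \Downarrowt$, I again induct, and again only the $\code{loop\_fixpt\_noacc}$ case is non-trivial. Given $n_+$ $\Downarrowi$-sub-derivations yielding states $\sigma_0,\ldots,\sigma_{n_+}$ and tensors $T_1,\ldots,T_{n_+}$, I case-split on whether $\sigma_k=\sigma_{k+1}$ for some $k \in \{0,\ldots,n_+{-}2\}$. If not, the full-execution rule of $\Downarrowt$ applies directly (using IH on each premise). Otherwise, let $m$ be the least such $k$; by determinism the body applied at $\sigma_m$ always returns $(\sigma_m,T_{m+1})$, so $\sigma_{m+1}=\cdots=\sigma_{n_+}=\sigma_m$ and $T_{n_+}=T_{m+1}$, and the early-termination rule of $\Downarrowt$ yields $(\sigma_m,T_{m+1})=(\sigma_{n_+},T_{n_+})$.

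The main obstacle is bookkeeping rather than conceptual: we must be careful that the induction hypothesis is strong enough to be applied to every nested sub-derivation that occurs within the $n_+$ unrollings of $\code{loop\_fixpt\_noacc}$, and that we do not confuse the indexing conventions of the two rules (the early-termination rule produces $T_{m+1}$ from the fixed-point step, not $T_m$). Apart from this, the argument rests squarely on the two auxiliary facts (determinism and stability), and then reduces to a straightforward structural induction carried out uniformly on both directions.
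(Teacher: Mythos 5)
Your proposal is correct and follows essentially the same route as the paper's proof: induction on derivations in both directions, with only the $\code{loop\_fixpt\_noacc}$ case non-trivial, padding the iteration sequence with the fixed-point state for the forward direction and using determinism of $\Downarrowi$ plus a case split on whether a fixed point occurs for the backward direction. The only difference is presentational: you isolate determinism (and its corollary, stability) as explicit auxiliary lemmas, whereas the paper invokes determinism of $\Downarrowi$ without proving it.
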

\begin{proof}
We denote $n = n_+$ for simplicity.
We will prove the only-if direction first.
The proof proceeds by induction on the depth of the derivation of $C, D, \sigma, A \Downarrowt \sigma', T$.
Since the rules of $\Downarrowt$ and $\Downarrowi$ only differ in
$\code{loop\_fixpt\_noacc}$, it is enough to prove the case where the last derivation
step applies the rule for $\code{loop\_fixpt\_noacc}$ (the proofs for the other cases
are trivial).

When the last derivation is the second rule for $\code{loop\_fixpt\_noacc}$,
the premises for the last derivation already has all premises for the rule of $\code{loop\_fixpt\_noacc}$ in $\Downarrowi$.
Therefore, we can trivially derive the same relation through the original rule.
When the last derivation is the first rule, the premises for the last derivation are as follow:
\[
\begin{array}{l}
  0 \leq m < n - 1
  \qquad
  \sigma_0 = \sigma
  \hfill
  \\[0.2em]
  C, D, \sigma_k, A \Downarrowt \sigma_{k+1}, T_{k+1}
  \text{ and }
  \sigma_k \neq \sigma_{k+1}
  \text{ for all $0 \leq k \leq m-1$}
  \qquad
  C, D, \sigma_m, A \Downarrowt \sigma_m, T_{m+1},
\end{array}
\]
where the inferred relation is as follows:
\[
  (\code{loop\_fixpt\_noacc}(n)\ C), D, \sigma, A 
  \Downarrowt 
  \sigma_m, T_{m+1}
\]
Because of the induction hypothesis, we have
\[
C, D, \sigma_k, A \Downarrowi \sigma_{k+1}, T_{k+1}
\text{  for all $0 \leq k \leq m-1$}
\qquad
C, D, \sigma_m, A \Downarrowi \sigma_m, T_{m+1}.
\]
If we define $\sigma_k = \sigma_m$, $T_k = T_{m+1}$ for all $m+1 \leq k \leq n-1$,
then $C, D, \sigma_k, A \Downarrowi \sigma_{k+1}, T_{k+1}$ for all $m \leq k \leq n-1$ because of the last premises.
Grouping them with the third premise, we can construct $C, D, \sigma_k, A \Downarrowi \sigma_{k+1}, T_{k+1}$ for all $0 \leq k \leq n-1$.
Therefore, we can derive $(\code{loop\_fixpt\_noacc}(n)\ C), D, \sigma, A  \Downarrowi \sigma_n, T_n$, where $\sigma_n = \sigma_m$ and $T_n = T_{m+1}$.

We can prove the if direction likewise.
Similarly to the only-if direction, the proof is by induction on the depth of the derivation of $\Downarrowi$, and it is enough to prove only the $\code{loop\_fixpt\_noacc}$ case.
The premises from the original rule of $\code{loop\_fixpt\_noacc}$ are as follow:
\[
  \sigma_0 = \sigma
  \qquad
  C, D, \sigma_k, A \Downarrowi \sigma_{k+1}, T_{k+1}
  \text{ for all $k \in \{0,\ldots,n-1\}$},
\]
where the inferred relation is as follows:
\[
  (\code{loop\_fixpt\_noacc}(n)\ C), D, \sigma, A 
  \Downarrowi
  \sigma_n, T_n
\]
From the induction hypothesis, we have
\[
  C, D, \sigma_k, A \Downarrowt \sigma_{k+1}, T_{k+1}
  \text{ for all $k \in \{0,\ldots,n-1\}$}
\]
If $\sigma_k \neq \sigma_{k+1}$ for all $0 \leq k < n-1$, we can use the second rule of $\code{loop\_fixpt\_noacc}$ in $\Downarrowt$ and obtain the following derivation:
$(\code{loop\_fixpt\_noacc}(n)\ C), D, \sigma, A \Downarrowt \sigma_n, T_n$.\\
Otherwise, $\sigma_k = \sigma_{k+1}$ for some $0 \leq k < n-1$.
We let $m$ be the first such $k$.
Then, we can use the first rule of $\code{loop\_fixpt\_noacc}$ in $\Downarrowt$ and obtain the following derivation:
\[
(\code{loop\_fixpt\_noacc}(n)\ C), D, \sigma, A \Downarrowt \sigma_m, T_{m+1}.
\]
From the determinism of $\Downarrowi$, we deduce $\sigma_k = \sigma_m$
for any $m \leq k \leq n$ and $T_k = T_{m+1}$ for any $m+1 \leq k \leq n$.
Therefore, $\sigma_m = \sigma_n$, $T_{m+1} = T_n$.
\end{proof}
\subsection{Couplings, Induced Relations, and Their Properties}
\label{subsec:couplings}

One of the key tools in our proof is the notion of coupling, which we define below.
\begin{definition}[Coupling]
A binary relation $R$ on $\Index$ is a {\bf coupling} if the following properties hold:
\begin{enumerate}
\item $R$ is a relation of a partial bijection, that is, for all
$i_0, i_1, i \in \Index$, 
\[
i_0 \, [R] \, i \;\land\; i_1 \, [R] \, i \implies i_0 = i_1
\quad
\text{and}
\quad
i \, [R] \, i_0 \;\land\; i \, [R] \, i_1 \implies i_0 = i_1.
\] 
\item For all $i_0,i_1 \in \Index$ and $\alpha \in \dom(i_0) \cap \dom(i_1)$, 
\[
i_0 \, [R] \, i_1
\implies 
i_0(\alpha) = i_1(\alpha).
\]
\end{enumerate} 
Here, we write $i_0 \, [R] \, i_1$ to mean $(i_0,i_1) \in R$.
\end{definition}

A coupling $R$ induces four relations.
The first defines a relation on states,
and two of the others define relations on tensors,
while the last specifies a relation on sets of indices:
\begin{align*}
\State_\mathit{tgt}(R) & \subseteq \State_\mathit{tgt} \times \State_\mathit{tgt}, \\
\sigma_0 \, [\State_\mathit{tgt}(R)] \, \sigma_1 &
\iff 
\begin{aligned}[t]
  & \text{for all variables $x^{\tau\dagger}$ and $i_0,i_1$ with $i_0 \, [R] \, i_1$}, \\
  & (i)\ i_0 \in \dom(\sigma_0(x^{\tau\dagger})){\uparrow} \iff i_1 \in \dom(\sigma_1(x^{\tau\dagger})){\uparrow}, \\ 
  & (ii)\ i_0 \in \dom(\sigma_0(x^{\tau\dagger})){\uparrow} \implies \extend(\sigma_0(x^{\tau\dagger}))(i_0) = \extend(\sigma_1(x^{\tau\dagger}))(i_1),
\end{aligned} \\ \\
\Tensor_{\db{\tau}}(R) & \subseteq \Tensor_{\db{\tau}} \times \Tensor_{\db{\tau}}, \\
T_0[\Tensor_{\db{\tau}}(R)]T_1 &
\iff
\begin{aligned}[t]
& \text{for all $i_0,i_1$ with $i_0 \, [R] \, i_1$,} \\
& (i)\ i_0 \in \dom(T_0) \iff i_1 \in \dom(T_1), \\ 
& (ii)\ i_0 \in \dom(T_0) \implies T_0(i_0) = T_1(i_1),
\end{aligned} \\ \\
\WTensor_{\db{\tau}}(R) & \subseteq \Tensor_{\db{\tau}} \times \Tensor_{\db{\tau}}, \\
T_0[\WTensor_{\db{\tau}}(R)]T_1 &
\iff
\begin{aligned}[t]
  & \text{for all $i_0,i_1$ with $i_0 \, [R] \, i_1$,} \\
  & (i)\ i_0 \in \dom(T_0){\uparrow} \iff i_1 \in \dom(T_1){\uparrow}, \\ 
  & (ii)\ i_0 \in \dom(T_0){\uparrow} \implies \extend(T_0)(i_0) = \extend(T_1)(i_1),
\end{aligned} \\ \\
\cP(R) & \subseteq \cP(\Index) \times \cP(\Index), \\
L_0 \, [\cP(R)] \, L_1 &
\iff
\begin{aligned}[t]
& \text{for all $i_0,i_1$ with $i_0 \, [R] \, i_1$,} \\
& (i)\ i_0 \in L_0{\uparrow} \iff i_1 \in L_1{\uparrow}, \\ 
& (ii)\ i_0 \in L_0{\uparrow}
\implies
\max(L_0 \cap i_0{\downarrow})  \, [R] \, \max(L_1 \cap i_1{\downarrow}).
\end{aligned}
\end{align*}

One way to get used to these relations is to understand
their common root: all the four relations are derived from 
the same general construction, which we describe next. Consider 
a set $V$
and a relation $S_R$ on $V$ (i.e., $S_R \subseteq V \times V$) 
that may depend on $R$. The general construction is the 
relation $[R \rightharpoonup S_R]$ on partial functions 
of type $\Index \rightharpoonup V$ that is defined as follows:
for all $f_0,f_1 : \Index \rightharpoonup V$,
\[
f_0 \, [R \rightharpoonup S_R] \, f_1
\iff 
\begin{aligned}[t]
  & \text{for all $i_0,i_1$ with $i_0 \, [R] \, i_1$,} \\
  & (i)\ i_0 \in \dom(f_0) \iff i_1 \in \dom(f_1), \\ 
  & (ii)\ i_0 \in \dom(f_0) \implies f_0(i_0) \, [S_R] \, f_1(i_1).
  \end{aligned}
\] 
For a set $W$, let $\Delta_W$ be the equality relation on $W$:
\[
\Delta_W = \{(w,w) : w \in W\},
\]
which is also known as the diagonal relation on $W$.
Then, the definitions of the four relations are equivalent to the following constructions:
\begin{align*}
  \sigma_0 \, [\State_\mathit{tgt}(R)] \, \sigma_1
  & \iff
  \extend(\sigma_0(x^{\tau\dagger})) \, [R \rightharpoonup \Delta_{\db{\tau}}] \, \extend(\sigma_1(x^{\tau\dagger}))
  \ \text{ for all $x^{\tau\dagger}$,}
  \\
  T_0 \, [\Tensor_{\db{\tau}}(R)] \, T_1
  & \iff
  T_0 \, [R \rightharpoonup \Delta_{\db{\tau}}] \,  T_1,
  \\
  T_0 \, [\WTensor_{\db{\tau}}(R)] \, T_1
  & \iff 
  \extend(T_0) \, [R \rightharpoonup \Delta_{\db{\tau}}] \,  \extend(T_1),
  \\
  L_0 \, [\cP(R)] \, L_1
  & \iff
  \extend(\mathrm{id}|_{L_0}) \, [R \rightharpoonup R] \, \extend(\mathrm{id}|_{L_1}),
\end{align*}
where $\mathrm{id}|_L$ is the identity function over $\Index$ restricted to $L \subseteq \Index$.

Before proceeding further, we re-state and prove \Cref{lem:update-renaming-extend} which will be used in this section.

\addtocounter{lemma}{-1}

\begin{lemma}
  \label{lem:update-renaming-extend-appendix}
  Let $\sigma$ be a state, $x^{\tau\dagger}$, $y^{\tau'\dagger}$ be variables, 
  $T$ be a tensor in $\Tensor_{\db{\tau}}$, and $\rho$ be a finite partial injection on $\Index$. Then,
  we have the following equations:
  \begin{align*}
  \extend(\sigma[x^{\tau\dagger} : T](y^{\tau'\dagger}))
  & = 
  \begin{cases}
  \extend(\sigma(x^{\tau\dagger}))[\extend(T)] & \text{if $y^{\tau'\dagger} \equiv x^{\tau\dagger}$},
  \\
  \extend(\sigma(y^{\tau'\dagger})) & \text{otherwise},
  \end{cases}
  \\
  \extend(\sigma\langle\rho\rangle(y^{\tau'\dagger}))
  & = \extend(\sigma(y^{\tau'\dagger}))[\extend(T_{y,\rho})], 
\end{align*}
where
  $T_{y,\rho}(i) = \sigma(y^{\tau'\dagger})(\rho^{-1}(i))$ if $i \in \image(\rho) \cap \rho(\dom(\sigma(y^{\tau'\dagger})))$,
  and it is undefined otherwise.
\end{lemma}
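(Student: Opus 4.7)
The plan is to prove each equation by extensional equality of partial maps on $\Index$: for every $i \in \Index$, show that both sides are either both undefined at $i$ or both defined and equal. The first equation should be essentially routine, so I would dispatch it first and then concentrate the effort on the second.

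For the update equation, I would split on whether $y^{\tau'\dagger} \equiv x^{\tau\dagger}$. The non-equivalent case is immediate since $\sigma[x^{\tau\dagger}:T]$ leaves $y$ untouched. In the equivalent case, I would write $m' = \sigma[x^{\tau\dagger}:T](x^{\tau\dagger})$, noting that $\dom(m') = \dom(T) \cup (\dom(\sigma(x^{\tau\dagger})) \setminus \dom(T){\uparrow})$, and then case-split on whether $i \in \dom(T){\uparrow}$. If $i \in \dom(T){\uparrow}$, let $j^{*} = \max(\dom(T) \cap i{\downarrow})$: any $k \in \dom(m') \cap i{\downarrow}$ strictly above $j^{*}$ would lie in $\dom(T){\uparrow}$, hence would have been cleared, so $j^{*} = \max(\dom(m') \cap i{\downarrow})$ and $\extend(m')(i) = T(j^{*}) = \extend(T)(i)$. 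If $i \notin \dom(T){\uparrow}$, then $i{\downarrow} \cap \dom(T){\uparrow} = \emptyset$, so $\dom(m') \cap i{\downarrow} = \dom(\sigma(x^{\tau\dagger})) \cap i{\downarrow}$ and $\extend(m')(i) = \extend(\sigma(x^{\tau\dagger}))(i)$. The two branches together match the definition of $\extend(\sigma(x^{\tau\dagger}))[\extend(T)]$.

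For the copy equation, write $m' = \sigma\langle\rho\rangle(y^{\tau'\dagger})$, set $L = \dom(T_{y,\rho}) = \rho(\dom(\rho) \cap \dom(\sigma(y^{\tau'\dagger})))$, and split on whether $i \in L{\uparrow}$. If $i \in L{\uparrow}$, let $j^{*} = \max(L \cap i{\downarrow})$; any $k$ strictly between $j^{*}$ and $i$ in $i{\downarrow}$ lies in $\image(\rho){\uparrow}$, and such a $k$ is in $\dom(m')$ only if $k \in \image(\rho)$, in which case $k \in L$, contradicting maximality of $j^{*}$. Hence $j^{*} = \max(\dom(m') \cap i{\downarrow})$ and $\extend(m')(i) = m'(j^{*}) = \sigma(y^{\tau'\dagger})(\rho^{-1}(j^{*})) = T_{y,\rho}(j^{*}) = \extend(T_{y,\rho})(i)$. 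If $i \notin L{\uparrow}$, then $L \cap i{\downarrow} = \emptyset$, and I would argue $\dom(m') \cap i{\downarrow} = \dom(\sigma(y^{\tau'\dagger})) \cap i{\downarrow}$, so that $\extend(m')(i) = \extend(\sigma(y^{\tau'\dagger}))(i)$, matching the second clause of ${(\_)}[\extend(T_{y,\rho})]$.

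The main obstacle will be the second subcase of the copy equation, because $\sigma\langle\rho\rangle$ clears \emph{every} index in $\image(\rho){\uparrow} \setminus \image(\rho)$, which in principle can drop entries of $\sigma(y^{\tau'\dagger})$ that are not paired with any value in $T_{y,\rho}$. To make the equality hold, I will need to invoke the $A$-state invariant on $\sigma$ enforced by the semantic context in which $\sigma\langle\rho\rangle$ is applied (namely within the rules for $\code{shift}$ and $\code{extend\_index}$), which guarantees $\dom(\sigma(y^{\tau'\dagger})) \subseteq A{\downarrow}$ for an antichain $A$ compatible with $\rho$, and thus rules out the pathological case in which a still-defined entry of $\sigma(y^{\tau'\dagger})$ strictly above $\image(\rho)$ would otherwise be erased without a replacement in $T_{y,\rho}$. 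I would either make this invariant explicit in the lemma's hypothesis or, to stay faithful to the stated form, verify by inspection of the surrounding rules in \Cref{fig:semantics-commands-target-language} that every invocation of the lemma in the soundness proof occurs under a state satisfying this invariant.
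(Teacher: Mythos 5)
Your treatment of the first equation coincides with the paper's: the paper likewise computes $\dom(\sigma[x^{\tau\dagger}:T](x^{\tau\dagger})) = (\dom(\sigma(x^{\tau\dagger}))\setminus\dom(T){\uparrow})\cup\dom(T)$, splits on whether $i\in\dom(T){\uparrow}$, and establishes $\max(L\cap i{\downarrow})=\max(\dom(T)\cap i{\downarrow})$ by the same contradiction argument. For the second equation the paper takes a shorter route than you do: it asserts that $\sigma\langle\rho\rangle(y^{\tau'\dagger}) = \sigma[y^{\tau'\dagger}:T_{y,\rho}](y^{\tau'\dagger})$ as partial maps and then simply invokes the first equation with $T=T_{y,\rho}$, whereas you redo the maximality analysis from scratch. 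Your subcase $i\in\dom(T_{y,\rho}){\uparrow}$ is fine and is essentially the first-equation argument specialised to $T_{y,\rho}$, so nothing is lost (and some duplication is saved) by adopting the paper's reduction there.

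The obstacle you flag in the remaining subcase is real, but your proposed repair does not close it. The two sides of the copy equation can disagree precisely on $\dom(\sigma(y^{\tau'\dagger}))\cap\bigl(\image(\rho){\uparrow}\setminus\dom(T_{y,\rho}){\uparrow}\bigr)$: there the copy erases an entry (either because $i\in\image(\rho)$ but $\rho^{-1}(i)\notin\dom(\sigma(y^{\tau'\dagger}))$, or because $i\in\image(\rho){\uparrow}\setminus\image(\rho)$), while the right-hand side falls back to $\extend(\sigma(y^{\tau'\dagger}))(i)$. The $A$-state invariant alone does not exclude this. For instance, take the $\code{shift}$ map $\rho$ for $A=\{[(\alpha,0)],[(\alpha,1)]\}$, so that $\rho([])=[(\alpha,0)]$ and $\rho([(\alpha,0)])=[(\alpha,1)]$, and let $\dom(\sigma(y^{\tau'\dagger}))=\{[],[(\alpha,1)]\}$ with distinct values at the two indices; this is an $A$-state, yet $T_{y,\rho}$ is defined only at $[(\alpha,0)]$, the copy drops the entry at $[(\alpha,1)]$, and the two $\extend$'s then differ at $[(\alpha,1)]$. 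What is actually needed is a closure property of $\dom(\sigma(y^{\tau'\dagger}))$ with respect to $\rho$ (roughly, $\dom(\sigma(y^{\tau'\dagger}))\cap\image(\rho){\uparrow}\subseteq\rho(\dom(\sigma(y^{\tau'\dagger}))\cap\dom(\rho)){\uparrow}$), not merely $\dom(\sigma(y^{\tau'\dagger}))\subseteq A{\downarrow}$. To be fair, the paper's own one-line identity $\sigma\langle\rho\rangle(y^{\tau'\dagger})=\sigma[y^{\tau'\dagger}:T_{y,\rho}](y^{\tau'\dagger})$ fails on exactly the same set of indices, so you have surfaced a genuine subtlety rather than missed a step the paper supplies; but as written your plan neither proves the second equation in the stated generality nor identifies the side condition that would make your second subcase go through.
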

\begin{proof}
We first assume $y^{\tau'\dagger} \equiv x^{\tau\dagger}$ and $i \in \dom(T){\uparrow}$. We have to show
\begin{align}
  \label{eq:update-renaming-extend-eq1}
  \extend(\sigma[x^{\tau\dagger} : T](y^{\tau'\dagger}))(i) = \extend(T)(i).
\end{align}
By the definition of the update operator,
\[L = \dom(\sigma[x^{\tau\dagger} : T](y^{\tau'\dagger})) = (\dom(\sigma(x^{\tau\dagger})) \setminus \dom(T){\uparrow}) \cup \dom(T).\]
We claim that
\[\max(L \cap i{\downarrow}) = \max(\dom(T) \cap i{\downarrow}).\]
To see this, let $i' = \max(\dom(T) \cap i{\downarrow})$.
Then, $i' \in L \cap i{\downarrow}$.
Suppose for contradiction that there exists $i'' \in L \cap i{\downarrow}$ such that $i'' \sqsupseteq i'$ but $i'' \neq i'$.
Then, $i'' \in \dom(T){\uparrow}$, which implies $i'' \in \dom(T)$ since $i'' \in L$.
However, this contradicts the fact that $i'$ is the maximum of $\dom(T) \cap i{\downarrow}$,
since $i''$ is strictly greater than $i'$ while being contained in $\dom(T) \cap i{\downarrow}$.
Thus, $i' = \max(L \cap i{\downarrow})$, which proves the claim.
Now, to see \Cref{eq:update-renaming-extend-eq1}, we have
\begin{align*}
  \extend{(\sigma[x^{\tau\dagger} : T](y^{\tau'\dagger}))}(i)
  &= \sigma[x^{\tau\dagger} : T](y^{\tau'\dagger})(\max(L \cap i{\downarrow})) \\
  &= \sigma[x^{\tau\dagger} : T](y^{\tau'\dagger})(\max(\dom(T) \cap i{\downarrow})) \\
  &= T(\max(\dom(T) \cap i{\downarrow})) \\
  &= \extend(T)(i).
\end{align*}
For the other case when $y^{\tau'\dagger} \not\equiv x^{\tau\dagger}$ or $i \notin \dom(T){\uparrow}$, we have to show
\begin{align}
  \label{eq:update-renaming-extend-eq2}
  \extend(\sigma[x^{\tau\dagger} : T](y^{\tau'\dagger}))(i) = \extend(\sigma(y^{\tau'\dagger}))(i).
\end{align}
We claim that
\[i \notin \dom(T){\uparrow} \implies \max(L \cap i{\downarrow}) \notin \dom(T){\uparrow}.\]
Suppose for contradiction that $\max(L \cap i{\downarrow}) \in \dom(T){\uparrow}$.
Then, $\max(L \cap i{\downarrow}) \in \dom(T)$ since it belongs to $L$.
However, this cannot happen since $\max(L \cap i{\downarrow})$ is also contained in $i{\downarrow}$,
and $\dom(T) \cap i{\downarrow} = \emptyset$ by the assumption that $i \notin \dom(T){\uparrow}$.
Thus, the claim holds.
Now, we can show \Cref{eq:update-renaming-extend-eq2} as follows:
\begin{align*}
  \extend(\sigma[x^{\tau\dagger} : T](y^{\tau'\dagger}))(i)
  &= \sigma[x^{\tau\dagger} : T](y^{\tau'\dagger})(\max(L \cap i{\downarrow})) \\
  &= \sigma(y^{\tau'\dagger})(\max(L \cap i{\downarrow})) \\
  &= \extend(\sigma(y^{\tau'\dagger}))(i),
\end{align*}
where the second equality follows from that $y^{\tau'\dagger} \not\equiv x^{\tau\dagger}$ or $\max(L \cap i{\downarrow}) \notin \dom(T){\uparrow}$.
The second result in the lemma directly follows from that
\begin{align*}
\sigma\langle\rho\rangle(y^{\tau'\dagger}) = \sigma[y^{\tau'\dagger} : T_{y,\rho}](y^{\tau'\dagger})
\end{align*}
for every variable $y^{\tau'\dagger}$.
\end{proof}

Also, we note a few basic facts about these induced relations. 

\begin{lemma}
  \label{lem:indexset-R-membership}
  Let $R$ be a coupling, and $L_0,L_1$ be sets of indices. 
  If $L_0 \, [\cP(R)] \, L_1$, then we have
  the following equivalence for all $i_0,i_1 \in \Index$ with $i_0 \, [R] \, i_1$:
  \[
  i_0 \in L_0 \iff i_1 \in L_1.
  \]
\end{lemma}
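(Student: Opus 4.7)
The plan is to prove the lemma by fixing one direction and exploiting the partial-bijection clause in the definition of a coupling; the other direction follows by an entirely symmetric argument.

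First I would observe the easy fact that $i_0 \in L_0$ implies $i_0 \in L_0{\uparrow}$ (since $i_0 \in i_0{\uparrow} \subseteq L_0{\uparrow}$) and, moreover, that $i_0 = \max(L_0 \cap i_0{\downarrow})$: indeed $i_0$ itself lies in $L_0 \cap i_0{\downarrow}$ and is $\sqsubseteq$-maximal in $i_0{\downarrow}$. Applying clause (i) of the definition of $\cP(R)$ to the pair $i_0 \, [R] \, i_1$ then yields $i_1 \in L_1{\uparrow}$, so the set $L_1 \cap i_1{\downarrow}$ is nonempty and its $\sqsubseteq$-largest element $j_1 := \max(L_1 \cap i_1{\downarrow})$ exists (the maximum is well-defined because $i_1{\downarrow}$ is totally ordered by $\sqsubseteq$).

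Next I would invoke clause (ii) of $\cP(R)$: since $i_0 \in L_0{\uparrow}$ and $i_0 \, [R] \, i_1$, we get
\[
i_0 \;=\; \max(L_0 \cap i_0{\downarrow}) \;\;[R]\;\; \max(L_1 \cap i_1{\downarrow}) \;=\; j_1.
\]
So we have both $i_0 \, [R] \, i_1$ and $i_0 \, [R] \, j_1$. The partial-bijection condition on $R$ (namely, that $i \, [R] \, i_0 \land i \, [R] \, i_1 \implies i_0 = i_1$) then forces $j_1 = i_1$. Since by construction $j_1 \in L_1$, we conclude $i_1 \in L_1$, as desired.

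The reverse implication is proved in exactly the same way, swapping the roles of the two sides and using the other half of the partial-bijection condition (uniqueness of the left partner). No step is really an obstacle here; the only subtlety worth highlighting is the use of totality of $\sqsubseteq$ on $i{\downarrow}$ to ensure that $\max(L_j \cap i_j{\downarrow})$ is a well-defined index whenever $i_j \in L_j{\uparrow}$, which is implicit in the surrounding text but worth recalling for the argument to be airtight.
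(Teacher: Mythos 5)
Your proof is correct and follows essentially the same route as the paper's: assume $i_0 \in L_0$, use clause (i) of $\cP(R)$ to get $i_1 \in L_1{\uparrow}$, use clause (ii) together with $i_0 = \max(L_0 \cap i_0{\downarrow})$ to relate $i_0$ to $\max(L_1 \cap i_1{\downarrow})$, and then invoke the partial-bijection property of the coupling to conclude $i_1 = \max(L_1 \cap i_1{\downarrow}) \in L_1$. The only difference is that you spell out the well-definedness of the maxima slightly more explicitly, which is fine.
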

\begin{proof}
  Let $R$ be a coupling, $L_0,L_1$ be sets of indices, and $i_0,i_1$ be indices
  such that 
  \[
  L_0 \, [\cP(R)] \, L_1
  \quad\text{and}\quad
  i_0 \, [R] \, i_1.
  \]
  We will show that $i_0 \in L_0$ implies $i_1 \in L_1$; the other implication
  can be proved similarly. Assume that $i_0 \in L_0$. 
  Then, $i_0 \in L_0{\uparrow}$, which implies that $i_1 \in L_1{\uparrow}$.
  Furthermore, 
  \[
  \max(L_0 \cap i_0{\downarrow}) \, [R] \, \max(L_1 \cap i_1{\downarrow}).
  \] 
  But $i_0 = \max(L_0 \cap i_0{\downarrow})$
  since $i_0 \in L_0$.
  As a result, $i_1 = \max(L_1 \cap i_1{\downarrow})$
  since $i_0 \, [R] \, i_1$ and $R$ is a partial bijection, as $R$ is a coupling.
  Thus, $i_1 \in L_1$.
\end{proof}
\begin{corollary}
  \label{cor:Tz-R-from-indexset-R}
  Let $R$ be a coupling, and $L_0,L_1$ be finite sets of indices. Then,
  \[
  L_0 \, [\cP(R)] \, L_1 \implies
  T^z_{L_0} \, [\Tensor_\R(R)] \, T^z_{L_1}.
  \]
\end{corollary}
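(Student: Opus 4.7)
The plan is to deduce the corollary directly from \Cref{lem:indexset-R-membership}, since the tensors $T^z_{L_0}$ and $T^z_{L_1}$ are essentially the characteristic functions of $L_0$ and $L_1$ equipped with the constant value $0.0$. There is really only one substantive observation to make, namely that $\dom(T^z_L) = L$ for any index set $L$, after which both clauses defining $\Tensor_\R(R)$ reduce to easy verifications.

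Concretely, I would fix an arbitrary pair $(i_0,i_1)$ with $i_0 \, [R] \, i_1$ and verify the two defining conditions of $T^z_{L_0} \, [\Tensor_\R(R)] \, T^z_{L_1}$ in turn. For condition $(i)$, I use $\dom(T^z_{L_j}) = L_j$ to rewrite the membership statements as $i_0 \in L_0 \iff i_1 \in L_1$, which is exactly the conclusion of \Cref{lem:indexset-R-membership} applied to $L_0 \, [\cP(R)] \, L_1$. For condition $(ii)$, assuming $i_0 \in \dom(T^z_{L_0}) = L_0$, both sides of the required equation $T^z_{L_0}(i_0) = T^z_{L_1}(i_1)$ evaluate to $0.0$ by the definition of $T^z_{(-)}$, so the equality is immediate once condition $(i)$ has ensured $i_1 \in L_1$ so that $T^z_{L_1}(i_1)$ is defined.

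There is no real obstacle here: the corollary is essentially a one-line consequence of the preceding lemma combined with the definition of $T^z_L$. The only thing to be mildly careful about is that the definition of $\Tensor_\R(R)$ quantifies over all $R$-related pairs rather than over indices in the domain, so the proof must treat both the ``membership equivalence'' and the ``value equality'' parts of the definition, even though the latter is trivial in this constant-zero setting.
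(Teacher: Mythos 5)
Your proof is correct and is exactly the argument the paper intends: the corollary is stated without an explicit proof precisely because it follows from \Cref{lem:indexset-R-membership} together with the observations that $\dom(T^z_L) = L$ and that $T^z_L$ is constantly $0.0$ on its domain, which is what you verify.
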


The following lemma asserts that if $\dom(f_0) \, [\cP(R)] \, \dom(f_1)$,
then $\extend$ preserves the $[R \rightharpoonup \Delta_V]$ relation between $f_0$ and $f_1$.
\begin{lemma}
  \label{lem:extend-and-R}
  Let $R$ be a coupling, and $f_0,f_1$ be partial functions from $\Index$ to a set $V$. Then,
  if $\dom(f_0) \, [\cP(R)] \, \dom(f_1)$, we have
  \[
  f_0 \, [R \rightharpoonup \Delta_V] \, f_1
  \implies
  \extend(f_0) \, [R \rightharpoonup \Delta_V] \, \extend(f_1).
  \]
\end{lemma}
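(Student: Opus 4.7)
The plan is to unfold the definition of $[R \rightharpoonup \Delta_V]$ applied to $\extend(f_0)$ and $\extend(f_1)$, and to verify each of its two clauses directly using the hypotheses. Let $i_0, i_1 \in \Index$ with $i_0 \, [R] \, i_1$ be arbitrary. Recall that $\dom(\extend(f)) = \dom(f){\uparrow}$ for any partial function $f$ (this is immediate from the definition of $\extend$ given in \Cref{eqn:definition-extend} in its general form as used just before \Cref{lem:update-renaming-extend-appendix}).

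First, I would check the domain condition. We must show $i_0 \in \dom(\extend(f_0)) \iff i_1 \in \dom(\extend(f_1))$, which by the observation above is equivalent to $i_0 \in \dom(f_0){\uparrow} \iff i_1 \in \dom(f_1){\uparrow}$. This is exactly clause (i) of the hypothesis $\dom(f_0) \, [\cP(R)] \, \dom(f_1)$, instantiated at $(i_0,i_1)$.

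Next, I would check the value condition. Assume $i_0 \in \dom(f_0){\uparrow}$ (equivalently $i_1 \in \dom(f_1){\uparrow}$), and set
\[
j_0 = \max(\dom(f_0) \cap i_0{\downarrow}), \qquad j_1 = \max(\dom(f_1) \cap i_1{\downarrow}).
\]
By clause (ii) of $\dom(f_0) \, [\cP(R)] \, \dom(f_1)$, we have $j_0 \, [R] \, j_1$. Since $j_0 \in \dom(f_0)$ and $j_1 \in \dom(f_1)$, applying $f_0 \, [R \rightharpoonup \Delta_V] \, f_1$ at the pair $(j_0, j_1)$ yields $f_0(j_0) = f_1(j_1)$. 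By the definition of $\extend$, this gives exactly
\[
\extend(f_0)(i_0) = f_0(j_0) = f_1(j_1) = \extend(f_1)(i_1),
\]
which completes the verification.

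The proof is essentially a direct unfolding of definitions, with no real obstacle: the only subtlety is recognising that clause (ii) of $\cP(R)$ is tailored precisely to transport the maxima that define $\extend$ along $R$, so that the $[R \rightharpoonup \Delta_V]$ relation on the bare partial functions $f_0, f_1$ lifts pointwise to their $\extend$ completions.
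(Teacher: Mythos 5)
Your proof is correct and follows essentially the same route as the paper's: both verify the domain clause via $\dom(\extend(f)) = \dom(f){\uparrow}$ and clause (i) of $\cP(R)$, and the value clause by transporting the maxima $\max(\dom(f_k) \cap i_k{\downarrow})$ along $R$ using clause (ii) and then applying $f_0 \, [R \rightharpoonup \Delta_V] \, f_1$ at that pair. Your version is just slightly more explicit about why the value clause of the hypothesis applies at the maxima.
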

\begin{proof}
  Consider arbitrary indices $i_0,i_1$ with $i_0 \, [R] \, i_1$.
  First, $\dom(f_0) \, [\cP(R)] \, \dom(f_1)$ directly implies that
  \[
    i_0 \in \dom(\extend(f_0)) = \dom(f_0){\uparrow} \iff
    i_1 \in \dom(\extend(f_1)) = \dom(f_1){\uparrow}.
  \]
  Next, we further assume that $i_0 \in \dom(\extend(f_0))$ and $i_1 \in \dom(\extend(f_1))$,
  and show that 
  \[
    f_0(\max(\dom(f_0) \cap i_0{\downarrow})) = 
    f_1(\max(\dom(f_1) \cap i_1{\downarrow})),
  \] 
  which gives the desired $\extend(f_0)(i_0) = \extend(f_1)(i_1)$.
  This directly follows from that
  \[
    f_0 \, [R \rightharpoonup \Delta_V] \, f_1
    \quad\text{and}\quad
    \max(\dom(f_0) \cap i_0{\downarrow}) \, [R] \, \max(\dom(f_1) \cap i_1{\downarrow}).
  \]
\end{proof}
\begin{corollary}
  \label{cor:extend-R-tensor}
  If $\tau$ is either $\intt$ or $\real$,
  and $\dom(T_0) \, [\cP(R)] \, \dom(T_1)$ for $T_0, T_1 \in \Tensor_{\db{\tau}}$,
  we have
  \[
  T_0 \, [\Tensor_{\db{\tau}}(R)] \, T_1
  \implies
  T_0 \, [\WTensor_{\db{\tau}}(R)] \, T_1.
  \]
\end{corollary}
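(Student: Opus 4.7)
The plan is to derive the corollary directly from \Cref{lem:extend-and-R}, specialising the generic value set $V$ to the semantic domain $\db{\tau}$. The corollary, when both of its binary-relation symbols are unfolded via the common construction $[R \rightharpoonup S_R]$ introduced right after the definitions of $\State_\mathit{tgt}(R)$, $\Tensor_{\db{\tau}}(R)$, $\WTensor_{\db{\tau}}(R)$, and $\cP(R)$, says exactly the following: assuming $\dom(T_0) \, [\cP(R)] \, \dom(T_1)$, we have
\[
T_0 \, [R \rightharpoonup \Delta_{\db{\tau}}] \, T_1 \implies \extend(T_0) \, [R \rightharpoonup \Delta_{\db{\tau}}] \, \extend(T_1).
\]
This is a literal instance of \Cref{lem:extend-and-R} with $V := \db{\tau}$ and $f_0 := T_0$, $f_1 := T_1$.

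Concretely, I would first recall the equivalences
\[
T_0 \, [\Tensor_{\db{\tau}}(R)] \, T_1 \iff T_0 \, [R \rightharpoonup \Delta_{\db{\tau}}] \, T_1,
\qquad
T_0 \, [\WTensor_{\db{\tau}}(R)] \, T_1 \iff \extend(T_0) \, [R \rightharpoonup \Delta_{\db{\tau}}] \, \extend(T_1),
\]
which hold by the definitions given right before \Cref{lem:indexset-R-membership}. Then, since $T_0$ and $T_1$ are finite partial maps from $\Index$ to $\db{\tau}$, they fit exactly the typing required by \Cref{lem:extend-and-R}, where $V$ may be any set. Invoking the lemma with $V = \db{\tau}$ and the hypothesis $\dom(T_0) \, [\cP(R)] \, \dom(T_1)$ supplied by the corollary's assumption, we transform $T_0 \, [R \rightharpoonup \Delta_{\db{\tau}}] \, T_1$ into $\extend(T_0) \, [R \rightharpoonup \Delta_{\db{\tau}}] \, \extend(T_1)$, which is the desired conclusion.

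There is no real obstacle here, since the corollary is essentially a restatement of the preceding lemma in the specific language of tensors. The only substance beyond definitional unfolding is the already-proved claim that, under the $[\cP(R)]$-relatedness of domains, the $\extend$ operation preserves $[R \rightharpoonup \Delta_V]$-relatedness; the corollary merely packages this fact into the notation used elsewhere in the soundness proof, where the distinction between $\Tensor_{\db{\tau}}(R)$ and $\WTensor_{\db{\tau}}(R)$ becomes convenient.
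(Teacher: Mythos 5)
Your proof is correct and matches the paper's intent exactly: the corollary is stated without a separate proof precisely because it is the instance of \Cref{lem:extend-and-R} with $V = \db{\tau}$, $f_0 = T_0$, $f_1 = T_1$, obtained after unfolding $\Tensor_{\db{\tau}}(R)$ and $\WTensor_{\db{\tau}}(R)$ into the common $[R \rightharpoonup \Delta_{\db{\tau}}]$ construction. Nothing further is needed.
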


\begin{lemma}
  \label{lem:tensor-R-oplus}
  For all real tensors $T_0,T_1,T_0',T_1' \in \Tensor_\R$ and all couplings $R$,
  we have
  \[
  T_0 \, [\Tensor_\R(R)] \, T_1 \land T_0' \, [\Tensor_\R(R)] \, T_1'
  \implies
  (T_0 \oplus T_0') \, [\Tensor_\R(R)] \, (T_1 \oplus T_1').
  \]
\end{lemma}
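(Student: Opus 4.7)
The plan is to unfold the definition of $\Tensor_\R(R)$ and verify its two conditions pointwise by a short case analysis. Fix arbitrary $i_0, i_1 \in \Index$ with $i_0 \, [R] \, i_1$, and assume the two hypotheses $T_0 \, [\Tensor_\R(R)] \, T_1$ and $T_0' \, [\Tensor_\R(R)] \, T_1'$. I need to establish (i) $i_0 \in \dom(T_0 \oplus T_0') \iff i_1 \in \dom(T_1 \oplus T_1')$, and (ii) whenever the common side of (i) holds, $(T_0 \oplus T_0')(i_0) = (T_1 \oplus T_1')(i_1)$.

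For (i), I observe that by the definition of $\oplus$, $\dom(T \oplus T') = \dom(T) \cup \dom(T')$. Thus $i_0 \in \dom(T_0 \oplus T_0')$ iff $i_0 \in \dom(T_0) \lor i_0 \in \dom(T_0')$, and applying the first conjunct of each hypothesis (both depending on the same fixed pair $i_0 \, [R] \, i_1$) transfers each disjunct to $i_1$ with the corresponding primed/unprimed tensor, giving $i_1 \in \dom(T_1) \cup \dom(T_1') = \dom(T_1 \oplus T_1')$. The converse direction is symmetric.

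For (ii), I split on which of the three subcases of $\oplus$ applies on the left side, and use (i) together with the individual membership equivalences from the hypotheses to show the same subcase applies on the right. In the intersection subcase, both $T_0(i_0) = T_1(i_1)$ and $T_0'(i_0) = T_1'(i_1)$ hold by the value clauses of the hypotheses, so the sums agree; in each of the two asymmetric subcases only one equality of values is needed, and the membership equivalences rule out the other subcases on the right side.

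No step here is expected to be difficult: the argument is entirely pointwise and reduces to a compatibility check between the set-theoretic shape of $\oplus$ and the structural conditions baked into $\Tensor_\R(R)$. If any subtlety arises, it will be ensuring that the hypothesis is applied at the same pair $(i_0, i_1)$ in both the membership clause and the value clause, so that the case split on the left transfers cleanly to the right; this is immediate once the definitions are written out, and no induction or appeal to earlier lemmas such as \Cref{lem:extend-and-R} is required.
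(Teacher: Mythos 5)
Your proposal is correct and matches the paper's own proof essentially step for step: both establish the domain equivalence via $\dom(T\oplus T') = \dom(T)\cup\dom(T')$ and the membership clauses, and both handle the value clause by the same three-way case split on the subcases of $\oplus$, using the individual membership equivalences to transfer each case from the left pair to the right. No differences worth noting.
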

\begin{proof}
Pick $i_0,i_1 \in \Index$ such that $i_0 \, [R] \, i_1$. Then, the first 
condition in the definition of $\Tensor_\R(R)$ holds as shown below:
\begin{align*}
  i_0 \in \dom(T_0 \oplus T_0')
  & \iff i_0 \in \dom(T_0) \lor i_0 \in \dom(T_0')
  \\
  & \iff i_1 \in \dom(T_1) \lor i_1 \in \dom(T_1')
  \\
  & \iff i_1 \in \dom(T_1 \oplus T_1'),
\end{align*}
where the second equivalence follows from the assumption that 
\[
T_0 \, [\Tensor_\R(R)] \, T_1
\quad\text{and}\quad
T_0' \, [\Tensor_\R(R)] \, T_1'.
\]
To prove the other condition in the definition of $\Tensor_\R(R)$, 
assume that $i_0 \in \dom(T_0 \oplus T_0')$. If $i_0 \in \dom(T_0) \cap \dom(T_0')$,
then 
\[
i_1 \in \dom(T_1) \cap \dom(T_1'),
\quad T_0(i_0) = T_1(i_1)
\quad \text{and}
\quad T_0'(i_0) = T_1'(i_1),
\]
and so 
\[
(T_0 \oplus T_0')(i_0) 
=  T_0(i_0) + T_0'(i_0)
=  T_1(i_1) + T_1'(i_1) 
= (T_1 \oplus T_1')(i_1),
\]
as desired.
If $i_0 \in \dom(T_0) \setminus \dom(T_0')$, then $i_1 \in \dom(T_1) \setminus \dom(T_1')$
and $T_0(i_0) = T_1(i_1)$, and these facts imply the desired equality:
\[
(T_0 \oplus T_0')(i_0) 
=  T_0(i_0)
=  T_1(i_1) 
= (T_1 \oplus T_1')(i_1).
\]
The remaining case $i_0 \in \dom(T_0') \setminus \dom(T_0)$ can be 
handled similarly to the previous case.
\end{proof}

\begin{lemma}
  \label{lem:state-R-tensor-update}
  Let $R$ be a coupling, and $x^{\tau\dagger}$ be a variable. 
  Consider states $\sigma_0,\sigma_1 \in \State_\mathit{tgt}$ and tensors
  $T_0,T_1 \in \Tensor_{\db{\tau}}$ such that 
  \[
  \sigma_0 \, [\State_\mathit{tgt}(R)] \, \sigma_1
  \quad\text{and}\quad
  T_0 \, [\WTensor_{\db{\tau}}(R)] \, T_1.
  \]
  Then,  
  \[
  \sigma_0[x^{\tau\dagger} : T_0]
 \, [\State_\mathit{tgt}(R)] \,
  \sigma_1[x^{\tau\dagger} : T_1].
  \] 
\end{lemma}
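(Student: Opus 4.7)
The plan is to unfold the definition of $\State_\mathit{tgt}(R)$ and reduce the two conditions (i) and (ii) to statements about $\extend(\sigma[x^{\tau\dagger}:T](y^{\tau'\dagger}))$, so that Lemma~\ref{lem:update-renaming-extend-appendix} can be applied to rewrite these $\extend$'s as update operations on the originals. After that, everything should follow by a case split on whether the probed variable equals $x^{\tau\dagger}$, combined with a further case split on whether the probed index lies in $\dom(T_0){\uparrow}$.

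First, I would note that condition (i) in the definition of $\State_\mathit{tgt}(R)$ is vacuous for any state $\sigma$: since every state has $[] \in \dom(\sigma(y^{\tau'\dagger}))$, we get $\dom(\sigma(y^{\tau'\dagger})){\uparrow} = \Index$, and the same holds for $\sigma[x^{\tau\dagger}:T]$ because an easy check shows $[]$ remains in the domain after an update (the only way $[]$ could be removed is if $[] \in \dom(T){\uparrow} \setminus \dom(T)$, which is impossible). So the only content to verify is condition (ii), namely that for every variable $y^{\tau'\dagger}$ and every pair $i_0\,[R]\,i_1$,
\[
\extend(\sigma_0[x^{\tau\dagger}:T_0](y^{\tau'\dagger}))(i_0)
=
\extend(\sigma_1[x^{\tau\dagger}:T_1](y^{\tau'\dagger}))(i_1).
\]

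Next, I would split on whether $y^{\tau'\dagger} \equiv x^{\tau\dagger}$. If not, Lemma~\ref{lem:update-renaming-extend-appendix} collapses both sides to $\extend(\sigma_b(y^{\tau'\dagger}))(i_b)$, which are equal by the assumption $\sigma_0\,[\State_\mathit{tgt}(R)]\,\sigma_1$. In the equal case, Lemma~\ref{lem:update-renaming-extend-appendix} rewrites both sides as $\extend(\sigma_b(x^{\tau\dagger}))[\extend(T_b)]$ evaluated at $i_b$. Now I split further on whether $i_0 \in \dom(\extend(T_0)) = \dom(T_0){\uparrow}$. Condition (i) of $\WTensor_{\db{\tau}}(R)$ for $T_0\,[\WTensor_{\db{\tau}}(R)]\,T_1$ guarantees that $i_0 \in \dom(T_0){\uparrow}$ iff $i_1 \in \dom(T_1){\uparrow}$, so the two sides pick the same branch of the update: either both read from $\extend(T_b)(i_b)$, with equality given by condition (ii) of $\WTensor_{\db{\tau}}(R)$, or both read from $\extend(\sigma_b(x^{\tau\dagger}))(i_b)$, with equality given by $\sigma_0\,[\State_\mathit{tgt}(R)]\,\sigma_1$.

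There is no serious obstacle: the lemma is essentially a bookkeeping check once one recognises that $\State_\mathit{tgt}(R)$ really constrains only the total maps $\extend(\sigma(\cdot))$, and that Lemma~\ref{lem:update-renaming-extend-appendix} delivers exactly the commutation of $\extend$ with the state-update operation. The only mildly subtle point is making sure the domain-membership bookkeeping cleanly matches on the two sides, which is why I want to do the $\WTensor$-based case split explicitly rather than trying a direct rewrite.
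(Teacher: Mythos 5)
Your proposal is correct and follows essentially the same route as the paper's proof: rewrite both sides via Lemma~\ref{lem:update-renaming-extend-appendix}, case-split on whether the probed variable is $x^{\tau\dagger}$ and on membership of $i_0$ in $\dom(T_0){\uparrow}$, and use condition (i) of $\WTensor_{\db{\tau}}(R)$ to ensure both sides take the same branch before applying the value equalities from the hypotheses. Your explicit observation that condition (i) of $\State_\mathit{tgt}(R)$ is vacuous because $[]$ always lies in (and survives the update of) every variable's domain is a small bookkeeping point the paper leaves implicit, but it is accurate.
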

\begin{proof}
  Let $x^{\tau\dagger}$ be a variable.
  Consider a coupling $R$, states $\sigma_0,\sigma_1$, and tensors $T_0,T_1$,
  such that
  \[
  \sigma_0 \, [\State_\mathit{tgt}(R)] \, \sigma_1
  \quad\text{and}\quad
  T_0 \, [\WTensor_{\db{\tau}}(R)] \, T_1.
  \] 
  Pick an arbitrary variable $y^{\tau'\dagger}$ and 
  arbitrary indices $i_0,i_1$ such that $i_0 \, [R] \, i_1$.
  Then, we have
  \begin{align*}
    \extend(\sigma_0[x^{\tau\dagger} : T_0](y^{\tau'\dagger}))(i_0)
    & {} =
    \begin{cases}
      \extend(\sigma_0(x^{\tau\dagger}))[\extend(T_0)](i_0) & \text{if $y^{\tau'\dagger} \equiv x^{\tau\dagger}$,} \\
      \extend(\sigma_0(y^{\tau'\dagger}))(i_0) & \text{otherwise,}
    \end{cases}
    \\
    & {} =
    \begin{cases}
      \extend(T_0)(i_0) & \text{if $y^{\tau'\dagger} \equiv x^{\tau\dagger}$ and $i_0 \in \dom(T_0){\uparrow}$,} \\
      \extend(\sigma_0(x^{\tau\dagger}))(i_0) & \text{if $y^{\tau'\dagger} \equiv x^{\tau\dagger}$ and $i_0 \not\in \dom(T_0){\uparrow}$,} \\
      \extend(\sigma_0(y^{\tau'\dagger}))(i_0) & \text{otherwise,}
    \end{cases}
    \\
    & {} =
    \begin{cases}
      \extend(T_1)(i_1) & \text{if $y^{\tau'\dagger} \equiv x^{\tau\dagger}$ and $i_1 \in \dom(T_1){\uparrow}$,} \\
      \extend(\sigma_1(x^{\tau\dagger}))(i_1) & \text{if $y^{\tau'\dagger} \equiv x^{\tau\dagger}$ and $i_1 \not\in \dom(T_1){\uparrow}$,} \\
      \extend(\sigma_1(y^{\tau'\dagger}))(i_1) & \text{otherwise,}
    \end{cases}
    \\
    & {} =
    \begin{cases}
      \extend(\sigma_1(x^{\tau\dagger}))[\extend(T_1)](i_1) & \text{if $y^{\tau'\dagger} \equiv x^{\tau\dagger}$,} \\
      \extend(\sigma_1(y^{\tau'\dagger}))(i_1) & \text{otherwise,}
    \end{cases}
    \\
    & {} =
    \extend(\sigma_1[x^{\tau\dagger} : T_1](y^{\tau'\dagger}))(i_1).
  \end{align*}
  The first and the last equalities follow from \Cref{lem:update-renaming-extend-appendix}.
  The third equality holds because $i_0 \, [R] \, i_1$ and
  $\sigma_0 \, [\State_\mathit{tgt}(R)] \, \sigma_1$ and
  $T_0 \, [\WTensor_{\db{\tau}}(R)] \, T_1$.
\end{proof}

When $R = \Delta_L$ for a subset $L$ of $\Index$,
we write $=_L$ for $\State_\mathit{tgt}(R)$.
We spell out the definition of $=_L$ explicitly below since it is used frequently:
\[
\sigma_0 =_L \sigma_1 \iff
  \extend(\sigma_0(x^{\tau\dagger}))(i)
  = \extend(\sigma_1(x^{\tau\dagger}))(i)
  \text{ for all variables $x^{\tau\dagger}$ and indices $i \in L$}. 
\]
Intuitively, it says that the $\extend$ functions represented by
$\sigma_0(x^{\tau\dagger})$ and $\sigma_1(x^{\tau\dagger})$ coincide on $L$.

\begin{corollary}[Effects of Variable Writes]
  \label{cor:effects-write}
  For all states $\sigma \in \State_\mathit{tgt}$, variables $x^{\tau\dagger}$,
  and tensors $T \in \Tensor_{\db{\tau}}$ such that $L = \dom(T)$, we have
  \[
  \sigma[x^{\tau\dagger} : T] =_{(L{\uparrow})^c} \sigma.
  \]
\end{corollary}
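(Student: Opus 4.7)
The plan is to reduce this corollary directly to Lemma~\ref{lem:update-renaming-extend-appendix} (update-renaming-extend). Unpacking the definition of $=_{(L{\uparrow})^c}$, I need to show that for every variable $y^{\tau'\dagger}$ and every index $i \in (L{\uparrow})^c$,
\[
\extend(\sigma[x^{\tau\dagger} : T](y^{\tau'\dagger}))(i) = \extend(\sigma(y^{\tau'\dagger}))(i).
\]
I would split on whether $y^{\tau'\dagger}$ is the updated variable.

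If $y^{\tau'\dagger} \not\equiv x^{\tau\dagger}$, then by Lemma~\ref{lem:update-renaming-extend-appendix} we get $\extend(\sigma[x^{\tau\dagger} : T](y^{\tau'\dagger})) = \extend(\sigma(y^{\tau'\dagger}))$ as total functions, so the required equality holds at every index, in particular at every $i \in (L{\uparrow})^c$. This case is immediate and requires no use of the assumption on $i$.

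If $y^{\tau'\dagger} \equiv x^{\tau\dagger}$, the same lemma gives
\[
\extend(\sigma[x^{\tau\dagger} : T](x^{\tau\dagger})) = \extend(\sigma(x^{\tau\dagger}))[\extend(T)].
\]
Recall that the update operation $f[g]$ on (possibly partial) functions agrees with $f$ on indices outside $\dom(g)$. Since $\dom(\extend(T)) = \dom(T){\uparrow} = L{\uparrow}$, any $i \in (L{\uparrow})^c$ lies outside $\dom(\extend(T))$, so
\[
\bigl(\extend(\sigma(x^{\tau\dagger}))[\extend(T)]\bigr)(i) = \extend(\sigma(x^{\tau\dagger}))(i),
\]
which is exactly $\extend(\sigma(y^{\tau'\dagger}))(i)$ under our identification. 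Combining the two cases yields the claim.

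There is no real obstacle here: the proof is essentially a one-line consequence of Lemma~\ref{lem:update-renaming-extend-appendix} together with the observation that $\dom(\extend(T)) = L{\uparrow}$, so restricting attention to the complement of $L{\uparrow}$ trivially leaves the updated value untouched.
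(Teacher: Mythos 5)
Your proof is correct, but it takes a different route from the paper's. The paper proves this corollary as a one-line instantiation of \Cref{lem:state-R-tensor-update}: it takes the diagonal coupling $R = \Delta_{(L{\uparrow})^c}$, observes that $\sigma =_{(L{\uparrow})^c} \sigma$ and that $T$ is related to the empty tensor $T_\emptyset$ under $\WTensor_{\db{\tau}}(\Delta_{(L{\uparrow})^c})$ (both conditions of that relation are vacuous on $(L{\uparrow})^c$ since $\dom(T){\uparrow} = L{\uparrow}$ is disjoint from it), and concludes from $\sigma[x^{\tau\dagger}:T_\emptyset] = \sigma$. You instead unpack the definition of $=_{(L{\uparrow})^c}$ and apply \Cref{lem:update-renaming-extend} directly, splitting on whether the variable is the one being written and using $\dom(\extend(T)) = \dom(T){\uparrow} = L{\uparrow}$ to see that the update is invisible on the complement. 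Since \Cref{lem:state-R-tensor-update} is itself proved from \Cref{lem:update-renaming-extend}, your argument is essentially an inlined, more elementary version of the paper's: it avoids the coupling machinery entirely at the cost of a short explicit case split, whereas the paper's version is terser but requires the reader to check the (vacuous) relational side conditions. Both are sound; every step you use ($f[g]$ agreeing with $f$ outside $\dom(g)$, and the domain of $\extend(T)$ being the upward closure of $\dom(T)$) matches the paper's definitions exactly.
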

\begin{proof}
  It follows from \Cref{lem:state-R-tensor-update} with the following facts:
  \[
  \sigma =_{(L{\uparrow})^c} \sigma
  \quad\text{and}\quad
  T \, [\WTensor_{\db{\tau}}(\Delta_{(L{\uparrow})^c})] \, T_\emptyset
  \quad\text{and}\quad
  \sigma[x^{\tau\dagger} : T_\emptyset] = \sigma.
  \]
\end{proof}

\begin{lemma}[Updated Map Entries: Commands]
  \label{lem:updated-indices-cmd}
  Let $C$ be a command in the target language, $D \in \RDB$, and $\sigma, \sigma' \in \State_\mathit{tgt}$.
  Suppose that $C, D, \sigma, A \Downarrowi \sigma', T$.
  Then,   
  \[
    \sigma =_{(A{\uparrow})^c} \sigma'
    \qquad\text{and}\qquad
    \dom(T) = A.
  \]
\end{lemma}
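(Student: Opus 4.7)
I plan to proceed by induction on the derivation of $C, D, \sigma, A \Downarrowi \sigma', T$, proving both conclusions $\sigma =_{(A{\uparrow})^c} \sigma'$ and $\dom(T) = A$ simultaneously. Note the lemma is stated for $\Downarrowi$ rather than $\Downarrowt$, which simplifies matters because the $\code{loop\_fixpt\_noacc}$ rule becomes a plain bounded iteration, and by \Cref{thm:target-intermediate-equivalence} the result transfers to $\Downarrowt$ anyway.

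For the atomic rules ($\code{score}$, $\code{fetch}$, both assignments, $\code{skip}$, $\code{lookup\_index}$), the tensor in the conclusion is built explicitly with domain $A$ (either $T^z_A$ or a map of the form $[i \mapsto \ldots : i \in A]$), so $\dom(T) = A$ is immediate. The state either is unchanged or has the form $\sigma[x^{\tau\dagger} : T]$ with $\dom(T) = A$, so $\sigma =_{(A{\uparrow})^c} \sigma'$ follows directly from \Cref{cor:effects-write}. For the inductive cases $C_1;C_2$, $\code{ifz}$, $\code{for}$, and (the $\Downarrowi$ version of) $\code{loop\_fixpt\_noacc}$, I combine the induction hypotheses using transitivity of $=_{(A{\uparrow})^c}$ for the state claim, and using the fact that $\dom(T' \oplus T'') = \dom(T') \cup \dom(T'')$ for the tensor claim. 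In the $\code{ifz}$ case I further use that $A_1 \cup A_2 = A$ and $A_1 \cap A_2 = \emptyset$, so the pointwise sum has domain exactly $A$.

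The main obstacle is the $\code{extend\_index}$ and $\code{shift}$ cases, where the antichain active inside the body differs from $A$ and the final state is post-processed with a copy operation $(\_)\langle\rho\rangle$. For $\code{extend\_index}(\alpha, n_+)\ C$: the body runs under $A' = \{i \concat [(\alpha,k)] : i \in A,\, k < n_+\}$, and the induction hypothesis gives $\sigma =_{(A'{\uparrow})^c} \sigma'$. Since every element of $A'$ lies in $A{\uparrow}$, we have $A' \subseteq A{\uparrow}$, hence $A'{\uparrow} \subseteq A{\uparrow}$ and so $(A{\uparrow})^c \subseteq (A'{\uparrow})^c$. It remains to show $\sigma' =_{(A{\uparrow})^c} \sigma'\langle\rho\rangle$ for $\rho = [i \concat [(\alpha, n_+{-}1)] \mapsto i : i \in A]$. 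Applying \Cref{lem:update-renaming-extend-appendix}, we get $\extend(\sigma'\langle\rho\rangle(y^{\tau'\dagger})) = \extend(\sigma'(y^{\tau'\dagger}))[\extend(T_{y,\rho})]$, and since $\dom(T_{y,\rho}) \subseteq \image(\rho) = A$, the partial function $\extend(T_{y,\rho})$ is undefined on $(A{\uparrow})^c$, so the update acts as identity there. Chaining these two equalities along $(A{\uparrow})^c$ yields $\sigma =_{(A{\uparrow})^c} \sigma'\langle\rho\rangle$. Finally $\dom(T'') = A$ by construction, since $T''$ is defined by summing $T'$ over the $\alpha$-extensions of each $i \in A$ (well-defined because by the induction hypothesis $\dom(T') = A'$).

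The $\code{shift}(\alpha)$ case is analogous but simpler: $T^z_A$ has domain $A$ by construction, and for the state claim I inspect the rule's definition of $\rho$ to verify $\image(\rho) \subseteq A$, then apply \Cref{lem:update-renaming-extend-appendix} to deduce that the update $(\_)[\extend(T_{y,\rho})]$ has effect confined to $\image(\rho){\uparrow} \subseteq A{\uparrow}$, giving $\sigma =_{(A{\uparrow})^c} \sigma\langle\rho\rangle$ at once. The crux of the whole argument lies in this observation about $\image(\rho)$ for the two copy-based constructs, which converts the semantic state transformation into one that is invisible outside $A{\uparrow}$.
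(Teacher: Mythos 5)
Your proposal is correct and follows essentially the same route as the paper's proof: structural/derivational induction with the atomic cases handled via \Cref{cor:effects-write}, the compound cases via transitivity of $=_{(A{\uparrow})^c}$ and domain unions, and the $\code{extend\_index}$/$\code{shift}$ cases resolved by the same key observation that the copy operation's image lies in $A$ (so, via \Cref{lem:update-renaming-extend-appendix}, it is invisible on $(A{\uparrow})^c$), together with $(A{\uparrow})^c \subseteq (A'{\uparrow})^c$ for the extended antichain. The only cosmetic differences are that the paper inducts on the structure of $C$ (unrolling the $\code{for}$ case into assignments and sequencing) and does not need the detour through \Cref{thm:target-intermediate-equivalence}, since the lemma is stated directly for $\Downarrowi$.
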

Intuitively, the lemma expresses that if an entry of the map stored in a variable is updated during execution,
its index should be in the upward closure of the given $A$ for the execution, and also that
the tensor recording the outcome of scoring during execution should have domain $A$. 
\begin{proof}
  We prove the lemma by induction on the structure of $C$.
  We do case analysis on $C$, and prove that the two conditions of the lemma hold for each case.
  \begin{itemize}[itemsep=5pt]
  \item {\bf Cases of 
    $C \equiv (x^\lreal := \code{fetch}(I))$, 
    $C \equiv (x^{\lintt} := \code{lookup\_index}(\alpha))$,
    $C \equiv (x^\lintt := Z)$, and 
    $C \equiv (x^\lreal := E)$:} \\
    We have $\dom(T) = \dom(T^z_A) = A$. Thus, the second condition of the lemma holds.
    For the other condition, we note that
    by the rules for these commands in the semantics,
    $\sigma' = \sigma[x^\lreal : T']$ or $\sigma' = \sigma[x^\lintt : T']$ for some tensor $T'$ with $\dom(T') = A$.
    Thus, by \Cref{cor:effects-write}, 
    we have the desired $\sigma =_{(A{\uparrow})^c} \sigma'$.

  \item {\bf Cases of $C \equiv (\code{score}(E))$ and $C \equiv (\code{skip})$:} \\
    In these two cases, $\dom(T) = A$, and so the second condition of the lemma holds.
    The other condition also holds because the input state $\sigma$ is not modified in these cases.
  
  \item {\bf Case of $C \equiv (C_1;C_2)$:} \\
    The second condition follows from the induction hypothesis on $C_1$ and $C_2$,
    and the fact that $\dom(T_1 \oplus T_2) = \dom(T_1) \cup \dom(T_2)$ for any $T_1, T_2 \in \Tensor_\R$.
    The other condition holds because of the induction hypothesis on $C_1$ and $C_2$
    and the transitivity of the $=_{(A{\uparrow})^c}$ relation.

  \item {\bf Case of $C \equiv (\code{ifz}\ Z\ C_1\ C_2$):} \\
    Let $A_1 = \{i \in A : n_i = 0\}$ and $A_2 = \{i \in A : n_i \neq 0\}$ 
    where $n_i = \db{Z}(\sigma,i)$ for all $i \in A$.
    By the induction hypothesis on $C_1$ and $C_2$, we have
    \begin{align*}
    &
    C_1, D, \sigma, A_1 \Downarrowi \sigma_1, T_1,
    &&
    \dom(T_1) = A_1,
    &&
    \sigma =_{(A_1{\uparrow})^c} \sigma_1,
    \\
    &
    C_2, D, \sigma_1, A_2 \Downarrowi \sigma_2, T_2,
    &&
    \dom(T_2) = A_2,
    &&
    \sigma_1 =_{(A_2{\uparrow}^c)} \sigma_2,
    \end{align*}
    for $\sigma' = \sigma_2$ and $T = T_1 \oplus T_2$.
    The second condition of the lemma follows from these relationships as shown below:
    \[
    \dom(T) = \dom(T_1 \oplus T_2) = \dom(T_1) \cup \dom(T_2) = A_1 \cup A_2 = A.
    \]
    The other condition also holds because $(A{\uparrow})^c \subseteq (A_1{\uparrow})^c$ and $(A{\uparrow})^c \subseteq (A_2{\uparrow})^c$ imply
    \begin{align*}
    \sigma =_{(A{\uparrow})^c} \sigma_1 =_{(A{\uparrow})^c} \sigma_2 = \sigma',
    \end{align*}
    and the relation $=_{(A{\uparrow})^c}$ is transitive.

  \item {\bf Cases of $C \equiv (\code{for}\ x^\lintt \ \code{in} \ \code{range}(n)\ \code{do}\ C')$:} \\
    $C$ in this case can be expressed as a sequence of $2n$ commands.
    By repeating our proofs for assignment and sequential composition and using the induction hypothesis on $C'$,
    we can prove that the lemma holds in this case.

  \item {\bf Case of $C \equiv (\code{extend\_index}(\alpha,n)\ C')$:} \\
    Let 
    \begin{align*}
      A_0 &= \{i \concat [(\alpha,k)] : i \in A,\ k \in \{0,\ldots,n-1\}\}, \\
      \rho_0 &= [i \concat [(\alpha,n-1)] \mapsto i : i \in A]. 
    \end{align*}
    By the semantics of the $\code{extend\_index}$ command, there exist
    $\sigma_0 \in \State_\mathit{tgt}$ and $T_0 \in \Tensor_\R$ such that
    \[
      C', D, \sigma, A_0 \Downarrowi \sigma_0, T_0,
      \quad
      \sigma' = \sigma_0\langle\rho_0\rangle,
      \quad
      T = \left[i \mapsto \sum_{k=0}^{n-1} T_0(i\concat [(\alpha,k)]) \;:\; i \in A\right].
    \]
    By the induction hypothesis on $C'$,
    \[
    \sigma =_{(A_0{\uparrow})^c} \sigma_0
    \quad\text{and}\quad
    \dom(T_0) = A_0.
    \]
    Note that the second relationship from above ensures that 
    the rule for the $\code{extend\_index}$ command in the operational semantics
    never experiences the failure due to the undefinedness of the tensor $T_0$ 
    at some $i\concat [(\alpha,k)]$, as we explained right after the definition 
    of the operational semantics.    
    The second condition required by the lemma holds because $\dom(T) = A$.
    To prove the other condition, we note that 
    for all variables $x^{\tau\dagger}$ and indices $j \in (A{\uparrow})^c$,
    \begin{align*}
      \extend(\sigma'(x^{\tau\dagger}))(j)
      & {} = 
      \extend(\sigma_0\langle \rho_0 \rangle (x^{\tau\dagger}))(j) \\
      & {} =
      \extend(\sigma_0(x^{\tau\dagger}))(j) \\
      & {} =
      \extend(\sigma(x^{\tau\dagger}))(j).
    \end{align*}
    In the derivation, the second equality holds because $j \in (A{\uparrow})^c = (\image(\rho_0){\uparrow})^c$,
    and the third equality holds because
    $\sigma =_{(A_0{\uparrow})^c} \sigma_0$ and $(A{\uparrow})^c \subseteq (A_0{\uparrow})^c$.
  
  \item {\bf Case of $C \equiv \code{shift}(\alpha)$:} \\
    In this case, $\dom(T) = \dom(T^z_A) = A$, as required by the second condition of the lemma.
    To prove the other condition, pick an arbitrary variable $x^{\tau\dagger}$ and index $i \in (A{\uparrow})^c$.
    Note that $\sigma' = \sigma\langle\rho\rangle$ 
    for some $\rho$ such that $\image(\rho) \subseteq A$.
    \begin{align*}
      \extend(\sigma'(x^{\tau\dagger}))(i)
      & {} =
      \extend(\sigma\langle\rho\rangle(x^{\tau\dagger}))(i) \\
      & {} =
      \extend(\sigma(x^{\tau\dagger}))(i),
    \end{align*} 
    where the second equality holds because $i \in (A{\uparrow})^c \subseteq (\image(\rho){\uparrow})^c$.

  \item {\bf Case of $C \equiv (\code{loop\_fixpt\_noacc}(n)\ C')$:} \\
    By the semantics of the $\code{loop\_fixpt\_noacc}$ command,
    there exist $\sigma_k \in \State_\mathit{tgt}$ and $T_k \in \Tensor_\R$ for $k = 0,\ldots,n-1$ such that
    \begin{align*}
    \sigma_0 = \sigma, &&
    \sigma_n = \sigma', &&
    T_n = T, &&
    C', D, \sigma_k, A \Downarrowi \sigma_{k+1}, T_{k+1}
    \quad\text{for $k = 0,\ldots,n-1$}.
    \end{align*}
    By the induction hypothesis on $C'$ and the transitivity of the $=_{(A{\uparrow})^c}$ relation,
    we have $\sigma_k =_{(A{\uparrow})^c} \sigma_{k+1}$ for $k = 0,\ldots,n-1$,
    and thus $\sigma =_{(A{\uparrow})^c} \sigma'$ and $\dom(T) = A$.
  \end{itemize}
\end{proof}

\begin{corollary}[Identity Property under Empty Set of Indices]
  \label{lem:emp-identity}
  Let $C$ be a command in the target language, $D \in \RDB$, and $\sigma, \sigma' \in \State_\mathit{tgt}$.
  Then, if $C, D, \sigma, \emptyset \Downarrowi \sigma', T$, then
  \[
  \sigma =_{\Index} \sigma' \quad\text{and}\quad T = T_\emptyset
  \]
  where $T_\emptyset$ is the empty tensor, i.e., a tensor with the empty domain.
\end{corollary}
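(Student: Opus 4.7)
The plan is to derive this corollary directly from the preceding \Cref{lem:updated-indices-cmd} by specialising it to the case $A = \emptyset$. First I would observe that for the empty antichain we have $\emptyset{\uparrow} = \emptyset$, hence $(\emptyset{\uparrow})^c = \Index$. Applying the lemma to the hypothesis $C, D, \sigma, \emptyset \Downarrowi \sigma', T$ yields $\sigma =_{(\emptyset{\uparrow})^c} \sigma'$, which is exactly $\sigma =_{\Index} \sigma'$, and $\dom(T) = \emptyset$, which by definition means $T = T_\emptyset$.

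Since this reduces to a one-line instantiation of an already-proved lemma, there is no real obstacle. The only subtle point to double-check is that \Cref{lem:updated-indices-cmd} is genuinely applicable when $A = \emptyset$, i.e., that the empty set is indeed a finite antichain in $\FAntiChain$ (vacuously, since the antichain condition is a universal statement over pairs of distinct elements). This is immediate from the definition of antichain.

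Therefore the whole proof is essentially: ``Apply \Cref{lem:updated-indices-cmd} with $A = \emptyset$ and note $(\emptyset{\uparrow})^c = \Index$ and $\dom(T) = \emptyset \Rightarrow T = T_\emptyset$.'' No induction, no case analysis, and no auxiliary constructions are needed beyond those already developed for the main lemma.
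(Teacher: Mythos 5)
Your proposal is correct and matches the paper's own proof exactly: both derive the corollary by instantiating \Cref{lem:updated-indices-cmd} at $A = \emptyset$ and observing that $(\emptyset{\uparrow})^c = \Index$ and $\dom(T) = \emptyset \iff T = T_\emptyset$. Your extra remark that $\emptyset$ is vacuously a finite antichain is a reasonable sanity check but not needed in the paper's version.
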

\begin{proof}
  This directly follows from \Cref{lem:updated-indices-cmd} since
  \[
  (\emptyset{\uparrow})^c = \Index
  \quad\text{and}\quad
  \dom(T) = \emptyset \iff T = T_\emptyset.
  \]
\end{proof}
\subsection{Embedding of Source-Language Commands into Target-Language Commands}
\label{subsec:simple-embedding-commands}

The simple straightforward way to embed expressions and commands in the 
source language to those in the target language is simply to change
the type of each variable so that a variable $x$ of type $\tau$
in the source language becomes the variable with the same name but of
the lifted type $\tau\dagger$ in the target language. We denote the outcomes
of this embedding using the $-'$ notation. For instance, $E'$ and $C'$
denote the embeddings of an expression $E$ and a command $C$ in the source language.

In this subsection, we prove that this embedding preserves the behaviour of the commands.
Specifically, we show that if the initial states of a source language command and its embedding in the target language are equivalent,
then their final states and scores are also equivalent after evaluating the commands using the evaluation relations $\Downarrow$ and $\Downarrowi$, respectively.

Consider the following two relations, the first between states of the source language and those of the target language,
and the next between real numbers and real-valued tensors, both denoted with the
same notation $\sim$:
\begin{align*}
\sim & \subseteq \State_\mathit{src} \times \State_\mathit{tgt},
&
\sigma \sim \sigma' & 
\iff 
\begin{aligned}[t]
& \text{$\sigma(x^\tau) = \extend(\sigma'(x^{\tau\dagger}))([])$}
\\
& \text{for all variables $x^\tau$ in the source languages,} 
\end{aligned}
\\
\sim & \subseteq \R \times \Tensor_\R,
&
r \sim T & \iff r = \sum_{i \in \dom(T)} T(i).
\end{align*}

The next proposition says that the embedding 
preserves the semantics of expressions.
\begin{proposition}[Embedding of Expressions]
  \label{prop:exp-cor}
    Let $\sigma \in \State_\mathit{src}$ and $\sigma' \in \State_\mathit{tgt}$ such that 
    $\sigma \sim \sigma'$. Then, for all integer expressions $Z$, real
    expressions $E$, and index expressions $I$ in the source language, 
    if we let $Z'$, $E'$, and $I'$ be the ones obtained from $Z$, $E$, and $I$
    by replacing all variables $x^{\tau}$ in them with $x^{\tau\dagger}$, then
    \[
    \db{Z'}(\sigma', []) = \db{Z}(\sigma),
    \qquad
    \db{E'}(\sigma', []) = \db{E}(\sigma),
    \qquad
    \db{I'}(\sigma', []) = \db{I}(\sigma),
    \]
    where the left-hand side of each equation uses the semantics of the target language,
    and the right-hand side uses 
    the semantics of the source language.   
\end{proposition}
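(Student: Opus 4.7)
The plan is to establish the three equalities simultaneously by structural induction on expressions, exploiting the shared interpretation $\db{\op_i}_a$ and $\db{\op_r}_a$ of atomic operations between the source and target semantics. Under this shared interpretation, the two evaluators can only differ at points where they access the store, so the entire argument reduces to matching variable lookups at the empty index.

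For the base cases, constants $n$ and $r$ are assigned their literal values in both semantics, so the equalities are immediate. For variables, the target-language clauses give
\[
\db{x^{\intt\dagger}}(\sigma',[]) = \extend(\sigma'(x^{\intt\dagger}))([]), \qquad \db{x^{\real\dagger}}(\sigma',[]) = \extend(\sigma'(x^{\real\dagger}))([]),
\]
which by the definition of $\sigma \sim \sigma'$ equal $\sigma(x^\intt)$ and $\sigma(x^\real)$, i.e.\ exactly $\db{x^\intt}\sigma$ and $\db{x^\real}\sigma$. The one thing to note here is that these source-language values really do sit at the empty index in the target store: since $[\,]{\downarrow} = \{[]\}$ and every state satisfies $[] \in \dom(\sigma'(x^{\tau\dagger}))$, we have $\extend(\sigma'(x^{\tau\dagger}))([]) = \sigma'(x^{\tau\dagger})([])$, so no genuine computation of $\extend$ is required at the base case.

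For the inductive step, consider first $\op_i(Z_1,\ldots,Z_k,E_1,\ldots,E_m)$. By the induction hypothesis applied to each $Z_j$ and $E_j$, the $k+m$ arguments passed to $\db{\op_i}_a$ at $\sigma'$ and index $[]$ are identical to those passed at $\sigma$ in the source semantics, so the results coincide. The case of $\op_r$ is analogous. For an index expression $I \equiv [(\alpha_1,Z_1);\ldots;(\alpha_k,Z_k)]$, unfolding the target clause yields
\[
\db{I'}(\sigma',[]) = [(\alpha_1,\db{Z_1'}(\sigma',[]));\,\ldots;\,(\alpha_k,\db{Z_k'}(\sigma',[]))],
\]
and applying the induction hypothesis componentwise gives $[(\alpha_1,\db{Z_1}\sigma);\ldots;(\alpha_k,\db{Z_k}\sigma)] = \db{I}\sigma$.

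There is no real obstacle in this proof: the target semantics, when evaluated at $[]$, collapses to the source semantics precisely because $\sim$ forces the stores to agree on the $[]$-entry and the operator clauses are identical up to pushing the index $[]$ through the recursive calls. The only minor bookkeeping is to verify the $\extend(m_x)([]) = m_x([])$ reduction mentioned above, which is a direct consequence of the well-formedness condition on target-language states.
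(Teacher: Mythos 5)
Your proof is correct and follows essentially the same route as the paper's: structural induction on expressions, with constants immediate, variables handled by the definition of $\sim$, and the operator and index cases discharged by applying the induction hypothesis componentwise under the shared interpretations $\db{\op_i}_a$ and $\db{\op_r}_a$. Your extra remark that $\extend(\sigma'(x^{\tau\dagger}))([]) = \sigma'(x^{\tau\dagger})([])$ is a harmless (and correct) elaboration that the paper does not need to spell out, since $\sim$ is defined directly in terms of $\extend(\sigma'(x^{\tau\dagger}))([])$.
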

\begin{proof}
    The proof is by induction on the structures of $E$, $Z$, and $I$.
    In the proof, we use the primed version of an expression of any type
    to mean the one obtained from it by replacing all variables $x^{\tau}$
    with $x^{\tau\dagger}$.
    \begin{itemize}[itemsep=5pt]
      \item {\bf Case of $Z \equiv n$
        and $E \equiv r$}: \\
        The desired equality follows immediately
        from the semantics of the source and target languages.
      \item {\bf Case of $Z \equiv x^{\intt}$
        and $E \equiv x^{\real}$}: \\
        We derive the desired equality as follows:
        \[
        \db{x^{\tau\dagger}}(\sigma',[]) =
        \extend(\sigma'(x^{\tau\dagger}))([]) = 
        \sigma(x^\tau) = \db{x^{\tau}}(\sigma)
        \]
        for $\tau \in \{\intt, \real\}$.
        Here the second equality follows from the assumption 
        that $\sigma \sim \sigma'$, and the other equalities from
        the semantics of the source and target languages.
      \item {\bf Case of $Z \equiv \op_i(Z_1,\ldots,Z_n,E_1,\ldots,E_m)$
        and $E \equiv \op_r(Z_1,\ldots,Z_n,E_1,\ldots,E_m)$}: \\
        For $\op \in \{\op_i, \op_r\}$, we show the desired equality as follows:
        \begin{align*}
        \db{\op(Z'_1,\ldots,Z'_n,E'_1,\ldots,E'_m)}(\sigma',[]) 
        & {} =
        (\db{\op}_a)(
            \begin{aligned}[t]
                & \db{Z'_1}(\sigma',[]),\ldots,\db{Z'_m}(\sigma',[]),
                \\
                & \db{E'_1}(\sigma',[]),\ldots,\db{E'_n}(\sigma',[]))
            \end{aligned}
        \\
        & {} =
        (\db{\op}_a)(
            \begin{aligned}[t]
                & \db{Z_1}(\sigma),\ldots,\db{Z_m}(\sigma),
                \\
                & \db{E_1}(\sigma),\ldots,\db{E_n}(\sigma))
            \end{aligned}
        \\
        & {} = 
        \db{\op(Z_1,\ldots,Z_n,E_1,\ldots,E_m)}(\sigma).
        \end{align*}
        The second and third equalities use the induction hypothesis,
        and the other equalities follow from the semantics of the source
        and target languages.
      \item {\bf Case of $I \equiv [(\alpha_1, Z_1);\, \ldots;\, (\alpha_n, Z_n)]$}: \\
        We prove the goal equality as shown below:
        \begin{align*}
        \db{[(\alpha_1, Z'_1);\, \ldots;\, (\alpha_n, Z'_n)]}(\sigma',[])
        & {} =
        [(\alpha_1,\db{Z'_1}(\sigma',[]));\ldots;(\alpha_n,\db{Z'_n}(\sigma',[]))]
        \\
        & {} =
        [(\alpha_1,\db{Z_1}(\sigma));\ldots;(\alpha_n,\db{Z_n}(\sigma))]
        \\
        & {} =
        \db{[(\alpha_1, Z_1);\, \ldots;\, (\alpha_n, Z_n)]}(\sigma).
        \end{align*}
        The first equality follows from the induction hypothesis and the
        semantics of the index expression in the target language. 
        The second equality uses the induction hypothesis,
        and the last equality the semantics of the source language.
  \end{itemize}
\end{proof}

We next show that the embedding preserves the semantics of commands.
\begin{lemma}[Preservation of Equivalence by Update]
  \label{lem:upd-cor}
  Let $\sigma \in \State_\mathit{src}$ and $\sigma' \in \State_\mathit{tgt}$.
  If $\sigma \sim \sigma'$, then for all variables $x^\tau$ in the source language 
  and all $r \in \R$, we have
  \[
  \sigma [ x^\tau \mapsto r ]
  \sim
  \sigma' [ x^{\tau\dagger} : T ],
  \quad \text{ for $T = [ [] \mapsto r ]$}.
  \]
\end{lemma}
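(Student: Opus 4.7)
The plan is to unfold the definition of $\sim$ and check, for each source-language variable $y^{\tau'}$, that
\[
(\sigma[x^\tau \mapsto r])(y^{\tau'}) \;=\; \extend\bigl((\sigma'[x^{\tau\dagger} : T])(y^{\tau'\dagger})\bigr)([]).
\]
Everything is driven by \Cref{lem:update-renaming-extend-appendix}, which tells us exactly how $\extend$ commutes with the target-language update operation. The argument splits naturally into two cases according to whether $y \equiv x$.

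For the case $y \not\equiv x$, the source update leaves $\sigma(y^{\tau'})$ unchanged, and \Cref{lem:update-renaming-extend-appendix} gives $\extend((\sigma'[x^{\tau\dagger} : T])(y^{\tau'\dagger})) = \extend(\sigma'(y^{\tau'\dagger}))$. The desired equality at the empty index then follows immediately from the hypothesis $\sigma \sim \sigma'$ applied to $y^{\tau'}$.

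For the case $y \equiv x$ (and hence $\tau' = \tau$), the source update gives $(\sigma[x^\tau \mapsto r])(x^\tau) = r$, while \Cref{lem:update-renaming-extend-appendix} rewrites the target side as $\extend(\sigma'(x^{\tau\dagger}))[\extend(T)]$. The key computational step is to observe that for the singleton tensor $T = [[] \mapsto r]$, we have $\dom(T) = \{[]\}$, so $\dom(T){\uparrow} = \Index$ and, by the definition of $\extend$,
\[
\extend(T)(i) \;=\; T\bigl(\max(\{[]\} \cap i{\downarrow})\bigr) \;=\; T([]) \;=\; r \qquad \text{for every } i \in \Index.
\]
In particular $\extend(T)([]) = r$, so the update overrides the value of $\extend(\sigma'(x^{\tau\dagger}))$ at $[]$ to be $r$, matching the source side.

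There is no real obstacle here; the statement is essentially a sanity check that the embedding $\sigma \sim \sigma'$ and the singleton-tensor update are compatible. The only subtlety worth flagging explicitly is the $\extend$ computation on $T$, since $T$ may be defined only at $[]$ and the reader might mistakenly expect $\extend(T)$ to be partial; the fact that $[] \in \dom(T)$ forces $\extend(T)$ to be defined everywhere, which is exactly what the proof needs.
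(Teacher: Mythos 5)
Your proof is correct and follows essentially the same route as the paper's: unfold $\sim$, split on whether $y \equiv x$, and use \Cref{lem:update-renaming-extend} to reduce the updated target state to $\extend(\sigma'(x^{\tau\dagger}))[\extend(T)]$, noting that $[] \in \dom(T)$ forces $\extend(T)([]) = r$. The paper's version is just a more compressed chain of case-split equalities, but the content is identical.
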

\begin{proof}
  Let $\sigma$, $\sigma'$, $x^{\tau}$, $r$, and $T$ be as in the lemma.
  Assume that $\sigma \sim \sigma'$.
  For all variables $y^{\tau'}$ in the source language, we have
  \begin{align*}
  \sigma[x^\tau \mapsto r](y^{\tau'}) 
  & {} =
  \begin{cases}
    r & \text{if $y^{\tau'} \equiv x^\tau$}
    \\
    \sigma(y^{\tau'}) & \text{otherwise}
  \end{cases}
  \\
  & {} =  
  \begin{cases}
    \extend(\sigma'[x^{\tau\dagger} : T](x^{\tau\dagger}))([]) & \text{if $y^{\tau'} \equiv x^\tau$}
    \\
    \extend(\sigma'(y^{\tau'\dagger}))([]) & \text{otherwise}
  \end{cases}
  \\
  & {} =
  \extend(\sigma'[x^{\tau\dagger} : T](y^{\tau'\dagger}))([]).
  \end{align*}
\end{proof}

\begin{proposition}
  \label{prop:sem-equiv}
  Let $C$ be a command in the source language,
  and $C'$ be the command in the target language obtained from $C$
  by replacing all variables $x^{\tau}$ with $x^{\tau\dagger}$.
  Then, for all $\sigma_0 \in \State_\mathit{src}$ and $\sigma_0' \in \State_\mathit{tgt}$ 
  such that $\sigma_0 \sim \sigma_0'$,
  we have the following two properties:
  \begin{enumerate}
  \item
    if $C, D, \sigma_0 \Downarrows \sigma_1, r$ for some $\sigma_1 \in \State_\mathit{src}$ and $r \in \R$,
    then there exist $\sigma_1' \in \State_\mathit{tgt}$ and $T \in \Tensor_\R$ such that
    \[
    C', D, \sigma_0', \{[]\} \Downarrowi \sigma_1', T,
    \qquad
    \sigma_1 \sim \sigma_1',
    \qquad
    r \sim T,
    \]
  \item
    if $C', D, \sigma_0', \{[]\} \Downarrowi \sigma_1', T$ for some $\sigma_1' \in \State_\mathit{tgt}$ and $T \in \Tensor_\R$,
    then there exist $\sigma_1 \in \State_\mathit{src}$ and $r \in \R$ such that 
    \[
    C, D, \sigma_0 \Downarrows \sigma_1, r,
    \qquad
    \sigma_1 \sim \sigma_1',
    \qquad
    r \sim T.
    \]
  \end{enumerate}
\end{proposition}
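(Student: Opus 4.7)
The plan is to prove both directions simultaneously by structural induction on $C$, exploiting the fact that the embedded command $C'$ uses none of the new target-language constructs ($\code{extend\_index}$, $\code{lookup\_index}$, $\code{shift}$, $\code{loop\_fixpt\_noacc}$), so its execution under $\Downarrowi$ is essentially deterministic and mirrors $C$'s source-level execution one-for-one. Moreover, by \Cref{lem:updated-indices-cmd}, when $C'$ is run under the singleton antichain $\{[]\}$, the resulting score tensor $T$ has domain $\{[]\}$, so $r \sim T$ reduces to $r = T([])$. Since $\Downarrows$ is deterministic and total, the two properties stated in the proposition can be combined into a single statement: for all equivalent $\sigma_0 \sim \sigma_0'$, there exist $\sigma_1, \sigma_1', r, T$ such that $C,D,\sigma_0 \Downarrows \sigma_1,r$ and $C',D,\sigma_0',\{[]\} \Downarrowi \sigma_1',T$ with $\sigma_1 \sim \sigma_1'$ and $r = T([])$.

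The atomic cases ($\code{skip}$, $\code{score}(E)$, $x^\real := \code{fetch}(I)$, $x^\intt := Z$, $x^\real := E$) all follow straightforwardly from \Cref{prop:exp-cor} (preservation of expression semantics under embedding) and \Cref{lem:upd-cor} (preservation of the $\sim$ relation under variable update with a singleton tensor $[[]\mapsto v]$). For the sequential composition $C_1;C_2$, we apply the IH to $C_1$ to get $\sigma_0',\sigma_1'$ and $T_1$ with $\dom(T_1)=\{[]\}$, then the IH to $C_2$ on $\sigma_1 \sim \sigma_1'$; the summed score $r_1+r_2$ matches $(T_1\oplus T_2)([]) = T_1([])+T_2([])$ because both tensors have domain $\{[]\}$.

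The interesting case is $\code{ifz}\ Z\ C_1\ C_2$. By \Cref{prop:exp-cor}, $\db{Z}(\sigma_0)=\db{Z'}(\sigma_0',[])$, so the target-language split of $\{[]\}$ produces $A_j=\{[]\}$ on the chosen branch and $A_{3-j}=\emptyset$ on the other. On the chosen branch we invoke the IH; on the empty branch we use \Cref{lem:emp-identity} to conclude the state is unchanged and the produced tensor is $T_\emptyset$, which is the neutral element for $\oplus$ on $\{[]\}$. For the $\code{for}$-loop case, we perform an auxiliary induction on $n$: the body $C$'s embedding $C'$ is run $n$ times with the loop index assigned to $k$ via $x^\lintt := [[] \mapsto k]$, and by \Cref{lem:upd-cor} each such update preserves $\sim$, so the IH on $C$ applies at each iteration; the accumulated scores add via $\bigoplus$ on singleton-domain tensors, matching $\sum_k r_k$.

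The main obstacle I anticipate is bookkeeping in the $\code{ifz}$ and $\code{for}$ cases, where the target-language rules introduce auxiliary states and tensors (the false-branch state after an empty-antichain run, the intermediate loop states with the index variable overwritten) that must each be shown to remain in the $\sim$ relation and to have singleton score-domain. \Cref{lem:emp-identity,lem:updated-indices-cmd,lem:upd-cor} handle all of these invariants, so the proof reduces to mechanical verification once the induction hypothesis is formulated correctly as the combined existence statement above.
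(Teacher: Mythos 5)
Your proposal is correct and follows essentially the same route as the paper's proof: structural induction on $C$ using \Cref{prop:exp-cor} for expressions, \Cref{lem:upd-cor} for singleton updates, \Cref{lem:emp-identity} for the empty-antichain branch of $\code{ifz}$, and loop unrolling for $\code{for}$. The only difference is presentational — you merge the two directions into one existence statement via determinism and totality of $\Downarrows$ and of $\Downarrowi$ on embedded commands, whereas the paper proves the two implications separately case by case; this is a sound streamlining given that the embedded $C'$ uses none of the partial target-language constructs.
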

\begin{proof}
  Let $A = \{[]\} \in \FAntiChain$.
  We prove the proposition by structural induction on $C$.
  \begin{itemize}[itemsep=5pt]
  \item {\bf Case of $C \equiv (x^\real := \code{fetch}(I))$
  and $C' \equiv (x^\lreal := \code{fetch}(I'))$}:
    \begin{enumerate}
    \item Assume that $C, D, \sigma_0 \Downarrows \sigma_1, r$ for some $\sigma_1$ and $r$.
    By the semantics of the source language,
    \begin{align*}
    \sigma_1 = \sigma[x^\real \mapsto D(\db{I}(\sigma_0))],
    &&
    r = 0.0.
    \end{align*}
    By \Cref{prop:exp-cor}, the assumption $\sigma_0 \sim \sigma_0'$ entails that
    $\db{I'}(\sigma_0',[]) = \db{I}(\sigma_0)$.
    By the semantics of the target language,
    \begin{align*}
    C', D, \sigma_0', A \Downarrowi \sigma_1', T, &&
    \sigma_1' = \sigma_0'[x^\lreal : [[] \mapsto D(\db{I}(\sigma_0))]], &&
    T = T^z_A.
    \end{align*}
    \Cref{lem:upd-cor} entails that $\sigma_1 \sim \sigma_1'$.
    Last, $r \sim T$ clearly holds.

    \item Assume that $C', D, \sigma_0', A \Downarrowi \sigma_1', T$ for some $\sigma_1'$ and $T$.
    By the semantics of the target language,
    \begin{align*}
    \sigma_1' = \sigma[x^\lreal : [[] \mapsto D(\db{I'}(\sigma',[]))]], &&
    T = T^z_A.
    \end{align*}
    By \Cref{prop:exp-cor}, $\db{I'}(\sigma',[]) = \db{I}(\sigma)$.
    By the semantics of the source language,
    \begin{align*}
    C, D, \sigma \Downarrows \sigma_0', r, &&
    \sigma_0' = \sigma[x^\real \mapsto D(\db{I}(\sigma))], &&
    r = 0.0
    \end{align*}
    \Cref{lem:upd-cor} entails that $\sigma_1 \sim \sigma_1'$.
    Last, $r \sim T$ clearly holds.
    \end{enumerate}

  \item {\bf Case of $C \equiv (\code{score}(E))$ and $C' \equiv (\code{score}(E'))$}:
    \begin{enumerate}
    \item Assume that $C, D, \sigma_0 \Downarrows \sigma_1, r$ for some $\sigma_1$ and $r$. 
    By the semantics of the source language,
    \begin{align*}
    \sigma_0 = \sigma_1 &&
    r = \db{E}(\sigma_0).
    \end{align*}
    By \Cref{prop:exp-cor}, $\db{E'}(\sigma_0',[]) = \db{E}(\sigma_0) = r$.
    Let $\sigma_1' = \sigma_0'$ and $T = [ [] \mapsto r ]$.
    Then, we have
    \begin{align*}\
    C', D, \sigma_0', A \Downarrowi \sigma_1', T, &&
    \sigma_1 \sim \sigma_1', &&
    r \sim T.
    \end{align*}

    \item Assume that $C', D, \sigma_0', A \Downarrowi \sigma_1', T$ for some $\sigma_1'$ and $T$.
    By the semantics of the target language
    \begin{align*}
    \sigma_0' = \sigma_1', &&
    T = [ [] \mapsto \db{E'}(\sigma_0',[])].
    \end{align*}
    By \Cref{prop:exp-cor}, $\db{E}(\sigma_0) = \db{E'}(\sigma_0',[])$.
    Define $\sigma_1 = \sigma_0$ and $r = \db{E}(\sigma_0)$.
    Then, we have
    \begin{align*}
    C, D, \sigma_0 \Downarrows \sigma_1, r, &&
    \sigma_1 \sim \sigma_1', &&
    r \sim T.
    \end{align*}
    \end{enumerate}

  \item {\bf Case of $C \equiv (x^\intt := Z)$ and $C' \equiv (x^\lintt := Z')$}:
    \begin{enumerate}
    \item Assume that 
    $C, D, \sigma_0 \Downarrows \sigma_1, r$ for some $\sigma_1$ and $r$.
    By the semantics of the source language,
    \begin{align*}
    \sigma_1 = \sigma_0[x^\intt \mapsto \db{Z}(\sigma_0)], &&
    r = 0.0.
    \end{align*}
    By \Cref{prop:exp-cor}, $\db{Z'}(\sigma_0',[]) = \db{Z}(\sigma_0)$.
    By the semantics of the target language,
    \begin{align*}
    C', D, \sigma_0', A \Downarrowi \sigma_1', T, &&
    \sigma_1' = \sigma_0'[x^\lintt : [[] \mapsto \db{Z'}(\sigma_0',[])]], &&
    T = T^z_A.
    \end{align*}
    \Cref{lem:upd-cor} entails $\sigma_1 \sim \sigma_1'$,
    Last, $r \sim T$ is obvious.

    \item Assume that 
    $C', D, \sigma_0', A \Downarrowi \sigma_1', T$ for some $\sigma_1'$ and $T$.
    By the semantics of the target language,
    \begin{align*}
    \sigma_1' = \sigma_0'[ x^\lintt : [[]\mapsto \db{Z'}(\sigma_0',[])]], &&
    T = T^z_A.
    \end{align*}
    By \Cref{prop:exp-cor}, $\db{Z}(\sigma_0) = \db{Z'}(\sigma_0',[])$.
    By the semantics of the source language,
    \begin{align*}
    C, D, \sigma_0 \Downarrows \sigma_1, r, &&
    \sigma_1 = \sigma[ x^\intt \mapsto \db{Z}(\sigma_0) ], &&
    r = 0.0.
    \end{align*}
    \Cref{lem:upd-cor} entails $\sigma_1 \sim \sigma_1'$.
    Last, $r \sim T$ is obvious.
    \end{enumerate}

  \item {\bf Case of $C \equiv (x^\real := E)$ and $C' \equiv (x^\lreal := E')$}: \\
    This case is similar to the previous one.
    
  \item {\bf Case of $C \equiv C' \equiv \code{skip}$}: \\
    In this case, by the semantics of the source language and that of the target language, we have
    \begin{align*}
    \code{skip}, D, \sigma_0 \Downarrows \sigma_0, 0.0, &&
    \code{skip}, D, \sigma_0', A \Downarrowi \sigma_0', T^z_A.
    \end{align*}
    Thus, the claimed properties of the proposition hold. 

  \item {\bf Case of $C \equiv (C_1;C_2)$ and $C' \equiv (C_1';C_2')$}:
    \begin{enumerate}
    \item Assume that $C_1; C_2, D, \sigma_0 \Downarrows \sigma_1, r$ for some $\sigma_1$ and $r$.
    By the semantics of the source language, there exist $r_1, r_2 \in \R$ and a state $\sigma_2$ such that
    \begin{align*}
    C_1, D, \sigma_0 \Downarrows \sigma_2, r_2, &&
    C_2, D, \sigma_2 \Downarrows \sigma_1, r_1, &&
    r = r_2 + r_1.
    \end{align*}
    By the induction hypothesis on $C_1$ and $C_1'$,
    there exist $\sigma_2'$ and $T_2$ such that
    \begin{align*}
    C_1', D, \sigma_0', A \Downarrowi \sigma_2', T_2, &&
    \sigma_2 \sim \sigma_2', &&
    r_2 \sim T_2.
    \end{align*}
    Similarly, by the induction hypothesis on $C_2$ and $C_2'$,
    there exist $\sigma_1'$ and $T_1$ such that 
    \begin{align*}
    C_2', D, \sigma_2', A \Downarrowi \sigma_1', T_1, &&
    \sigma_1 \sim \sigma_1', &&
    r_1 \sim T_1.
    \end{align*}
    Letting $T = T_2 \oplus T_1$ gives us
    \begin{align*}
    C_1'; C_2', D, \sigma_0', A \Downarrowi \sigma_1', T, &&
    \sigma_0 \sim \sigma_1', &&
    r = r_2 + r_1.
    \end{align*}
  
    \item Assume that $C_1'; C_2', D, \sigma_0', A \Downarrowi \sigma_1', T$ for some $\sigma_1'$ and $T$.
    By the semantics of the target language, there exist $T_1, T_2$ and a state $\sigma_2'$ such that
    \begin{align*}
    C_1', D, \sigma_0', A \Downarrowi \sigma_2', T_2, &&
    C_2', D, \sigma_2', A \Downarrowi \sigma_1', T_1, &&
    \sigma_0' \sim \sigma_2', &&
    T = T_2 \oplus T_1.
    \end{align*}
    By the induction hypothesis on $C_1$ and $C_1'$,
    there exist $\sigma_2$ and $r_2$ such that
    \begin{align*}
    C_1, D, \sigma_0 \Downarrows \sigma_2, r_2, &&
    \sigma_2 \sim \sigma_2', &&
    r_2 \sim T_2.
    \end{align*}
    Similarly, by the induction hypothesis on $C_2$ and $C_2'$,
    there exist $\sigma_1$ and $r_1$ such that
    \begin{align*}
    C_2, D, \sigma_2 \Downarrows \sigma_1, r_1, &&
    \sigma_1 \sim \sigma_1', &&
    r_1 \sim T_1.
    \end{align*}
    Letting $r = r_2 + r_1$ gives us
    \begin{align*}
    C_1; C_2, D, \sigma_0 \Downarrows \sigma_1, r, &&
    \sigma_1 \sim \sigma_1', &&
    r \sim T.
    \end{align*}
    \end{enumerate}
    
  \item {\bf Case of $C \equiv (\code{ifz}\ Z\ C_1\ C_2)$ and $C' \equiv (\code{ifz}\ Z'\ C_1'\ C_2')$}:
    \begin{enumerate}
    \item Assume that $(\code{ifz}\ Z\ C_1\ C_2), D, \sigma_0 \Downarrows \sigma_1, r$ for some $\sigma_1$ and $r$.
    There are two subcases depending on the value of $\db{Z}(\sigma_0)$:
    \begin{itemize}
    \item {\bf Subcase of $\db{Z}(\sigma_0) = 0$}: \\
    By the semantics of the source language, we have $C_1, D, \sigma_0 \Downarrows \sigma_1, r$.
    Since $\sigma_0 \sim \sigma_0'$, we have 
    $\db{Z'}(\sigma_0', []) = \db{Z}(\sigma_0) = 0$ by \Cref{prop:exp-cor}.
    Therefore, with the notations of the rule for $\code{ifz}$ in the target language semantics, 
    $A_1 = \{ [] \}$ and $A_2 = \emptyset$.
    By applying the induction hypothesis to $C_1$ and $C_1'$,
    we derive the existence of $\sigma_2'$ and $T$ such that
    \begin{align*}
    C_1', D, \sigma_0', A_1 \Downarrowi \sigma_2', T, &&
    \sigma_1 \sim \sigma_2', &&
    r \sim T,
    \end{align*}
    Thus, by applying \Cref{lem:emp-identity} to $C_2'$, there exists $\sigma_1'$ such that
    \begin{align*}
    C_2', D, \sigma_2', A_2 \Downarrowi \sigma_1', T_\emptyset, &&
    \sigma_2' =_{\Index} \sigma_1'.
    \end{align*}
    Therefore, we conclude $(\code{ifz}\ Z'\ C'_1\ C'_2), D, \sigma_0', A \Downarrowi \sigma_1', T$
    with $\sigma_1 \sim \sigma_1'$ and $r \sim T$.

    \item {\bf Subcase of $\db{Z}(\sigma_0) \neq 0$}: \\
    By the semantics of the source language, we have $C_2, D, \sigma_0 \Downarrows \sigma_1, r$.
    Since $\sigma_0 \sim \sigma_0'$, we have 
    $\db{Z'}(\sigma_0', []) = \db{Z}(\sigma_0) \neq 0$ by \Cref{prop:exp-cor}.
    Therefore, with the notations of the rule for $\code{ifz}$ in the
    semantics of the target language, $A_1 = \emptyset$ and $A_2 = \{ [] \}$.
    By applying \Cref{lem:emp-identity} to $C_1'$, there exists $\sigma_2'$ such that
    \begin{align*}
    C_1', D, \sigma_0', A_1 \Downarrowi \sigma_2', T_\emptyset, &&
    \sigma_0' =_{\Index} \sigma_2'.
    \end{align*}
    This implies $\sigma_0 \sim \sigma_2'$, so we can apply the induction hypothesis to $C_2$ and $C_2'$
    to derive the existence of $\sigma_1'$ and $T$ such that
    \begin{align*}
    C'_2, D, \sigma_2', A_1 \Downarrowi \sigma_1', T, &&
    \sigma_1 \sim \sigma_1', &&
    r \sim T.
    \end{align*}
    Therefore, we conclude $(\code{ifz}\ Z'\ C'_1\ C'_2), D, \sigma_0', A \Downarrowi \sigma_1', T$
    with $\sigma_1 \sim \sigma_1'$ and $r \sim T$.
    \end{itemize}

    \item Now assume that 
    $(\code{ifz}\ Z'\ C'_1\ C'_2), D, \sigma_0', A \Downarrowi \sigma_1', T$ for some $\sigma_1'$ and $T$.
    As before, we distinguish two subcases, depending on the value of $\db{Z'}(\sigma_0', [])$:
    \begin{itemize}
    \item {\bf Subcase of $\db{Z'}(\sigma_0', []) = 0$}: \\
    By the semantics of the target language, there exist $\sigma_2'$ and $T_2, T_1$ such that 
    \begin{align*}
    C_1', D, \sigma_0', \{ [] \} \Downarrowi \sigma_2', T_2, &&
    C_2', D, \sigma_2', \emptyset \Downarrowi \sigma_1', T_1. &&
    T = T_2 \oplus T_1,
    \end{align*}
    Since $\sigma_0 \sim \sigma_0'$, we have $\db{Z}(\sigma_0', []) = \db{Z'}(\sigma_0) = 0$ by \Cref{prop:exp-cor}. 
    By the induction hypothesis on $C_1$ and $C_1'$, there exist $\sigma_1$ and $r$ such that
    \begin{align*}
    C_1, D, \sigma_0 \Downarrows \sigma_1, r, &&
    \sigma_1 \sim \sigma_2', &&
    r \sim T_2.
    \end{align*}
    By the semantics of the source language, this in turn implies that 
    $(\code{ifz}\ Z\ C_1\ C_2), D, \sigma_0 \Downarrows \sigma_1, r$.
    \Cref{lem:emp-identity} can be applied to $C_2'$, and doing so gives the following facts:
    \begin{align*}
    \sigma_2' =_{\Index} \sigma_1', &&
    T_1 = T_\emptyset.
    \end{align*}
    So, $T_2 = T$ and $r \sim T$. Also, $\sigma_1 \sim \sigma_1'$.

    \item {\bf Subcase of $\db{Z'}(\sigma_0', []) \neq 0$}: \\
    By the semantics of the target language, there exist $\sigma_2'$ and $T_2, T_1$ such that
    \begin{align*}
    C_1', D, \sigma_0', \emptyset \Downarrowi \sigma_2', T_2, &&
    C_2', D, \sigma_2', \{ [] \} \Downarrowi \sigma_1', T_1, &&
    T = T_2 \oplus T_1.
    \end{align*}
    Since $\sigma_0 \sim \sigma_0'$, we have $\db{Z}(\sigma_0) = \db{Z'}(\sigma_0', []) \not= 0$ by \Cref{prop:exp-cor}. 
    Thus, we can apply \Cref{lem:emp-identity} to $C_1'$, and derive that
    \begin{align*}
    \sigma_0' =_{\Index} \sigma_2', &&
    \sigma_0 \sim \sigma_2', &&
    T_2 = T_\emptyset, &&
    T = T_1.
    \end{align*}
    By the induction hypothesis applied to $C_2$ and $C_2'$, there exist $\sigma_0'$ and $r$ such that 
    \begin{align*}
    C_2, D, \sigma_0 \Downarrows \sigma_1, r, &&
    \sigma_1 \sim \sigma_1', &&
    r \sim T.
    \end{align*}
    By the semantics of the source language,
    $(\code{ifz}\ Z\ C_1\ C_2), D, \sigma_0 \Downarrows \sigma_1, r$.
    \end{itemize}
    \end{enumerate}
  
  \item {\bf Case of
    $C \equiv (\code{for}\ x^\intt \ \code{in} \ \code{range}(n)\ \code{do}\ C_1)$
    and 
    $C' \equiv (\code{for}\ x^\lintt \ \code{in} \ \code{range}(n)\ \code{do}\ C_1')$}: \\
    We observe that $C$ is semantically equivalent to \( x^\intt := 0; C_1; x^\intt := 1; C_1;
    \ldots x^\intt := n-1; C_1 \) and that $C'$ is semantically equivalent to \( x^\lintt := 0;
    C'_1; x^\lintt := 1; C'_1; \ldots x^\lintt := n-1; C'_1 \).
    Therefore the proof follows from the induction hypothesis on $C_1$ and $C'_1$, and on the
    above arguments for the integer-assignment and sequential-composition cases, it follows
    that these loop cases satisfy the claimed properties of the proposition.
  \end{itemize}
\end{proof}

\subsection{Soundness of Transformation of For-loops in the Target Language}
\label{subsec:parallelising-for-loops}
In this subsection, we prove the soundness of the optimisation of for-loops within the target language,
but with the evaluation rules defined by $\Downarrowi$.
Our optimisation replaces each for-loop by a $\code{loop\_fixpt\_noacc}$ loop 
such that the body of the $\code{loop\_fixpt\_noacc}$ loop works on a larger set of indices than
the original loop and also runs without accumulating the score maps for all but the last iteration.
We will prove that this optimisation preserves the original loop's semantics.

Before delving into the soundness theorem of our optimisation, we first introduce the concept of $L$-states.
Let $L$ be a set of indices.
A state $\sigma$ is called {\bf $L$-state} if for every variable $x^{\tau\dagger}$,
\[
\dom(\sigma(x^{\tau\dagger})) \subseteq L{\downarrow}.
\]
For all $L$-states $\sigma$, variables $x^{\tau\dagger}$, and indices i, we have
\[
\extend(\sigma(x^{\tau\dagger}))(i) = 
\extend(\sigma(x^{\tau\dagger}))(\max(L{\downarrow} \cap i{\downarrow})).
\]
That is, the values of $\extend(\sigma(x^{\tau\dagger}))$ at indices 
in $L{\downarrow}$ fully determines the entire function $\extend(\sigma(x^{\tau\dagger}))$.
The next lemma gives that the update of a variable preserves the property of being an $L$-state.
\begin{lemma}
  \label{lem:update-L-state}
  Let $L$ be a set of indices. Then,
  for all states $\sigma \in \State_\mathit{tgt}$, variables $x^{\tau\dagger}$,
  and tensors $T \in \Tensor_{\db{\tau}}$, if $\sigma$ is an $L$-state and $\dom(T) \subseteq L{\downarrow}$, then
  $\sigma[x^{\tau\dagger} : T]$ is also an $L$-state.
\end{lemma}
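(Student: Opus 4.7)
The plan is to unfold the definition of being an $L$-state and reduce everything to a direct computation of the domain of the updated state. Concretely, I want to show that for every variable $y^{\tau'\dagger}$, the set $\dom(\sigma[x^{\tau\dagger} : T](y^{\tau'\dagger}))$ is contained in $L{\downarrow}$. The natural approach is a case analysis on whether $y^{\tau'\dagger} \equiv x^{\tau\dagger}$ or not, because the definition of the update operator $\sigma[x^{\tau\dagger}:T]$ given in Section~\ref{ssec:semantics-target-language} is itself written by a case split on that condition.

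In the case $y^{\tau'\dagger} \not\equiv x^{\tau\dagger}$, the update has no effect on that variable, so $\dom(\sigma[x^{\tau\dagger}:T](y^{\tau'\dagger})) = \dom(\sigma(y^{\tau'\dagger}))$, which lies in $L{\downarrow}$ directly from the hypothesis that $\sigma$ is an $L$-state. In the case $y^{\tau'\dagger} \equiv x^{\tau\dagger}$, I would read off from the three clauses in the definition that the new domain is exactly $\dom(T) \cup (\dom(\sigma(x^{\tau\dagger})) \setminus \dom(T){\uparrow})$: the first clause contributes $\dom(T)$, the second makes the map explicitly undefined on $\dom(T){\uparrow}\setminus\dom(T)$, and the third keeps the old entries that are not in $\dom(T){\uparrow}$. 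Then $\dom(T) \subseteq L{\downarrow}$ by assumption and $\dom(\sigma(x^{\tau\dagger})) \setminus \dom(T){\uparrow} \subseteq \dom(\sigma(x^{\tau\dagger})) \subseteq L{\downarrow}$ because $\sigma$ is an $L$-state, so their union also lies in $L{\downarrow}$.

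There is no real obstacle here; the lemma is essentially definitional once the domain of the updated variable is computed correctly. The one subtlety worth stating carefully is that the second clause of the update operator removes the entries in $\dom(T){\uparrow} \setminus \dom(T)$, so even though this set could in principle stick out of $L{\downarrow}$ (since taking upward closure does not preserve $\subseteq L{\downarrow}$), it is excluded from the new domain rather than added to it. Highlighting this point in the written proof seems to be the main thing to get right; everything else is set-theoretic bookkeeping.
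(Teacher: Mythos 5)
Your proof is correct and takes essentially the same route as the paper's: both arguments reduce to showing $\dom(\sigma[x^{\tau\dagger}:T](y^{\tau'\dagger})) \subseteq \dom(\sigma(y^{\tau'\dagger})) \cup \dom(T) \subseteq L{\downarrow}$, the paper stating the coarse inclusion directly while you compute the exact new domain $\dom(T) \cup (\dom(\sigma(x^{\tau\dagger})) \setminus \dom(T){\uparrow})$ via a case split on the variable. Your remark that the second clause of the update only \emph{removes} entries in $\dom(T){\uparrow}\setminus\dom(T)$ is accurate and consistent with the domain computation the paper itself uses elsewhere (e.g., in the proof of \Cref{lem:preservation-L-states}).
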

\begin{proof}
  For every variable $y^{\tau'\dagger}$, we have
  \[
  \dom(\sigma[x^{\tau\dagger} : T](y^{\tau'\dagger}))
  \subseteq 
  \dom(\sigma(y^{\tau'\dagger})) \cup \dom(T)
  \subseteq 
  L{\downarrow}.
  \]
  The second inclusion holds since $\sigma$ is an $L$-state
  and $\dom(T) \subseteq L{\downarrow}$. Thus, $\sigma[x^{\tau\dagger} : T]$ is an $L$-state.
\end{proof}

The following lemma says that the execution under a finite antichain $A \in \FAntiChain$ preserves
the state being an $L$-state for every $L$ with $A \subseteq L{\downarrow}$.
\begin{lemma}[$L$-state Preservation]
  \label{lem:preservation-L-states}
  Let $C$ be a command, $D$ be an RDB, $A$ be a finite antichain of indices, and
  $L$ be a set of indices. For all states $\sigma,\sigma' \in \State_\mathit{tgt}$ and tensors $T \in \Tensor_\R$,
  if 
  \[ 
  A \subseteq L{\downarrow},
  \quad 
  \text{$\sigma$ is an $L$-state},
  \quad\text{and}\quad 
  C, D, \sigma, A \Downarrowi \sigma', T,
  \]  
  then $\sigma'$ is also an $L$-state. 
\end{lemma}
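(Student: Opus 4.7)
The natural approach is structural induction on $C$, with the induction hypothesis asserting the lemma for all choices of $L$, $A$, $\sigma$, $\sigma'$, $T$ satisfying the premises. For each case, I will use \Cref{lem:updated-indices-cmd} to relate the domain of any newly introduced tensor to the input antichain $A$, and \Cref{lem:update-L-state} to close arguments about variable writes.

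The atomic command cases are immediate: for the assignment-like commands $x^\lreal := \code{fetch}(I)$, $x^\lintt := Z$, $x^\lreal := E$, and $x^\lintt := \code{lookup\_index}(\alpha)$, the final state is $\sigma[x : T]$ with $\dom(T) \subseteq A \subseteq L{\downarrow}$, so \Cref{lem:update-L-state} applies; for $\code{score}(E)$ and $\code{skip}$ the state is unchanged. The composite cases $C_1;C_2$, $\code{ifz}\ Z\ C_1\ C_2$, $\code{for}\ x^\lintt\ \code{in}\ \code{range}(n_+)\ \code{do}\ C$, and $\code{loop\_fixpt\_noacc}(n_+)\ C$ follow by repeated use of the induction hypothesis; for $\code{ifz}$, note that both branches execute under $A_1, A_2 \subseteq A \subseteq L{\downarrow}$, and for the for-loop the intermediate updates $x^\lintt : [i \mapsto k : i \in A]$ have domain $A \subseteq L{\downarrow}$.

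For $\code{shift}(\alpha)$, the semantics gives $\sigma' = \sigma\langle\rho\rangle$ with $\image(\rho) \subseteq A$. Spelling out $\dom(\sigma'(y^{\tau'\dagger}))$, it is contained in $\image(\rho) \cup (\dom(\sigma(y^{\tau'\dagger})) \setminus \image(\rho){\uparrow}) \subseteq A \cup L{\downarrow} \subseteq L{\downarrow}$, so $\sigma'$ is an $L$-state.

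The main obstacle is $\code{extend\_index}(\alpha,n_+)\ C_1$, since the body executes under the enlarged antichain $A' = \{i \concat [(\alpha,k)] : i \in A,\ k \in \{0,\ldots,n_+{-}1\}\}$, and generally $A' \not\subseteq L{\downarrow}$. My plan is to apply the induction hypothesis to $C_1$ with the enlarged witness $L' = L \cup A'$: then $L'{\downarrow} = L{\downarrow} \cup A'{\downarrow}$, so $A' \subseteq L'{\downarrow}$ and $\sigma$ remains an $L'$-state; the IH therefore yields that the intermediate state $\sigma_0$ is an $L'$-state. It remains to show that $\sigma' = \sigma_0\langle\rho_0\rangle$, with $\rho_0 = [i \concat [(\alpha,n_+{-}1)] \mapsto i : i \in A]$ and $\image(\rho_0) = A$, is back to being an $L$-state. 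Writing $\dom(\sigma'(y^{\tau'\dagger})) \subseteq A \cup (\dom(\sigma_0(y^{\tau'\dagger})) \setminus A{\uparrow})$, the first part is in $L{\downarrow}$; for the second part, I use two easily checked facts: $A{\downarrow} \subseteq L{\downarrow}$ (from $A \subseteq L{\downarrow}$), and every element of $A'$ lies in $A{\uparrow}$ (by construction). These give $L'{\downarrow} \setminus A{\uparrow} = (L{\downarrow} \cup A{\downarrow} \cup A') \setminus A{\uparrow} \subseteq L{\downarrow}$, completing the case and the induction.
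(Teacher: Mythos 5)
Your proof is correct and follows essentially the same route as the paper's: structural induction with the lemma quantified over all $L$ and $A$, the atomic and composite cases closed via \Cref{lem:updated-indices-cmd} and \Cref{lem:update-L-state}, and the $\code{extend\_index}$ case handled by enlarging the witness to $L \cup A'$ before applying the induction hypothesis and then using $A'{\downarrow} = A{\downarrow} \cup A'$ together with $A' \subseteq A{\uparrow}$ to return to an $L$-state after the copy $\langle\rho_0\rangle$. The paper's proof of this case is the same calculation written with $A_0$ in place of $A'$, so no gaps remain.
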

\begin{proof}
  The proof is by induction on the structure of $C$.
  \begin{itemize}[itemsep=5pt]
  \item {\bf Cases of
    $C \equiv (x^\lreal := \code{fetch}(I))$,
    $C \equiv (\code{score}(E))$,
    $C \equiv (x^\lintt := Z)$,
    $C \equiv (x^\lreal := E)$,
    $C \equiv (\code{skip})$, and
    $C \equiv (x^{\lintt} := \code{lookup\_index}(\alpha))$:} \\
    In these cases, $\sigma'$ is $\sigma$ or $\sigma[x^{\tau\dagger} : T]$ 
    for some $T$ with $\dom(T) = A \subseteq L{\downarrow}$. Thus, $\sigma'$ is also an $L$-state
    by \Cref{lem:update-L-state}.

  \item {\bf Cases of $C \equiv (C_1;C_2)$:} \\
    The desired conclusion follows from the induction hypothesis on $C_1$ and $C_2$.

  \item {\bf Case of $C \equiv (\code{ifz}\ Z\ C_1\ C_2$):} \\
    Let $A_1 = \{i \in A : \db{Z}(\sigma,i) = 0\}$ and $A_2 = \{i \in A : \db{Z}(\sigma,i) \neq 0\}$.
    Since $A_1 \subseteq L{\downarrow}$
    and $A_2 \subseteq L{\downarrow}$, the desired conclusion follows from the induction hypothesis
    on $C_1$ and $C_2$.

  \item {\bf Cases of $C \equiv (\code{for}\ x^\lintt \ \code{in} \ \code{range}(n)\ \code{do}\ C')$:} \\
    We remark that $C$ is semantically equivalent to \( x^\lintt := 0; C'; x^\lintt := 1; C';
    \ldots x^\lintt := n-1; C' \).
    Therefore, the result follows from the induction hypothesis for $C'$ and the above
    arguments for integer-assignments and sequential-composition.

  \item {\bf Case of $C \equiv (\code{extend\_index}(\alpha,n)\ C')$:} \\
    By the rule of the $\code{extend\_index}$ command, we have
    \begin{align*}
    C', D, \sigma, A_0 \Downarrowi \sigma_0, T_0, 
    &\qquad
    A_0 = \{i\concat [(\alpha,k)] : i \in A, k \in \{0,\ldots,n-1\}\}, \\
    \sigma' = \sigma_0\langle\rho_0\rangle,
    &\qquad 
    \rho_0 = [i\concat [(\alpha,n-1)] \mapsto i : i \in A].
    \end{align*}
    Then, $\sigma$ is an $(L\cup A_0)$-state, and by the induction hypothesis on $C'$,
    $\sigma_0$ is also an $(L\cup A_0)$-state. 
    For an arbitrary variable $x^{\tau\dagger}$,
    define a tensor $T_{x, \rho_0} \in \Tensor_{\db{\tau}}$ by
    \[
      T_{x, \rho_0}(i) = 
      \begin{cases}
        \sigma_0(x^{\tau\dagger})(\rho_0^{-1}(i)), & \text{if $i \in \image(\rho_0) \cap \rho_0(\dom(\sigma_0(x^{\tau\dagger})))$},
        \\
        \text{undefined}, & \text{otherwise}.
      \end{cases}
    \]
    Then, 
    \begin{align*}
      \dom(\sigma'(x^{\tau\dagger}))
      & {} = \dom(\sigma_0\langle\rho_0\rangle(x^{\tau\dagger}))
      \\
      & {} = \dom(\sigma_0[x^{\tau\dagger}: T_{x,\rho_0}](x^{\tau\dagger}))
      \\
      & {} = (\dom(\sigma_0(x^{\tau\dagger})) \setminus \dom(T_{x,\rho_0}){\uparrow}) \cup \dom(T_{x,\rho_0})
      \\
      & {} \subseteq ((L \cup A_0){\downarrow} \setminus \dom(T_{x,\rho_0}){\uparrow}) \cup \dom(T_{x,\rho_0})
      \\
      & {} = ((L \cup A_0){\downarrow} \setminus A{\uparrow}) \cup A
      \\
      & {} \subseteq (L{\downarrow} \cup (A_0{\downarrow} \setminus A{\uparrow})) \cup A
      \\
      & {} \subseteq (L{\downarrow} \cup A{\downarrow}) \cup A = L{\downarrow}.
    \end{align*}
    The first subset relationship holds because $\sigma_0$ is an $(L\cup A_0)$-state,
    the fifth equality is true since $\dom(T_{x^{\tau\dagger}}) = \image(\rho_0) = A$,
    and the third subset relationship holds because
    $A_0{\downarrow} = A{\downarrow} \cup A_0$ and $A_0 \subseteq A{\uparrow}$. 

  \item {\bf Case of $C \equiv \code{shift}(\alpha)$:} \\
    In this case, $\sigma' = \sigma\langle\rho\rangle$ for some
    finite partial function $\rho$ such that $\image(\rho) \subseteq A$. 
    For an arbitrary variable $x^{\tau\dagger}$, define a tensor 
    $T_{x,\rho} \in \Tensor_{\db{\tau}}$ by
    \[
    T_{x,\rho}(i) = 
    \begin{cases}
      \sigma(x^{\tau\dagger})(\rho^{-1}(i)), & \text{if $i \in \image(\rho) \cap \rho(\dom(\sigma(x^{\tau\dagger})))$},
      \\
      \text{undefined}, & \text{otherwise}.
    \end{cases}
    \]
    Then,
    \begin{align*}
      \dom(\sigma'(x^{\tau\dagger}))
      & {} = \dom(\sigma\langle\rho\rangle(x^{\tau\dagger}))
      \\
      & {} = \dom(\sigma[x^{\tau\dagger}: T_{x,\rho}](x^{\tau\dagger}))
      \\
      & {} = (\dom(\sigma(x^{\tau\dagger})) \setminus \dom(T_{x,\rho}){\uparrow}) \cup \dom(T_{x,\rho})
      \\
      & {} \subseteq L{\downarrow} \cup \image(\rho)
      \\
      & {} \subseteq L{\downarrow} \cup A = L{\downarrow}.
    \end{align*}
    We have just shown that $\sigma'$ is an $L$-state, as required.

  \item {\bf Case of $C \equiv (\code{loop\_fixpt\_noacc}(n)\ C')$:} \\
    By the semantics of the $\code{loop\_fixpt\_noacc}$ command,
    there exist $\sigma_k \in \State_\mathit{tgt}$ and $T_k \in \Tensor_\R$ for $k = 0,\ldots,n-1$ such that
    \begin{align*}
    \sigma_0 = \sigma, &&
    \sigma_n = \sigma', &&
    T_n = T, &&
    C', D, \sigma_k, A \Downarrowi \sigma_{k+1}, T_{k+1}
    \quad\text{for $k = 0,\ldots,n-1$}.
    \end{align*}
    By the induction hypothesis on $C'$, $\sigma_{k+1}$ is an $L$-state for all $k = 0,\ldots,n-1$,
    and thus $\sigma'$ is also an $L$-state.
  \end{itemize}
\end{proof}

The following proposition states the preservation of lifted integer, real, and index expressions
under a coupling $R$ and the induced relations.
\begin{proposition}
  \label{prop:expression-R-preservation}
  Let $R$ be a coupling, and $Z$, $E$, $I$ be lifted integer, real, and index expressions 
  in the target language. Then, for all states $\sigma_0,\sigma_1 \in \State_\mathit{tgt}$ 
  and indices $i_0,i_1 \in \Index$, if 
  \[
  \sigma_0 \, [\State_\mathit{tgt}(R)] \, \sigma_1
  \quad\text{and}\quad
  i_0 \, [R] \, i_1,
  \]
  then 
  \begin{align*}
  \db{Z}(\sigma_0,i_0) & = \db{Z}(\sigma_1,i_1),
  & 
  \db{E}(\sigma_0,i_0) & = \db{E}(\sigma_1,i_1),
  &
  \db{I}(\sigma_0,i_0) & = \db{I}(\sigma_1,i_1).
  \end{align*}
\end{proposition}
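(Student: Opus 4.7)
The plan is to prove the three equalities simultaneously by straightforward structural induction on $Z$, $E$, and $I$, exploiting the definition of $\State_\mathit{tgt}(R)$ together with the invariant that for every state $\sigma \in \State_\mathit{tgt}$ and every variable $x^{\tau\dagger}$, the empty index $[]$ lies in $\dom(\sigma(x^{\tau\dagger}))$. This invariant is crucial: it guarantees that $\extend(\sigma(x^{\tau\dagger}))$ is defined everywhere on $\Index$ (since $[] \in \dom(\sigma(x^{\tau\dagger})) \cap i{\downarrow}$ for every $i$), and so condition (i) in the definition of $\State_\mathit{tgt}(R)$ is always satisfied for variable reads, and condition (ii) immediately yields $\extend(\sigma_0(x^{\tau\dagger}))(i_0) = \extend(\sigma_1(x^{\tau\dagger}))(i_1)$ whenever $i_0\,[R]\,i_1$.

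The base cases $Z \equiv n$ and $E \equiv r$ are immediate from the semantics in \Cref{fig:semantics-expressions-target-language}, since constants evaluate to themselves regardless of $(\sigma, i)$. For the variable cases $Z \equiv x^{\lintt}$ and $E \equiv x^{\lreal}$, the semantics gives $\db{x^{\tau\dagger}}(\sigma_j,i_j) = \extend(\sigma_j(x^{\tau\dagger}))(i_j)$, and the invariant together with the hypotheses $\sigma_0\,[\State_\mathit{tgt}(R)]\,\sigma_1$ and $i_0\,[R]\,i_1$ delivers the equality directly.

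For the inductive cases $Z \equiv \op_i(Z_1, \ldots, Z_k, E_1, \ldots, E_m)$ and the analogous $E \equiv \op_r(\cdots)$, I would apply the induction hypothesis to each subexpression $Z_j$ and $E_j$ to conclude that their evaluations at $(\sigma_0, i_0)$ and $(\sigma_1, i_1)$ agree, and then observe that applying the fixed atomic interpretations $\db{\op_i}_a$ or $\db{\op_r}_a$ to equal tuples of arguments produces equal results. The index expression case $I \equiv [(\alpha_1, Z_1); \ldots; (\alpha_k, Z_k)]$ is handled in the same way: by the induction hypothesis on each $Z_j$, the corresponding integer components agree, so the resulting indices are identical.

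There is no real obstacle here; the proof is essentially a routine structural induction. The only place where one must be slightly careful is the variable case, where a naive reading of the definition of $\State_\mathit{tgt}(R)$ seems to require also checking the domain-membership condition, but this is discharged uniformly by the standing invariant that $[] \in \dom(\sigma(x^{\tau\dagger}))$ for every state. Once this invariant is flagged, the remaining work is mechanical.
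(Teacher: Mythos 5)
Your proof is correct and follows essentially the same route as the paper's: a routine structural induction on $Z$, $E$, and $I$, with the constant and variable cases as base cases and the operator and index-expression cases discharged by the induction hypothesis plus the fixed atomic interpretations. Your explicit remark that $[] \in \dom(\sigma(x^{\tau\dagger}))$ makes $\dom(\sigma(x^{\tau\dagger})){\uparrow} = \Index$, so that condition (ii) of $\State_\mathit{tgt}(R)$ applies unconditionally in the variable case, is a detail the paper leaves implicit but is a correct and welcome clarification.
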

\begin{proof}
The proof is by structural induction.
\begin{itemize}[itemsep=5pt]
  \item {\bf Cases of $Z \equiv n$ and $E \equiv r$:} \\
    The desired equalities in both cases follow 
    immediately from the semantics of the target language.

  \item {\bf Cases of $Z \equiv x^\lintt$ and $E \equiv x^\lreal$:} \\
    The desired conclusions in these cases follow from the assumption that
    $\sigma_0 \, [\State_\mathit{tgt}(R)] \, \sigma_1$ and $i_0 \, [R] \, i_1$.

  \item {\bf Case of $Z \equiv \op_i(Z_1,\ldots,Z_n,E_1,\ldots,E_m)$:} \\
    By the induction hypothesis, $\db{Z_k}(\sigma_0,i_0) = \db{Z_k}(\sigma_1,i_1)$
    and $\db{E_\ell}(\sigma_0,i_0) = \db{E_\ell}(\sigma_1,i_1)$ for all $k \in \{1,\ldots,n\}$
    and $\ell \in \{1,\ldots,m\}$. Thus, we can prove
    the desired equality as follows:
    \begin{align*}
    \db{\op_i(Z_1,\ldots,Z_n,E_1,\ldots,E_m)}(\sigma_0,i_0)
    & {} = 
    \db{\op_i}_a(
      \begin{aligned}[t]
        & \db{Z_1}(\sigma_0,i_0),\ldots,\db{Z_n}(\sigma_0,i_0),
        \\
        & \db{E_1}(\sigma_0,i_0),\ldots,\db{E_m}(\sigma_0,i_0))
      \end{aligned}
    \\
    & {} = 
    \db{\op_i}_a(
      \begin{aligned}[t]
        & \db{Z_1}(\sigma_1,i_1),\ldots,\db{Z_n}(\sigma_1,i_1),
        \\
        & \db{E_1}(\sigma_1,i_1),\ldots,\db{E_m}(\sigma_1,i_1))
      \end{aligned}
    \\
    & {} =
    \db{\op_i(Z_1,\ldots,Z_n,E_1,\ldots,E_m)}(\sigma_1,i_1).
    \end{align*}

  \item {\bf Case of $E \equiv \op_r(Z_1,\ldots,Z_n,E_1,\ldots,E_m)$:} \\
    The case can be proved by a similar argument to the one 
    for $Z \equiv \op_i(Z_1,\ldots,Z_n,E_1,\ldots,E_m)$.

  \item {\bf Case of $I \equiv [(\alpha_1, Z_1);\, \ldots;\, (\alpha_n, Z_n)]$:} \\
    By the induction hypothesis,  
    $\db{Z_k}(\sigma_0,i_0) = \db{Z_k}(\sigma_1,i_1)$ 
    for all $k \in \{1,\ldots,n\}$. Thus,
    \begin{align*}
    \db{[(\alpha_1, Z_1);\, \ldots;\, (\alpha_n, Z_n)]}(\sigma_0,i_0)
    & {} = 
    [(\alpha_1,\db{Z_1}(\sigma_0,i_0));\ldots;(\alpha_n,\db{Z_n}(\sigma_0,i_0))]
    \\
    & {} = 
    [(\alpha_1,\db{Z_1}(\sigma_1,i_1));\ldots;(\alpha_n,\db{Z_n}(\sigma_1,i_1))]
    \\
    & {} = 
    \db{[(\alpha_1, Z_1);\, \ldots;\, (\alpha_n, Z_n)]}(\sigma_1,i_1),
    \end{align*}
    as desired.
\end{itemize} 
\end{proof}

Let $C$ be a command in the target language that does not use any construct absent in the source language.
We define the translation of $C$ to a command $\overline{C}$ in the target language as follows:
\begin{align*}
  & \begin{aligned}[t]
  \overline{x^\lreal := \code{fetch}(I)} & {}\;\equiv\; x^{\lreal} := \code{fetch}(I), \\
  \overline{\code{score}(E)} & {}\;\equiv\; \code{score}(E), \\
  \overline{x^\lintt := Z} & {}\;\equiv\; x^{\lintt} := Z, \\
  \overline{x^\lreal := E} & {}\;\equiv\; x^{\lreal} := E, \\
  \overline{\code{skip}} & {}\;\equiv\; \code{skip}, \\
  \overline{\code{ifz}\ Z\ C_1\ C_2} & {}\;\equiv\; \code{ifz}\ Z\ \overline{C_1}\ \overline{C_2}, \\
  \overline{C_1; C_2} & {}\;\equiv\; \overline{C_1}; \overline{C_2},
  \end{aligned} &
  & \begin{aligned}[t]
    & \overline{\code{for}\ x^\lintt \ \code{in} \ \code{range}(n)\ \code{do}\ C_1} \\
      & \quad \begin{aligned}[t]
      & {}\;\equiv\; \code{extend\_index}(\alpha,n)\ \{ \\
      & {}\phantom{\;\equiv\;} \quad \code{loop\_fixpt\_noacc}(n)\ \{ \\
      & {}\phantom{\;\equiv\;} \quad \quad \code{shift}(\alpha); \\
      & {}\phantom{\;\equiv\;} \quad \quad x^{\intt\dagger} := \code{lookup\_index}(\alpha); \\
      & {}\phantom{\;\equiv\;} \quad \quad \overline{C_1}\;\; \} \}
    \end{aligned} \\
    % & \text{(where $\alpha$ is a fresh string)}.
    & \text{where $\alpha \in \Str \setminus \mathcal{S}(\overline{C_1})$}.
  \end{aligned}
\end{align*}
Here $\mathcal{S}(C)$ denotes the set of all strings that appear in $C$.

The following theorem states the soundness of the translation of commands in the target language
when the commands are evaluated under the evaluation rules defined by $\Downarrowi$.
\begin{theorem}
  \label{thm:soundness-translation}
  Let $C$ be a command in the target language that does not use any construct absent in the source language.
  Let $D \in \RDB$, $A_0, A_1 \in \FAntiChain$, $\sigma_0, \sigma_1, \sigma_0', \sigma_1' \in \State_\mathit{tgt}$, $T_0, T_1 \in \Tensor_\R$,
  $\sigma_0$ be an $A_0$-state, and $\sigma_1$ be an $A_1$-state. Suppose that
  \begin{align*}
    C, D, \sigma_0, A_0 \Downarrowi \sigma_0', T_0
    \quad\text{and}\quad
    \overline{C}, D, \sigma_1, A_1 \Downarrowi \sigma_1', T_1
  \end{align*}
  Then, for a coupling $R$ such that
  \begin{align*}
    \sigma_0 \, [\State_\mathit{tgt}(R)] \, \sigma_1
    \quad\text{and}\quad
    A_0 \, [\cP(R)] \, A_1,
  \end{align*}
  we have that
  \begin{align*}
    \sigma_0' \, [\State_\mathit{tgt}(R)] \, \sigma_1'
    \quad\text{and}\quad
    T_0 \, [\Tensor_\R(R)] \, T_1.
  \end{align*}
\end{theorem}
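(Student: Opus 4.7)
The plan is to prove Theorem~3 by structural induction on the command $C$. The inductive cases fall into three groups of increasing difficulty: (i) atomic commands (fetch, score, assignments, skip) together with sequential composition, which follow almost directly from Proposition~\ref{prop:expression-R-preservation} and the tensor/state lemmas already proved (\Cref{lem:tensor-R-oplus,lem:state-R-tensor-update,cor:extend-R-tensor,cor:Tz-R-from-indexset-R}); (ii) the conditional $\code{ifz}\ Z\ C_1\ C_2$, which requires showing that the antichain split induced by $Z$ respects the coupling; and (iii) the for-loop, which is the crux of the argument and where the whole speculation-plus-fixed-point mechanism has to be justified.

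\paragraph{Atomic, sequential and conditional cases}
For the atomic cases, since $\overline{C} \equiv C$, I would simply unfold the rule for $\Downarrowi$ on both sides, observe that the two resulting tensors $T_0,T_1$ (either $T^z_{A_j}$ or a value tensor defined via evaluation of the same expression) satisfy $T_0 \, [\Tensor_\R(R)] \, T_1$ by Proposition~\ref{prop:expression-R-preservation} combined with $A_0 \, [\cP(R)] \, A_1$, and then apply \Cref{lem:state-R-tensor-update} (after lifting via \Cref{cor:extend-R-tensor}) to conclude the coupling on the updated states. Sequential composition reduces to two IH applications plus \Cref{lem:tensor-R-oplus}. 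For the conditional, the key auxiliary observation is
\[
A_0 \, [\cP(R)] \, A_1
\;\land\;
\sigma_0 \, [\State_\mathit{tgt}(R)] \, \sigma_1
\;\implies\;
\{i_0 \in A_0 : \db{Z}(\sigma_0,i_0){=}0\} \, [\cP(R)] \, \{i_1 \in A_1 : \db{Z}(\sigma_1,i_1){=}0\},
\]
which follows from Proposition~\ref{prop:expression-R-preservation} together with \Cref{lem:indexset-R-membership}, and similarly for the false branches. Two IH applications then finish the case, combining the resulting tensors via \Cref{lem:tensor-R-oplus}.

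\paragraph{For-loop case: the main argument}
Let $C \equiv \code{for}\ x^\lintt\ \code{in}\ \code{range}(n)\ \code{do}\ C_1$ and let $\overline{C}$ be its translation. On the source side I get a chain $\sigma_0 = \sigma_0^{(0)}, \sigma_0^{(1)}, \ldots, \sigma_0^{(n)} = \sigma_0'$ with per-iteration score tensors $T^{src}_{k+1}$. On the target side, after $\code{extend\_index}(\alpha,n)$ produces the antichain $A_1' = \{i \concat [(\alpha,k)] : i \in A_1,\, k < n\}$, the body of $\code{loop\_fixpt\_noacc}$ is iterated $n$ times, producing $\sigma_1 = \tau^{(0)}, \tau^{(1)}, \ldots, \tau^{(n)}$ with per-iteration scores $U^{(j)}$; only $U^{(n)}$ is retained. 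The key step is to define, for each $j \in \{0,\ldots,n\}$, an auxiliary coupling $R^{(j)}$ that relates source indices $i_0 \in A_0$ under iteration count $\min(j,n-1)$ with target indices $i_1 \concat [(\alpha,\min(j,n-1))]$ whenever $i_0 \, [R] \, i_1$, and more generally to establish the invariant
\[
\text{at step $j$, for every $k \le j{-}1$: the state/score at $i_1 \concat [(\alpha,k)]$ equals, under $R$, the state/score at $i_0$ after $k{+}1$ source iterations}.
\]
I would prove this invariant by induction on $j$, using at each step (a) the fact that $\code{shift}(\alpha)$ propagates the $k$-th-slot state to the $(k{+}1)$-th slot (so the previous iteration's result becomes the starting state for the next), (b) that $\code{lookup\_index}(\alpha)$ installs the correct loop counter, and (c) the IH for $\overline{C_1}$ applied to the coupling $R^{(j)}$. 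Preservation of the $L$-state property along the way comes from \Cref{lem:preservation-L-states}, which is essential so that the shift-and-extend operations interact correctly with $\extend(\cdot)$.

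\paragraph{Wrap-up and main obstacle}
After $n$ iterations the invariant covers all $k \in \{0,\ldots,n{-}1\}$, so the last-iteration score $U^{(n)}$ at $i_1 \concat [(\alpha,k)]$ matches $T^{src}_{k+1}$ at $i_0$; summing over $k$ (as performed by $\code{extend\_index}$) then yields a tensor coupled to $\bigoplus_{k=1}^{n} T^{src}_k$, and the copy-back via $\rho = [i_1 \concat [(\alpha,n{-}1)] \mapsto i_1]$ recovers a final state coupled to $\sigma_0'$ by \Cref{lem:update-renaming-extend-appendix}. The subtle point I expect to be the main obstacle is handling the shift step correctly: one has to carefully track both the domain of the finite partial maps (to stay an $L$-state) and the action of $\extend$ on the shifted values, because after shift the slot at $i_1 \concat [(\alpha,0)]$ must draw from the ``baseline'' state at $i_1$, which only matches the source initial state $\sigma_0$ under the coupling via the $R$-relation and \emph{not} via equality. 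Handling this cleanly will require extending $R$ to a coupling $\widetilde{R}$ that simultaneously relates all per-iteration indices, and proving that $\widetilde{R}$ remains a coupling (the partial-bijection and agreement-on-shared-strings conditions) given that $\alpha$ is fresh in $A_1$.
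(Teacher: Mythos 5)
Your proposal takes essentially the same route as the paper: structural induction with the atomic, sequential and conditional cases discharged exactly as you describe (the conditional case via the antichain-split observation, using \Cref{prop:expression-R-preservation} and \Cref{lem:indexset-R-membership}), and the for-loop case handled by a per-slot family of couplings $R^{(k)}$ relating $i_0 \in A_0$ to $i_1 \concat [(\alpha,k)]$, with an inner induction on the round number establishing precisely your invariant; the paper's ``second part'' is your invariant and its ``third part'' is your wrap-up. One caveat: your closing suggestion to merge the family into a single coupling $\widetilde{R}$ relating each $i_0$ to \emph{all} of the indices $i_1 \concat [(\alpha,0)],\ldots,i_1 \concat [(\alpha,n{-}1)]$ cannot work, since such a relation violates the partial-bijection condition in the definition of a coupling (the same $i_0$ would be related to $n$ distinct target indices); the source index is the same across iterations, so no single coupling can encode ``iteration $\ell$ of the source matches slot $\ell$ of the target.'' The paper instead keeps the separate couplings $R^{(\ell)}$ and resolves the baseline issue you correctly identify for $\code{shift}$ by adding the extra clause $\sigma_0 \, [\State_\mathit{tgt}(R^{(-1)})] \, \sigma''_{(1,k)}$ to the invariant, which is all that is needed.
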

\begin{proof}
  We do the case analysis on $C$.
  \begin{itemize}
    \item {\bf Case of $C \equiv (x^\lreal := \code{fetch}(I))$:} \\
      By the semantics of commands, $T_0 = T^z_{A_0}$ and $T_1 = T^z_{A_1}$.
      Since $A_0 \, [\cP(R)] \, A_1$, by \Cref{cor:Tz-R-from-indexset-R}, we have
      \[
        T_0 = T^z_{A_0} \, [\Tensor_\R(R)] \, T^z_{A_1} = T_1.
      \] 
      It remains to show that 
      \[
        \sigma'_0 \, [\State_\mathit{tgt}(R)] \, \sigma'_1.
      \]
      By the semantics of the target language,
      \begin{align*}
      \sigma'_0 &= \sigma_0[x^\lreal : T_0'], & T_0' &= [i_0 \mapsto D(\db{I}(\sigma_0,i_0)) : i_0 \in A_0], \\
      \sigma'_1 &= \sigma_1[x^\lreal : T_1'], & T_1' &= [i_1 \mapsto D(\db{I}(\sigma_1,i_1)) : i_1 \in A_1].
      \end{align*}
      We claim that $T_0' \, [\WTensor_\R(R)] \, T_1'$.
      Note that if true, the claim gives the desired conclusion by \Cref{lem:state-R-tensor-update}.
      Also, note that by \Cref{cor:extend-R-tensor},
      the claim follows if we show the following two conditions:
      \begin{enumerate}[label=(\roman*), leftmargin=2\parindent]
        \item $\dom(T_0') \, [\cP(R)] \, \dom(T_1')$,
        \item $T_0' \, [\Tensor_\R(R)] \, T_1'$.
      \end{enumerate}
      The first condition follows immediately from the facts that $\dom(T_0') = A_0$, $\dom(T_1') = A_1$.
      To prove the second condition, consider arbitrary indices $i'_0 \in \dom(T_0')$ and $i'_1 \in \dom(T_1')$ 
      with $i'_0 \, [R] \, i'_1$. Then, $\db{I}(\sigma_0,i'_0) = \db{I}(\sigma_1,i'_1)$ 
      by \Cref{prop:expression-R-preservation}. Thus, 
      \begin{align*}
        T'_0(i'_0) = D(\db{I}(\sigma_0,i'_0)) = D(\db{I}(\sigma_1,i'_1)) = T'_1(i'_1),
      \end{align*}
      as desired.
      
    \item {\bf Case of $C \equiv (\code{score}(E))$:} \\
      By the semantics of the target language, 
      \begin{align*}
      \sigma_0' &= \sigma_0 & T_0 &= [i_0 \mapsto \db{E}(\sigma_0,i_0) : i_0 \in A_0], \\
      \sigma_1' &= \sigma_1 & T_1 &= [i_1 \mapsto \db{E}(\sigma_1,i_1) : i_1 \in A_1].
      \end{align*}
      Thus, $\sigma'_0 \, [\State_\mathit{tgt}(R)] \, \sigma'_1$.
      We claim that $T_0 \, [\Tensor_\R(R)] \, T_1$.
      Suppose $i_0 \, [R] \, i_1$.
      By the assumption $A_0 \, [\cP(R)] \, A_1$, we have $\dom(T_0) \, [\cP(R)] \, \dom(T_1)$.
      By \Cref{lem:indexset-R-membership}, it implies
      \[
      i_0 \in \dom(T_0) \iff i_1 \in \dom(T_1).
      \]
      Also, by the assumption $\sigma_0 \, [\State_\mathit{tgt}(R)] \, \sigma_1$ and \Cref{prop:expression-R-preservation},
      \[
      T_0(i_0) = \db{E}(\sigma_0,i_0) = \db{E}(\sigma_1,i_1) = T_1(i_1).
      \]
      The desired relationship $T_0 \, [\Tensor_\R(R)] \, T_1$ follows from these two facts.

    \item {\bf Case of $C \equiv (x^\lintt := Z)$:} \\
      By the semantics of the target language,
      \begin{align*}
      \sigma_0' & = \sigma_0[x^\lintt : T_0'], & T'_0 = [i_0 \mapsto \db{Z}(\sigma_0, i_0) : i_0 \in A_0], \\
      \sigma_1' & = \sigma_1[x^\lintt : T_1'], & T'_1 = [i_1 \mapsto \db{Z}(\sigma_1, i_1) : i_1 \in A_1],
      \end{align*}
      and $T_0 = T^z_{A_0}$ and $T_1 = T^z_{A_1}$.
      Since $A_0 \, [\cP(R)] \, A_1$, by \Cref{cor:Tz-R-from-indexset-R}, we have
      \[
      T_0 = T^z_{A_0} \, [\Tensor_\R(R)] \, T^z_{A_1} = T_1.
      \]
      It remains to prove that 
      $\sigma'_0 \, [\State_\mathit{tgt}(R)] \, \sigma'_1$. Note that by \Cref{lem:state-R-tensor-update},
      we can discharge this proof obligation by showing that $T_0' [\WTensor_\R(R)] T_1'$.
      But by \Cref{cor:extend-R-tensor}, this relationship between $T_0'$ and $T_1'$ follows 
      from the following two conditions:
      \begin{enumerate}[label=(\roman*), leftmargin=2\parindent]
        \item $\dom(T_0') \, [\cP(R)] \, \dom(T_1')$,
        \item $T_0' \, [\Tensor_\R(R)] \, T_1'$.
      \end{enumerate}
      The first condition holds because $\dom(T_0') = A_0$, $\dom(T_1') = A_1$, and $A_0 \, [\cP(R)] \, A_1$.
      To prove the second condition, pick arbitrary $i'_0 \in \dom(T_0')$ 
      and $i'_1 \in \dom(T_1')$ with $i'_0 \, [R] \, i'_1$. Then, since
      $i'_0 \, [R] \, i'_1$ and $\sigma_0 \, [\State_\mathit{tgt}(R)] \, \sigma_1$, we have
      $\db{Z}(\sigma_0,i'_0) = \db{Z}(\sigma_1,i'_1)$ by \Cref{prop:expression-R-preservation}. Thus,
      \[
        T'_0(i'_0) 
        = 
        \db{Z}(\sigma_0,i'_0)
        =
        \db{Z}(\sigma_1,i'_1)
        = 
        T'_1(i'_1),
      \]
      as desired. 

    \item {\bf Case of $C \equiv (x^{\lreal} := E)$:} \\
     The proof is similar to the one for $C \equiv (x^\lintt := Z)$.

    \item {\bf Case of $C \equiv (\code{skip})$:} \\
      In this case, we have $\sigma_0' = \sigma_0$, $\sigma'_1 = \sigma_1$, 
      $T_0 = T^z_{A_0}$, and $T_1 = T^z_{A_1}$. Since $A_0 \, [\cP(R)] \, A_1$
      and $\sigma_0 \, [\State_\mathit{tgt}(R)] \, \sigma_1$, by \Cref{cor:Tz-R-from-indexset-R}, we have
      \[
      \sigma_0' \, [\State_\mathit{tgt}(R)] \, \sigma_1'
      \quad\text{and}\quad
      T_0 \, [\Tensor_\R(R)] \, T_1.
      \]

    \item {\bf Case of $C \equiv (C_1;C_2)$:} \\
      By the semantics of sequencing, 
      there exist $\sigma_{0}'',\sigma_{1}'' \in \State_\mathit{tgt}$ and $T'_{0},T''_{0},T'_1,T''_1 \in \Tensor_\R$ 
      such that
      \begin{align*}
      C_1, D, \sigma_0, A_0 & {} \Downarrowi \sigma''_{0}, T''_{0}, &
      C_2, D, \sigma''_{0}, A_0 & {} \Downarrowi \sigma'_{0}, T'_{0}, &
      T_0 = T''_{0} \oplus T'_{0}, \\
      \overline{C_1}, D, \sigma_1, A_1 & {} \Downarrowi \sigma''_{1}, T''_{1}, &
      \overline{C_2}, D, \sigma''_{1}, A_1 & {} \Downarrowi \sigma'_{1}, T'_{1}, &
      T_1 = T''_{1} \oplus T'_{1}.
      \end{align*}
      By the induction hypothesis on $C_1$, 
      we have 
      \[
      \sigma''_{0} \, [\State_\mathit{tgt}(R)] \, \sigma''_{1}
      \quad
      \text{and}
      \quad
      T''_{0} \, [\Tensor_\R(R)] \, T''_{1}.
      \]
      The first relationship from above enables us to apply the induction hypothesis to $C_2$,
      and to obtain
      \[
      \sigma'_{0} \, [\State_\mathit{tgt}(R)] \, \sigma'_{1}
      \quad
      \text{and}
      \quad
      T'_{0} \, [\Tensor_\R(R)] \, T'_{1}.
      \]
      Thus, by \Cref{lem:tensor-R-oplus},
      \[
        \sigma'_{0} \, [\State_\mathit{tgt}(R)] \, \sigma'_{1}
        \quad
        \text{and}
        \quad
        T_0 = \Big(T''_{0} \oplus T'_{0}\Big) \, [\Tensor_\R(R)] \, \Big(T''_{1} \oplus T'_{1}\Big) = T_1.
      \]

  \item {\bf Case of $C \equiv (\code{ifz}\ Z\ C_1\ C_2)$:}\\
  Let
  \begin{align*}
    A'_{0} & = \{i_0 \in A_0 : \db{Z}(\sigma_0, i_0) = 0\}, &
    A''_{0} & = \{i_0 \in A_0 : \db{Z}(\sigma_0, i_0) \neq 0\}, \\
    A'_{1} & = \{i_1 \in A_1 : \db{Z}(\sigma_1, i_1) = 0\}, &
    A''_{1} & = \{i_1 \in A_1 : \db{Z}(\sigma_1, i_1) \neq 0\}.
  \end{align*}
  By the semantics, there exist $\sigma''_0,\sigma''_1 \in \State_\mathit{tgt}$
  and $T'_{0},T''_{0},T'_{1},T''_{1} \in \Tensor_\R$ such that
  \begin{align*}
    & C_1, D, \sigma_0, A'_{0} \Downarrowi \sigma''_0, T''_{0}, &
    & C_2, D, \sigma''_0, A''_{0} \Downarrowi \sigma'_0, T'_{0}, &
    & T_0 = T''_{0} \oplus T'_{0}, \\
    & \overline{C_1}, D, \sigma_1, A'_{1} \Downarrowi \sigma''_1, T''_{1}, &
    & \overline{C_2}, D, \sigma''_1, A''_{1} \Downarrowi \sigma'_1, T'_{1}, &
    & T_1 = T''_{1} \oplus T'_{1}.
  \end{align*}  
  By \Cref{lem:updated-indices-cmd}, 
  \begin{equation*}
    \dom(T''_0) = A'_0, \quad
    \dom(T'_0) = A''_0, \quad
    \dom(T''_1) = A'_1, \quad
    \dom(T'_1) = A''_1.
  \end{equation*}
  We claim that 
  \begin{equation*}
    A'_0 \, [\cP(R)] \, A'_1
    \quad\text{and}\quad
    A''_0 \, [\cP(R)] \, A''_1.
  \end{equation*}
  We only prove the first relationship in the claim; the second can be proved similarly.
  Consider arbitrary indices $i_0,i_1 \in \Index$ with $i_0 \, [R] \, i_1$.
  We need to prove that
  \begin{enumerate}[label=(\roman*), leftmargin=2\parindent]
    \item $i_0 \in A_0'{\uparrow} \iff i_1 \in A_1'{\uparrow}$,
    \item $i_0 \in A_0'{\uparrow} \implies \max(A'_0 \cap i_0{\downarrow}) \, [R] \, \max(A'_1 \cap i_1{\downarrow})$.
  \end{enumerate}
  First, we show the only-if direction of the first condition; the if direction can be proved similarly.
  Assume that $i_0 \in A_0'{\uparrow}$. Then, $i_0 \in A_0{\uparrow}$ and $i_1 \in A_1{\uparrow}$
  since $A'_0 \subseteq A_0$ and $A_0 \, [\cP(R)] \, A_1$.
  Thus, since $A_1{\uparrow} = A_1'{\uparrow} \cup A_1''{\uparrow}$, it suffices to show that $i_1 \not\in A''_1{\uparrow}$.
  For the sake of contradiction, suppose that $i_1 \in A''_1{\uparrow}$. Then, again since $A_0 \, [\cP(R)] \, A_1$, 
  \[
  \max(A'_0 \cap i_0{\downarrow}) = \max(A_0 \cap i_0{\downarrow})
  \, [R] \, 
  \max(A_1 \cap i_1{\downarrow}) = \max(A''_1 \cap i_1{\downarrow}).
  \]
  Since $\sigma_0 \, [\State_\mathit{tgt}(R)] \, \sigma_1$, by \Cref{prop:expression-R-preservation}, we have
  \[
  \db{Z}(\sigma_0, \max(A'_0 \cap i_0{\downarrow})) 
  = 
  \db{Z}(\sigma_1, \max(A''_1 \cap i_1{\downarrow})).
  \]
  But this cannot happen because the left hand side of the equation is $0$ and the right hand side is not $0$.
  Next, we prove the second condition.
  Assume that $i_0 \in A_0'{\uparrow}$, and so $i_1 \in A_1'{\uparrow}$.
  Then, since $A_0 \, [\cP(R)] \, A_1$ and $A'_0 \subseteq A_0$ and $A'_1 \subseteq A_1$ and both $A_0$ and $A_1$ are antichains, we have
  \[
  \max(A'_0 \cap i_0{\downarrow}) = \max(A_0 \cap i_0{\downarrow})
  \, [R] \, 
  \max(A_1 \cap i_1{\downarrow}) = \max(A'_1 \cap i_1{\downarrow}),
  \]
  and thus, the claim holds.
  By $A_0' \, [\cP(R)] \, A_1'$ and the assumption $\sigma_0 \, [\State_\mathit{tgt}(R)] \, \sigma_1$,
  we can apply the induction hypothesis to $C_1$. This gives us 
  \begin{equation*}
  \sigma''_0 \, [\State_\mathit{tgt}(R)] \, \sigma''_1
  \quad\text{and}\quad
  T''_0 \, [\Tensor_\R(R)] \, T''_1.
  \end{equation*}
  The first conjunct from above and $A_0'' \, [\cP(R)] \, A_1''$ let us apply the induction hypothesis to $C_2$, which gives us 
  \begin{equation*}
  \sigma'_0 \, [\State_\mathit{tgt}(R)] \, \sigma'_1
  \quad\text{and}\quad
  T'_0 \, [\Tensor_\R(R)] \, T'_1.
  \end{equation*}
  By \Cref{lem:tensor-R-oplus}, we have
  \[
  T_0 = (T''_0 \oplus T'_0) \, [\Tensor_\R(R)] \, (T''_1 \oplus T'_1) = T_1.
  \]
  
  \item {\bf Case of $C \equiv (\code{for} \ x^\lintt \ \code{in} \ \code{range}(n)\ \code{do}\ C')$:}
  \paragraph{\bf Dynamics of $C$.}
  By the semantics of the sequential composition and the assignment command, there exist 
  \[
    \sigma_{(0,k)}, \sigma_{(0,k)}' \in \State_\mathit{tgt} \quad\text{and}\quad
    T_{(0,k)} \in \Tensor_\R \quad\text{for all $k \in \{0,\ldots,n-1\}$}
  \]
  such that for all $k \in \{0,\ldots,n-1\}$,
  \begin{align*}
    \sigma_{(0,k)}' & {} = \sigma_{(0,k-1)}[x^\lintt : [i \mapsto k : i \in A_0]], &
    C', D, \sigma_{(0,k)}', A_0 & {} \Downarrowi \sigma_{(0,k)}, T_{(0,k)}, 
  \end{align*}
  where
  \begin{align*}
    \sigma_{(0,-1)} & = \sigma_0, &
    \sigma_{(0,n-1)} & = \sigma_0', &
    T_0 & = \left[
      i \mapsto \sum_{k = 0}^{n-1} T_{(0,k)}(i) \;:\; i \in A_0
    \right].
  \end{align*}
  \paragraph{\bf Dynamics of $\overline{C}$.}
  Let $\alpha$ be a fresh string such that $\alpha \notin \dom(i)$ for all $i \in A_0 \cup A_1$.
  We will use $\alpha$ to translate $C$ to $\overline{C}$.
  Define
  \[
  A_1' = \{i \concat [(\alpha,k)] : i \in A_1 \text{ and } k \in \{0,\ldots,n-1\}\}.
  \] 
  Let us use $\_$ to denote the real tensor $T_{A_1'}^z$.
  Then, by the semantics of the target language, there exist 
  \begin{align*}
    \sigma_{(1,k)},\sigma'_{(1,k)},\sigma''_{(1,k)} \in \State_\mathit{tgt} \quad\text{and}\quad
    T_{(1,k)} \in \Tensor_\R \quad\text{ for all $k \in \{0,\ldots,n-1\}$}
  \end{align*} 
  such that 
  \begin{enumerate}
  \item for all $k \in \{0,\ldots,n-1\}$,
  \begin{align*}
    \code{shift}(\alpha), D, \sigma_{(1,k-1)}, A_1' & {} \Downarrowi \sigma'_{(1,k)}, \_, \\
    (x^{\lintt} := \code{lookup\_index}(\alpha)), D, \sigma'_{(1,k)}, A_1' & {} \Downarrowi \sigma''_{(1,k)}, \_, \\
    \overline{C'}, D, \sigma''_{(1,k)}, A_1' & {} \Downarrowi \sigma_{(1,k)}, T_{(1,k)},
  \end{align*}
  \item for all $k \in \{0,\ldots,n-1\}$,
  \begin{align*}
    \rho_1'(i) = \begin{cases}
    i \concat [(\alpha,0)] & 
    \text{if } \alpha \not\in \dom(i)
    \text{ and } i \concat [(\alpha,0)] \in A_1',
    \\  
    j \concat [(\alpha,m+1)] &
    \begin{aligned}[t]
    & \text{if } i = j \concat [(\alpha,m)] \in A_1'{\downarrow} \\
    & \text{and } j \concat [(\alpha,m+1)] \in A_1' \text{ for some $j \in \Index$ and $m \geq 0$},
    \end{aligned}
    \\
    \text{undefined} & \text{otherwise},
    \end{cases}
  \end{align*}
  \begin{align*}
    \sigma_{(1,-1)} & {} = \sigma_1, &
    \sigma'_{(1,k)} & {} = \sigma_{(1,k-1)}\langle\rho_1'\rangle, &
    \sigma''_{(1,k)} & {} = \sigma'_{(1,k)}[x^\lintt : [i \mapsto i(\alpha) : i \in A_1']].
  \end{align*}
  \item finally, 
  \begin{align*}
    \begin{aligned}
    \rho_1 & {} = [i \concat [(\alpha,n-1)] \mapsto i : i \in A_1], \\
    \sigma_1' & {} = \sigma_{(1,n-1)}\langle\rho_1\rangle,
    \end{aligned} &&
    \begin{aligned}
    T_1 = \left[
      i \mapsto 
      \sum_{k = 0}^{n-1}
        T_{(1,n-1)}(i\concat [(\alpha,k)]) : i \in A_1
    \right].
    \end{aligned}
  \end{align*}
  \end{enumerate}
  
  \paragraph{\bf Overview.}
  The rest of the proof is about relating the executions of $C$ and $\overline{C}$ and deriving the conclusion of the theorem.
  For this purpose, we define relations $R^{(-1)}$ and $R^{(k)}$ on indices for all $k \in \{0,\ldots,n-1\}$ as follows:
  for all $i_0,i_1,i_1' \in \Index$,
  \begin{align*}
    i_0 \, [R^{(-1)}] \, i_1 & \iff i_0 \in A_0 \text{ and } i_1 \in A_1 \text { and } i_0 \, [R] \, i_1, \\
    i_0 \, [R^{(k)}] \, i_1' & \iff
    \exists \, i_1 \in A_1. \,
    \Big(
      i_0 \, [R^{(-1)}] \, i_1 \text{ and } i_1' = i_1 \concat [(\alpha,k)]
    \Big).
  \end{align*}
  Then, $R^{(-1)}$ and $R^{(k)}$ are couplings for all $k \in \{0,\ldots,n-1\}$.
  Our proof relies on the application of the induction hypothesis to $C'$ and $\overline{C'}$.
  Our proof consists of three parts.
  \begin{enumerate}
  \item The first part shows that
  \begin{equation*}
    A_0 \, [\cP(R^{(k)})] \, A_1' \quad \text{for all $k \in \{0,1,\ldots,n-1\}$}.
  \end{equation*}
  \item The second part shows that for all $k \in \{0,\ldots,n-1\}$ and $\ell \in \{0,\ldots,k\}$,
  \begin{equation*}
    \sigma_0 \, [\State_\mathit{tgt}(R^{(-1)})] \, \sigma''_{(1,k)},
    \quad\text{and}\quad
    \sigma'_{(0,\ell)} \, [\State_\mathit{tgt}(R^{(\ell)})] \, \sigma''_{(1,k)}.
  \end{equation*}
  \item The third part derives the claimed conclusion of the theorem.
  \end{enumerate}
   
  \paragraph{\bf First part.}
  We spell out the first part of the proof, namely, 
  \[
  A_0 \, [\cP(R^{(k)})] \, A_1' \quad\text{for all $k \in \{0,\ldots,n-1\}$}.
  \]
  Pick $k \in \{0,\ldots,n-1\}$, indices $i_0, i_1'$ such that $i_0 \, [R^{(k)}] \, i_1'$ arbitrarily.
  Then,
  \[
    \exists \, i_1 \in A_1. \,
    \Big(
      i_0 \, [R^{(-1)}] \, i_1 \text{ and } i_1' = i_1 \concat [(\alpha,k)]
    \Big).
  \]
  We only need to show that
  \[
    \max(A_0 \cap i_0{\downarrow}) \, [R^{(k)}] \, \max(A_1' \cap i_1'{\downarrow}).
  \]
  This is obvious since $\max(A_0 \cap i_0{\downarrow}) = i_0$ and $\max(A_1' \cap i_1'{\downarrow}) = i_1'$.
  
  \paragraph{\bf Second part.}
  We describe the second part of our proof:
  \[
  \sigma_0 \, [\State_\mathit{tgt}(R^{(-1)})] \, \sigma''_{(1,k)},
  \quad\text{and}\quad
  \sigma'_{(0,\ell)} \, [\State_\mathit{tgt}(R^{(\ell)})] \, \sigma''_{(1,k)}
  \quad
  \text{for all $\ell \in \{0,\ldots,k\}$}.
  \]
  The proof is based on the inductive argument on $k$. 
  
  \paragraph{Base case.}
  The base case is that
  \begin{equation*}
    \sigma_0 \, [\State_\mathit{tgt}(R^{(-1)})] \, \sigma_{(1,0)}''
    \quad\text{and}\quad
    \sigma_{(0,0)}' \, [\State_\mathit{tgt}(R^{(0)})] \, \sigma_{(1,0)}''.
  \end{equation*} 
  Recall the definitions of $\sigma_{(1,0)}''$ and $\sigma_{(0,0)}'$:
  \begin{align*}
    \sigma_{(1,0)}''
    & {} = \sigma_{(1,0)}'[x^{\lintt} : [i \mapsto i(\alpha) : i \in A_1']] \\
    & {} = \sigma_1\langle\rho_1'\rangle[x^{\lintt} : [i \mapsto i(\alpha) : i \in A_1']], \\
    \sigma_{(0,0)}' & {} = \sigma_0[x^{\lintt} : [i \mapsto 0 : i \in A_0]].
  \end{align*}
  To show the first relationship from above,
  pick a variable $y^{\tau\dagger}$ and indices $i_0$ and $i_1$ with $i_0 \, [R^{(-1)}] \, i_1$ arbitrarily.
  Then, $i_0 \in A_0$ and $i_1 \in A_1$ and $i_0 \, [R] \, i_1$.
  These imply that
  \begin{align*}
    \extend(\sigma_{(1,0)}''(y^{\tau\dagger}))(i_1)
    & {} =
    \extend(\sigma_1\langle\rho_1'\rangle[x^{\lintt} : [i \mapsto i(\alpha) : i \in A_1']](y^{\tau\dagger}))(i_1) \\
    & {} =
    \extend(\sigma_1\langle\rho_1'\rangle(y^{\tau\dagger}))(i_1) \\
    & {} = 
    \extend(\sigma_1(y^{\tau\dagger}))(i_1) \\
    & {} =
    \extend(\sigma_0(y^{\tau\dagger}))(i_0).
  \end{align*}
  The second equality holds by \Cref{lem:update-renaming-extend-appendix} because $i_1 \in A_1$ implies $i_1 \notin A_1'{\uparrow}$.
  The third equality also holds by \Cref{lem:update-renaming-extend-appendix}
  because $\image(\rho_1'){\uparrow} \subseteq A_1'{\uparrow}$ and $i_1 \notin A_1'{\uparrow}$ imply $i_1 \notin \image(\rho_1'){\uparrow}$.
  To prove the second relationship from above,
  pick a variable $y^{\tau\dagger}$ and indices $i_0, i_1'$ with $i_0 \, [R^{(0)}] \, i_1'$ arbitrarily.
  Then, $i_0 \in A_0$ and $i_1' \in A_1'$ and
  \[
    \exists \, i_1 \in A_1. \, \big( i_1' = i_0 \, [R] \, i_1 \text{ and } i_1' = i_1 \concat [(\alpha,0)] \big).
  \]
  We have to show that
  \[
  \extend(\sigma_{(0,0)}'(y^{\tau\dagger}))(i_0) = 
  \extend(\sigma_{(1,0)}''(y^{\tau\dagger}))(i_1).
  \]
  We discharge this proof obligation as follows:
  \begin{align*}
    \extend(\sigma_{(1,0)}''(y^{\tau\dagger}))(i_1')
    & {} = \extend(\sigma_1\langle \rho'_1\rangle[x^{\lintt} : [i \mapsto i(\alpha) : i \in A_1']](y^{\tau\dagger}))(i_1') \\
    & {} =
    \begin{cases}
      \extend([i \mapsto i(\alpha) : i \in A_1'])(i_1') & \text{if $y^{\tau\dagger} \equiv x^\lintt$} \\
      \extend(\sigma_1\langle \rho'_1\rangle(y^{\tau\dagger}))(i_1') & \text{otherwise}
    \end{cases} \\
    & {} =
    \begin{cases}
      0 & \text{if $y^{\tau\dagger} \equiv x^\lintt$} \\
      \extend(\sigma_1(y^{\tau\dagger}))(i_1) & \text{otherwise}
    \end{cases} \\
    & {} =
    \begin{cases}
      \extend([i \mapsto 0: i \in A_0])(i_0) & \text{if $y^{\tau\dagger} \equiv x^\lintt$} \\
      \extend(\sigma_0(y^{\tau\dagger}))(i_0) & \text{otherwise}
    \end{cases} \\
    & {} = \extend(\sigma[x^{\lintt} : [i \mapsto 0: i \in A_0]](y^{\tau\dagger}))(i_0) \\
    & {} = \extend(\sigma_{(0,0)}'(y^{\tau\dagger}))(i_0).
  \end{align*}
  The second equality holds by \Cref{lem:update-renaming-extend-appendix} because $i_1' \in A_1'$.
  The third equality also holds by \Cref{lem:update-renaming-extend-appendix} because $\rho_1'(i_1) = i_1 \concat [(\alpha,0)] = i_1'$.
  The fourth equality holds because $i_0 \, [R] \, i_1$ and $\sigma_0 \, [\State_\mathit{tgt}(R)] \, \sigma_1$.
  The fifth equality holds by \Cref{lem:update-renaming-extend-appendix} because $i_0 \in A_0$.

  \paragraph{Inductive case.}
  Let's move on to the inductive case. Consider $k \in \{1,\ldots,n-1\}$. Assume that
  \begin{align*}
  \sigma_0 \, [\State_\mathit{tgt}(R^{(-1)})] \, \sigma_{(1,k-1)}'', \quad\text{and}\quad
  \sigma_{(0,\ell)}' \, [\State_\mathit{tgt}(R^{(\ell)})] \, \sigma_{(1,k-1)}'' \quad\text{for all $\ell \in \{0,\ldots,k-1\}$}.
  \end{align*}
  We have to show that
  \begin{align*}
  \sigma_0 \, [\State_\mathit{tgt}(R^{(-1)})] \, \sigma_{(1,k)}'', \quad\text{and}\quad
  \sigma_{(0,\ell)}' \, [\State_\mathit{tgt}(R^{(\ell)})] \, \sigma_{(1,k)}'' \quad\text{for all $\ell \in \{0,\ldots,k\}$}.
  \end{align*}
  Recall the definitions of $\sigma_{(1,k)}''$ and $\sigma_{(0,\ell)}'$:
  \begin{align*}
    \sigma_{(1,k)}''
    & {} = \sigma_{(1,k)}'[x^{\lintt} : [i \mapsto i(\alpha) : i \in A_1']] \\
    & {} = \sigma_{(1,k-1)}\langle\rho_1'\rangle[x^{\lintt} : [i \mapsto i(\alpha) : i \in A_1']], \\
    \sigma_{(0,\ell)}' & {} = \sigma_{(0,\ell-1)}[x^{\lintt} : [i \mapsto \ell : i \in A_0]].
  \end{align*}

  To prove $\sigma_0 \, [\State_\mathit{tgt}(R^{(-1)})] \, \sigma_{(1,k)}''$,
  pick a variable $y^{\tau\dagger}$ and indices $i_0,i_1$ with $i_0 \, [R^{(-1)}] \, i_1$.
  Then, $i_0 \in A_0$ and $i_1 \in A_1$ and $i_0 \, [R] \, i_1$.
  These imply that
  \begin{align*}
    \extend(\sigma_{(1,k)}''(y^{\tau\dagger}))(i_1)
    & {} =
    \extend(\sigma_{(1,k-1)}\langle\rho_1'\rangle[x^{\lintt} : [i \mapsto i(\alpha) : i \in A_1']](y^{\tau\dagger}))(i_1) \\
    & {} =
    \extend(\sigma_{(1,k-1)}\langle\rho_1'\rangle(y^{\tau\dagger}))(i_1) \\
    & {} = 
    \extend(\sigma_{(1,k-1)}(y^{\tau\dagger}))(i_1) \\
    & {} =
    \extend(\sigma_{(1,k-1)}''(y^{\tau\dagger}))(i_1) \\
    & {} =
    \extend(\sigma_0(y^{\tau\dagger}))(i_0).
  \end{align*}
  The second equality holds by \Cref{lem:update-renaming-extend-appendix} because $i_1 \in A_1$ implies $i_1 \notin A_1'{\uparrow}$.
  The third equality also holds by \Cref{lem:update-renaming-extend-appendix}
  because $\image(\rho_1'){\uparrow} \subseteq A_1'{\uparrow}$ and $i_1 \notin A_1'{\uparrow}$ imply $i_1 \notin \image(\rho_1'){\uparrow}$.
  The fourth equality holds because $\sigma_{(1,k-1)}'' =_{(A_1'{\uparrow})^c} \sigma_{(1,k-1)}$
  by \Cref{lem:updated-indices-cmd} and $i_1 \notin A_1'{\uparrow}$.
  The fifth equality holds because
  \[
    i_0 \, [R^{(-1)}] \, i_1 \quad\text{and}\quad \sigma_0 \, [\State_\mathit{tgt}(R^{(-1)})] \, \sigma_{(1,k-1)}''.
  \]
  To prove that $\sigma'_{(0,\ell)} \, [\State_\mathit{tgt}(R^{(\ell)})] \, \sigma''_{(1,k)}$,
  pick a variable $y^{\tau\dagger}$ and indices $i_0, i_1'$ with $i_0 \, [R^{(\ell)}] \, i_1'$ arbitrarily.
  Then, $i_0 \in A_0$ and $i_1' \in A_1'$ and
  \[
    \exists \, i_1 \in A_1. \, \big( i_1' = i_0 \, [R] \, i_1 \text{ and } i_1' = i_1 \concat [(\alpha,\ell)] \big).
  \]
  If $\ell = 0$, then
  \begin{align*}
    \extend(\sigma_{(1,k)}''(y^{\tau\dagger}))(i_1')
    & {} = \extend(\sigma_{(1,k-1)}\langle\rho_1'\rangle[x^{\lintt} : [i \mapsto i(\alpha) : i \in A_1']](y^{\tau\dagger}))(i_1') \\
    & {} =
    \begin{cases}
      \extend([i \mapsto i(\alpha) : i \in A_1'])(i_1') & \text{if $y^{\tau\dagger} \equiv x^\lintt$} \\
      \extend(\sigma_{(1,k-1)}\langle\rho_1'\rangle(y^{\tau\dagger}))(i_1') & \text{otherwise}
    \end{cases} \\
    & {} =
    \begin{cases}
      0 & \text{if $y^{\tau\dagger} \equiv x^\lintt$} \\
      \extend(\sigma_{(1,k-1)}(y^{\tau\dagger}))(i_1) & \text{otherwise} \\
    \end{cases} \\
    & {} =
    \begin{cases}
      0 & \text{if $y^{\tau\dagger} \equiv x^\lintt$} \\
      \extend(\sigma_{(1,k-1)}''(y^{\tau\dagger}))(i_1) & \text{otherwise} \\
    \end{cases} \\
    & {} =
    \begin{cases}
      0 & \text{if $y^{\tau\dagger} \equiv x^\lintt$} \\
      \extend(\sigma_0(y^{\tau\dagger}))(i_0) & \text{otherwise} \\
    \end{cases} \\
    & {} = \extend(\sigma_0[x^\lintt : [i \mapsto 0 : i \in A_0]](y^{\tau\dagger}))(i_0) \\
    & {} = \extend(\sigma_{(0,0)}'(y^{\tau\dagger}))(i_0).
  \end{align*}
  The second equality holds by \Cref{lem:update-renaming-extend-appendix} because $i_1' \in A_1'{\uparrow}$.
  The third equality also holds by \Cref{lem:update-renaming-extend-appendix} because
  \begin{align*}
    i_1' = i_1 \concat [(\alpha,0)] = \rho_1'(i_1).
  \end{align*}
  The fourth equality holds because $\sigma_{(1,k-1)}'' =_{(A_1'{\uparrow})^c} \sigma_{(1,k-1)}$
  by \Cref{lem:updated-indices-cmd} and $i_1 \notin A_1'{\uparrow}$.
  The fifth equality holds because
  \begin{align*}
  i_0 \, [R^{(-1)}] \, i_1
  \quad\text{and}\quad
  \sigma_0 \, [\State_\mathit{tgt}(R^{(-1)})] \, \sigma_{(1,k-1)}''.
  \end{align*}
  The sixth equality holds by \Cref{lem:update-renaming-extend-appendix} because $i_0 \in A_0{\uparrow}$. 
  If $\ell \ge 1$, then
  \begin{align*}
    \extend(\sigma_{(1,k)}''(y^{\tau\dagger}))(i_1')
    & {} = \extend(\sigma_{(1,k-1)}\langle\rho_1'\rangle[x^{\lintt} : [i \mapsto i(\alpha) : i \in A_1']](y^{\tau\dagger}))(i_1') \\
    & {} =
    \begin{cases}
      \extend([i \mapsto i(\alpha) : i \in A_1'])(i_1') & \text{if $y^{\tau\dagger} \equiv x^\lintt$} \\
      \extend(\sigma_{(1,k-1)}\langle\rho_1'\rangle(y^{\tau\dagger}))(i_1') & \text{otherwise}
    \end{cases} \\
    & {} =
    \begin{cases}
      \ell & \text{if $y^{\tau\dagger} \equiv x^\lintt$} \\
      \extend(\sigma_{(1,k-1)}(y^{\tau\dagger}))(i_1 \concat [(\alpha,\ell-1)]) & \text{otherwise}
    \end{cases} \\
    & {} =
    \begin{cases}
      \ell & \text{if $y^{\tau\dagger} \equiv x^\lintt$} \\
      \extend(\sigma_{(0,\ell-1)}(y^{\tau\dagger}))(i_0) & \text{otherwise}
    \end{cases} \\
    & {} = \extend(\sigma_{(0,\ell-1)}[x^\lintt : [i \mapsto \ell : i \in A_0]](y^{\tau\dagger}))(i_0) \\
    & {} = \extend(\sigma_{(0,\ell)}'(y^{\tau\dagger}))(i_0).
  \end{align*}
  The second equality holds by \Cref{lem:update-renaming-extend-appendix} because $i_1' \in A_1'{\uparrow}$.
  The third equality also holds by \Cref{lem:update-renaming-extend-appendix} because
  \begin{align*}
    i_1' = i_1 \concat [(\alpha,\ell)] = \rho_1'(i_1 \concat [(\alpha,\ell-1)]).
  \end{align*}
  The fifth equality also holds by \Cref{lem:update-renaming-extend-appendix} because $i_0 \in A_0{\uparrow}$.
  The only nontrivial part is the fourth equality.
  The fourth equality follows from that
  \[
    i_0 \, [R^{(\ell-1)}] \, (i_1 \concat [(\alpha,\ell-1)]) \quad\text{and}\quad
    \sigma_{(0,\ell-1)} \, [\State_\mathit{tgt}(R^{(\ell-1)})] \, \sigma_{(1,k-1)}.
  \]
  The second relationship holds due to the following reasons:
  \begin{align*}
    C', D, \sigma_{(0,\ell-1)}', A_0 \Downarrowi \sigma_{(0,\ell-1)}, T_{(0,\ell-1)},
    \qquad
    \overline{C'}, D, \sigma_{(1,k-1)}'', A_1' \Downarrowi \sigma_{(1,k-1)}, T_{(1,k-1)},
  \end{align*}
  $\sigma_{(0,\ell-1)}' \, [\State_\mathit{tgt}(R^{(\ell-1)})] \, \sigma_{(1,k-1)}''$,
  and $A_0 \, [\cP(R^{(\ell-1)})] \, A_1'$,
  which are sufficient conditions to apply the induction hypothesis of the theorem on $C'$.

  \paragraph{\bf Third part.}
  Applying $k = n-1$ to the result of the second part, we have
  \[
    \sigma'_{(0,\ell)} \, [\State_\mathit{tgt}(R^{(\ell)})] \, \sigma''_{(1,n-1)}
    \quad\text{for all $\ell \in \{0,\ldots,n-1\}$}.
  \] 
  By the definition, we have
  \[
    C', D, \sigma_{(0,\ell)}', A_0 \Downarrowi \sigma_{(0,\ell)}, T_{(0,\ell)},
    \quad\text{and}\quad
    \overline{C'}, D, \sigma_{(1,n-1)}'', A_1' \Downarrowi \sigma_{(1,n-1)}, T_{(1,n-1)}.
  \]
  By the induction hypothesis on $C'$ and the fact $A_0 \, [\cP(R^{(\ell)})] \, A_1'$ from the first part, we have
  \begin{align*}
    \sigma_{(0,\ell)} \, [\State_\mathit{tgt}(R^{(\ell)})] \, \sigma_{(1,n-1)}
    \quad\text{and}\quad
    T_{(0,\ell)} \, [\Tensor_\R(R^{(\ell)})] \, T_{(1,n-1)} \quad\text{for all $\ell \in \{0,\ldots,n-1\}$}.
  \end{align*}
  In particular, $\ell = n-1$ gives $\sigma_{(0,n-1)} \, [\State_\mathit{tgt}(R^{(n-1)})] \, \sigma_{(1,n-1)}$.
  By using this, we want to show
  \[
    \sigma_0' \, [\State_\mathit{tgt}(R)] \, \sigma_1'.
  \]
  Recall the definitions of $\sigma_0'$ and $\sigma_1'$:
  \[
    \sigma_0' = \sigma_{(0,n-1)} \quad\text{and}\quad \sigma_1' = \sigma_{(1,n-1)}\langle\rho_1\rangle.
  \]
  Let $y^{\tau\dagger}$ be a variable and $i_0, i_1$ be indices such that $i_0 \, [R] \, i_1$.
  We first assume $i_0 \in A_0{\uparrow}$. Then, $A_0 \, [\cP(R)] \, A_1$ implies $i_1 \in A_1{\uparrow}$.
  This also implies that
  \begin{align*}
    \extend(\sigma_0'(y^{\tau\dagger}))(i_0)
    & {} =
    \extend(\sigma_0'(y^{\tau\dagger}))(\max(A_0 \cap i_0{\downarrow})) \\
    & {} =
    \extend(\sigma_{(0,n-1)}(y^{\tau\dagger}))(\max(A_0 \cap i_0{\downarrow})) \\
    & {} =
    \extend(\sigma_{(1,n-1)}(y^{\tau\dagger}))(\max(A_1 \cap i_1{\downarrow}) \concat [(\alpha,n-1)]) \\
    & {} =
    \extend(\sigma_{(1,n-1)}\langle\rho_1\rangle(y^{\tau\dagger}))(\max(A_1 \cap i_1{\downarrow})) \\
    & {} =
    \extend(\sigma_1'(y^{\tau\dagger}))(i_1).
  \end{align*}
  The first and the last equalities hold because
  $\sigma_0'$ is an $A_0$-state and $\sigma_1'$ is an $A_1$-state by \Cref{lem:preservation-L-states}.
  The third equality holds because
  \begin{align*}
    A_0 \, [\cP(R)] \, A_1
    & {} \implies
    \max(A_0 \cap i_0{\downarrow}) \, [R] \, \max(A_1 \cap i_1{\downarrow}) \\
    & {} \implies
    \max(A_0 \cap i_0{\downarrow}) \, [R^{(-1)}] \, \max(A_1 \cap i_1{\downarrow}) \\
    & {} \implies
    \max(A_0 \cap i_0{\downarrow}) \, [R^{(n-1)}] \, \max(A_1 \cap i_1{\downarrow}) \concat [(\alpha,n-1)].
  \end{align*}
  and $\sigma_{(0,n-1)} \, [\State_\mathit{tgt}(R^{(n-1)})] \, \sigma_{(1,n-1)}$.
  The fourth equality holds by \Cref{lem:update-renaming-extend-appendix} because
  \[
    \rho_1(\max(A_1 \cap i_1{\downarrow}) \concat [(\alpha,n-1)]) = \max(A_1 \cap i_1{\downarrow}).
  \]
  Now, we assume $i_0 \notin A_0{\uparrow}$ and $i_1 \notin A_1{\uparrow}$.
  Then, by \Cref{lem:updated-indices-cmd}, we have
  \begin{align*}
  \sigma_0 =_{(A_0{\uparrow})^c} \sigma_0',
  \quad\text{and}\quad
  \sigma_1 =_{(A_1{\uparrow})^c} \sigma_1'.
  \end{align*}
  This implies that
  \begin{align*}
    \extend(\sigma_0'(y^{\tau\dagger}))(i_0)
    & {} =
    \extend(\sigma_0(y^{\tau\dagger}))(i_0) \\
    & {} =
    \extend(\sigma_1(y^{\tau\dagger}))(i_1) \\
    & {} =
    \extend(\sigma_1'(y^{\tau\dagger}))(i_1).
  \end{align*}
  The second equality holds because
  \begin{align*}
  \sigma_0 \, [\State_\mathit{tgt}(R)] \, \sigma_1
  \quad\text{and}\quad
  i_0 \, [R] \, i_1.
  \end{align*}
  It remains to show the other conclusion of the theorem about $T_0$ and $T_1$:
  \[
    T_0 \, [\Tensor_\R(R)] \, T_1.
  \]
  Note that $\dom(T_0) = A_0$ and $\dom(T_1) = A_1$ by \Cref{lem:updated-indices-cmd}.
  For all $i_0 \in A_0$ and $i_1 \in A_1$, if $i_0 \, [R] \, i_1$, then
  \begin{align*}
    T_0(i_0)
    = \sum_{\ell = 0}^{n-1} T_{(0,\ell)}(i_0)
    = \sum_{\ell = 0}^{n-1} T_{(1,n-1)}(i_1 \concat [(\alpha,\ell)])
    = T_1(i_1),
  \end{align*}
  where the second equality holds since $T_{(0,\ell)} \, [\Tensor_\R(R^{(\ell)})] \, T_{(1,n-1)}$
  and $i_0 \, [R^{(\ell)}] \, (i_1 \concat [(\alpha,\ell)])$ for all $\ell \in \{0,\ldots,n-1\}$.
  \end{itemize}
\end{proof}
\subsection{Soundness of the Parallelising Translation from Source to Target Language}
\label{subsec:parallelising-translation}
In this subsection, we show the soundness of our parallelising translation.
The following lemma states that the command $C$ in the source language
and its parallelising translation $\overline{C}$ in the target language always successfully terminate
where the latter one is run with an antichain $\{[]\}$.

\begin{lemma}
  \label{lem:source-translation-no-error}
  Let $C$ be a command in the source language, and $\overline{C}$ be the parallelising translation of $C$ in the target language.
  Let $D \in \RDB$, $\sigma \in \State_\mathit{src}$, $\overline{\sigma} \in \State_\mathit{tgt}$, and $A \in \FAntiChain$ such that
  \begin{align*}
  \mathcal{S}(\overline{C}) \cap \dom(i) = \emptyset
  \text{ for all $i \in A$}.
  \end{align*}
  Then, there exist $\sigma' \in \State_\mathit{src}$, $\overline{\sigma}' \in \State_\mathit{tgt}$,
  $r \in \R$, and $T \in \Tensor_\R$ such that
  \begin{align*}
    C, D, \sigma \Downarrows \sigma', r
    \quad\text{and}\quad
    \overline{C}, D, \overline{\sigma}, A \Downarrowt \overline{\sigma}', T.
  \end{align*}
\end{lemma}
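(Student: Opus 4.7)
I will prove both claims by structural induction on the source command $C$. The source-language termination, $C, D, \sigma \Downarrows \sigma', r$, is routine because the source language contains no unbounded loops, so every syntactic constructor is terminating in one step from the termination of its immediate subcommands. The real work is establishing that $\overline{C}$ terminates in the target language without ever getting stuck at an $\code{extend\_index}$ or $\code{lookup\_index}$ that violates its side condition. Rather than proving termination directly for $\Downarrowt$, I would work with $\Downarrowi$ and then transfer via \Cref{thm:target-intermediate-equivalence}; this avoids reasoning about the fixed-point check, which can only cause the loop to stop earlier than its $\Downarrowi$-counterpart.

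The atomic cases, the sequential composition, and the conditional are straightforward: for $\overline{C_1;C_2}$ I apply the IH to $\overline{C_1}$, then to $\overline{C_2}$ on the resulting state with the same antichain $A$; for $\overline{\code{ifz}\,Z\,C_1\,C_2}$ I split $A$ into the two sub-antichains $A_1, A_2 \subseteq A$ determined by $Z$ and note that each still satisfies the freshness hypothesis because $\mathcal{S}(\overline{C_j}) \subseteq \mathcal{S}(\overline{C})$ and $\dom(i)$ does not change. The main case is the for-loop
\[
  \overline{C} \;\equiv\; \code{extend\_index}(\alpha,n)\ \bigl\{ \code{loop\_fixpt\_noacc}(n)\ \{\code{shift}(\alpha);\ x^\lintt := \code{lookup\_index}(\alpha);\ \overline{C_1}\} \bigr\},
\]
where $\alpha$ is chosen fresh for $\overline{C_1}$. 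The hypothesis gives $\alpha \in \mathcal{S}(\overline{C})$ and hence $\alpha \notin \dom(i)$ for every $i \in A$, so $\code{extend\_index}(\alpha,n)$ does not get stuck and produces the antichain $A' = \{ i \concat [(\alpha,k)] : i \in A,\, k \in \{0,\ldots,n{-}1\}\}$. On $A'$, the command $\code{shift}(\alpha)$ never gets stuck, and $\code{lookup\_index}(\alpha)$ succeeds because $\alpha \in \dom(i')$ for every $i' \in A'$ by construction. To apply the IH to $\overline{C_1}$ under $A'$ (possibly starting from intermediate states produced by previous $\code{loop\_fixpt\_noacc}$ iterations), I need $\mathcal{S}(\overline{C_1}) \cap \dom(i') = \emptyset$ for all $i' \in A'$. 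Since $\dom(i') = \dom(i) \cup \{\alpha\}$, this follows from two facts: $\mathcal{S}(\overline{C_1}) \subseteq \mathcal{S}(\overline{C})$ combined with the original hypothesis gives disjointness from $\dom(i)$, while the freshness choice $\alpha \notin \mathcal{S}(\overline{C_1})$ handles the new component. Thus each iteration of the body terminates, and therefore so does the full $\code{loop\_fixpt\_noacc}$; the final $(\_)\langle \rho\rangle$ post-processing in the $\code{extend\_index}$ rule is a pure operation on tensors and cannot fail.

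The main obstacle is the bookkeeping around the freshness condition: one must verify that the hypothesis is exactly strong enough to be reused inductively on $\overline{C_1}$ under the enlarged antichain $A'$, and this relies crucially on the translation's design (the inner $\alpha$ is fresh for $\overline{C_1}$) rather than on any semantic invariant about states. Everything else is either bookkeeping or a direct appeal to the inductive hypothesis; in particular, the choice to work with $\Downarrowi$ first sidesteps any concern about the fixed-point check interfering with termination, and the transfer back to $\Downarrowt$ at the end is immediate from \Cref{thm:target-intermediate-equivalence}.
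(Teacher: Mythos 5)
Your proposal is correct and follows essentially the same route as the paper's proof: structural induction on $C$, with the source-language half being routine and the target-language half hinging on the freshness hypothesis being re-establishable for $\overline{C_1}$ under the extended antichain $A'$ via $\dom(i') = \dom(i) \cup \{\alpha\}$, $\mathcal{S}(\overline{C_1}) \subseteq \mathcal{S}(\overline{C})$, and $\alpha \notin \mathcal{S}(\overline{C_1})$. The only cosmetic difference is that the paper reasons directly with $\Downarrowt$, observing that the premises needed for either $\code{loop\_fixpt\_noacc}$ rule are available, whereas you detour through $\Downarrowi$ and \Cref{thm:target-intermediate-equivalence}; both handle the fixed-point check equally well.
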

\begin{proof}
  The proofs are based on structural induction on $C$.
  \begin{enumerate}
    \item the source language\\
      Since every expression and update on an arbitrary state doesn't occur any error,
      we can always find the proper $\sigma', r$ for every syntax which has a rule with no premises by calculating the result of the rule.
      Therefore, what we need is to prove the case for the rules with premises.
      \begin{enumerate}
        \item $C \equiv (C_1;C_2)$:\\
          By the induction hypothesis applying to $C_1$ and $C_2$, there exist $\sigma_1$, $r_1$ such that $C_1, D, \sigma \Downarrows \sigma_1, r_1$,
          and also there exist $\sigma_2$, $r_2$ such that $C_2, D, \sigma_1 \Downarrows \sigma_2, r_2$.
          Therefore, $(C_1;C_2), D, \sigma \Downarrows \sigma_2, r_1+r_2$.
        \item $C \equiv (\code{ifz}\ Z\ C_1\ C_2)$:\\
          By the induction hypothesis applying to $C_1$ and $C_2$, there exist $\sigma_1$, $r_1$ such that $C_1, D, \sigma \Downarrows \sigma_1, r_1$,
          and also there exist $\sigma_2$, $r_2$ such that $C_2, D, \sigma \Downarrows \sigma_2, r_2$.
          If $\db{Z}\sigma = 0$, we can apply the first rule for $\code{ifz}$, so that $(\code{ifz}\ Z\ C_1\ C_2), D, \sigma \Downarrows \sigma_1, r_1$.
          Otherwise, $\db{Z}\sigma \neq 0$, then we can apply the second rule for $\code{ifz}$, so that $(\code{ifz}\ Z\ C_1\ C_2), D, \sigma \Downarrows \sigma_2, r_2$.
        \item $C \equiv (\code{for}\ x\ \code{in}\ \code{range}(n)\ \code{do}\ C')$:\\
          By the induction hypothesis applying to $C'$, there exist $\sigma_i, r_i$ for $1 <= i <= n$ such that $C', D, \sigma_{i-1}[x \mapsto i-1] \Downarrows \sigma_i,r_i$ for all $i$, where $\sigma_0 = \sigma$.\\
          Therefore, $(\code{for}\ x\ \code{in}\ \code{range}(n)\ \code{do}\ C'), D, \sigma_0 \Downarrows \sigma_n, \sum_{k = 1}^{n} r_k$
      \end{enumerate}
    \item the target language\\
      Similar to the source language, every expression and update on an arbitrary state and an arbitrary tensor are well-defined.
      Therefore, there exist $\overline{\sigma}', T$ for every syntax which has a rule with assignment premises, which only introduce new symbol for the result of operation or update (e.g. $T = \dots$).
      \begin{enumerate}
        \item $C \equiv (C_1;C_2)$:\\
          Since $\mathcal{S}(\overline{C_1}), \mathcal{S}(\overline{C_2}) \subseteq \mathcal{S}(\overline{C})$, we can apply the induction hypothesis to premises.
          Therefore, there exist $\overline{\sigma_1}$, $T_1$ such that $\overline{C_1}, D, \overline{\sigma}, A \Downarrowt \overline{\sigma_1}, T_1$,
          and also there exist $\overline{\sigma_2}$, $T_2$ such that $\overline{C_2}, D, \overline{\sigma_1}, A \Downarrowt \overline{\sigma_2}, T_2$.
          Therefore, $\overline{(C_1;C_2)}, D, \overline{\sigma}, A \Downarrowt \overline{\sigma_2}, (T_1 \oplus T_2)$.
        \item $C \equiv (\code{ifz}\ Z\ C_1\ C_2)$:\\
          Since $A_1, A_2 \subseteq A$ and $\mathcal{S}(\overline{C_1}), \mathcal{S}(\overline{C_2}) \subseteq \mathcal{S}(\overline{C})$, we can apply the induction hypothesis to premises.
          Therefore, there exist $\overline{\sigma_1}$, $T_1$ such that $\overline{C_1}, D, \overline{\sigma}, A_1 \Downarrowt \overline{\sigma_1}, T_1$,
          and also there exist $\overline{\sigma_2}$, $T_2$ such that $\overline{C_2}, D, \overline{\sigma_1}, A_2 \Downarrowt \overline{\sigma_2}, T_2$.
          Therefore, $\overline{(\code{ifz}\ Z\ C_1\ C_2)}, D, \overline{\sigma}, A \Downarrowt \overline{\sigma_2}, (T_1 \oplus T_2)$.
        \item $C \equiv (\code{for}\ x\ \code{in}\ \code{range}(n)\ \code{do}\ C_1)$:\\
          To simplify the proof, we introduce sub-commands of the translated code as follows:
          \begin{align*}
            \overline{C_2} & \defeq x^{\intt\dagger} := \code{lookup\_index}(\alpha); \; \overline{C_1}\\
            \overline{C_3} & \defeq \code{shift}(\alpha); \; \overline{C_2} \\
            \overline{C_4} & \defeq \code{loop\_fixpt\_noacc}(n) \{ \; \overline{C_3} \; \}.
          \end{align*}
          Then, $\overline{C} \equiv \code{extend\_index}(\alpha,n) \{ \; \overline{C_4} \; \}$.\\
          We will prove the whole part with top-down approaches.
          Each part is a proof for a statement which has a form as follows:\\
          For all $\overline{\sigma} \in \State_\mathit{tgt}$,
          there exist $\overline{\sigma}', T$ such that $\hat{C}, D, \overline{\sigma}, \hat{A} \Downarrowt \overline{\sigma}', T$,\\
          where $\hat{C}, \hat{A}$ are fixed to specific values.
          \begin{enumerate}
            \item $\hat{C} \equiv (\code{extend\_index}(\alpha,n) \{ \; \overline{C_4} \; \})$ and $\hat{A} \equiv A$ \quad (the target statement)\\
              Since $\mathcal{S}(\overline{C}) \cap \dom(i) = \emptyset$ for all $i \in A$, $\alpha \notin \dom(i)$ for all $i \in A$.
              Therefore, $A'$ and $\rho$ which are introduced in the inference rule of $\code{extend\_index}$ can be defined from $A$.
              Finally, if there exist $\overline{\sigma_4}$ and $T_4$ such that $\overline{C_4}, D, \overline{\sigma}, A' \Downarrowt \overline{\sigma_4}, T_4$,
              we can apply the inference rule of $\code{extend\_index}$ to $\overline{C}$,
              and it results as follows: $(\code{extend\_index}(\alpha,n_+)\ C), D, \overline{\sigma}, A \Downarrowt \overline{\sigma_4}\langle\rho\rangle, T_4'$, where $T_4' = [i \mapsto \sum_{0 \leq k < n_+} T_4(i \concat [(\alpha,k)]) : i \in A]$.
              Thus, what we need to do is to find such $\overline{\sigma_4}$ and $T_4$, and it can be achieved from the next statement.
            \item $\hat{C} \equiv (\code{loop\_fixpt\_noacc}(n) \{ \; \overline{C_3} \; \})$ and $\hat{A} \equiv A'$\\
              There exist two rules for $\code{loop\_fixpt\_noacc}(n)$, but premises of both rules can be derived from the common premises as follows:
              \[
                \sigma_0 = \overline{\sigma} \quad \overline{C_3}, D, \sigma_k, A' \Downarrowt \sigma_{k+1}, T_{k+1} \text{ for all $k \in \{0,\ldots,{n_+}-1\}$}
              \]
              If all states are different with each other, it satisfies the second rule, otherwise it satisfies the first rule.
              Moreover, this common premises can be satisfied from the next statement.
            \item $\hat{C} \equiv (\code{shift}(\alpha); \; \overline{C_2})$ and $\hat{A} \equiv A'$\\
              From the inference rule of $\code{shift}$, $\code{shift}(\alpha), D, \overline{\sigma}, A' \Downarrowt \overline{\sigma}\langle\rho\rangle', T^z_A$, where $\rho$ is from the premise of the rule.
              Then, we need to construct a premise for $\overline{C_2}$ for sequential composition, and it follows from the next statement.
            \item $\hat{C} \equiv (x^{\intt\dagger} := \code{lookup\_index}(\alpha)); \; \overline{C_1}$ and $\hat{A} \equiv A'$\\
              From the definition of $A'$, for any $i' \in A'$, there exist $i \in A, k \in \{0,\ldots,n_+{-}1\}$ such that $i' = i \concat [(\alpha,k)]$.
              Therefore, we can apply the rule of $\code{lookup\_index}$ to the first command.
              Moreover, $\mathcal{S}(\overline{C_1}) \cap \dom(i) \subseteq \mathcal{S}(\overline{C_1}) \cap (\dom(i')\cup\{\alpha\}) = (\mathcal{S}(\overline{C_1}) \cap \dom(i')) \cup (\mathcal{S}(\overline{C_1}) \cap \{\alpha\})$.
              Then, since $\alpha \notin \mathcal{S}(\overline{C_1})$ and $\mathcal{S}(\overline{C_1}) \subseteq \mathcal{S}(\overline{C})$, $\mathcal{S}(\overline{C_1}) \cap \dom(i) = \emptyset$.
              Thus, $A'$ satisfies the condition of induction hypothesis for $\overline{C_1}$, and finally we can execute $\overline{C_1}$ by applying the hypothesis.
          \end{enumerate}
      \end{enumerate}
  \end{enumerate} 
 
\end{proof}

Combining \Cref{thm:target-intermediate-equivalence},
\Cref{prop:sem-equiv}, \Cref{thm:soundness-translation}, and \Cref{lem:source-translation-no-error},
we show the following theorem, which describes a sense in which the parallelising translation of
$C$ in the source language to $\overline{C}$ in the target language preserves the semantics.

\begin{theorem}
  \label{thm:soundness-parallelising-translation}
  Let $C$ be a command in the source language, and $\overline{C}$ be the parallelising translation of $C$ in the target language.
  Let $D \in \RDB$, $\sigma_0 \in \State_{\mathit{src}}$, and $\sigma_1 \in \State_{\mathit{tgt}}$ such that
  \begin{align*}
    \sigma_0 \sim \sigma_1
    \quad\text{and}\quad
    \sigma_1 \text{ is a $\{[]\}$-state.}
  \end{align*}
  Then, there exist $\sigma_0' \in \State_{\mathit{src}}$, $\sigma_1' \in \State_{\mathit{tgt}}$,
  $r \in \R$, and $T \in \Tensor_\R$ such that
  \[
  C, D, \sigma_0 \Downarrows \sigma_0', r,
  \qquad
  \overline{C}, D, \sigma_1, \{[]\} \Downarrowt \sigma_1', T,
  \qquad
  \sigma_0' \sim \sigma_1',
  \quad\text{and}\quad
  r \sim T.
  \]
\end{theorem}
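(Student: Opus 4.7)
The plan is to prove the theorem by chaining together the four main results developed earlier in the appendix: \Cref{lem:source-translation-no-error} (termination), \Cref{prop:sem-equiv} (source-to-intermediate embedding preserves semantics), \Cref{thm:soundness-translation} (for-loop optimisation within the target language under $\Downarrowi$ is sound), and \Cref{thm:target-intermediate-equivalence} (equivalence of $\Downarrowt$ and $\Downarrowi$). The key idea is to factor our translation $\overline{C}$ through the simple embedding $C'$ of $C$ that merely relabels variable types, using $C'$ as a bridge.

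First, I would apply \Cref{lem:source-translation-no-error} to the antichain $A = \{[]\}$ (whose only index has empty string domain, vacuously satisfying the hypothesis) to obtain some $\sigma_0', r$ with $C, D, \sigma_0 \Downarrows \sigma_0', r$ and some $\sigma_1', T$ with $\overline{C}, D, \sigma_1, \{[]\} \Downarrowt \sigma_1', T$. Next, by \Cref{thm:target-intermediate-equivalence}, the latter derivation also holds under $\Downarrowi$. Let $C'$ denote the command obtained from $C$ by replacing each $x^\tau$ with $x^{\tau\dagger}$; since $\sigma_0 \sim \sigma_1$, \Cref{prop:sem-equiv} yields states $\sigma_1''$ and a tensor $T''$ with $C', D, \sigma_1, \{[]\} \Downarrowi \sigma_1'', T''$, together with $\sigma_0' \sim \sigma_1''$ and $r \sim T''$.

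The central step is to apply \Cref{thm:soundness-translation} to compare $C'$ and $\overline{C'}$ (which is literally the same command as $\overline{C}$, because the target-to-target $\overline{-}$ on the image of the source embedding agrees with the source-to-target translation). For this I use the coupling $R = \{([],[])\}$: it trivially satisfies both coupling axioms, and one checks that $\sigma_1 \, [\State_\mathit{tgt}(R)] \, \sigma_1$ and $\{[]\} \, [\cP(R)] \, \{[]\}$ hold reflexively, while $\sigma_1$ being a $\{[]\}$-state is assumed. The theorem then gives
\[
\sigma_1'' \, [\State_\mathit{tgt}(R)] \, \sigma_1'
\qquad\text{and}\qquad
T'' \, [\Tensor_\R(R)] \, T.
\]

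From the first relation, for every variable $x^{\tau\dagger}$ we get $\extend(\sigma_1''(x^{\tau\dagger}))([]) = \extend(\sigma_1'(x^{\tau\dagger}))([])$, which combined with $\sigma_0' \sim \sigma_1''$ yields $\sigma_0' \sim \sigma_1'$. For the scores, \Cref{lem:updated-indices-cmd} gives $\dom(T'') = \dom(T) = \{[]\}$, so the tensor coupling forces $T''([]) = T([])$, and hence $r = T''([]) = T([]) = \sum_{i \in \dom(T)} T(i)$, i.e.\ $r \sim T$. I expect the main obstacle to be purely notational: verifying that the target-to-target translation $\overline{C'}$ from \Cref{subsec:parallelising-for-loops} produces the very same command as the source-to-target translation $\overline{C}$ from \Cref{subsec:target-translation}, and carefully unfolding the definitions of $\sim$, $\State_\mathit{tgt}(R)$, and $\Tensor_\R(R)$ at the trivial coupling $R = \{([],[])\}$. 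Once these identifications are made, all remaining steps are purely mechanical applications of results already established.
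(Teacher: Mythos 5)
Your proposal is correct and follows essentially the same route as the paper's proof: both factor $\overline{C}$ through the simple embedding $C'$, chain \Cref{lem:source-translation-no-error}, \Cref{prop:sem-equiv}, \Cref{thm:soundness-translation}, and \Cref{thm:target-intermediate-equivalence}, and note that $\overline{C}\equiv\overline{C'}$. The only (immaterial) difference is that the paper instantiates \Cref{thm:soundness-translation} with the full diagonal coupling $R=\Delta_{\Index}$, obtaining $\sigma_1''=_{\Index}\sigma_1'$ and $T''=T$ outright, whereas you use the singleton coupling $\{([],[])\}$ and recover the same conclusions at the empty index via \Cref{lem:updated-indices-cmd}; both suffice since $\sim$ only inspects the empty index.
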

\begin{proof}
  By \Cref{lem:source-translation-no-error}, there exist $\sigma_0' \in \State_{\mathit{src}}$ and $r \in \R$ such that
  \[
  C, D, \sigma_0 \Downarrows \sigma_0', r.
  \]
  Let $C'$ be the command in the target language obtained from $C$
  by replacing all variables $x^\lintt$ with $x^{\lintt\dagger}$.
  Then, $\overline{C} \equiv \overline{C'}$.
  By \Cref{prop:sem-equiv}, since $\sigma_0 \sim \sigma_1$,
  there exist $\sigma_1'' \in \State_{\mathit{tgt}}$ and $T' \in \Tensor_\R$ such that
  \begin{align}
  \label{eqn:translation-soundness-1}
  C', D, \sigma_1, \{[]\} \Downarrowi \sigma_1'', T',
  \qquad
  \sigma_0' \sim \sigma_1'',
  \quad\text{and}\quad
  r \sim T'.
  \end{align}
  Again, by \Cref{lem:source-translation-no-error},
  there exist $\sigma_1' \in \State_{\mathit{tgt}}$ and $T \in \Tensor_\R$ such that
  \begin{align}
  \label{eqn:translation-soundness-2}
  \overline{C'}, D, \sigma_1, \{[]\} \Downarrowi \sigma_1', T.
  \end{align}
  Now, consider \Cref{thm:soundness-translation} where $R = \Delta_\Index$, and $A_0 = A_1 = \{[]\}$,
  and $\sigma_0 = \sigma_1$, and $\sigma_0' = \sigma_1''$, and $T_0 = T'$, and $T_1 = T$.
  Then,\Cref{eqn:translation-soundness-1} and \Cref{eqn:translation-soundness-2} imply that
  \begin{align*}
  \sigma_1'' =_\Index \sigma_1'
  \quad\text{and}\quad
  T' = T.
  \end{align*}
  Also, we conclude that $\sigma_0' \sim \sigma_1'$ and $r \sim T$, as desired.
  Finally, by \Cref{thm:target-intermediate-equivalence} and $\overline{C} \equiv \overline{C'}$,
  \Cref{eqn:translation-soundness-2} is equivalent to
  \begin{align*}
  \overline{C}, D, \sigma_1, \{[]\} \Downarrowt \sigma_1', T.
  \end{align*}
\end{proof}

\begin{corollary}
  \label{cor:main-theorem-soundness}
  Let $C$ be a command in the source language, and $\overline{C}$ be the parallelising translation of $C$ in the target language.
  Let $D \in \RDB$, $\sigma_0 \in \State_{\mathit{src}}$, and $\sigma_1 \in \State_{\mathit{tgt}}$. If  
  \[
      \sigma_1(x^{\tau\dagger}) = 
      [[] \mapsto \sigma_0(x^\tau)]
      \ \text{ for all variables $x^\tau$ in the source language,}
  \]
  then there exist $\sigma_0' \in \State_{\mathit{src}}$, $\sigma_1' \in \State_{\mathit{tgt}}$,
  and $r \in \R$ such that 
  \begin{align*}
    &
    C, D, \sigma_0 \Downarrows \sigma_0', r,
    \quad
    \overline{C}, D, \sigma_1, \{[]\} \Downarrowt \sigma_1', [[] \mapsto r],
    \quad\text{and}
    \quad
    \sigma'_1(x^{\tau\dagger}) = 
    [[] \mapsto \sigma'_0(x^\tau)]
    \ \text{ for all $x^\tau$.}
  \end{align*} 
\end{corollary}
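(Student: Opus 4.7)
}
My plan is to factor the translation into two stages and prove preservation of semantics for each stage separately, bridging them through a more tractable ``intermediate'' semantics for the target language. First, I will introduce a relation $\Downarrowi$ that agrees with $\Downarrowt$ on every construct except $\code{loop\_fixpt\_noacc}$, where it always runs $n_+$ iterations rather than stopping at the first fixed point. By the monotonicity of the last-iteration score discipline (the result depends only on the last body execution, and once a fixed point is reached re-running the body neither changes the state nor the score), a routine induction on derivation depth shows $C, D, \sigma, A \Downarrowt \sigma', T \iff C, D, \sigma, A \Downarrowi \sigma', T$. Working with $\Downarrowi$ removes the awkward case analysis on the fixed-point check and will let the main invariant be stated as an equality between tuples of sequential executions.

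Next, I would factor $\overline{C}$ as $\overline{C} = \mathit{trans}(C')$ where $C'$ is the ``lifted-type'' embedding of $C$ (replacing each $x^\tau$ by $x^{\tau\dagger}$, leaving the structure intact), and $\mathit{trans}$ is the translation internal to the target language that rewrites only the for-loops. For the embedding step, I would define the simple state relation $\sigma_0 \sim \sigma_1 \iff \sigma_0(x^\tau) = \extend(\sigma_1(x^{\tau\dagger}))([])$ for all $x^\tau$, together with the matching real-to-tensor relation $r \sim T \iff r = \sum_{i \in \dom(T)} T(i)$; then, by structural induction on $C$, simultaneously prove both directions of the bisimulation under the singleton antichain $\{[]\}$, using \Cref{lem:emp-identity} to handle the inactive branch in $\code{ifz}$ and straightforward unrolling for $\code{for}$.

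The substantive step is the intra-target translation. Here I would introduce \emph{couplings} $R \subseteq \Index \times \Index$ (partial bijections that agree on shared strings) and the induced relations $\State_\mathit{tgt}(R)$, $\Tensor(R)$, $\WTensor(R)$, $\cP(R)$ on states, tensors and index sets. The goal is a simulation theorem: if $C, D, \sigma_0, A_0 \Downarrowi \sigma_0', T_0$ and $\mathit{trans}(C), D, \sigma_1, A_1 \Downarrowi \sigma_1', T_1$, with $\sigma_0$ an $A_0$-state, $\sigma_1$ an $A_1$-state, $\sigma_0 \, [\State_\mathit{tgt}(R)] \, \sigma_1$, and $A_0 \, [\cP(R)] \, A_1$, then $\sigma_0' \, [\State_\mathit{tgt}(R)] \, \sigma_1'$ and $T_0 \, [\Tensor_\R(R)] \, T_1$. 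The atomic, sequencing and $\code{ifz}$ cases follow from $R$-preservation of expression evaluation and from standard facts about $\extend$, update and copy operations (\Cref{lem:update-renaming-extend}, \Cref{lem:state-R-tensor-update}). For termination I would prove an auxiliary lemma that every well-typed command, run on a state compatible with its antichain, always terminates successfully, so that the existential claims in the theorem are genuinely inhabited.

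The main obstacle will be the for-loop case. There, a single for-loop in the source runs $n$ iterations sequentially on antichain $A_0$, while in the target it becomes $\code{loop\_fixpt\_noacc}(n)$ over an extended antichain $A_1' = \{i \concat [(\alpha,k)] : i \in A_1,\, 0 \le k < n\}$, with each iteration performing $\code{shift}(\alpha); x := \code{lookup\_index}(\alpha); \overline{C_1}$. I would define a family of couplings $R^{(k)}$ linking each index $i_0 \in A_0$ to $i_1 \concat [(\alpha,k)]$, and show by an inductive argument on the outer iteration count $k$ that after $k$ rounds of the $\code{loop\_fixpt\_noacc}$ body, the state at slice $\ell \le k$ simulates the state of the source after $\ell$ iterations. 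The key lemma is that $\code{shift}(\alpha)$ moves the $\ell$-th slice into the starting slot of the $(\ell{+}1)$-th slice, so the speculative starting state for iteration $\ell+1$ becomes the genuine output of iteration $\ell$ at the next outer round; a further $\cP(R^{(k)})$-preservation argument establishes the antichain relation needed to reapply the induction hypothesis on $C_1$. Finally, using $\extend\_index$'s post-processing to collapse slice $n{-}1$ back to $i \in A_1$ and to sum slice scores, I will obtain the simulation for the full for-loop. Composing the three preservation results (bisimulation for embedding, simulation for $\mathit{trans}$, equivalence of $\Downarrowi$ and $\Downarrowt$) with $R = \Delta_\Index$ and $A_0 = A_1 = \{[]\}$ yields the theorem.
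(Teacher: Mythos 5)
Your proposal is correct and follows essentially the same route as the paper's proof: the same three-stage decomposition (equivalence of $\Downarrowt$ with a no-early-stopping relation $\Downarrowi$, a bisimulation for the type-lifting embedding via $\sim$, and a coupling-based simulation for the intra-target for-loop rewriting with the family $R^{(k)}$ indexed by loop slices), combined with the same termination lemma and the final instantiation $R = \Delta_{\Index}$, $A_0 = A_1 = \{[]\}$. No substantive differences to report.
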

\begin{proof}
  This is a direct consequence of \Cref{thm:soundness-parallelising-translation} by the definition of $\sigma_0 \sim \sigma_1$
  and $r \sim T$ and being a $\{[]\}$-state.
\end{proof}

\clearpage
\section{Benchmark Structures and Method Applicability}
\label{sec:benchmarks-methods}
In this section, we describe the structure of the benchmarks and the applicability of each method, as evaluated in \Cref{sec:eval}.
\Cref{table:model-info} shows the lengths of the loops in each model, and the averaged number of innermost iterations executed to compute the score when each method is applied.
Each method targets different classes of loop structures, which determines its applicability to the benchmarks.

The \mseq method does not apply any vectorisation and executes all iterations sequentially.
Although it is applicable to all benchmarks,
it performs inefficiently due to the large number of innermost iterations in all benchmarks.

The \mplate method applies \textit{plate constructors} to loop levels that do not contain data dependencies and if-then-else statements.
These loop levels are highlighted in light grey in \Cref{table:model-info},
and appear in all benchmarks except \texttt{arm} and \texttt{tcm}.

The \mvmarkov method builds on \mplate by also supporting \textit{vectorised Markov constructors}.
These can handle loop levels that may have data dependencies but are not nested and do not include if-then-else branches.
These loop levels are highlighted in dark grey in \Cref{table:model-info}.
As a result, \mvmarkov is applicable to the \texttt{hmm-*}, \texttt{dmm}, and \texttt{arm} benchmarks.

The \mdiscHMM method combines plate constructors and \textit{discrete HMM constructors},
which can vectorise loop levels corresponding to first-order discrete HMMs, as long as they are not nested.
Consequently, \mdiscHMM is only applicable to the \texttt{hmm-ord1} and \texttt{hmm-neural} benchmarks.

\rshl{The \mhand method manually re-writes vectorised code for all loops in the model.
This method requires specialised expertise in the internal mechanisms of the specific PPL being used,
along with the ability to manipulate tensor-based operations, which can be complex and error-prone (e.g., shape management).
For example, in our experiments using Pyro, which does not provide low-level primitives like $\code{fetch}$ or $\code{score}$ used in our source and target languages,
hand-coded vectorisation necessitated careful manipulation of internal variables used by Pyro's backend algorithms to implement equivalent functionality.
We also consider \mhand inapplicable to \texttt{tcm}, which contains if-then-else statements where the conditions themselves need to be vectorised,
as manually vectorising such conditional constructs is highly non-trivial.}

In contrast, \mours supports vectorisation across all loop structures present in the benchmarks.
It can handle data dependencies, nested structures, and if-then-else statements.
As a result, \mours is applicable to all benchmarks and consistently reduces the number of innermost iterations.

\vspace{10pt}
\begin{table}[h]
\footnotesize
\begin{tabular}{lrrrrrrrc}
\toprule
& \multicolumn{1}{c}{\makecell{loop lengths}} & \multicolumn{1}{c}{\makecell{\# iter of \\ \mseq}}
& \multicolumn{1}{c}{\makecell{\# iter of \\ \mplate}} & \multicolumn{1}{c}{\makecell{\# iter of \\ \mvmarkov}}
& \multicolumn{1}{c}{\makecell{\# iter of \\ \mdiscHMM}} & \multicolumn{1}{c}{\makecell{\rshl{\# iter of} \\ \rshl{\mhand}}}
& \multicolumn{1}{c}{\makecell{\# iter of \\ \mours}} & \multicolumn{1}{c}{\makecell{\rshl{inference} \\ \rshl{engines}}} \\
\midrule
\texttt{hmm-ord1}   & $\colorbox{lightgray!50}{229} \times \colorbox{darkgray!50}{129} \times \colorbox{lightgray!50}{51}$ & 1,506,591 & 129   & 3    & 1 & \rshl{1} & 2    & \rshl{MAP / MCMC} \\
\texttt{hmm-neural} & $\colorbox{lightgray!50}{229} \times \colorbox{darkgray!50}{129}$                                    & 29,541    & 129   & 3    & 1 & \rshl{1} & 2    & \rshl{MAP} \\
\texttt{hmm-ord2}   & $\colorbox{lightgray!50}{50}  \times \colorbox{darkgray!50}{129} \times \colorbox{lightgray!50}{51}$ & 328,950   & 129   & 5    & - & \rshl{1} & 3    & \rshl{MAP / MCMC} \\
\texttt{dmm}        & $\colorbox{lightgray!50}{229} \times \colorbox{darkgray!50}{129}$                                    & 29,541    & 129   & 3    & - & \rshl{1} & 2    & \rshl{SVI} \\
\texttt{nhmm-train} & $\colorbox{lightgray!50}{5}   \times 12 \times {31} \times {24}$                                     & 44,640    & 8,928 & -    & - & \rshl{1} & 8    & \rshl{MAP / MCMC} \\
\texttt{nhmm-stock} & $\colorbox{lightgray!50}{10}  \times 10 \times {4} \times {64}$                                      & 25,600    & 2,560 & -    & - & \rshl{1} & 8    & \rshl{MAP / MCMC} \\
\texttt{arm}        & $\colorbox{darkgray!50}{2000}$                                                                       & 2,000     & -     & 19.2 & - & \rshl{1} & 10.1 & \rshl{SVI / MCMC} \\
\texttt{tcm}        & $10 \times 200$                                                                                      & 2,000     & -     & -    & - & \rshl{-} & 2    & \rshl{SVI / MCMC} \\
\bottomrule
\end{tabular}
\caption{Lengths of loops in each benchmark, and averaged number of innermost iterations executed to compute the score for each method,
\rshl{and the inference engines used in the experiments.}
The averaged number of iterations for \texttt{arm} may not be an integer due to the randomly sampled order.
The loop levels highlighted in light grey are applicable to plate constructors, and those in dark grey are applicable to vectorised Markov constructors.
``-'' indicates that the corresponding method is not applicable to the benchmark.}
\vspace{-2em}
\label{table:model-info}
\end{table}

\clearpage
\definecolor{codegray}{rgb}{0.5,0.5,0.5}
\definecolor{codegreen}{HTML}{6A9955}
\definecolor{codebackground}{rgb}{0.95,0.95,0.95}

\lstset{
    language=Python,
    backgroundcolor=\color{codebackground},
    commentstyle=\color{codegray},
    keywordstyle=\color{blue},
    morekeywords={with},
    % numberstyle=\tiny\color{codegray},
    stringstyle=\color{purple},
    basicstyle=\ttfamily\footnotesize,
    breaklines=true,
    frame=single,
    frameround=tttt,
    % numbers=left,
    numbersep=5pt,
    showspaces=false,
    showtabs=false,
    tabsize=2,
    emph={hmm_neural_seq, hmm_neural_plate, hmm_neural_vmarkov, hmm_neural_discHMM, hmm_neural_hand, hmm_neural_ours, MAP_step,
          sample, Dirichlet, mask, Categorical, to_event, eye, arange, unsqueeze, repeat, zeros, log, Bernoulli, 
          cat, shape, plate, vectorized_markov, where, DiscreteHMM, ones, Index, State, @vec, vec, vectorize, range,
          trace, get_trace, compute_log_prob, get_params, Adam, Data, TonesGeneratorNN},
    emphstyle=\color{codegreen},
    alsoletter={@},
}

\rshl{
\section{Code Snippet of Benchmarks}
\label{sec:code}
In this section, we provide code snippets of the \texttt{hmm-neural} benchmark using different vectorisation strategies.
Some components of the code (e.g., library imports, neural network definitions) are omitted for brevity.
The complete implementation of the benchmark is available at \url{https://github.com/Lim-Sangho/auto-vectorise-ppl.public}.

\Cref{fig:hmm-neural-seq} illustrates the baseline implementation without vectorisation (\mseq).
This version defines a hidden Markov model (HMM) with neural network emissions (\texttt{tones\_generator}),
which observes sequences of piano tones (\texttt{sequences}).
It uses two nested loops: the outer loop iterates over sequences of music in a batch,
and the inner loop iterates over the time steps within each sequence.
Each \texttt{sample} statement reads the values from the random database using variable names formatted as
\texttt{f"x\_\{i\}\_\{t\}"} and \texttt{f"y\_\{i\}\_\{t\}"},
where \texttt{i} and \texttt{t} denote the sequence and time step indices, respectively.
The first \texttt{sample} statement includes the argument \texttt{infer = \{"enumerate": "parallel"\}},
which instructs Pyro to perform parallel enumeration and variable elimination over the discrete latent variable.

\Cref{fig:hmm-neural-ours} shows our vectorised immplementation (\mours),
which employs the \texttt{@vec.vectorize} decorator and the \texttt{vec.range} constructor
to parallelise the loops while preserving the original sequential structure.
This version requires an additional argument, \texttt{s} of type \texttt{State}, to the model
for managing variable read and write operations,
and an \texttt{Index} operator to support $\mathrm{nan}$-mask based indexing,
where $\mathrm{nan}$ values represent empty indices in a partial map in the state.
These additions are orthogonal to the model structure and independent of tensor shapes or dimensionality,
allowing users to mitigate manual tensor manipulations
typically required in many existing vectorisation approaches.

\Cref{fig:hmm-neural-plate,fig:hmm-neural-vmarkov} present the implementations using the \emph{plate constructors} (\mplate),
and the combination of \emph{plate} and \emph{vectorised Markov constructors} (\mvmarkov), respectively.
These versions require explicit tensor shape manipulations (e.g., through slice indexing and \texttt{unsqueeze})
to align with the semantics of the respective constructors.
Also, these strategies are only valid when loop iterations are conditionally independent
or when data dependencies are explicitly specified by the user
(e.g., via the \texttt{history} argument in \texttt{vectorized\_markov}).
In the \texttt{hmm-neural} benchmark, the outer loop can be vectorised using \texttt{plate},
while the inner loop requires \texttt{vectorized\_markov} with an argument \texttt{history=1}
due to the dependencies on previous time steps.

\Cref{fig:hmm-neural-discHMM} shows the version implemented using \emph{plate constructors}
in combination with \emph{discrete HMM constructors} (\mdiscHMM).
This version leverages Pyro's \texttt{DiscreteHMM} distribution,
which is designed for discrete hidden Markov models.
It requires substantial restructuring of the original sequential code,
including explicit tensor shape management through operations such as \texttt{unsqueeze} and \texttt{cat}.

Finally, \Cref{fig:hmm-neural-hand} presents hand-vectorised implementation (\mhand).
Similar to the previous case, this version requires extensive manual modifications,
such as giving a non-standard argument (i.e., \texttt{infer=\{"\_do\_not\_trace": True, "is\_auxiliary": True\}})
to the \texttt{sample} statement to suppress errors caused by duplicate sampling variable names,
and explicitly handling tensor shapes through operations such as \texttt{unsqueeze}, \texttt{cat}, and \texttt{repeat}.
}

\begin{figure}[t]
\lstinputlisting{codes/hmm-neural-seq.py}
\vspace{-10pt}
\caption{\rshl{A code snippet of the \texttt{hmm-neural} benchmark with no vectorisation (\mseq).}}
\vspace{10pt}
\label{fig:hmm-neural-seq}
\end{figure}

\begin{figure}[t]
\lstinputlisting{codes/hmm-neural-ours.py}
\vspace{-10pt}
\caption{\rshl{A code snippet of the \texttt{hmm-neural} benchmark implemented with our proposed vectorisation constructors (\mours).}}
\vspace{10pt}
\label{fig:hmm-neural-ours}
\end{figure}

\begin{figure}[t]
\lstinputlisting{codes/hmm-neural-plate.py}
\vspace{-10pt}
\caption{\rshl{A code snippet of the \texttt{hmm-neural} benchmark implemented with \emph{plate constructors} (\mplate).}}
\vspace{10pt}
\label{fig:hmm-neural-plate}
\end{figure}

\begin{figure}[t]
\lstinputlisting{codes/hmm-neural-vmarkov.py}
\vspace{-10pt}
\caption{\rshl{A code snippet of the \texttt{hmm-neural} benchmark implemented with \emph{plate constructors} and \emph{vectorised Markov constructors} (\mvmarkov).}}
\vspace{10pt}
\label{fig:hmm-neural-vmarkov}
\end{figure}

\begin{figure}[t]
\lstinputlisting{codes/hmm-neural-discHMM.py}
\vspace{-10pt}
\caption{\rshl{A code snippet of the \texttt{hmm-neural} benchmark implemented with \emph{plate constructors} and \emph{discrete HMM constructors} (\mdiscHMM).}}
\vspace{10pt}
\label{fig:hmm-neural-discHMM}
\end{figure}

\begin{figure}[t]
\lstinputlisting{codes/hmm-neural-hand.py}
\vspace{-10pt}
\caption{\rshl{A code snippet of the \texttt{hmm-neural} benchmark implemented with hand-coded vectorisation (\mhand).}}
\vspace{10pt}
\label{fig:hmm-neural-hand}
\end{figure}

\begin{figure}[t]
\lstinputlisting{codes/inference-step.py}
\vspace{-10pt}
\caption{\rshl{A code snippet of a training step for MAP inference used in all versions of the \texttt{hmm-neural} benchmark.}}
\vspace{10pt}
\label{fig:hmm-neural-inference}
\end{figure}

\clearpage
\section{Additional Results from Experimental Evaluation}
\label{sec:eval-appendix}
\rshl{
In this section, we provide detailed results from the experimental evaluation on \Cref{rq:time-memory} in \Cref{sec:eval}.
\begin{itemize}
\item \Cref{table:time-svi} shows the averaged time to compute the sum of scores (score time) and to run the single entire training step (total time) of each model in SVI or MAP inference.
\item \Cref{table:memory-svi} shows the maximum GPU memory usage during the execution of each training step in SVI or MAP inference.
\item \Cref{table:steps-mcmc} shows the averaged time to execute each MCMC step for each model.
\item \Cref{table:memory-mcmc} shows the maximum GPU memory usage during the execution of each MCMC step.
\end{itemize}
All tables also include the ratios over \mours and standard deviations.}

Our results show that reducing the number of innermost iterations leads to greater time savings.
For example, when comparing \mours with \mplate,
in the case of \texttt{hmm-ord2}, \mplate executes 129 innermost iterations,
whereas \mours reduces this to just 3, achieving a $43\times$ reduction in iteration count,
which results in $4.5\times$ speedup in score time and $1.8\times$ speedup in total time
\rshl{during SVI or MAP inference, and $1.8\times$ speedup in time per MCMC step}.
In contrast, for \texttt{nhmm-train}, the number of innermost iterations decreases from 8,928 (\mplate) to 8 (\mours),
resulting in a reduction of over $1,116\times$, leading to significantly greater time savings:
$145.8\times$ speedup in score time and $78.4\times$ speedup in total time
\rshl{during SVI or MAP inference, and $65.6\times$ speedup in time per MCMC step}.

Additionally, we observe that \mours achieves memory reductions in some benchmarks compared to \mseq,
especially in those with relatively low memory demands, such as \texttt{nhmm-stock}, \texttt{arm}, and \texttt{tcm}.
This is because, in these benchmarks, the \mseq method often operates on small-sized vectors within each innermost iteration.
When the GPU allocates memory for these vectors, it tends to allocate many small chunks, which increases memory overhead.
In contrast, \mours allocates larger memory chunks at once due to its vectorised structure, resulting in lower memory overhead.

\begin{table}[t]
\rshl{
\scriptsize
\begin{tabular}{lrrrrrr|rrrrrr}
\toprule
& \multicolumn{6}{c}{SVI/MAP score time (s) ($\downarrow$)} & \multicolumn{6}{c}{SVI/MAP total time (s) ($\downarrow$)} \\
& \multicolumn{1}{c}{\mseq} & \multicolumn{1}{c}{\mplate}
& \multicolumn{1}{c}{\mvmarkov} & \multicolumn{1}{c}{\mdiscHMM}
& \multicolumn{1}{c}{\mhand} & \multicolumn{1}{c}{\mours}
& \multicolumn{1}{c}{\mseq} & \multicolumn{1}{c}{\mplate}
& \multicolumn{1}{c}{\mvmarkov} & \multicolumn{1}{c}{\mdiscHMM}
& \multicolumn{1}{c}{\mhand} & \multicolumn{1}{c}{\mours} \\
\midrule
\texttt{hmm-ord1}   & 669.097 & 0.118  & 0.054 & 0.006 & 0.006 & 0.021 & 1066.060 & 0.192  & 0.083 & 0.018 & 0.035 & 0.075 \\
\texttt{hmm-neural} & 35.715  & 0.158  & 0.056 & 0.010 & 0.009 & 0.028 & 60.474   & 0.275  & 0.082 & 0.027 & 0.042 & 0.062 \\
\texttt{hmm-ord2}   & 131.594 & 0.124  & 0.165 & -     & 0.009 & 0.027 & 205.457  & 0.282  & 0.280 & -     & 0.127 & 0.158 \\
\texttt{dmm}        & 42.092  & 0.199  & 0.076 & -     & 0.004 & 0.016 & 129.339  & 0.957  & 0.524 & -     & 0.427 & 0.496 \\
\texttt{nhmm-train} & 91.287  & 18.523 & -     & -     & 0.015 & 0.127 & 121.993  & 25.021 & -     & -     & 0.198 & 0.319 \\
\texttt{nhmm-stock} & 29.925  & 3.058  & -     & -     & 0.005 & 0.089 & 57.417   & 6.247  & -     & -     & 0.022 & 0.139 \\
\texttt{arm}        & 1.129   & -      & 0.177 & -     & 0.003 & 0.031 & 2.354    & -      & 0.199 & -     & 0.009 & 0.043 \\
\texttt{tcm}        & 1.653   & -      & -     & -     & -     & 0.029 & 3.837    & -      & -     & -     & -     & 0.045 \\
\midrule
& \multicolumn{6}{c}{SVI/MAP score time ratio ($\ \cdot \ /$ \mours) ($\downarrow$)} & \multicolumn{6}{c}{SVI/MAP total time ratio ($\ \cdot \ /$ \mours) ($\downarrow$)} \\
& \multicolumn{1}{c}{\mseq} & \multicolumn{1}{c}{\mplate} & \multicolumn{1}{c}{\mvmarkov}
& \multicolumn{1}{c}{\mdiscHMM} & \multicolumn{1}{c}{\mhand} & \multicolumn{1}{c}{\mours}
& \multicolumn{1}{c}{\mseq} & \multicolumn{1}{c}{\mplate} & \multicolumn{1}{c}{\mvmarkov}
& \multicolumn{1}{c}{\mdiscHMM} & \multicolumn{1}{c}{\mhand} & \multicolumn{1}{c}{\mours} \\
\midrule
\texttt{hmm-ord1}   & 31399.3 & 5.6   & 2.6 & 0.3 & 0.3 & 1 & 14302.9 & 2.6  & 1.1 & 0.2 & 0.5 & 1 \\
\texttt{hmm-neural} & 1291.6  & 5.7   & 2.0 & 0.4 & 0.3 & 1 & 972.4   & 4.4  & 1.3 & 0.4 & 0.7 & 1 \\
\texttt{hmm-ord2}   & 4824.4  & 4.5   & 6.0 & -   & 0.3 & 1 & 1300.4  & 1.8  & 1.8 & -   & 0.8 & 1 \\
\texttt{dmm}        & 2657.3  & 12.6  & 4.8 & -   & 0.3 & 1 & 260.7   & 1.9  & 1.1 & -   & 0.9 & 1 \\
\texttt{nhmm-train} & 718.5   & 145.8 & -   & -   & 0.1 & 1 & 382.2   & 78.4 & -   & -   & 0.6 & 1 \\
\texttt{nhmm-stock} & 337.0   & 34.4  & -   & -   & 0.1 & 1 & 414.1   & 45.1 & -   & -   & 0.2 & 1 \\
\texttt{arm}        & 36.2    & -     & 5.7 & -   & 0.1 & 1 & 54.8    & -    & 4.6 & -   & 0.2 & 1 \\
\texttt{tcm}        & 57.7    & -     & -   & -   & -   & 1 & 85.6    & -    & -   & -   & -   & 1 \\
\midrule
& \multicolumn{6}{c}{SVI/MAP score time std. (s)} & \multicolumn{6}{c}{SVI/MAP total time std. (s)} \\
& \multicolumn{1}{c}{\mseq} & \multicolumn{1}{c}{\mplate}
& \multicolumn{1}{c}{\mvmarkov} & \multicolumn{1}{c}{\mdiscHMM}
& \multicolumn{1}{c}{\mhand} & \multicolumn{1}{c}{\mours}
& \multicolumn{1}{c}{\mseq} & \multicolumn{1}{c}{\mplate}
& \multicolumn{1}{c}{\mvmarkov} & \multicolumn{1}{c}{\mdiscHMM}
& \multicolumn{1}{c}{\mhand} & \multicolumn{1}{c}{\mours} \\
\midrule
\texttt{hmm-ord1}   & 53.727 & 0.018 & 0.004 & 0.000 & 0.000 & 0.001 & 50.168 & 0.018 & 0.005 & 0.001 & 0.001 & 0.002 \\
\texttt{hmm-neural} & 0.387  & 0.005 & 0.004 & 0.000 & 0.000 & 0.001 & 3.242  & 0.019 & 0.005 & 0.001 & 0.001 & 0.002 \\
\texttt{hmm-ord2}   & 1.919  & 0.019 & 0.008 & -     & 0.000 & 0.000 & 2.344  & 0.021 & 0.010 & -     & 0.008 & 0.005 \\
\texttt{dmm}        & 0.555  & 0.031 & 0.002 & -     & 0.000 & 0.000 & 9.477  & 0.046 & 0.009 & -     & 0.004 & 0.005 \\
\texttt{nhmm-train} & 1.036  & 0.223 & -     & -     & 0.000 & 0.001 & 1.132  & 0.234 & -     & -     & 0.003 & 0.003 \\
\texttt{nhmm-stock} & 2.025  & 0.056 & -     & -     & 0.000 & 0.001 & 2.498  & 0.076 & -     & -     & 0.002 & 0.002 \\
\texttt{arm}        & 0.064  & -     & 0.065 & -     & 0.001 & 0.010 & 0.143  & -     & 0.066 & -     & 0.001 & 0.010 \\
\texttt{tcm}        & 0.064  & -     & -     & -     & -     & 0.000 & 0.220  & -     & -     & -     & -     & 0.000 \\
\bottomrule
\end{tabular}
}
\caption{\rshl{Averaged score time (s) and total time (s) for each training step,
their ratios over $\mours$, and standard deviations in SVI or MAP inference.
We report the average over 5 runs $\times$ 9 training steps for \mseq, and 5 runs $\times$ 990 training steps for other methods.
Lower is better.}
}
\label{table:time-svi}
\end{table}

\begin{table}[t]
\rshl{
\small
\begin{tabular}{lrrrrrr}
\toprule
& \multicolumn{6}{c}{SVI/MAP GPU memory usage (GB) ($\downarrow$)} \\
& \multicolumn{1}{c}{\mseq} & \multicolumn{1}{c}{\mplate}
& \multicolumn{1}{c}{\mvmarkov} & \multicolumn{1}{c}{\mdiscHMM}
& \multicolumn{1}{c}{\mhand} & \multicolumn{1}{c}{\mours} \\
\midrule
\texttt{hmm-ord1}   & 5.053 & 0.700 & 1.582 & 0.277 & 0.888 & 1.774 \\
\texttt{hmm-neural} & 0.723 & 0.518 & 1.300 & 0.483 & 1.002 & 1.043 \\
\texttt{hmm-ord2}   & 1.902 & 0.868 & 5.229 & -     & 2.289 & 2.467 \\
\texttt{dmm}        & 0.871 & 0.622 & 1.201 & -     & 0.679 & 0.739 \\
\texttt{nhmm-train} & 3.342 & 2.851 & -     & -     & 3.899 & 3.989 \\
\texttt{nhmm-stock} & 0.260 & 0.057 & -     & -     & 0.039 & 0.040 \\
\texttt{arm}        & 0.015 & -     & 0.011 & -     & 0.000 & 0.002 \\
\texttt{tcm}        & 0.022 & -     & -     & -     & -     & 0.001 \\
\midrule
& \multicolumn{6}{c}{SVI/MAP GPU memory usage ratio ($\ \cdot \ /$ \mours) ($\downarrow$)} \\
& \multicolumn{1}{c}{\mseq} & \multicolumn{1}{c}{\mplate}
& \multicolumn{1}{c}{\mvmarkov} & \multicolumn{1}{c}{\mdiscHMM}
& \multicolumn{1}{c}{\mhand} & \multicolumn{1}{c}{\mours} \\
\midrule
\texttt{hmm-ord1}   & 2.849  & 0.395 & 0.892 & 0.156 & 0.501 & 1 \\
\texttt{hmm-neural} & 0.693  & 0.496 & 1.246 & 0.463 & 0.961 & 1 \\
\texttt{hmm-ord2}   & 0.771  & 0.352 & 2.120 & -     & 0.928 & 1 \\
\texttt{dmm}        & 1.179  & 0.841 & 1.626 & -     & 0.919 & 1 \\
\texttt{nhmm-train} & 0.838  & 0.715 & -     & -     & 0.977 & 1 \\
\texttt{nhmm-stock} & 6.487  & 1.420 & -     & -     & 0.963 & 1 \\
\texttt{arm}        & 9.842  & -     & 7.313 & -     & 0.193 & 1 \\
\texttt{tcm}        & 28.872 & -     & -     & -     & -     & 1 \\
\midrule
& \multicolumn{6}{c}{SVI/MAP GPU memory usage (GB) std.} \\
& \multicolumn{1}{c}{\mseq} & \multicolumn{1}{c}{\mplate}
& \multicolumn{1}{c}{\mvmarkov} & \multicolumn{1}{c}{\mdiscHMM}
& \multicolumn{1}{c}{\mhand} & \multicolumn{1}{c}{\mours} \\
\midrule
\texttt{hmm-ord1}   & 0.000 & 0.000 & 0.000 & 0.000 & 0.000 & 0.000 \\
\texttt{hmm-neural} & 0.000 & 0.000 & 0.000 & 0.000 & 0.000 & 0.000 \\
\texttt{hmm-ord2}   & 0.000 & 0.000 & 0.000 & -     & 0.000 & 0.000 \\
\texttt{dmm}        & 0.006 & 0.001 & 0.001 & -     & 0.001 & 0.001 \\
\texttt{nhmm-train} & 0.003 & 0.000 & -     & -     & 0.000 & 0.000 \\
\texttt{nhmm-stock} & 0.001 & 0.000 & -     & -     & 0.000 & 0.000 \\
\texttt{arm}        & 0.000 & -     & 0.001 & -     & 0.000 & 0.000 \\
\texttt{tcm}        & 0.000 & -     & -     & -     & -     & 0.000 \\
\bottomrule
\end{tabular}
}
\caption{\rshl{Averaged GPU memory usage (GB) for each training step,
its ratio over $\mours$, and standard deviation in SVI or MAP inference.
We report the average over 5 runs $\times$ 9 training steps for \mseq, and 5 runs $\times$ 990 training steps for other methods.
Lower is better.}
}
\label{table:memory-svi}
\end{table}

\begin{table}[t]
\rshl{
\small
\begin{tabular}{lrrrrrr}
\toprule
& \multicolumn{6}{c}{MCMC time per step (s) ($\downarrow$)} \\
& \mseq & \mplate & \mvmarkov & \mdiscHMM & \mhand & \mours \\
\midrule
\texttt{hmm-ord1} & $\mathrm{nan}$ & 1.872 & 0.844 & 0.223 & 0.399 & 0.786 \\
\texttt{hmm-ord2} & 1868.995 & 2.826 & 2.734 & - & 1.238 & 1.545 \\
\texttt{nhmm-train} & 1232.830 & 245.348 & - & - & 2.512 & 3.738 \\
\texttt{nhmm-stock} & 510.519 & 52.833 & - & - & 0.560 & 1.439 \\
\texttt{arm} & 21.952 & - & 0.489 & - & 0.049 & 0.191 \\
\texttt{tcm} & 28.087 & - & - & - & - & 0.417 \\
\midrule
& \multicolumn{6}{c}{MCMC time per step ratio ($\ \cdot \ /$ \mours) ($\downarrow$)} \\
& \mseq & \mplate & \mvmarkov & \mdiscHMM & \mhand & \mours \\
\midrule
\texttt{hmm-ord1} & $\mathrm{nan}$ & 2.383 & 1.074 & 0.284 & 0.507 & 1 \\
\texttt{hmm-ord2} & 1209.649 & 1.829 & 1.769 & - & 0.801 & 1 \\
\texttt{nhmm-train} & 329.769 & 65.628 & - & - & 0.672 & 1 \\
\texttt{nhmm-stock} & 354.730 & 36.710 & - & - & 0.389 & 1 \\
\texttt{arm} & 114.806 & - & 2.557 & - & 0.256 & 1 \\
\texttt{tcm} & 67.277 & - & - & - & - & 1 \\
\midrule
& \multicolumn{6}{c}{MCMC time per step (s) std.} \\
& \mseq & \mplate & \mvmarkov & \mdiscHMM & \mhand & \mours \\
\midrule
\texttt{hmm-ord1} & $\mathrm{nan}$ & 0.093 & 0.020 & 0.016 & 0.020 & 0.027 \\
\texttt{hmm-ord2} & 84.462 & 0.176 & 0.100 & - & 0.067 & 0.096 \\
\texttt{nhmm-train} & 38.036 & 6.592 & - & - & 0.036 & 0.081 \\
\texttt{nhmm-stock} & 30.262 & 2.965 & - & - & 0.035 & 0.078 \\
\texttt{arm} & 0.902 & - & 0.069 & - & 0.004 & 0.018 \\
\texttt{tcm} & 0.859 & - & - & - & - & 0.027 \\
\bottomrule
\end{tabular}
}
\caption{
  \rshl{
    Average time per MCMC step (s) in 1 hour, its ratio over $\mours$, and standard deviation.
    Each reported number is averaged over 8 runs $\times$ averaged number of steps executed in each one-hour run.
    A dash (-) indicates that the method is not applicable to the benchmark.
    A value of $\mathrm{nan}$ indicates it fails to execute a single MCMC step within 1 hour.
    Lower is better.
  }
}
\label{table:steps-mcmc}
\end{table}

\begin{table}[t]
\rshl{
\small
\begin{tabular}{lrrrrrr}
\toprule
& \multicolumn{6}{c}{MCMC GPU memory usage (GB) ($\downarrow$)} \\
& \mseq & \mplate & \mvmarkov & \mdiscHMM & \mhand & \mours \\
\midrule
\texttt{hmm-ord1} & $\mathrm{nan}$ & 0.7000 & 1.2663 & 0.2464 & 0.6681 & 1.3834 \\
\texttt{hmm-ord2} & 1.8299 & 0.8659 & 3.8962 & - & 2.0342 & 2.1895 \\
\texttt{nhmm-train} & 3.3428 & 2.8506 & - & - & 3.6922 & 3.7635 \\
\texttt{nhmm-stock} & 0.2446 & 0.0517 & - & - & 0.0344 & 0.0358 \\
\texttt{arm} & 0.0103 & - & 0.0090 & - & 0.0002 & 0.0022 \\
\texttt{tcm} & 0.0139 & - & - & - & - & 0.0005 \\
\midrule
& \multicolumn{6}{c}{MCMC GPU memory usage (GB) ratio ($\ \cdot \ /$ \mours) ($\downarrow$)} \\
& seq & plate & vmarkov & discHMM & manual & ours \\
\midrule
\texttt{hmm-ord1} & $\mathrm{nan}$ & 0.5060 & 0.9153 & 0.1781 & 0.4830 & 1 \\
\texttt{hmm-ord2} & 0.8358 & 0.3955 & 1.7795 & - & 0.9291 & 1 \\
\texttt{nhmm-train} & 0.8882 & 0.7574 & - & - & 0.9811 & 1 \\
\texttt{nhmm-stock} & 6.8400 & 1.4457 & - & - & 0.9607 & 1 \\
\texttt{arm} & 4.7768 & - & 4.1773 & - & 0.0775 & 1 \\
\texttt{tcm} & 29.5086 & - & - & - & - & 1 \\
\midrule
& \multicolumn{6}{c}{MCMC GPU memory usage (GB) std.} \\
& seq & plate & vmarkov & discHMM & manual & ours \\
\midrule
\texttt{hmm-ord1} & $\mathrm{nan}$ & 0.0000 & 0.0000 & 0.0000 & 0.0000 & 0.0000 \\
\texttt{hmm-ord2} & 0.0000 & 0.0000 & 0.0000 & - & 0.0000 & 0.0000 \\
\texttt{nhmm-train} & 0.0000 & 0.0000 & - & - & 0.0000 & 0.0000 \\
\texttt{nhmm-stock} & 0.0000 & 0.0000 & - & - & 0.0000 & 0.0000 \\
\texttt{arm} & 0.0000 & - & 0.0001 & - & 0.0000 & 0.0010 \\
\texttt{tcm} & 0.0000 & - & - & - & - & 0.0000 \\
\bottomrule
\end{tabular}
}
\caption{
  \rshl{
  Average GPU memory usage (GB) for each MCMC step in 1 hour, its ratio over $\mours$, and standard deviation.
  Each reported number is averaged over 8 runs $\times$ averaged number of steps executed in each one-hour run.
  A value of $\mathrm{nan}$ indicates it fails to execute a single MCMC step within 1 hour.
  A dash (-) indicates that the method is not applicable to the benchmark.
  Lower is better.
  }
}
\label{table:memory-mcmc}
\end{table}

}

\end{document}